\newtheorem{theorem}{Theorem}[chapter]
\newtheorem{cor}{Corollary}[chapter]
\newtheorem{lem}{Lemma}[chapter]
\newtheorem{defn}{Definition}[chapter]
\DeclareMathOperator*{\argmax}{arg\,max}
\newcommand{\emptypage}{\newpage \thispagestyle{empty} $\;$ \newpage}
\begin{document} \pagenumbering{roman}
\topmargin 1.5cm \clearpage \thispagestyle{empty}

\newcommand{\jMAtitle}{Dispersion Analysis of Infinite Constellations in Ergodic Fading Channels }
\newcommand{\jAuthor}{Shlomi Vituri}
\newcommand{\jSubmitDate}{March, 2013}

\begin{center}

  \hbox{\vbox to 2.5 cm
    {
      \vfil
      \includegraphics{logo2.ps}
      }
    }
  \vspace{-1.2cm}
THE IBY AND ALADAR FLEISCHMAN
FACULTY OF ENGINEERING\\
THE ZANDMAN-SLANER SCHOOL OF GRADUATE STUDIES\\
THE DEPARTMENT OF ELECTRICAL ENGINEERING - SYSTEMS\\
  \vspace{3.5cm}
\Huge  {\bf   \jMAtitle}\\
  \vspace{4.5cm}
  \large{Thesis submitted toward the degree of \\Master of Science in Electrical and Electronic Engineering}\\
  \vspace{0.4cm}
  by \\
  \vspace{0.4cm}
  \Large {\bf \jAuthor}

  \vspace{0.4cm}

  \vspace{3 cm}

  \Large {\jSubmitDate}
\end{center}

\newpage
\emptypage

\thispagestyle{empty}

\begin{center}

  \hbox{\vbox to 2.5 cm
    {
      \vfil
      \includegraphics{logo2.ps}
      }
    }
   \vspace{-1.2cm}
THE IBY AND ALADAR FLEISCHMAN
FACULTY OF ENGINEERING\\
THE ZANDMAN-SLANER SCHOOL OF GRADUATE STUDIES\\
THE DEPARTMENT OF ELECTRICAL ENGINEERING - SYSTEMS\\
  \vspace{1.3cm}
  \Huge {\bf \jMAtitle}\\
  \vspace{1.3cm}
  \large{Thesis submitted toward the degree of \\Master of Science in Electrical and Electronic Engineering}\\
  \vspace{0.3cm}
  by \\
  \vspace{0.3cm}
  \Large {\bf \jAuthor}

  \vspace{3cm}

  \normalsize

  This research was carried out at the \\Department of Electrical
  Engineering - Systems,\\
  Tel-Aviv University\\

  \vspace{1cm}

  Advisor: \\
  {\bf Prof. Meir Feder}\\
  \vspace{3cm}
  \large {\jSubmitDate}
\end{center}

\newpage
\topmargin 0cm

\onehalfspace

\newpage
\emptypage

\thispagestyle{empty}

\

\vspace{1.5cm}

\begin{center}
\textbf{Acknowledgements}
\end{center}

\vspace{2cm}

I would like to thank my supervisor, Prof. Meir Feder, for his guidance,
trust and support throughout this work.
Working with Meir was both challenging and enjoyable experience.

I would like also to thank my colleagues to the Information Theory Laboratory, and especially to Yair Yona,
for our fruitful discussions.

Finally, I would like to thank my beloved wife, Aya, for her love, trust
and support during my studies.

\newpage
\emptypage

\begin{abstract}
This thesis considers infinite constellations
in fading channels, without power constraint and with perfect
channel state information available at the receiver.
Infinite constellations are the framework, proposed by Poltyrev, for
analyzing coded modulation codes.
The Poltyrev's capacity, is the highest achievable normalized
log density (NLD) of codewords per unit volume, at possibly large block length,
that guarantees a vanishing error probability.
For a given finite block length and a fixed error probability,
there is a gap between the highest achievable NLD and Poltyrev's capacity.
The dispersion analysis quantifies asymptotically this gap.

The thesis begins by the dispersion analysis of infinite constellations
in scalar fading channels.
Later on, we extend the analysis to the case of multiple input multiple output fading channels.
As in other channels, we show that the gap between the highest achievable NLD and the Poltyrev's
capacity, vanishes asymptotically as the square root of the channel dispersion over the block length,
multiplied by the inverse Q-function of the allowed error probability.

Moreover, exact terms for Poltyrev's capacity and channel dispersion,
are derived in the thesis. The relations to the amplitude and to the power constrained fading channels
are also discussed, especially in terms of capacity, channel dispersion and error exponents.
These relations hint that in typical cases the unconstrained model can be interpreted as the limit of the constrained model, when
the signal to noise ratio tends to infinity.
\end{abstract}
\doublespace

\newpage
\emptypage

\tableofcontents \singlespace \listoffigures

\chapter{Introduction}
\pagenumbering{arabic}
Wireless communication channels are traditionally modeled as fading channels, where the transmitted signal is multiplied by a fading process and observed with additive white Gaussian noise (AWGN).
Here we assume that a perfect knowledge of the channel state information (CSI) is available at the receiver.

Classical coding problems over the fading channels often include a peak or an average power restriction of the transmitted signal. Without power constraint the capacity of the channel is not limited, since we can choose an infinite number of codewords to be arbitrarily far apart from each other, and hence get an arbitrarily small error probability and infinite rate. Nevertheless, coded modulation methods ignore the power constraint by designing infinite constellations (IC), and then taking only a subset of codewords which are included in some ``shaping region'' to get a finite constellation (FC) that holds the power constraint. Hence, IC is a very convenient framework for designing such codes.

Poltyrev studied in \cite{Poltyrev} the IC performance over the AWGN channel without power constraint. He defined the density (the average number of codewords per unit volume) and the normalized log density (NLD) of the IC, in analogy to the number of codewords and the communication rate in the power constrained model, respectively. He showed that the highest achievable NLD over the unconstrained AWGN channel, with arbitrarily small error probability, is limited by a maximal NLD, sometimes termed the Poltyrev's capacity. He also derived an exact term for the maximal NLD and error exponent bounds using random coding and sphere packing techniques, for any NLD below the capacity.

In classical channel coding problems, the capacity gives the maximal achievable communication rate when arbitrarily small error probability is required (and arbitrary large codeword length $n$ is permitted). The error exponent provides the exponential rate of convergence (with $n$) in which the error probability goes to zero, for any fixed rate below the capacity. Another interesting question is: for a fixed error probability $\epsilon$ and a fixed codeword length $n$, what is the maximal achievable rate, denoted by $R^*(n,\epsilon)$. Although this question is still unsolved precisely for any finite $n$, the recently revisited dispersion analysis \cite{Polyanskiy} gives the rate of convergence of $R^*(n,\epsilon)$ to the capacity. According to the dispersion analysis, for any fixed $\epsilon$ and finite $n$ the following holds:
\begin{equation}
\label{eq_dispersion_analysis}
R^*(n,\epsilon) = C - \sqrt{\frac{V}{n}}Q^{-1}(\epsilon) + O\left(\frac{\ln(n)}{n}\right),
\end{equation}
where $Q$ is the standard complementary Gaussian CDF, $C$ is the channel capacity and $V$ is the channel dispersion. The channel dispersion is given by the variance of the information density $i(x;y) \triangleq \ln\left(\frac{P(x,y)}{P(x)P(y)}\right)$ for a capacity achieving input distribution. Polyanskiy et al. showed in \cite{Polyanskiy} that \eqref{eq_dispersion_analysis} holds for discrete memoryless channels (DMCs) and for AWGN channel. Note that for AWGN channel $V=\frac{P(P+2)}{2(P+1)^2}$, where $P$ denotes the channel signal to noise ratio (SNR). In \cite{PolyanskiyFading} the result was extended to stationary fading channels, and in \cite{PolyanskiyGE} the dispersion of the Gilbert-Elliot channel was analyzed.

In \cite{Ingber} Ingber et al. showed that in AWGN channel without power constraint and with noise variance $\sigma^2$, the analogy of \eqref{eq_dispersion_analysis} for IC is given by:
\begin{equation}
\label{eq_dispersion_analysis_IC}
\delta^*(n,\epsilon) = \delta^* - \sqrt{\frac{V}{n}}Q^{-1}(\epsilon) + O\left(\frac{\ln(n)}{n}\right),
\end{equation}
where $\delta^*(n,\epsilon)$ is the optimal NLD for fixed $\epsilon$ and finite $n$, and $\delta^* \triangleq \frac{1}{2}\ln\left(\frac{1}{2\pi e\sigma^2}\right)$ is Poltyrev's capacity. For AWGN, the channel dispersion is given by $V = \frac{1}{2}$ (in $\text{nats}^2$ per channel use), which is equal to the limit of the channel dispersion of the power constrained AWGN, when the
SNR tends to infinity.

In this thesis we extend Poltyrev's setting to the case of fading channels with AWGN and CSI at the receiver.
First, we analyze the case of scalar fast fading channels, where the fading process is a series of independent and identically distributed (i.i.d.) random variables (RV's).
This channel is a reasonable model for many practical wireless communication systems, such as systems that communicating over a flat fading channel, or systems that use a (pseudo) random interleaver between the transmitted digital symbols (e.g. BICM techniques) over a frequency selective wireless channel.
Using the dependence testing bound, the sphere packing bound and some normal approximation techniques, we show that an analogous expression to \eqref{eq_dispersion_analysis_IC} holds for fast fading channels.
Later on, using similar but more elaborate tools, we show that \eqref{eq_dispersion_analysis_IC} holds also, in the general case of stationary fading processes, where the channel dispersion is affected by the fading dynamics, but not the Poltyrev's capacity \cite{FadingChannels}\cite{PolyanskiyFading}.
Moreover, in typical fading processes, this dispersion is increased relative to the fast fading channel, with the same marginal fading distribution. This fact can motivate the usage of random interleaver in practical systems with finite block length, in order to get effectively a fast fading channel, with smaller channel dispersion.

In this thesis we also analyze the dispersion of multiple input multiple output (MIMO) fast fading channels without power constraint.
It is well known that the usage of multiple antennas in wireless communication is very beneficial.
This usage increases the number of the degrees of freedom available by the channel, which is expressed immediately by an increasing channel capacity.
This increase is also called the ``multiplexing gain'' of the channel.
In \cite{Telatar}\cite{Foschini} the capacity of the ergodic power constrained MIMO channel with $t$ transmit and $r$ receive antennas was obtained, where the gains between the transmitting-receiving antenna pairs are i.i.d. Rayleigh faded RV's. Moreover, in \cite{Foschini} it was shown that in the high SNR regime the multiplexing gain equals to the number of available degrees of freedom, i.e. the minimum between $t$ and $r$.
Note that there are also communication techniques that allow to increase the reliability of the transmitted signal at the cost of a reduced multiplexing gain. The increasing of the reliability by the usage of multiple antennas is also called diversity.
The fundamental tradeoff between diversity and multiplexing was derived in \cite{Tse}, in the case of non-ergodic Rayleigh fading MIMO channels with power constraint. This result was extended to the case of IC's over the same MIMO channel, but without power constraint in \cite{Yona}.

Here in the thesis, we focus on the case of the ergodic fast fading MIMO channels without power constraint. Moreover, we assume that the gains between the transmitting-receiving antenna pairs are i.i.d. Rayleigh faded RV's, which are available at the receiver. By similar techniques as in the scalar channel, we derive the dispersion and the Poltyrev's capacity of this MIMO channel under the constraint of \emph{Full Dimensional Transmission} (\emph{FDT}). This constraint means that all of the transmission dimensions are in use during the transmission.
Later on, we compare the $t \times t$ MIMO setting to the setting of $t$ parallel, identical and independent scalar fast fading channels with Rayleigh fading distribution. This comparison promise lower channel dispersion and greater Poltyrev's capacity in the MIMO setting relative to the parallel channels setting, due to the dependency between the received signals. Finally, we discuss the general case of MIMO dispersion analysis without any constraint. This discussion reveals a very surprising phenomena of Poltyrev's capacities in MIMO fading channels: In contrast to the capacity of FC's over MIMO fading channels, reducing the IC's transmission dimension can increase the Poltyrev's capacity of the channel.

The relations to the amplitude and to the power constrained fading channels are also discussed in the thesis, especially, in terms of capacity, channel dispersion and error exponents. These relations hint that in most cases, including single input single output (SISO) and \emph{FDT} MIMO the unconstrained model can be interpreted as the limit of the constrained model, when the SNR tends to infinity.

The thesis is arranged as follows: In Chapters 2 and 3 the basic definitions are formulated, and
previous results are surveyed.
In Chapter 4 the dispersion of infinite constellations in scalar fading channels is analyzed. This chapter starts with the analysis of IC's over fast fading channels, which is extended later on to the special cases of lattices and general fading channels with memory.
In Chapter 5 the dispersion analysis of infinite constellations in MIMO fading channels and its relation to the independent parallel channels is analyzed.
Conclusions, discussion and further research follow in Chapter 6.

\chapter{Basic Definitions}
\label{chapter:BasicDefinitions}
In this chapter we review the notations and the basic definitions of this thesis. Section \ref{sec_channel_model} presents and defines the scalar and the MIMO fading channels, and Section \ref{sec_infinite_constellations} extends the Poltyrev's setting of infinite constellations without power constraint to these channels. Finally, the most important quantity that is analyzed in this thesis, the channel dispersion, is defined and reviewed in Section \ref{sec_channel_dispersion}.
\section{Notation}
Vectors are denoted by bold-face lower case letters, e.g. $\mathbf{x}$ and $\mathbf{y}$. Matrices are denoted by bold-face capital letters, e.g. $\mathbf{H}$. Components of random vector $\mathbf{x}$ are denoted by capital letters, $X_1, X_2, \dots, X_n$. In the same manner, components of a random matrix $\mathbf{H}$ are denoted by $\left\{H_{ij}\right\}$.
Concatenation of $n$ consecutive vectors is denoted by $\mathbf{x}^n = (\mathbf{x}_1^{\dagger},\dots,\mathbf{x}_n^{\dagger})^{\dagger}$, and a concatenation of $n$ consecutive matrices to a block diagonal matrix is denoted by $\mathbf{H}^n = \text{diag}(\mathbf{H}_1,\dots,\mathbf{H}_n)$.
Instances of random variables are denoted by lower case letters, e.g. $x, ~y$ and $h$.
\section{Channel Model}
\label{sec_channel_model}
\subsection{Scalar Channel Model}
The scalar real fading channel model is given by
\begin{equation}
\label{eq_channel_model}
Y_i = H_i\cdot X_i + Z_i, ~i = 1,2,\dots
\end{equation}
where,
\begin{itemize}
  \item $\{X_i\}$ is a series of channel inputs,
  \item $\{H_i\}$ is a series of fading coefficients satisfying $E\{H_i^2\}=1$,
  \item $\{Z_i\}$ is a series of i.i.d. normal random variables, such that $Z_i\sim N(0,\sigma^2)$,
  \item $\{Y_i\}$ is a series of channel outputs.
\end{itemize}
The series $\{X_i\}, ~\{H_i\}$ and $\{Z_i\}$ are independent of each other. In vector notation (for finite $n$) the channel model is given by:
\begin{equation}
\label{eq_vectoric_channel_model}
\mathbf{y} = \mathbf{H} \cdot \mathbf{x} + \mathbf{z},
\end{equation}
where $\mathbf{H} \triangleq \text{diag}\left(H_1, H_2, \dots, H_n\right)$. We assume a perfect CSI available at the receiver, and hence the receiver's channel output is the couple $\left(\mathbf{y},\mathbf{H}\right)$.

The first fading process that we will analyze in the thesis, is the fast fading process. In fast fading, we mean that all the fading coefficients are i.i.d. RVs. Later on, we will extend the analysis to the more general case of stationary fading processes.

Without loss of generality, since we have a perfect CSI at the receiver, we can assume that the fading coefficients are nonnegative. Moreover, we restrict the marginal fading distribution to probability density functions (PDF) with zero probability to equal zero. We will denote such a fading distribution by \emph{regular fading distribution}, which is defined formally below.
\begin{defn}
(Regular fading distribution): A fading PDF $f\left(h\right)$ is called regular fading distribution if there exists some positive constant $\alpha$, s.t. $f\left(h\right) \propto \frac{1}{h^{1-\alpha}}$ for small enough $h > 0$.
\end{defn}
A popular statistical model for the fading channel is the Nakagami-$m$ distribution.
This popular family of fading distributions are given by:
\begin{equation}
f_m(h) = \frac{2m^m}{\Gamma(m)}h^{2m-1}e^{-mh^2},~h\geq0,~m\geq\frac{1}{2}.
\end{equation}
It is easy to verify that this distribution is a \emph{regular fading distribution} for all $m \geq \frac{1}{2}$.
\subsection{MIMO Channel Model}
The basic MIMO channel model is given by the following equation:
\begin{equation}
\label{eq_MIMO_model}
\mathbf{y}_i = \mathbf{H}_i\cdot\mathbf{x}_i + \mathbf{z}_i,~i=1,2,\dots
\end{equation}
where $\mathbf{x} \in \mathbb{C}^{t}, \mathbf{y}, \mathbf{z} \in \mathbb{C}^{r}$, $\mathbf{H} = \{H_{ij}\} \in \mathbb{C}^{r \times t}$, $H_{ij}$ are circular symmetric i.i.d. $CN(0,1)$ and $\mathbf{z} \sim CN(0, \sigma^2 I_r)$, where the subscripts are removed for simplicity of presentation.

The following extended channel model:
\begin{equation}
\mathbf{y}^n = \mathbf{H}^n\cdot\mathbf{x}^n + \mathbf{z}^n
\end{equation}
is getting by the concatenation of $n$ consecutive channel uses. We assume fast fading model, namely, $\{\mathbf{H}_i\}_{i=1}^{n}$ is a set of i.i.d. matrices.

Note that by the singular value decomposition (SVD) theorem (e.g. \cite{Telatar}), any matrix $\mathbf{H} \in \mathbb{C}^{r \times t}$ can be written as
\begin{equation}
\mathbf{H} = \mathbf{U}\mathbf{D}\mathbf{V}^{\dagger}
\end{equation}
where $\mathbf{U} \in \mathbb{C}^{r \times r}$ and $\mathbf{V} \in \mathbb{C}^{t \times t}$ are unitary, and $\mathbf{D} \in \mathbb{R}^{r \times t}$ is non-negative and diagonal. Moreover, the diagonal entries of $\mathbf{D}$ are equal to the square root of the eigenvalues of $\mathbf{H}\mathbf{H}^{\dagger}$.
Using it, an equivalent model to \eqref{eq_MIMO_model} can be written as
\begin{equation}
\tilde{\mathbf{y}} = \mathbf{D}\mathbf{V}^{\dagger}\mathbf{x} + \tilde{\mathbf{z}}
\end{equation}
where $\tilde{\mathbf{y}} \triangleq \mathbf{U}^{\dagger}\mathbf{y}$ and $\tilde{\mathbf{z}} \triangleq \mathbf{U}^{\dagger}\mathbf{z}$. Note that given the CSI the distributions of $\tilde{\mathbf{z}}$ and $\mathbf{z}$ are the same.

In the thesis we show results for the case where $t \leq r$. The case where $t>r$ is still under consideration.
Since $\mathbf{D}$ is of rank at most $t$ for $t \leq r$, we can define the following equivalent model
\begin{equation}
\label{eq_equiv_MIMO_model}
\mathbf{y}' = \mathbf{D}'\mathbf{V}^{\dagger}\mathbf{x} + \mathbf{z}'
\end{equation}
where $\mathbf{y}'$ and $\mathbf{z}'$ are equal the first $t$ entries of $\tilde{\mathbf{y}}$ and $\tilde{\mathbf{z}}$, respectively. The matrix $\mathbf{D}'$ is a $t \times t$ diagonal matrix, whose $t$ diagonal entries are equal to the first $t$ diagonal entries of $\mathbf{D}$. In the analysis of the MIMO channel, we will use the simplified equivalent MIMO model \eqref{eq_equiv_MIMO_model}.
\section{Infinite Constellations}
\label{sec_infinite_constellations}
\subsection{Infinite Constellations in scalar real fading channels}
An infinite constellation of dimension $n$ is any countable set of points $S = \left\{s_1,s_2,\dots\right\}$ in $\mathbb{R}^n$.
Let $\text{Cb}(a)$ denote an $n$ dimensional hypercube in $\mathbb{R}^n$:
\begin{equation}
\text{Cb}(a) \triangleq \left\{\mathbf{x}\in \mathbb{R}^n ~ s.t. ~ \forall_i \left|x_i\right| < \frac{a}{2} \right\}.
\end{equation}
We denote by $M\left(S,a\right) = \left|S\bigcap\text{Cb}(a)\right|$ the number of points in the intersection of $\text{Cb}(a)$ and $S$.
The density of points per unit volume of $S$ is denoted by $\gamma$ and defined by
\begin{equation}
\gamma \triangleq \limsup_{a \rightarrow \infty} \frac{M\left(S,a\right)}{a^n}.
\end{equation}
The normalized log density of $S$ is denoted by $\delta$ and defined by
\begin{equation}
\delta \triangleq \frac{1}{n}\ln\left(\gamma\right).
\end{equation}
In the receiver, given the channel state information (i.e. given $\mathbf{H}$), the receiver's IC, denoted by $S_\mathbf{H}$, is defined by
\begin{equation}
S_\mathbf{H} \triangleq \left\{s_{\text{rc}}: s_{\text{rc}} = \mathbf{H}\cdot s, s \in S\right\}.
\end{equation}
We also define the set ${\mathbf{H}}\cdot\text{Cb}(a)$ as the multiplication of each point in $\text{Cb}(a)$ with the matrix $\mathbf{H}$.
The density of $S_\mathbf{H}$ is defined by
\begin{align}
\gamma_{\text{rc}}\left(\mathbf{H}\right) &\triangleq \limsup_{a \rightarrow \infty} \frac{M\left(S_{\mathbf{H}},a\right)}{\text{Vol}\left(\mathbf{H}\cdot\text{Cb}(a)\right)}
\\ &= \limsup_{a \rightarrow \infty} \frac{M\left(S,a\right)}{\det{\left(\mathbf{H}\right)} \cdot a^n}
\\ &= \frac{\gamma}{\det{\left(\mathbf{H}\right)}}
\end{align}
where $M\left(S_{\mathbf{H}},a\right) \triangleq \left|S_{\mathbf{H}}\bigcap{\mathbf{H}}\cdot\text{Cb}(a)\right|$.
For $s_{\text{rc}} \in S_{\mathbf{H}}$, let $P_e\left(s_{\text{rc}}|\mathbf{H}\right)$ denote the error probability when $s$, such that $s_{\text{rc}} = \mathbf{H} \cdot s$, was transmitted and the CSI at the receiver is $\mathbf{H}$. Then, using maximum likelihood (ML) decoding the error probability is given by
\begin{equation}
P_e\left(s_{\text{rc}}|\mathbf{H}\right) = Pr\left\{s_{\text{rc}} + \mathbf{z} \notin W\left(s_{\text{rc}}\right) | \mathbf{H} \right\}
\end{equation}
where $W\left(s_{\text{rc}}\right)$ is the Voronoi cell of $s_{\text{rc}}$, i.e. the convex polytope of the points that are closer to $s_{\text{rc}}$ than to any other point $s'_{\text{rc}} \in S_{\mathbf{H}}$.
\begin{defn}
(Conditional expectation over a faded hypercube): For any function $f:S_{\mathbf{H}}\rightarrow\mathbb{R}$, the conditional expectation of $f(s_{\emph{\text{rc}}})$ given $\mathbf{H}$, where $s_{\emph{\text{rc}}}$ is drawn uniformly from the code points that reside in the faded hypercube ${\mathbf{H}}\cdot\emph{\text{Cb}}(a)$, will be denoted and defined by
\begin{equation}
E_{S,a|\mathbf{H}}\left\{ f(s_{\emph{\text{rc}}}) \right\} \triangleq \frac{1}{M\left(S_{\mathbf{H}},a\right)} \sum_{s_{\emph{\text{rc}}} \in S_{\mathbf{H}}\bigcap{\mathbf{H}}\cdot\emph{\text{Cb}}(a)}f(s_{\emph{\text{rc}}}).
\end{equation}
\end{defn}
The average error probability using ML decoding and equiprobable messages transmission is given by
\begin{align}
P_e\left(S\right) = E\left\{ P_e\left(S_{\mathbf{H}}\right) \right\} &\triangleq E\left\{ \limsup_{a \rightarrow \infty} \frac{1}{M\left(S_{\mathbf{H}},a\right)} \sum_{s_{\text{rc}} \in S_{\mathbf{H}}\bigcap{\mathbf{H}}\cdot\text{Cb}(a)} P_e\left(s_{\text{rc}}|\mathbf{H}\right) \right\}
\\                &\triangleq E\left\{ \limsup_{a \rightarrow \infty} E_{S,a|\mathbf{H}}\left\{ P_e\left(s_{\text{rc}}|\mathbf{H}\right) \right\} \right\}.
\end{align}
\subsection{Infinite Constellations in complex MIMO fading channels}
Here we extend, briefly, the setting of infinite constellations in real scalar fading channels, to the general case of complex MIMO fading channels.

An infinite constellation of complex dimension $l$ is any countable set of points $S = \left\{s_1,s_2,\dots\right\}$ in $\mathbb{C}^l$.
Let $\text{Cb}(a,l)$ denote an $l$ complex dimensional hypercube in $\mathbb{C}^l$:
\begin{equation}
\text{Cb}(a,l) \triangleq \left\{\mathbf{x}\in \mathbb{C}^l ~ s.t. ~ \forall_i \left|\mathfrak{Re}(x_i)\right|,\left|\mathfrak{Im}(x_i)\right| < \frac{a}{2} \right\}. \nonumber
\end{equation}
The density of points per unit volume of $S$ is defined by
\begin{equation}
\gamma \triangleq \limsup_{a \rightarrow \infty} \frac{M\left(S,a\right)}{a^{2l}}. \nonumber
\end{equation}
The normalized log density of $S$, using $n$ channel uses, where $l=nt$, is defined by
\begin{equation}
\delta \triangleq \frac{1}{n}\ln\left(\gamma\right). \nonumber
\end{equation}
In the receiver, given the CSI, the receiver's IC, denoted by $S_{\mathbf{H}^n}$, is defined by
\begin{equation}
S_{\mathbf{H}^n} \triangleq \left\{s_{\text{rc}}: s_{\text{rc}} = \mathbf{H}^n\cdot s, s \in S\right\}. \nonumber
\end{equation}
The density of $S_{\mathbf{H}^n}$ is defined by
\begin{align}
\begin{aligned}
\gamma_{\text{rc}} &\triangleq \limsup_{a \rightarrow \infty} \frac{M\left(S_{\mathbf{H}^n},a\right)}{\text{Vol}\left(\mathbf{H}^n\cdot\text{Cb}(a,l)\right)}
\\ &= \limsup_{a \rightarrow \infty} \frac{M\left(S,a\right)}{\det{\left(\mathbf{H}^{n\dagger}\mathbf{H}^n\right)} \cdot a^{2l}}
\\ &= \frac{\gamma}{\prod_{i=1}^{n}\det{(\mathbf{H}_i^{\dagger}\mathbf{H}_i)}}.
\end{aligned} \nonumber
\end{align}
For $s_{\text{rc}} \in S_{\mathbf{H}^n}$, let $P_e\left(s_{\text{rc}}|\mathbf{H}^n\right)$ denote the error probability when $s$, such that $s_{\text{rc}} = \mathbf{H}^n \cdot s$, was transmitted and the CSI at the receiver is $\mathbf{H}^n$. Then, using maximum likelihood (ML) decoding the error probability is given by
\begin{equation}
P_e\left(s_{\text{rc}}|\mathbf{H}^n\right) = Pr\left\{s_{\text{rc}} + \mathbf{z}^n \notin W\left(s_{\text{rc}}\right) |\mathbf{H}^n \right\}, \nonumber
\end{equation}
where $W\left(s_{\text{rc}}\right)$ is the Voronoi cell of $s_{\text{rc}}$.
The average error probability using ML decoding and equiprobable messages transmission is given by
\begin{align}
P_e\left(S\right) \triangleq E\left\{ \limsup_{a \rightarrow \infty}
\frac{1}{M\left(S_{\mathbf{H}^n},a\right)} \hspace{-0.2cm} \sum_{s_{{\text{rc}}} \in S_{\mathbf{H}^n}\bigcap{\mathbf{H}^n}\cdot{\text{Cb}}(a,l)} \hspace{-0.5cm} P_e\left(s_{\text{rc}}|\mathbf{H}^n\right) \right\}. \nonumber
\end{align}
\section{Channel Dispersion}
\label{sec_channel_dispersion}
The channel capacity is the highest achievable rate, at possibly large block length, that guarantees a vanishing error probability, when communicating over a channel. In the setting of a given fixed error probability $\epsilon$, and a finite block length $n$, there is a gap between the highest achievable rate, denoted by $R^*(n,\epsilon)$, and the capacity. The asymptotically convergence rate of this gap, when the block length tends to infinity, is given by the channel dispersion.

Formally, the operational channel dispersion was defined in \cite{Polyanskiy} as follows:
\begin{equation}
V = \lim_{\epsilon \to 0}\limsup_{n \to \infty}n\cdot\left(\frac{C-R^*(n,\epsilon)}{Q^{-1}(\epsilon)}\right)^2,
\end{equation}
where $C$ is the channel capacity.

The dispersion of DMCs, Gilbert-Elliot channel and the power constrained AWGN and fading channels, were analyzed in \cite{Strassen}\cite{Polyanskiy}\cite{PolyanskiyGE}\cite{PolyanskiyFading}. Moreover, it was shown that the operational channel dispersion equals to the information theoretic channel dispersion, which is given by:
\begin{equation}
V = Var(i(X;Y)),
\end{equation}
where $i(x;y)$ is the information density, which is given by
\begin{equation}
i(x;y) = \ln\left(\frac{P(x,y)}{P(x)P(y)}\right),
\end{equation}
for a capacity achieving input distribution that also minimizes $V$.

Inspired by the dispersion analysis of infinite constellations over the unconstrained AWGN channel in \cite{Ingber}, let define the operational channel dispersion of infinite constellations over the unconstrained fading channel, as follows:
\begin{equation}
V = \lim_{\epsilon \to 0}\limsup_{n \to \infty}n\cdot\left(\frac{\delta^*-\delta^*(n,\epsilon)}{Q^{-1}(\epsilon)}\right)^2,
\end{equation}
where $\delta^*$ is the Poltyrev's capacity and $\delta^*(n,\epsilon)$ is the highest achievable NLD in the setting of fixed error probability $\epsilon$, and finite block length $n$.
In this thesis, we will analyze the dispersion of infinite constellations over the unconstrained scalar and MIMO fading channels.

\chapter{Previous Results}
\label{chapter:PreviousResults}
This chapter reviews existing results in fields relevant to the research of this thesis. In Sections \ref{sec_dispersion_of_power_constrained_fading_channels} and \ref{sec_dispersion_of_infinite_constellations_in_AWGN_channel} the dispersion analysis of channels with and without power constraint are presented, respectively. Finally, related results in MIMO fading channels are presented in Section \ref{sec_related_results_in_MIMO_fading_channels}.
\section{Dispersion of power constrained fading channels}
\label{sec_dispersion_of_power_constrained_fading_channels}
In \cite{PolyanskiyFading} Polyanskiy et al. analyzed the channel dispersion of power constrained stationary fading processes, with perfect channel knowledge at the receiver. The main result of this paper is given by the following theorem.
\begin{theorem}[Polyanskiy et al. \cite{PolyanskiyFading}]
Assume that the stationary process $H_1,H_2,\dots$ satisfies the following assumptions:
\begin{enumerate}
  \item $E\{H_i^2\}=1$
  \item $H_1,H_2,\dots$ is a strong mixing\footnote{The \emph{strong mixing} stationary process will be defined rigorously later on, in section \ref{sec_fading_channels_with_memory}.} process such that for some $r<1$:
  \begin{equation}
  \sum_{k=1}^{\infty}k(\alpha_H(k))^r < \infty.
  \end{equation}
  \item For all $j>1$ we have
  \begin{equation}
  Pr\{H_{j+1}H_1\neq0\}>0.
  \end{equation}
\end{enumerate}
Then, as $n$ grows, for any $0<\epsilon<\frac{1}{2}$, the highest achievable rate $R^*(n,\epsilon)$, is given by:
\begin{equation}
R^*(n,\epsilon) = C - \sqrt{\frac{V}{n}}Q^{-1}(\epsilon) + o\left(\frac{1}{\sqrt{n}}\right),
\end{equation}
where,
\begin{enumerate}
  \item $C(H) \triangleq \frac{1}{2}\ln(1+H^2 \cdot SNR)$,
  \item $C = E\{C(H)\}$,
  \item $V = Var\left(C(H)\right) + 2\sum_{k=1}^{\infty}R_{C(H)}(k) + \frac{1}{2}\left(1 - E^2\left\{\frac{1}{1+H^2 \cdot SNR}\right\} \right)$,
  \item $R_{C(H)}(k)$ is the auto-correlation function of the process $C(H_1),C(H_2),\dots$
\end{enumerate}
regardless whether $\epsilon$ is maximal or average error probability.
\end{theorem}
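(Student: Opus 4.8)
The plan is to establish matching achievability and converse bounds on $R^*(n,\epsilon)$ and then to read off the second-order term from a Gaussian approximation of the information density, working throughout conditionally on the channel state $\mathbf{H}=(H_1,\dots,H_n)$. Since the receiver knows $\mathbf{H}$, the channel is conditionally a parallel Gaussian channel with gains $H_i$, so the natural auxiliary output law to feed into both bounds is the capacity-achieving product measure $Q_{Y^n|\mathbf{H}} = \prod_{i=1}^n N(0,1+H_i^2\cdot SNR)$ (with $\mathbf{H}$ kept at its true marginal), and the natural input is the equal-energy (power-sphere) distribution, since among capacity-achieving inputs it minimizes the dispersion. First I would write the information density $\imath(X^n;Y^n\,|\,\mathbf{H}) = \sum_{i=1}^n \imath(X_i;Y_i\,|\,H_i)$ (abbreviate the summands $\imath_i$) and record its conditional mean $\sum_i C(H_i)$ together with its conditional second moment.

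For achievability I would invoke the $\kappa\beta$ bound (or equivalently a random-coding / dependence-testing bound) for channels with a cost constraint, taking the input set to be the sphere $\{\|x\|^2 = n\cdot SNR\}$; for the converse I would use the meta-converse (binary hypothesis testing) against the same $Q_{Y^n|\mathbf{H}}$. After standard manipulations both bounds reduce to controlling the probability that $\sum_i \imath_i$ departs from $nC$ by a threshold, so the task becomes a central-limit statement for this sum.

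The key computation is the decomposition of the asymptotic variance by the law of total variance,
\begin{equation}
\operatorname{Var}\Big(\sum_{i=1}^n \imath_i\Big) = \operatorname{Var}\Big(E\big[\textstyle\sum_i \imath_i \,\big|\, \mathbf{H}\big]\Big) + E\Big[\operatorname{Var}\big(\textstyle\sum_i \imath_i \,\big|\, \mathbf{H}\big)\Big].
\end{equation}
The first term equals $\operatorname{Var}(\sum_i C(H_i))$, whose normalized limit $\tfrac1n\operatorname{Var}(\sum_i C(H_i)) \to \operatorname{Var}(C(H)) + 2\sum_{k=1}^\infty R_{C(H)}(k)$ is the long-run variance of the stationary sequence $\{C(H_i)\}$ and supplies the first two terms of $V$; here the strong-mixing hypothesis with $\sum_k k\,(\alpha_H(k))^r < \infty$ is exactly what guarantees convergence of this series and a Berry-Esseen rate. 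The second term, after incorporating the equal-energy (sphere-hardening) correction, yields the conditional Gaussian dispersion: a direct moment computation gives the per-symbol conditional variance $1-\tfrac1{1+H_i^2 SNR}$, whose average is $1-E\{\tfrac1{1+H^2 SNR}\}$, and the single \emph{global} power constraint couples the symbols and subtracts the rank-one correction $\tfrac12\big(1-E\{\tfrac1{1+H^2 SNR}\}\big)^2$, leaving $\tfrac12\big(1 - E^2\{\tfrac1{1+H^2 SNR}\}\big)$ — which in the degenerate case $H\equiv1$ collapses to the AWGN dispersion $\tfrac{SNR(SNR+2)}{2(SNR+1)^2}$ as a consistency check.

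I expect the main obstacle to be the dependence induced by the fading memory. Unlike the i.i.d. case, $\sum_i C(H_i)$ is a sum of dependent variables, so the classical Berry-Esseen theorem does not apply directly; one instead needs a central limit theorem with an explicit $O(n^{-1/2})$ rate for strong-mixing sequences, and this is precisely the role of assumption~2. A secondary difficulty is making the sphere-hardening argument rigorous jointly with the random fading: one must condition on $\mathbf{H}$, control the conditional information density uniformly, and only then average. Assumption~3, namely $\Pr\{H_{j+1}H_1\neq0\}>0$, is needed to rule out degeneracy and guarantee $V>0$, so that the $Q^{-1}(\epsilon)$ expansion is meaningful. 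Finally, one matches the achievability and converse constants to the same $V$ and absorbs the residual fluctuations into the $o(n^{-1/2})$ term.
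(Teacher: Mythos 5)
The thesis does not prove this statement: it is quoted in the ``Previous Results'' chapter as a theorem of Polyanskiy and Verd\'u with a citation to \cite{PolyanskiyFading}, so there is no in-paper proof to compare your attempt against. Judged on its own terms, your sketch faithfully reconstructs the route of the cited reference: $\kappa\beta$/dependence-testing achievability and a meta-converse against the auxiliary law $\prod_{i=1}^n N(0,1+H_i^2\cdot SNR)$, a Berry--Esseen-type CLT for the strongly mixing sequence $\{C(H_i)\}$ (which is exactly what assumption~2 with $\sum_k k(\alpha_H(k))^r<\infty$ buys), and a law-of-total-variance split of the information density's variance into the long-run variance of $\sum_i C(H_i)$ plus the averaged conditional dispersion. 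Your arithmetic for the third term is consistent: writing $e=E\{1/(1+H^2\cdot SNR)\}$, the identity $1-e-\frac{1}{2}(1-e)^2=\frac{1}{2}(1-e^2)$ reproduces the stated term and collapses to $\frac{SNR(SNR+2)}{2(SNR+1)^2}$ for $H\equiv 1$. The one step you should not treat as routine is the ``rank-one correction'' itself: naively averaging the per-gain AWGN dispersions would give $\frac{1}{2}\left(1-E\left\{(1+H^2\cdot SNR)^{-2}\right\}\right)$, which by Jensen is smaller than the correct $\frac{1}{2}\left(1-E^2\left\{(1+H^2\cdot SNR)^{-1}\right\}\right)$; obtaining the square of the expectation rather than the expectation of the square requires carrying the single-sphere coupling jointly across symbols with different gains before averaging over $\mathbf{H}$, and that is where the substantive work lies. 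Your reading of assumption~3 as a non-degeneracy condition ensuring $V>0$ (so that the $Q^{-1}(\epsilon)$ term is meaningful) is reasonable.
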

In addition, in \cite{Polyanskiy}, a similar dispersion analysis was derived to DMCs and power constrained AWGN channels. In \cite{PolyanskiyGE} the dispersion of the Gilbert-Elliot channel was also derived.
\section{Dispersion of IC's in the AWGN channel}
\label{sec_dispersion_of_infinite_constellations_in_AWGN_channel}
In \cite{Poltyrev} Poltyrev studied the performance of IC's over the unconstrained AWGN channel. He showed that the highest achievable NLD, at possibly large block length, that guarantees a vanishing error probability, namely the Poltyrev's capacity, is given by:
\begin{equation}
\delta^* = \frac{1}{2}\ln\left(\frac{1}{2 \pi e \sigma^2}\right),
\end{equation}
where, $\sigma^2$ is the noise variance of the AWGN.
He also derived the asymptotic optimal error probability (with $n$), for any fixed $\delta \leq \delta^*$, in the manner of the error exponent.
The lower and upper bounds of the error exponent, were given by the random coding error exponent, and by the spherical bound exponent, respectively.

In \cite{Ingber} Ingber et al. derived a more tighter asymptotic analysis for the optimal error probability, for any fixed $\delta \leq \delta^*$. In addition, the asymptotic analysis for a fixed error probability, was also derived. This analysis, which is actually the dispersion analysis, is given by the following theorem.
\begin{theorem}[Ingber et al. \cite{Ingber}]
Let $\epsilon>0$ be a given, fixed, error probability. Denote by $\delta^*(n,\epsilon)$ the highest NLD for which there exists an $n$-dimensional infinite constellation with error probability at most $\epsilon$. Then, as $n$ grows,
\begin{equation}
\delta^*(n,\epsilon) = \delta^* - \sqrt{\frac{1}{2n}}Q^{-1}(\epsilon) + \frac{1}{2n}\ln(n) + O\left(\frac{1}{n}\right).
\end{equation}
\end{theorem}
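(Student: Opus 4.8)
The plan is to sandwich $\delta^*(n,\epsilon)$ between a sphere-packing converse and a random-coding achievability bound, and to verify that both share the same expansion $\delta^*-\sqrt{1/(2n)}\,Q^{-1}(\epsilon)+\tfrac{1}{2n}\ln n$ up to $O(1/n)$. The key observation is that, since $\mathbf{z}\sim N(0,\sigma^2 I_n)$ is spherically symmetric, everything reduces to comparing the random noise norm $\norm{\mathbf{z}}$ with the \emph{effective radius} $r_{\text{eff}}$ of a ball whose volume is $1/\gamma$, defined by $\gamma\,V_n\,r_{\text{eff}}^n=1$ with $V_n=\pi^{n/2}/\Gamma(n/2+1)$. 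Because $\norm{\mathbf{z}}^2/\sigma^2\sim\chi^2_n$, demanding error probability $\epsilon$ pins down $r_{\text{eff}}^2$, hence $\gamma$ and $\delta=\tfrac1n\ln\gamma$, through a single chi-squared tail.

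For the converse I would invoke Poltyrev's sphere-packing bound. The geometric core is that, among all bodies of a given volume, the ball minimizes the probability that spherically symmetric noise escapes it; since the Voronoi cells of an IC of density $\gamma$ tile $\mathbb{R}^n$ with mean volume $1/\gamma$, the per-cell bound combined with a Jensen (convexity) step over the cell volumes yields $P_e(S)\ge \Pr\{\norm{\mathbf{z}}>r_{\text{eff}}\}$. Consequently any IC with $P_e(S)\le\epsilon$ forces $\Pr\{\norm{\mathbf{z}}>r_{\text{eff}}\}\le\epsilon$, which bounds $r_{\text{eff}}$ from below and thus $\delta$ from above.

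For achievability I would run a random-coding argument over ICs, sampling a random lattice through the Minkowski--Hlawka--Siegel theorem at density $\gamma$ (a Poisson ensemble works equally well). Decoding to the nearest code point, an error implies either $\norm{\mathbf{z}}>r$ or the presence of a competing point, so the ensemble-average error is at most $\Pr\{\norm{\mathbf{z}}>r\}+\gamma V_n\,E[\norm{\mathbf{z}}^n\mathbb{1}\{\norm{\mathbf{z}}\le r\}]$, where the second (collision) term is controlled by the Minkowski--Hlawka mean-value identity. Optimizing the decoding radius $r$ and showing that the collision term is of lower order than the overflow term produces an IC whose error probability is essentially $\Pr\{\norm{\mathbf{z}}>r_{\text{eff}}\}$, matching the converse and lower-bounding $\delta$ by the same expression.

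The remaining and most delicate task is the three-term asymptotic expansion linking $\epsilon$ to $\delta$. Parametrizing $r_{\text{eff}}^2/\sigma^2=n+\sqrt{2n}\,\beta$, a Berry--Esseen (indeed Edgeworth-level) approximation of the $\chi^2_n$ tail gives $\Pr\{\norm{\mathbf{z}}>r_{\text{eff}}\}=Q(\beta)+O(1/\sqrt n)$, so fixing the probability at $\epsilon$ yields $\beta=Q^{-1}(\epsilon)+O(1/\sqrt n)$; the subtlety is that this residual must enter $\delta$ only at order $O(1/n)$, which demands the next Edgeworth term rather than a bare CLT. Feeding this into $\delta=-\tfrac1n\ln V_n-\tfrac12\ln r_{\text{eff}}^2$ together with the Stirling expansion $\tfrac1n\ln V_n=\tfrac12\ln(2\pi)-\tfrac12\ln n+\tfrac12-\tfrac{1}{2n}\ln n+O(1/n)$ cancels the $\tfrac12\ln n$ terms and leaves exactly $\delta^*=\tfrac12\ln\tfrac{1}{2\pi e\sigma^2}$, the dispersion term $-\sqrt{1/(2n)}\,Q^{-1}(\epsilon)$, and the correction $+\tfrac{1}{2n}\ln n$. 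I expect the \textbf{main obstacle} to be making the converse and achievability residuals genuinely $O(1/n)$ (not merely $o(1/\sqrt n)$), since this is what upgrades the second-order dispersion statement to the claimed third-order $\tfrac{1}{2n}\ln n$ term; controlling the collision term in the achievability at the dispersion scale is the most demanding piece.
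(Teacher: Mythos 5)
Your outline is essentially the proof of Ingber et al.\ that this theorem is quoted from (the thesis itself does not reprove it, but Chapter 4 adapts exactly this machinery to fading): an equivalent-sphere converse with a Jensen step over Voronoi-cell volumes, a Minkowski--Hlawka achievability with the ML-type collision term $\gamma V_n E\left[\norm{\mathbf{z}}^n \mathbb{1}\{\norm{\mathbf{z}}\le r\}\right]$ (correctly chosen --- a threshold/typicality decoder would lose exactly the $\tfrac{1}{2n}\ln n$ term, as the thesis notes), and the chi-squared tail plus Stirling expansion. Two small remarks: the converse as you state it presumes bounded Voronoi cells and an actual density limit, so for general IC's you still need the regularization lemma (Lemma~\ref{lem_regularixation} here, Lemma~1 of \cite{Ingber}); and your worry about needing an Edgeworth correction is unfounded, since a Berry--Esseen $O(1/\sqrt{n})$ error in $\beta=Q^{-1}(\epsilon)+O(1/\sqrt{n})$ is multiplied by $1/\sqrt{2n}$ and therefore already enters $\delta$ at $O(1/n)$.
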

\section{Related results in MIMO fading channels}
\label{sec_related_results_in_MIMO_fading_channels}
\subsection{Capacity and Error Exponent}
\label{sec_capacity_and_error_exponent}
In \cite{Telatar} the capacity and the random coding error exponent, of the ergodic power constrained MIMO Rayleigh fading channel, with $t$ transmit and $r$ receive antennas, and perfect channel knowledge at the receiver, were analyzed.

It was shown that the capacity is given by the following expression:
\begin{equation}
C = E\left\{ \ln\left(\det\left(I_t + SNR\cdot\mathbf{H}^{\dagger}\mathbf{H}\right)\right) \right\},
\end{equation}
which a simple numerical calculation of it, can be done, by using the following theorem.
\begin{theorem}[Telatar \cite{Telatar}]
\label{thm_Telatar_capacity}
The capacity of the power constrained channel with $t$ transmitters and $r$ receivers equals
\begin{equation}
C = \int_0^{\infty}\ln(1 + SNR\cdot\lambda)\sum_{k=0}^{m-1}\frac{k!}{(k+l-m)!}\left[L_k^{l-m}(\lambda)\right]^2\lambda^{l-m}e^{-\lambda}d\lambda,
\end{equation}
where $m = \min(r,t)$, $l=\max(r,t)$ and $L_j^i(x) = \frac{1}{j!}e^xx^{-i}\frac{d^j}{dx^j}(e^{-x}x^{i+j})$ are the Laguerre polynomials.
\end{theorem}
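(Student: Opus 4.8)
The plan is to reduce the capacity expression $C = E\left\{\ln\det\left(I_t + SNR\cdot\mathbf{H}^{\dagger}\mathbf{H}\right)\right\}$ to a one-dimensional integral against the mean eigenvalue density of a complex Wishart matrix, and then to evaluate that density explicitly using the orthogonal-polynomial machinery of the Laguerre ensemble.

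First I would diagonalize. Let $\lambda_1,\dots,\lambda_m$ denote the nonzero eigenvalues of $\mathbf{H}^{\dagger}\mathbf{H}$ (equivalently of $\mathbf{H}\mathbf{H}^{\dagger}$), where $m=\min(r,t)$. Since any zero eigenvalue contributes a factor $1$ to the determinant, $\det(I_t + SNR\cdot\mathbf{H}^{\dagger}\mathbf{H}) = \prod_{i=1}^m (1 + SNR\cdot\lambda_i)$, whence
\begin{equation}
C = E\left\{\sum_{i=1}^m \ln(1 + SNR\cdot\lambda_i)\right\}.
\end{equation}
The matrix whose nonzero eigenvalues are $\{\lambda_i\}$ is a central complex Wishart matrix with $l=\max(r,t)$ degrees of freedom, and its ordered eigenvalues possess the well-known joint density proportional to $\prod_{i<j}(\lambda_i-\lambda_j)^2\prod_{i=1}^m \lambda_i^{l-m} e^{-\lambda_i}$ on the positive orthant.

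Next I would exploit that the integrand is a symmetric sum of a single-variable function. By linearity of expectation and exchangeability of the eigenvalues,
\begin{equation}
C = \int_0^{\infty}\ln(1 + SNR\cdot\lambda)\,\rho_1(\lambda)\,d\lambda,
\end{equation}
where $\rho_1(\lambda)$ is the one-point correlation function (mean eigenvalue density, normalized so that $\int_0^\infty \rho_1(\lambda)\,d\lambda = m$). Comparing with the asserted formula, the whole content of the theorem is the identity
\begin{equation}
\rho_1(\lambda) = \sum_{k=0}^{m-1}\frac{k!}{(k+l-m)!}\left[L_k^{l-m}(\lambda)\right]^2\lambda^{l-m}e^{-\lambda}.
\end{equation}

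The main work, and the hard part, is establishing this last identity. The route I would take is the standard orthogonal-polynomial method: rewrite the squared Vandermonde $\prod_{i<j}(\lambda_i-\lambda_j)^2$ as $\det[\lambda_j^{i-1}]^2$, replace the monomial rows by the Laguerre polynomials $L_k^{l-m}$ (orthogonal with respect to the weight $w(\lambda)=\lambda^{l-m}e^{-\lambda}$ on $[0,\infty)$) via determinant-preserving row operations, and recognize the resulting biorthogonal product-of-determinants structure. Integrating out all but one variable then collapses, through the orthogonality relations, to the Christoffel--Darboux kernel evaluated on the diagonal, which is precisely the sum of squared normalized Laguerre polynomials above; the constant $\frac{k!}{(k+l-m)!}$ is the reciprocal squared norm $\|L_k^{l-m}\|_w^{-2}=k!/(k+l-m)!$. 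As a consistency check one verifies $\int_0^\infty\rho_1 = \sum_{k=0}^{m-1} 1 = m$. I expect the delicate points to be the bookkeeping of the orthogonality constants and the correct placement of the exponent $l-m$ in the weight; the passage from the joint density to the diagonal kernel is routine once the ensemble is correctly identified as Laguerre.
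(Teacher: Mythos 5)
Your outline is correct and is essentially Telatar's own derivation: the paper itself offers no proof of this theorem (it is quoted verbatim from \cite{Telatar}), and Telatar's argument is exactly the reduction to the one-point correlation function of the Laguerre (complex Wishart) ensemble followed by the orthogonal-polynomial/Christoffel--Darboux evaluation of that density, with $\frac{k!}{(k+l-m)!}$ appearing as the reciprocal squared norm of $L_k^{l-m}$ under the weight $\lambda^{l-m}e^{-\lambda}$. Your normalization check $\int_0^\infty\rho_1=m$ and the identification of the smaller of $\mathbf{H}^{\dagger}\mathbf{H}$, $\mathbf{H}\mathbf{H}^{\dagger}$ as the $m\times m$ Wishart matrix with $l$ degrees of freedom are both right.
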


Moreover, it was shown that the random coding error exponent of the channel, is given by:
\begin{equation}
E_r(R) = \max_{0\leq\rho\leq1}-\ln E\left\{ \det\left( I_t + \frac{SNR}{1+\rho}\cdot\mathbf{H}^{\dagger}\mathbf{H} \right)^{-\rho} \right\}-\rho R.
\end{equation}
Note, that this is not the optimal random coding error exponent, since it was derived by using the suboptimal Gaussian input distribution. Although, the choice of uniform input distribution on a ``thin spherical shell'' will give better results as in \cite{Gallager}, the Gaussian input distribution leads to simpler expressions, and also gives an upper bound on the error probability.

Finally, in \cite{MIMOErExp} the Gallager's error exponent for MIMO block fading channels with spatial correlation, can also be found.
\subsection{Moments of the Mutual Information}
\label{sec_moments_of_the_information_density}
In \cite{Oyman}, Oyman et al. analyzed the ergodic power constrained MIMO Rayleigh fading channel, with $t$ transmit and $r$ receive antennas, and perfect channel knowledge at the receiver. For Gaussian input distribution, the mutual information given the CSI is given by $I(\mathbf{H})=\ln\left(\det\left(I_t + SNR\cdot\mathbf{H}^{\dagger}\mathbf{H}\right)\right)$. Using it, they derived analytical closed-form approximations for the capacity (the expectation of the mutual information with Gaussian input distribution), and for the variance of the mutual information, at the high SNR regime. These approximations are given by the following:
\begin{align}
C &= E\{I(\mathbf{H})\} \approx m\ln(SNR) - \gamma m + \sum_{j=1}^{m}\sum_{p=1}^{l-j}\frac{1}{p},
\\
V_I &= Var(I(\mathbf{H})) \approx \sum_{j=1}^{m}\sum_{p=1}^{\infty}\frac{1}{(p+l-j)^2},
\end{align}
where $m = \min(r,t)$, $l=\max(r,t)$ and $\gamma = 0.577\dots$ is the Euler's constant.
\subsection{The Non-Ergodic Model}
In the setting of infinite constellations over the unconstrained MIMO Rayleigh channel, only the case of non-ergodic channel was analyzed.
In the non-ergodic channel it is assumed that the block length is much smaller than the channel coherence time. In other words, the channel fading matrix remains constant throughout all the codeword transmission.
It is a well known fact that the usage of multiple antennas in wireless communication is very beneficial.
On one hand, this usage increases the number of the degrees of freedom available by the channel, which allows to increase the transmission rate, i.e. increasing the multiplexing gain.
On the other hand, other communication techniques allow to increase the reliability of the transmitted signal, i.e. increasing the diversity order.
A trivial example for such a technique is the transmission of the same information on different paths of transmitting-receiving antenna pairs in the price of the multiplexing gain.
In \cite{Yona} Yona et al. derived the DMT (\emph{Diversity and Multiplexing Tradeoff}) for IC's, as Zheng et al. derived in \cite{Tse}, for the power constrained setting. Namely, for each multiplexing gain they found the maximal diversity order that can be achieved.

\chapter{Dispersion of Infinite Constellations in Fast Fading Channels}
\label{chapter:DispersionOfICInFastFadingChannels}
\renewcommand{\thefootnote}{} 
\footnotetext[1]{The material in this chapter was partially presented in \cite{Vituri} and \cite{Allerton}.}
\renewcommand{\thefootnote}{\arabic{footnote}} 
In this chapter we analyze the dispersion of infinite constellation in scalar real fast fading channels without power constraint. In Section \ref{sec_main_result} we present our main result, whose converse and direct parts are proven in Sections \ref{sec_converse_part} and \ref{sec_direct_part}, respectively. Later on, in Section \ref{sec_extension_to_the_complex_model} we extend our main result to the case of scalar complex fading channels, and in Section \ref{sec_VNR} we present our main result in terms of the VNR (Volume to Noise Ratio). Relation to the power constrained fading channel and comparison to the unconstrained AWGN channel are discussed in Sections \ref{sec_relation_to_the_power_constrained_model} and \ref{sec_comparison_to_the_awgn_channel}, respectively. Finally, in Section \ref{sec_fading_channels_with_memory} we extend the dispersion analysis to the general case of stationary fading channels with memory.
\section{Main Result}
\label{sec_main_result}
\begin{theorem}
\label{thm_main_result}
Let $\epsilon>0$ be a given, fixed, error probability. Denote by $\delta^*(n,\epsilon)$ the optimal NLD for which there exists an $n$-dimensional infinite constellation with average error probability at most $\epsilon$. Then, for any regular fading distribution of $H$, as $n$ grows,
\begin{equation}
\label{eq_main_result}
\delta^*(n,\epsilon) = \delta^* - \sqrt{\frac{V}{n}}Q^{-1}(\epsilon) + O\left(\frac{\ln(n)}{n}\right),
\end{equation}
where,
\begin{align}
\delta^* &\triangleq E\left\{ \delta(H) \right\} = E\left\{ \frac{1}{2}\ln\left(\frac{H^2}{2\pi e\sigma^2}\right) \right\}
\\ V     &\triangleq \frac{1}{2} + Var(\delta(H)) = \frac{1}{2} + Var\left(\frac{1}{2}\ln(H^2)\right)
\end{align}
noting that
\begin{equation}
\delta(H) \triangleq \frac{1}{2}\ln\left(\frac{H^2}{2\pi e\sigma^2}\right).
\end{equation}
\end{theorem}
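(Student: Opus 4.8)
\emph{Reduction by conditioning on the CSI.} The plan is to condition on the channel realization $\mathbf{H}=\text{diag}(H_1,\dots,H_n)$ and reduce each instance to an infinite constellation over the unconstrained AWGN channel, where the sharp asymptotics of Ingber et al. \cite{Ingber} are available. Given $\mathbf{H}$, the received constellation $S_{\mathbf{H}}$ is an $n$-dimensional IC in $N(0,\sigma^2)$ noise with density $\gamma_{\text{rc}}(\mathbf{H})=\gamma/\det(\mathbf{H})$, so its effective NLD is
\begin{equation}
\delta_{\text{rc}}(\mathbf{H}) = \delta - \frac{1}{n}\sum_{i=1}^{n}\frac{1}{2}\ln\left(H_i^2\right). \nonumber
\end{equation}
Writing the AWGN Poltyrev capacity as $\frac{1}{2}\ln\left(\frac{1}{2\pi e\sigma^2}\right)=\delta^*-E\left\{\frac{1}{2}\ln H^2\right\}$, the gap of the received constellation to this capacity splits as
\begin{equation}
\frac{1}{2}\ln\left(\frac{1}{2\pi e\sigma^2}\right)-\delta_{\text{rc}}(\mathbf{H}) = (\delta^*-\delta) + \frac{1}{n}\sum_{i=1}^{n}\left(\frac{1}{2}\ln H_i^2 - E\left\{\frac{1}{2}\ln H^2\right\}\right). \nonumber
\end{equation}
This identity is the crux: it isolates a deterministic gap $(\delta^*-\delta)$ from a zero-mean i.i.d.\ fluctuation.

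\emph{The two bounds.} I would prove the converse (Section \ref{sec_converse_part}) by applying the sphere-packing bound conditioned on $\mathbf{H}$, and the direct part (Section \ref{sec_direct_part}) by applying the dependence-testing bound to a suitable random IC conditioned on $\mathbf{H}$. In both cases the conditional AWGN asymptotics yield $P_e(S_{\mathbf{H}})\approx Q\!\left(\sqrt{2n}\,\mu(\mathbf{H})\right)$, where $\mu(\mathbf{H})$ is the gap above and the $\frac{1}{2n}\ln n$ correction of the AWGN result is absorbed into the final $O(\ln n/n)$ term. Choosing $\delta=\delta^*-c/\sqrt{n}$, the argument becomes
\begin{equation}
\sqrt{2n}\,\mu(\mathbf{H}) = \sqrt{2}\,c + \sqrt{\frac{2}{n}}\sum_{i=1}^{n}\left(\frac{1}{2}\ln H_i^2 - E\left\{\frac{1}{2}\ln H^2\right\}\right), \nonumber
\end{equation}
whose second term converges in distribution, by the central limit theorem, to a zero-mean Gaussian of variance $2\,Var\!\left(\frac{1}{2}\ln H^2\right)$.

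\emph{Combining the two Gaussian contributions.} Averaging over $\mathbf{H}$ and using that for $W\sim N(0,s^2)$ independent of a standard normal one has $E\{Q(a+W)\}=Q\!\left(a/\sqrt{1+s^2}\right)$, the overall error probability tends to
\begin{equation}
P_e \approx Q\!\left(\frac{\sqrt{2}\,c}{\sqrt{1+2\,Var\!\left(\frac{1}{2}\ln H^2\right)}}\right). \nonumber
\end{equation}
Setting this equal to $\epsilon$ and solving for $c$ gives $c=\sqrt{\frac{1}{2}+Var\!\left(\frac{1}{2}\ln H^2\right)}\,Q^{-1}(\epsilon)$, hence $\delta^*(n,\epsilon)=\delta^*-\sqrt{V/n}\,Q^{-1}(\epsilon)$ with $V=\frac{1}{2}+Var(\delta(H))$, exactly as claimed. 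The summand $\frac{1}{2}$ is the AWGN dispersion inherited from the noise, while $Var(\delta(H))$ is the additional dispersion contributed by the fading.

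\emph{Main obstacle.} The delicate step is making the interchange of the central limit theorem and the conditional AWGN normal approximation uniform enough to retain the $O(\ln n/n)$ remainder after averaging over $\mathbf{H}$; this calls for Berry--Esseen/Edgeworth control of both approximations simultaneously. A second, related difficulty is the contribution of \emph{deep fades}, i.e.\ realizations in which some $H_i\approx 0$ drive $\mu(\mathbf{H})$ strongly negative and $P_e(S_{\mathbf{H}})$ close to $1$; these must be shown to be negligible at the $1/\sqrt{n}$ scale. This is precisely where the \emph{regular fading distribution} hypothesis enters: $f(h)\propto h^{\alpha-1}$ near $0$ forces $Pr\{\frac{1}{2}\ln H^2<-t\}$ to decay exponentially in $t$, which both guarantees finiteness of $E\{\delta(H)\}$ and $Var(\delta(H))$ and renders the large-deviation contribution of deep fades exponentially small, hence irrelevant to the dispersion term.
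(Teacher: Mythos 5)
Your converse plan is, up to a reordering of operations, the paper's own argument: the paper forms the single statistic $\zeta_n=\tfrac{1}{\sqrt2}Y_n-\sqrt{Var(\delta(H))}\,S_n$ combining the log of the noise norm and the log of $\det(\mathbf{H})$, and proves it is uniformly close to $N(0,V)$ via Lemma \ref{lem_ln_chi2_pdf}, the Berry--Esseen bound, and Lemma \ref{lem_sum_of_almost_normal_RVs}; your identity $E\{Q(a+W)\}=Q\bigl(a/\sqrt{1+s^2}\bigr)$ together with a sup-norm CDF bound on the fading sum is exactly the content of Lemma \ref{lem_sum_of_almost_normal_RVs}, applied after Fubini rather than before. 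Your variance bookkeeping ($\tfrac12$ from the noise, $Var(\delta(H))$ from the fading) and your reading of the regular-fading hypothesis (exponential lower tail of $\ln H$, hence negligible deep fades and finite moments) are both correct. Two converse-side details you should still supply: the sphere-packing bound must first be extended from equal-volume/regular ICs to arbitrary ICs (the paper's regularization lemma, which is itself where the $\xi$-strong fading realizations are excised), and the conditional AWGN approximation must be uniform in $\mathbf{H}$ at the $O(1/\sqrt n)$ level, which is what the $L^1$ bound on the PDF error $e_n(y)$ in Lemma \ref{lem_ln_chi2_pdf} delivers.

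The genuine gap is in the direct part. You cannot invoke ``conditional AWGN asymptotics'' $P_e(S_{\mathbf{H}})\approx Q(\sqrt{2n}\,\mu(\mathbf{H}))$ for the achievability side the way you can for the converse: the transmitter does not know $\mathbf{H}$, so the constellation cannot be adapted to the realization, and a single fixed IC is not guaranteed to meet the conditional sphere-packing performance simultaneously for every $\mathbf{H}$ --- that is an existence statement per channel, not per code. The paper avoids this by running the dependence-testing bound on the joint channel output $(\mathbf{y},\mathbf{H})$ with the information density $i(\mathbf{x};\mathbf{y},\mathbf{H})=\sum_j i(X_j;Y_j,H_j)$, an i.i.d.\ sum whose (unconditional) variance is $\tfrac12+Var(\delta(H))+O((\sigma/a)^{\alpha/2})$, so a single application of the i.i.d.\ Berry--Esseen bound suffices; if you insist on conditioning first, you need a non-identically-distributed Berry--Esseen bound whose constants are controlled uniformly over $\mathbf{H}$, which is delicate precisely in the deep-fade regime. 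Your proposal also omits the construction that makes the DT bound (a finite-$M$ statement) yield an infinite constellation: codewords uniform in $\text{Cb}(a)$ with $a=a(n)\to\infty$, the moment estimates of Lemma \ref{lem_information_density_moments} with their $O((\sigma/a)^{\alpha})$ corrections (a second, independent place where the regular-fading assumption is used), and the tiling of the resulting FC over $\mathbb{R}^n$ with a spacing argument that again isolates strong fades. Without these, the direct part is a plan rather than a proof.
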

The converse and the direct parts of the proof of this theorem are given in Sections \ref{sec_converse_part} and \ref{sec_direct_part}, respectively.
\begin{cor}
The highest achievable NLD with arbitrary small error probability, namely the Poltyrev's capacity, over the unconstrained fast fading channel with available CSI at the receiver, is given by
\begin{equation}
\delta^* \triangleq E\left\{ \frac{1}{2}\ln\left(\frac{H^2}{2\pi e\sigma^2}\right) \right\}.
\end{equation}
\end{cor}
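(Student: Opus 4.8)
The plan is to read the corollary off Theorem \ref{thm_main_result} directly, since the Poltyrev capacity is by definition the supremum of NLDs attainable with vanishing error probability, and the asymptotic expansion already isolates this quantity as its leading, $\epsilon$-independent term. I would first fix the operational definition: call an NLD $\delta$ achievable if there is a sequence of $n$-dimensional IC's $S_n$ with $\liminf_n \delta(S_n) \ge \delta$ and $P_e(S_n) \to 0$, and let the Poltyrev capacity be the supremum of all achievable $\delta$. The goal is then to show this supremum equals $E\{\frac{1}{2}\ln(H^2/(2\pi e \sigma^2))\} = \delta^*$, which splits into a converse ($\le \delta^*$) and a direct ($\ge \delta^*$) part, each obtained by taking $n \to \infty$ in the theorem.

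For the converse I would fix any $\delta > \delta^*$ and any fixed $\epsilon \in (0,\tfrac{1}{2})$. By Theorem \ref{thm_main_result}, $\delta^*(n,\epsilon) = \delta^* - \sqrt{V/n}\,Q^{-1}(\epsilon) + O(\ln n / n) \to \delta^*$, so $\delta^*(n,\epsilon) < \delta$ for all large $n$. Since $\delta^*(n,\epsilon)$ is by definition the largest NLD attainable at block length $n$ with error at most $\epsilon$, any sequence $S_n$ with $\delta(S_n) \ge \delta$ must satisfy $P_e(S_n) > \epsilon$ for all large $n$, hence $\liminf_n P_e(S_n) \ge \epsilon > 0$. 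Thus no $\delta > \delta^*$ is achievable, giving capacity $\le \delta^*$.

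For the direct part I would fix any $\delta < \delta^*$ and run a diagonalization over the error level. For each fixed $\epsilon$, Theorem \ref{thm_main_result} gives $\delta^*(n,\epsilon) \to \delta^* > \delta$, so there is $N_\epsilon$ with an $n$-dimensional IC of NLD at least $\delta$ and error at most $\epsilon$ for every $n \ge N_\epsilon$. Equivalently, I would let $\epsilon = \epsilon_n \to 0$ slowly enough that the crossover term stays negligible, $\sqrt{V/n}\,Q^{-1}(\epsilon_n) \to 0$ (e.g.\ $\epsilon_n = 1/\ln n$, using $Q^{-1}(\epsilon) = O(\sqrt{\ln(1/\epsilon)})$), which produces a full sequence with $\liminf_n \delta(S_n) \ge \delta$ and $P_e(S_n) \to 0$. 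Hence every $\delta < \delta^*$ is achievable, so the capacity is at least $\delta^*$; combining the two bounds yields the claim.

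The substantive content lives entirely in Theorem \ref{thm_main_result}, so I do not expect a genuine obstacle: only the refined, $n\to\infty$ behavior is needed and the dispersion and $O(\ln n / n)$ terms are discarded. The one point requiring care is the order-of-limits issue in the direct part, where $\epsilon$ must be driven to zero \emph{together} with $n \to \infty$ rather than after it; the correction term $-\sqrt{V/n}\,Q^{-1}(\epsilon)$ is exactly what dictates how fast $\epsilon_n$ may vanish, and I would verify that a sufficiently slow $\epsilon_n$ keeps this term $o(1)$ so that the leading term $\delta^*$ survives the double limit.
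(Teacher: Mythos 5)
Your proposal is correct and follows essentially the same route as the paper, which simply takes the limit $n\to\infty$ in \eqref{eq_main_result} for each fixed $0<\epsilon<1$; your converse/direct split and the diagonalization $\epsilon_n\to 0$ just make explicit the order-of-limits detail that the paper leaves implicit. No new idea beyond Theorem \ref{thm_main_result} is needed in either version.
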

\begin{proof}
By taking the limit $n\rightarrow\infty$ in \eqref{eq_main_result} we get the desired result (for any $0 < \epsilon < 1$).
\end{proof}
\section{Converse Part}
\label{sec_converse_part}
In this section we prove the converse part of Theorem \ref{thm_main_result}. The converse part is based on normal approximation of the \emph{sphere packing lower bound} on the average error probability.
The sphere packing lower bound of IC's over fading channels is presented in Section \ref{sec_sphere_packing_bound}, and in Section \ref{sec_proof_of_converse_part} we complete the proof by a derivation of an appropriate normal approximation technique.
\subsection{The Sphere Packing Bound}
\label{sec_sphere_packing_bound}
In this section we prove the following \emph{sphere packing bound} for any IC $S$ with NLD $\delta$.
\begin{theorem}
\label{thm_sphere_packing_lower_bound}
For any IC $S$ with NLD $\delta$, the average error probability is lower bounded by the following sphere packing bound:
\begin{equation}
\label{eq_general_SPB}
P_e\left(S\right) \geq P_e^{\text{SB}}\left(\delta\right) \triangleq Pr\left\{\left\|\mathbf{z}\right\|^2 \geq e^{-2\delta}\left(\frac{\det(\mathbf{H})}{V_n}\right)^{\frac{2}{n}} \right\}.
\end{equation}
\end{theorem}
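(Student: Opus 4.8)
The goal is a sphere-packing lower bound on the average error probability of an infinite constellation $S$ with NLD $\delta$, under the scalar fading channel with perfect CSI at the receiver. The author wants to show
$$P_e(S) \geq Pr\left\{\|\mathbf{z}\|^2 \geq e^{-2\delta}\left(\frac{\det(\mathbf{H})}{V_n}\right)^{2/n}\right\}.$$
Let me think carefully about what this says and how to derive it.
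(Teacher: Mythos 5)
Your proposal contains no proof: after restating the claim you stop at ``let me think carefully about how to derive it.'' Every substantive step is missing, so there is nothing to verify. Concretely, what a proof of this bound requires --- and what the paper supplies --- is the following chain. First, for a fixed fading realization $\mathbf{H}$ and a codeword whose Voronoi cell at the receiver has volume $v$, the \emph{equivalent sphere} argument gives $P_e(s_{\text{rc}}|\mathbf{H}) \geq \text{SPB}(v|\mathbf{H})$, the probability that the Gaussian noise leaves a ball of volume $v$. Second, to pass from per-codeword volumes to the density $\gamma_{\text{rc}} = \gamma/\det(\mathbf{H})$, one needs the convexity of $\text{SPB}(\cdot|\mathbf{H})$ in the volume together with Jensen's inequality applied to the conditional average over codewords in a faded cube; this already requires restricting to \emph{regular} ICs (bounded Voronoi cells) so that the average cell volume equals $1/\gamma_{\text{rc}}$. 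Third --- and this is the step most easily overlooked --- a general IC need not be regular (cells may be unbounded, and the density may only exist as a $\limsup$), so one must run a regularization argument: replace $S_{\mathbf{H}}$ by a regular IC with density at least $\gamma_{\text{rc}}/(1+\xi)$ and error probability at most $(1+\xi)\epsilon(\mathbf{H})$, which in the fading setting only works after excising the ``$\xi$-strong'' fading realizations (those with very small $H_{\min}$), whose total probability is $\xi$ and is absorbed into the bound before letting $\xi \to 0$. None of these ingredients, nor any alternative route, appears in your submission, so the claim remains unproven.
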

The proof will be done in stages, first for the case of IC's where all the Voronoi cells have equal volume (e.g. lattices), then for the case of IC's with bounded  Voronoi cells' volume and finally for the general case of any IC.

In the case where all the Voronoi cells have equal volume $V_{\text{tr}}$,
in the receiver given the CSI $\mathbf{H}$, we get an IC with Voronoi cell volume that equals $V_{\text{rc}} = V_{\text{tr}}\cdot\det(\mathbf{H}) = V_{\text{tr}}\Pi_{i=1}^{n}{H_i}$.
By the \emph{equivalent sphere} argument \cite{Poltyrev}\cite{Tarokh}, the probability that the noise leaves the Voronoi cell in the receiver is lower bounded by the probability to leave a sphere of the same volume:
\begin{equation}
\label{eq_SPB}
P_e\left(S\right) \geq Pr\left\{\left\|\mathbf{z}\right\|^2 \geq r_{\text{eff}}^{2}(\mathbf{H}) \right\},
\end{equation}
where
\begin{equation}
\label{eq_equivalent_sphere}
V_nr_{\text{eff}}^{n}(\mathbf{H}) \triangleq V_{\text{rc}}
\end{equation}
and
\begin{equation}
\label{eq_sphere_vol_coef}
V_n = \frac{\pi^{n/2}}{\frac{n}{2}\Gamma\left(\frac{n}{2}\right)}.
\end{equation}
Combining \eqref{eq_SPB}, \eqref{eq_equivalent_sphere}, \eqref{eq_sphere_vol_coef} with the definition of $\delta = -\frac{1}{n}\ln(V_{\text{tr}})$ we get:
\begin{equation}
P_e\left(S\right) \geq Pr\left\{\left\|\mathbf{z}\right\|^2 \geq e^{-2\delta}\left(\frac{\det(\mathbf{H})}{V_n}\right)^{\frac{2}{n}} \right\} \triangleq P_e^{\text{SB}}\left(\delta\right).
\end{equation}
Now let us extend the correctness of the bound to any IC with bounded Voronoi cells' volume (\emph{regular IC's}).
\begin{defn}
(Regular IC's): An IC S is called regular if there exists a radius $r_0 > 0$, s.t. for all $s \in S$, the Voronoi cell $W(s)$ is contained in $\emph{\text{Ball}}(s,r_0) \triangleq \left\{ \mathbf{x} \in \mathbb{R}^n ~ s.t. ~\|\mathbf{x} - s\| < r_0 \right\}$.
\end{defn}
For $s \in S$, denote by $v(s)$ the volume of the Voronoi cell of $s$, and denote by $V\left(S\right)$ the average Voronoi cell volume of $S$. Then, by definition
\begin{equation}
V\left(S\right) \triangleq \liminf_{a \rightarrow \infty} E_{S,a}\left\{v(s)\right\} = \liminf_{a \rightarrow \infty} \frac{1}{M(S,a)} \sum_{s \in S\bigcap\text{Cb}(a)}{v(s)}.
\end{equation}
It is easy to verify that for any regular IC, the density is given by $\gamma = \frac{1}{V\left(S\right)}$.

Clearly, for any given $\mathbf{H}$, the receiver IC is also regular. Hence, in the same manner, we can define the receiver's average Voronoi cell volume of $S_{\mathbf{H}}$ by
\begin{equation}
V\left(S_{\mathbf{H}}\right) \triangleq \liminf_{a \rightarrow \infty} E_{S,a|\mathbf{H}}\left\{v(s_{\text{rc}})\right\}.
\end{equation}
The density at the receiver is given by $\gamma_{\text{rc}} = \frac{1}{V\left(S_{\mathbf{H}}\right)} = \frac{\gamma}{\det\left(\mathbf{H}\right)}$.

To prove the sphere bound for regular IC's it is desirable for the clarity of the proof to denote by $\text{SPB}\left(v|\mathbf{H}\right)$, the probability that the noise vector $\mathbf{z}$ leaves a sphere of volume $v$ given the CSI $\mathbf{H}$. With this notation,
\begin{equation}
P_e\left(s_{\text{rc}}|\mathbf{H}\right) \geq \text{SPB}\left(v\left(s_{\text{rc}}\right) | \mathbf{H} \right)
= Pr\left\{\left\|\mathbf{z}\right\|^2 \geq \left(\frac{v\left(s_{\text{rc}}\right)}{V_n}\right)^{\frac{2}{n}} \Big| \mathbf{H} \right\}
\end{equation}
for any $s_{\text{rc}} \in S_{\mathbf{H}}$.
\begin{lem}
\label{lem_sphere_packing_lower_bound_for_regular_ICs}
For any regular IC $S$ with NLD $\delta$, the average error probability is lower bounded by the following sphere packing bound
\begin{equation}
P_e\left(S\right) \geq P_e^{\text{SB}}\left(\delta\right) \triangleq Pr\left\{\left\|\mathbf{z}\right\|^2 \geq e^{-2\delta}\left(\frac{\det(\mathbf{H})}{V_n}\right)^{\frac{2}{n}} \right\}.
\end{equation}
\end{lem}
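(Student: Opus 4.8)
The plan is to reduce the regular case to the equal-volume (lattice) case already established, by exploiting convexity of the single-cell spherical bound together with Jensen's inequality. Throughout, write $\text{SPB}(v|\mathbf{H}) = Pr\{\left\|\mathbf{z}\right\|^2 \geq (v/V_n)^{2/n} \mid \mathbf{H}\}$ for the probability that the noise escapes a sphere of volume $v$. The starting point is the per-point estimate $P_e(s_{\text{rc}}|\mathbf{H}) \geq \text{SPB}(v(s_{\text{rc}})|\mathbf{H})$, which holds for every $s_{\text{rc}} \in S_{\mathbf{H}}$ by the equivalent-sphere argument and is meaningful here precisely because regularity forces each Voronoi cell to have finite volume $v(s_{\text{rc}})$.

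First I would average this inequality over the code points lying in the faded hypercube, obtaining $E_{S,a|\mathbf{H}}\{P_e(s_{\text{rc}}|\mathbf{H})\} \geq E_{S,a|\mathbf{H}}\{\text{SPB}(v(s_{\text{rc}})|\mathbf{H})\}$. The key structural fact is that, for fixed $\mathbf{H}$, the map $v \mapsto \text{SPB}(v|\mathbf{H})$ is convex on $(0,\infty)$. I would verify this by differentiating: writing $r(v) = (v/V_n)^{1/n}$ and using that $\left\|\mathbf{z}\right\|$ has density proportional to $r^{n-1}e^{-r^2/2\sigma^2}$, the derivative reduces to $\frac{d}{dv}\text{SPB}(v|\mathbf{H}) = -c\,e^{-r(v)^2/2\sigma^2}$ for a constant $c>0$, which is negative but increasing in $v$, so the second derivative is positive. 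Note that convexity holds in the volume variable $v$ even though the tail probability fails to be convex in the radius $r$; this is the content of the change of variables. Applying Jensen's inequality conditionally on $\mathbf{H}$ then yields $E_{S,a|\mathbf{H}}\{\text{SPB}(v(s_{\text{rc}})|\mathbf{H})\} \geq \text{SPB}(E_{S,a|\mathbf{H}}\{v(s_{\text{rc}})\}|\mathbf{H})$.

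Next I would pass to the limit $a \to \infty$. Since $\text{SPB}(\cdot|\mathbf{H})$ is continuous and strictly decreasing, a monotone-function-swaps-liminf-and-limsup argument gives $\limsup_{a}\text{SPB}(E_{S,a|\mathbf{H}}\{v(s_{\text{rc}})\}|\mathbf{H}) = \text{SPB}(\liminf_{a}E_{S,a|\mathbf{H}}\{v(s_{\text{rc}})\}|\mathbf{H}) = \text{SPB}(V(S_{\mathbf{H}})|\mathbf{H})$, where the last equality is the definition of the receiver's average cell volume. Substituting $V(S_{\mathbf{H}}) = \det(\mathbf{H})/\gamma = \det(\mathbf{H})e^{-n\delta}$ and simplifying the threshold via $(V(S_{\mathbf{H}})/V_n)^{2/n} = e^{-2\delta}(\det(\mathbf{H})/V_n)^{2/n}$ collapses the conditional bound into the desired form.

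Finally, taking the outer expectation over $\mathbf{H}$ and recognizing that $E\{\text{SPB}(V(S_{\mathbf{H}})|\mathbf{H})\}$ is exactly the unconditional probability $Pr\{\left\|\mathbf{z}\right\|^2 \geq e^{-2\delta}(\det(\mathbf{H})/V_n)^{2/n}\} = P_e^{\text{SB}}(\delta)$ completes the argument. The main obstacle I anticipate is the limiting step: the conditional averages $E_{S,a|\mathbf{H}}\{v(s_{\text{rc}})\}$ are controlled through a $\liminf$ while the error probability is controlled through a $\limsup$, so I must invoke monotonicity of $\text{SPB}$ to align them correctly, and I must use regularity (uniformly bounded cells, so that $\gamma = 1/V(S)$ and every conditional expectation is finite) to justify both the averaging and the conditional Jensen step.
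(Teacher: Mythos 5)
Your proposal is correct and follows essentially the same route as the paper's proof: the per-cell equivalent-sphere bound, Jensen's inequality via convexity of $\text{SPB}(v|\mathbf{H})$ in $v$, the monotonicity argument to pass the limit inside and land on $V(S_{\mathbf{H}}) = e^{-n\delta}\det(\mathbf{H})$, and the outer expectation over $\mathbf{H}$. The only differences are in your favor: you actually verify the convexity by the change of variables to $v$ (the paper merely asserts it) and you make explicit the $\liminf$/$\limsup$ alignment that the paper's displayed chain glosses over.
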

\begin{proof}
By definition the average error probability is given by
\begin{align}
P_e\left(S\right) &\triangleq E\left\{ \limsup_{a \rightarrow \infty} E_{S,a|\mathbf{H}}\left\{ P_e\left(s_{\text{rc}}|\mathbf{H}\right) \right\} \right\}
\\                \label{align_SPB_regular_IC_step_1}
                  &\geq       E\left\{ \limsup_{a \rightarrow \infty} E_{S,a|\mathbf{H}}\left\{ \text{SPB}\left(v\left(s_{\text{rc}}\right) | \mathbf{H} \right) \right\} \right\}
\\                \label{align_SPB_regular_IC_step_2}
                  &\geq       E\left\{ \limsup_{a \rightarrow \infty}  \text{SPB}\left(E_{S,a|\mathbf{H}}\left\{v\left(s_{\text{rc}}\right)\right\} | \mathbf{H} \right)  \right\}
\\                \label{align_SPB_regular_IC_step_3}
                  &=          E\left\{ \text{SPB}\left(\limsup_{a \rightarrow \infty} E_{S,a|\mathbf{H}}\left\{v\left(s_{\text{rc}}\right)\right\} | \mathbf{H} \right)  \right\}
\\                &=          E\left\{ \text{SPB}\left(V\left(S_{\mathbf{H}}\right) | \mathbf{H} \right)  \right\}
\\                &=          Pr\left\{\left\|\mathbf{z}\right\|^2 \geq \left(\frac{V\left(S_{\mathbf{H}}\right)}{V_n}\right)^{\frac{2}{n}} \right\}
\\                &=          Pr\left\{\left\|\mathbf{z}\right\|^2 \geq e^{-2\delta}\left(\frac{\det(\mathbf{H})}{V_n}\right)^{\frac{2}{n}} \right\}
                  \triangleq P_e^{\text{SB}}\left(\delta\right)
\end{align}
where \eqref{align_SPB_regular_IC_step_1} follows from the sphere packing bound for each $s_{\text{rc}} \in S_{\mathbf{H}}$, \eqref{align_SPB_regular_IC_step_2} follows from Jensen's inequality and the convexity of the function $\text{SPB}\left(v|\mathbf{H}\right)$ in $v$ and \eqref{align_SPB_regular_IC_step_3} follows from the fact that $\text{SPB}\left(v|\mathbf{H}\right)$ is monotone decreasing and a continuous function of $v$. All the next steps are trivial.
\end{proof}
Now we are ready to proof the validity of the sphere packing bound to any IC. This includes IC's with unbounded Voronoi's cells and IC's with density which oscillates with the cube size $a$ (i.e. only the limsup exists in the definition of $\gamma$). The proof is based on a very similar regularization process as done in \cite[Lemma 1]{Ingber} for AWGN channels. Here, in the fading channel case, we will need to separate from the analysis all the ``strong'' fading channel realizations, which are formally defined in the following, and use the regularization process only for the rest of the ``weak'' fading realizations. By showing that the ``strong'' fading realizations in regular fading distributions are an arbitrarily small fraction of the whole realizations space, we will complete the proof of the bound.
\begin{defn}
($\xi$ - strong fading realization): Let us denote by $\mathrm{H} = \emph{\text{diag}}(h_1, \dots, h_n)$ a fading channel realization drawn from a regular fading distribution of the random fading matrix $\mathbf{H}$. For a given $\xi > 0$, let us define a fading threshold $h^*_{\min}(\xi)$ as the solution of $Pr\{H_{\min} \leq h^*_{\min} \} = \xi$, where $H_{\min} \triangleq \min(H_1, \dots, H_n)$. If $h_{\min} \triangleq \min(h_1, \dots, h_n) \leq h^*_{\min}(\xi)$ then $\mathrm{H}$ is called a $\xi$ - strong fading channel realization.
\end{defn}
\begin{lem}
\label{lem_regularixation}
(Regularization): Given the fading channel realization $\mathbf{H}$, let $S_{\mathbf{H}}$ be an IC with density $\gamma_{\emph{\text{rc}}}\left({\mathbf{H}}\right)$ and average error probability $P_e\left(S_{\mathbf{H}}\right) = \epsilon(\mathbf{H})$. For any $\xi > 0$, if $\mathbf{H}$ is not a $\xi$ - strong fading realization then there exists a regular IC, denoted by $S^{'}_{\mathbf{H}}$, with density $\gamma^{'}_{\emph{\text{rc}}}\left({\mathbf{H}}\right) \geq \gamma_{\emph{\text{rc}}}\left({\mathbf{H}}\right)/(1 + \xi)$ and average error probability $P_e\left(S^{'}_{\mathbf{H}}\right) \leq \epsilon(\mathbf{H})(1 + \xi)$.
\end{lem}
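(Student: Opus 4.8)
The plan is to adapt the AWGN regularization of \cite[Lemma 1]{Ingber} to the faded received constellation $S_{\mathbf{H}}$, working throughout in the received space $\mathbb{R}^n$ where the additive noise $\mathbf{z}$ is the isotropic $N(0,\sigma^2 I_n)$. Since the realization $\mathbf{H}=\text{diag}(h_1,\dots,h_n)$ is fixed here, the only effect of the channel is that $S_{\mathbf{H}}=\mathbf{H}\cdot S$ is a geometrically distorted copy of the transmitted IC, with received density $\gamma_{\text{rc}}(\mathbf{H})=\gamma/\det(\mathbf{H})$, and the faded cube $\mathbf{H}\cdot\text{Cb}(a)$ is an axis-aligned box with side lengths $h_i a$. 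The goal is to produce a regular IC $S'_{\mathbf{H}}$ (all Voronoi cells contained in a fixed ball of radius $r_0$) whose density and average error probability differ from those of $S_{\mathbf{H}}$ by at most the multiplicative factor $(1+\xi)$; the construction below is applied to $\mathbf{H}^{-1}$-images so that $S'_{\mathbf{H}}$ remains a genuine faded constellation.

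First I would enforce a clean, bounded density. Because $\gamma_{\text{rc}}(\mathbf{H})$ and $\epsilon(\mathbf{H})$ are defined through $\limsup_{a\to\infty}$, a naive construction fails on subsequences of $a$ along which the local point count dips far below $\gamma_{\text{rc}}(\mathbf{H})\cdot\text{Vol}(\mathbf{H}\cdot\text{Cb}(a))$, since on such shells any auxiliary points we insert would dominate the average. To avoid this I would select one large box $\mathbf{H}\cdot\text{Cb}(b)$ on which $M(S_{\mathbf{H}},b)$ is within a $(1+\xi)$-factor of $\gamma_{\text{rc}}(\mathbf{H})\cdot\text{Vol}(\mathbf{H}\cdot\text{Cb}(b))$ and the empirical average of $P_e(s_{\text{rc}}|\mathbf{H})$ over its points is within a $(1+\xi)$-factor of $\epsilon(\mathbf{H})$. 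Such a $b$ exists: the error $\limsup$ forces the empirical average below $\epsilon(\mathbf{H})(1+\xi)$ for \emph{all} large $b$, while the density $\limsup$ is met along a subsequence of large $b$. Tiling $\mathbb{R}^n$ by translates of the finite pattern in that box yields a periodic constellation whose density is a genuine limit (not merely a $\limsup$) and whose per-point error equals, up to boundary points, the empirical average inside the box.

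Next I would enforce regularity by superimposing a sparse auxiliary lattice $\Lambda$ of finite covering radius, so that every Voronoi cell of the union is contained in a ball of that radius and $S'_{\mathbf{H}}$ is regular, but with density $\gamma_\Lambda$ an arbitrarily small fraction of $\gamma_{\text{rc}}(\mathbf{H})$. Under the union the original cells only shrink, so their error probabilities do not increase, while the inserted points contribute to the average at most their population fraction $\gamma_\Lambda/(\gamma_{\text{rc}}(\mathbf{H})+\gamma_\Lambda)$ times one. Choosing $\gamma_\Lambda \le \xi\,\epsilon(\mathbf{H})\,\gamma_{\text{rc}}(\mathbf{H})$ keeps the total error below $\epsilon(\mathbf{H})(1+\xi)$, while the density, being bounded below by $\gamma_{\text{rc}}(\mathbf{H})/(1+\xi)$ after the periodization loss (and only increased by $\Lambda$), satisfies the stated bound. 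Splitting the error budget between the periodization and lattice steps, so that the product of the two slacks stays within $1+\xi$, gives the claim.

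The delicate point, and the place where the hypothesis that $\mathbf{H}$ is \emph{not} a $\xi$-strong fading realization enters, is the boundary accounting in the periodization step. The faded box $\mathbf{H}\cdot\text{Cb}(b)$ has side lengths $h_i b$, so its thinnest dimension is $h_{\min}b$; the fraction of points lying near the box boundary, whose cells and error behaviour are altered by tiling, scales with the surface-to-volume ratio, governed by $1/h_{\min}$. The condition $h_{\min}>h^*_{\min}(\xi)$ bounds this aspect ratio uniformly, so the boundary effects are absorbed into the $(1+\xi)$ slack; for a strong realization ($h_{\min}$ arbitrarily small) the box degenerates into a thin slab and this control is lost, which is exactly why such realizations must be excised and handled separately in the proof of Theorem~\ref{thm_sphere_packing_lower_bound}. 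I expect this boundary/aspect-ratio control, together with the reconciliation of the two separate $\limsup$s defining density and error, to be the main technical obstacle; the lattice-capping step is routine once a clean limiting density is in hand.
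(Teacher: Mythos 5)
Your overall architecture (pick a single large box on which both $\limsup$s are nearly attained, then periodize, then patch up regularity) matches the paper's, and your observation that the two $\limsup$s defining $\gamma_{\text{rc}}(\mathbf{H})$ and $\epsilon(\mathbf{H})$ must be reconciled on a common box is exactly the first stage of the paper's proof. But there is a genuine gap in your periodization step. You tile $\mathbb{R}^n$ by translates of the finite pattern with period equal to the box side, and you control the resulting damage by arguing that the fraction of points lying near the box boundary scales with the surface-to-volume ratio. That accounting fails for a general IC: nothing prevents the codewords from concentrating arbitrarily close to the boundary of every large box, in which case essentially all points are ``boundary points,'' and two codewords from adjacent copies can then end up arbitrarily close to one another, driving their error probabilities up to a constant and the average error far above $\epsilon(\mathbf{H})(1+\xi)$. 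The paper avoids this by tiling with period $a_*+2\Delta$ rather than $a_*$, i.e.\ inserting a guard band of width $2\Delta$ between copies, and then bounding the \emph{extra} error of \emph{every} codeword (not only the boundary ones) by a union bound over the $2n$ separating walls, $\sum_{i=1}^{n}2Q\left(H_i\Delta/\sigma\right)\leq 2nQ\left(h^*_{\min}\Delta/\sigma\right)$, which is made equal to $\tfrac{\xi}{2}\epsilon(\mathbf{H})$ by the choice of $\Delta$. This is also where the hypothesis that $\mathbf{H}$ is not a $\xi$-strong realization actually enters (it guarantees $h_i>h^*_{\min}$ for every $i$, so the guard band in received coordinates has width at least $2h^*_{\min}\Delta$ in every direction), not through the aspect-ratio/boundary-fraction mechanism you describe. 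The density loss from the guard band is then exactly $\left((a_*+2\Delta)/a_*\right)^n$, which is kept below $\sqrt{1+\xi}$ by taking $a_*$ larger than $a_\Delta$.

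A secondary remark: your auxiliary sparse lattice is unnecessary. Once the (nonempty) finite pattern is replicated periodically with period $a_*+2\Delta$, every codeword has the $2n$ neighbours $s_{\text{rc}}\pm h_i(a_*+2\Delta)\underline e_i$, so every Voronoi cell is contained in a ball of radius $\sqrt{n}(a_*+2\Delta)h_{\max}$ and the constellation is already regular; the paper uses exactly this. Your lattice superposition is a legitimate alternative device, and your population-fraction bound on its error contribution is sound, but it does not repair the missing guard band, which is the essential ingredient.
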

\begin{proof}
See Appendix \ref{app_proof_regularization_lemma}.
\end{proof}
\begin{proof}[Proof of Theorem \ref{thm_sphere_packing_lower_bound}]
For a given $\mathbf{H}$, denote the receiver IC by $S_{\mathbf{H}}$. For any $\xi > 0$, by the \emph{regularization lemma}, if $\mathbf{H}$ is not a \emph{$\xi$ - strong fading realization}, then there exists a regular IC, denoted by $S^{'}_{\mathbf{H}}$, with density
\begin{equation}
\gamma^{'}_{\text{rc}}\left(\mathbf{H}\right) \geq \gamma_{\text{rc}}\left(\mathbf{H}\right)/\left(1+\xi\right)= \frac{\gamma}{\left(1+\xi\right)}\cdot\frac{1}{\det\left({\mathbf{H}}\right)}
\end{equation}
and average error probability
\begin{equation}
P_e\left(S^{'}_{\mathbf{H}}\right) \leq P_e\left(S_{\mathbf{H}}\right)\left(1+\xi\right),
\end{equation}
where $\gamma = e^{n\delta}$. Moreover, by the $\xi$ - strong fading definition $Pr\left\{ H_{\text{min}} \leq h^*_{\text{min}} \right\} = \xi$. Following this, we can derive the inequalities below:
\begin{align}
\left(1+\xi\right)P_e\left(S\right) &= E\left\{ \left(1+\xi\right)P_e\left(S_{\mathbf{H}}\right) \right\}
\\ \label{align_spb_step_1}
   &\geq E\left\{ \left(1+\xi\right)P_e\left(S_{\mathbf{H}}\right) \cdot 1_{\left\{ H_{\text{min}} > h^*_{\text{min}} \right\}} \right\}
\\ \label{align_spb_step_2}
   &\geq E\left\{ P_e\left(S^{'}_{\mathbf{H}}\right) \cdot 1_{\left\{ H_{\text{min}} > h^*_{\text{min}} \right\}} \right\}
\\ \label{align_spb_step_3}
   &\geq E\left\{ \text{SPB}\left( {\gamma^{'-1}_{\text{rc}}} \Big| \mathbf{H} \right) \cdot 1_{\left\{ H_{\text{min}} > h^*_{\text{min}} \right\}} \right\}
\\ \label{align_spb_step_4}
   &\geq E\left\{ \text{SPB}\left( {\gamma}^{-1}\det\left(\mathbf{H}\right)\left(1+\xi\right) \Big| \mathbf{H} \right) \cdot 1_{\left\{ H_{\text{min}} > h^*_{\text{min}} \right\}} \right\}
\\ \label{align_spb_step_5}
   &=    E\left\{ \text{SPB}\left( {\gamma}^{-1}\det\left(\mathbf{H}\right)\left(1+\xi\right) \Big| \mathbf{H} \right) \cdot \left(1 - 1_{\left\{ H_{\text{min}} \leq h^*_{\text{min}} \right\}} \right) \right\}
\\ \label{align_spb_step_6}
   &\geq E\left\{ \text{SPB}\left( {\gamma}^{-1}\det\left(\mathbf{H}\right)\left(1+\xi\right) \Big| \mathbf{H} \right) \right\} - Pr\left\{ H_{\text{min}} \leq h^*_{\text{min}} \right\}
\\ \label{align_spb_step_7}
   &=    E\left\{ \text{SPB}\left( {\gamma}^{-1}\det\left(\mathbf{H}\right)\left(1+\xi\right) \Big| \mathbf{H} \right) \right\} - \xi,
\end{align}
where \eqref{align_spb_step_3} follows from the regularity of $S^{'}_{\mathbf{H}}$, \eqref{align_spb_step_4} is due to the fact that $\text{SPB}\left(\cdot|\mathbf{H}\right)$ is a monotone decreasing function and \eqref{align_spb_step_6} is due to $\text{SPB}\left(\cdot|\mathbf{H}\right) \leq 1$.

Equivalently, we get the following:
\begin{equation}
P_e\left(S\right) \geq E\left\{ \frac{ \text{SPB}\left( {\gamma}^{-1}\det\left(\mathbf{H}\right)\left(1+\xi\right) \Big| \mathbf{H} \right)}{ 1 + \xi } - \frac{\xi}{1 + \xi} \right\}
\end{equation}
for all $\xi > 0$. Since $\text{SPB}(\cdot | \mathbf{H})$ is a continuous function we can take the limit $\xi \rightarrow 0$ (meaning implicitly that the ``strong'' fading realizations are an arbitrarily small fraction of the whole realizations space in regular fading distribution) and get the sphere packing lower bound:
\begin{align}
P_e\left(S\right) &\geq E\left\{ \text{SPB}\left( e^{-n\delta}\det\left(\mathbf{H}\right)\Big| \mathbf{H} \right) \right\}
\\                &=          Pr\left\{\left\|\mathbf{z}\right\|^2 \geq e^{-2\delta}\left(\frac{\det(\mathbf{H})}{V_n}\right)^{\frac{2}{n}} \right\} \triangleq P_e^{\text{SB}}\left(\delta\right).
\end{align}
\end{proof}

By taking the fading matrix $\mathbf{H}$ to be equal constantly to the identity matrix $I_n$, the bound \eqref{eq_general_SPB} coincides with the sphere packing bound of the unconstrained AWGN channel, which is given by $Pr\left\{\left\|\mathbf{z}\right\| \geq V_n^{-\frac{1}{n}}e^{-\delta} \right\}$. Although this one dimensional integral is hard to evaluate analytically for general $n$, Ingber et al. derived in \cite{Ingber} an easy to evaluate and very tight analytical bounds for it. These bounds coincide with the sphere packing bound's error exponent, derived by Poltyrev in \cite{Poltyrev}, for asymptotic $n$. Moreover, Tarokh et al. represented this integral in \cite{Tarokh} as a sum of $n/2$ elements, which helps in numerical evaluation of the bound. In contrast, in the case of fading channel the sphere packing bound \eqref{eq_general_SPB} is an $n+1$ dimensional integral, which is extremely hard to evaluate both numerically and analytically. Nevertheless, in the asymptotic case, this bound can be approximated by normal distribution according to the central limit theorem. In the next section, this fact will help us to prove the converse part of our main result.
\subsection{Proof of Converse Part}
\label{sec_proof_of_converse_part}
Assume a transmission of IC $S$ with NLD $\delta$ over the fading channel. By the \emph{sphere packing lower bound} of Theorem \ref{thm_sphere_packing_lower_bound},
\begin{equation}
\label{eq_SPB_step1}
P_e \geq P_e^{\text{SB}}\left(\delta\right) = Pr\left\{\left\|\mathbf{z}\right\|^2 \geq e^{-2\delta}\left(\frac{\det(\mathbf{H})}{V_n}\right)^{\frac{2}{n}} \right\}.
\end{equation}

In \cite{Ingber} Ingber et al. proved the converse part of the dispersion analysis, in the unconstrained AWGN channel, by approximating the distribution of $\left\|\mathbf{z}\right\|^2 = \sum_{i=1}^{n}{Z_i^2}$ by a normal distribution using the Berry-Esseen lemma (see Lemma \ref{lem_Berry_Esseen}) for sum of i.i.d RVs. Here, we cannot use the same analysis due to the fact that $\mathbf{H}$ is also random. By taking the logarithm and rearranging of the inequality in the argument of \eqref{eq_SPB_step1} we get:
\begin{align}
\begin{aligned}
P_e \geq Pr &\Bigg\{ \frac{ \ln\left(\left\|\mathbf{z}\right\|^2\right) - \ln(n\sigma^2) }{\sqrt{\frac{2}{n}}} - \sqrt{\frac{2}{n}}\sum_{i=1}^{n}{\left(\ln(H_i) - E\{\ln(H)\}\right)}
\\ &\geq \sqrt{2n}\left( E\left\{\frac{1}{2}\ln\left(\frac{H^2}{n\sigma^2}\right)\right\} - \delta - \frac{\ln(V_n)}{n} \right) \Bigg\}.
\end{aligned}
\end{align}

For simplicity, let us define $Y_n \triangleq  \frac{ \ln\left(\left\|\mathbf{z}\right\|^2\right) - \ln\left(n\sigma^2\right) }{\sqrt{\frac{2}{n}}}$,
$S_n \triangleq \frac{\sum_{i=1}^{n}X_i}{\sqrt{n}}$ where
$X_i \triangleq \frac{\ln(H_i)-E\{\ln(H)\}}{\sqrt{Var(\delta(H))}}$ (for $i=1,..,n$) and $\zeta_n \triangleq \frac{1}{\sqrt{2}}Y_n - \sqrt{Var(\delta(H))}S_n$ to get:
\begin{equation}
\label{eq_SPB_step3}
P_e \geq
Pr\left\{ \zeta_n \geq \zeta \right\},
\end{equation}
where $\zeta \triangleq \sqrt{n}\left( E\left\{\frac{1}{2}\ln\left(\frac{H^2}{n\sigma^2}\right)\right\} - \delta - \frac{\ln(V_n)}{n} \right)$.

Although $\zeta_n$ is a sum of $n+1$ independent RVs, and despite of the existence of expansions for the Berry-Essen Lemma for a sum of independent RVs with varying distributions, in the standard derivation of these expansions it is assumed that all the RVs' variances are of the same order (see \cite[pp. 542-548]{Feller} for details). Here, $Var(Y_n) = O(1)$ (see Lemma \ref{lem_ln_chi2_pdf}) and $Var\left(\frac{X_i}{\sqrt{n}}\right) = O\left(\frac{1}{n}\right)$. Hence, a more careful analysis should be done for proving that the distribution of $\zeta_n$ is approximately normal. The following three lemmas allow it. By Lemma \ref{lem_ln_chi2_pdf} and by Lemma \ref{lem_Berry_Esseen} we prove that the PDF of $Y_n$ and the CDF of $S_n$ are approximately normal for large enough $n$, respectively. Finally by Lemma \ref{lem_sum_of_almost_normal_RVs} we prove that the distribution of a sum of two independent RVs, each of which has an approximately normal distribution, is also approximately normal. Therefore, the distribution of $\zeta_n$ is also approximately normal for large enough $n$.
\begin{lem}
\label{lem_ln_chi2_pdf}
(Log of chi square distribution) Let $Y_n \triangleq \frac{\ln\left(X\right)-\ln(n)}{\sqrt{\frac{2}{n}}}$, where $X \sim \chi^2_n$. Then
\begin{equation}
\label{eq_Yn_PDF}
f_{Y_n}(y) = \frac{ (\frac{n}{2})^{\frac{n-1}{2}}}{\Gamma(\frac{n}{2})}e^{\sqrt{\frac{n}{2}}y - \frac{n}{2}e^{\sqrt{\frac{2}{n}}y}},
\end{equation}
and for large enough $n$:
\begin{equation}
\label{eq_Yn_asymptotic_PDF}
f_{Y_n}(y) = N(0,1) + e_n(y)
\\~s.t. \int_{-\infty}^{\infty}|e_n(y)|dy = O\left(\frac{1}{\sqrt{n}}\right),
\end{equation}
where $N(0,1)$ is the standard normal distribution's PDF.
\end{lem}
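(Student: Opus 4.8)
The plan is to first establish the exact density \eqref{eq_Yn_PDF} by a direct change of variables, and then to derive the normal approximation \eqref{eq_Yn_asymptotic_PDF} by an asymptotic analysis of this exact expression. For the exact PDF, I would start from the known chi-square density $f_X(x) = \frac{1}{2^{n/2}\Gamma(n/2)}x^{n/2-1}e^{-x/2}$ for $x>0$, and apply the transformation $Y_n = \sqrt{n/2}\,(\ln X - \ln n)$, equivalently $X = n\,e^{\sqrt{2/n}\,y}$. Computing $\frac{dx}{dy} = n\sqrt{2/n}\,e^{\sqrt{2/n}\,y}$ and substituting into $f_{Y_n}(y) = f_X(x(y))\left|\frac{dx}{dy}\right|$, the powers of $n$ and the factors of $2$ should collect into the stated form $\frac{(n/2)^{(n-1)/2}}{\Gamma(n/2)}\exp\!\big(\sqrt{n/2}\,y - \frac{n}{2}e^{\sqrt{2/n}\,y}\big)$; this is a routine but careful bookkeeping of exponents.

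For the asymptotic statement, the strategy is to show that $f_{Y_n}$ converges to the standard normal density in $L^1$, with the total-variation-type error controlled at rate $O(1/\sqrt{n})$. The natural route is to Taylor-expand the exponent of the exact density around $y=0$. Writing $g_n(y) \triangleq \sqrt{n/2}\,y - \frac{n}{2}e^{\sqrt{2/n}\,y}$ and expanding $e^{\sqrt{2/n}\,y} = 1 + \sqrt{2/n}\,y + \frac{1}{n}y^2 + \frac{1}{3}\left(\frac{2}{n}\right)^{3/2}y^3 + \cdots$, the linear terms cancel and one obtains $g_n(y) = -\frac{n}{2} - \frac{y^2}{2} - r_n(y)$, where the remainder $r_n(y)$ is $O(y^3/\sqrt{n})$ for bounded $y$. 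Combined with the Stirling approximation of the normalizing constant $\frac{(n/2)^{(n-1)/2}}{\Gamma(n/2)}$, which contributes the prefactor $\frac{1}{\sqrt{2\pi}}e^{n/2}$ up to lower-order corrections, the $e^{-n/2}$ and $e^{n/2}$ factors cancel and one is left with $f_{Y_n}(y) = \frac{1}{\sqrt{2\pi}}e^{-y^2/2}\big(1 + O(1/\sqrt{n})\big)$ on any fixed compact set.

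The main obstacle will be controlling the $L^1$ error \emph{uniformly}, i.e. integrating $|e_n(y)|$ over the whole real line rather than just on a compact set, since the cubic Taylor remainder is not uniformly small for large $|y|$. I would handle this by splitting the integral into a central region, say $|y| \leq n^{1/6}$ (or some slowly growing window), and the two tails. On the central region the Taylor expansion is valid and the pointwise relative error is $O(|y|^3/\sqrt{n})$, whose integral against the Gaussian weight is $O(1/\sqrt{n})$; the tails of both the standard normal and of $f_{Y_n}$ must be shown to be super-polynomially small (for $f_{Y_n}$ this follows from the fact that $g_n(y) \to -\infty$ like $-\frac{n}{2}e^{\sqrt{2/n}\,y}$ as $y\to+\infty$ and linearly as $y\to-\infty$, so both tails decay fast enough to contribute $o(1/\sqrt{n})$). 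Assembling the central contribution and the negligible tails yields $\int|e_n(y)|\,dy = O(1/\sqrt{n})$, as claimed. The delicate point throughout is matching the Stirling error and the Taylor remainder to the same $O(1/\sqrt{n})$ order so that no larger correction term survives.
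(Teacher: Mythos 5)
Your plan is correct and follows essentially the same route as the paper's proof: change of variables for the exact density, Stirling's approximation of $\Gamma(\tfrac{n}{2})$ for the normalizing constant, a third-order Taylor expansion of the exponent valid on a window $|y|\le n^{1/6}$, and a central/tail split to control the $L^1$ error. The only cosmetic difference is that the paper bounds the tail mass of $f_{Y_n}$ indirectly via the normalization $\int f_{Y_n}=1$ (reducing it to the central integral plus Gaussian tails $Q(n^{1/6})$) rather than by a direct decay estimate of the exponent, but both yield the required $O(1/\sqrt{n})$.
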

\begin{proof}
See Appendix \ref{app_proof_lemma_ln_chi2_density_moments}. Illustratively, the convergence of $f_{Y_n}(y)$ to the standard normal distribution's PDF $N(0,1)$, can be seen in figure \ref{fig_log_of_chi2_distribution}.
\begin{figure}[htp]
\center{\includegraphics[width=0.7\columnwidth]{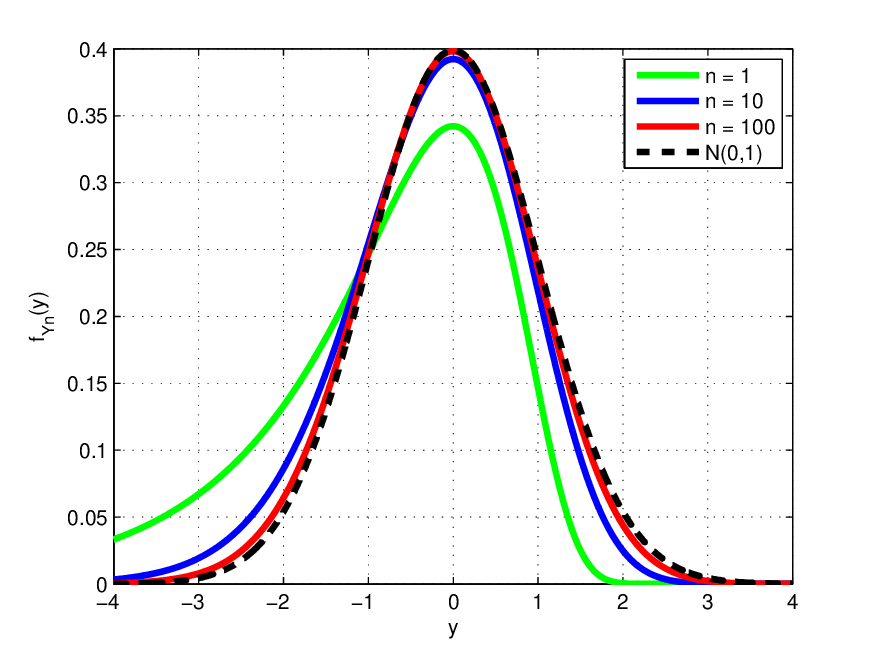}}
\caption{\label{fig_log_of_chi2_distribution} The PDF $f_{Y_n}(y)$ for different values of $n$. The convergence to $N(0,1)$ can be observed.}
\end{figure}
\end{proof}
\begin{lem}
\label{lem_Berry_Esseen}
(Berry-Esseen) Let $X_1,X_2,\dots,X_n$ be $n$ i.i.d. random variables with mean, variance and third absolute moments that equal $\mu = E\{X_i\}, \sigma^2 = Var(X_i)$ and $\rho_3=E\{|X_i-\mu|^3\}$, respectively, for $i=1,\dots,n$. If the third absolute moment exists, then for all $-\infty < s < \infty$ and $n$,
\begin{equation}
\Big| F_{S_n}(s) - F_{N(0,1)}(s) \Big| \leq \frac{6\rho_3}{\sqrt{n}\sigma^3},
\end{equation}
where $S_n\triangleq\frac{\sum_{i=1}^{n}(X_i - \mu)}{\sqrt{n}\sigma}$ and $F_{N(0,1)}(\cdot)$ is the standard normal distribution's CDF.
\end{lem}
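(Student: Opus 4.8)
The plan is to establish this classical bound via the characteristic-function method, combining Esseen's smoothing inequality with a third-order Taylor expansion of the common characteristic function. First I would standardize: replacing each $X_i$ by $(X_i-\mu)/\sigma$ reduces matters to the case $\mu=0$, $\sigma=1$, in which $\rho_3=E\{|X_i|^3\}$ and the claim becomes $\sup_s|F_{S_n}(s)-F_{N(0,1)}(s)|\le 6\rho_3/\sqrt{n}$ for $S_n=\frac{1}{\sqrt n}\sum_{i=1}^n X_i$. Writing $\phi(t)=E\{e^{itX_1}\}$ for the common characteristic function, the characteristic function of $S_n$ is $\phi_n(t)=\phi(t/\sqrt n)^n$, and that of the reference law is $e^{-t^2/2}$.

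Next I would apply Esseen's smoothing lemma, which for every $T>0$ gives
\begin{equation}
\sup_s\bigl|F_{S_n}(s)-F_{N(0,1)}(s)\bigr| \le \frac{1}{\pi}\int_{-T}^{T}\left|\frac{\phi_n(t)-e^{-t^2/2}}{t}\right|\,dt + \frac{24}{\pi T}\cdot\frac{1}{\sqrt{2\pi}}, \nonumber
\end{equation}
since the standard normal density is bounded by $1/\sqrt{2\pi}$. This reduces the problem to controlling the difference of the two characteristic functions over a finite window. The analytic input is the expansion $\phi(t)=1-\tfrac{t^2}{2}+\theta(t)$ with $|\theta(t)|\le \rho_3|t|^3/6$, which is legitimate precisely because the first three moments are finite and normalized.

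The heart of the argument, and the step I expect to be the main obstacle, is a uniform bound on $|\phi_n(t)-e^{-t^2/2}|$ over the window $|t|\le T$ with $T$ of order $\sqrt n/\rho_3$. On this window $|t/\sqrt n|$ is small enough that the Taylor remainder is controlled and $|\phi(t/\sqrt n)|\le e^{-t^2/(3n)}$; applying the elementary inequality $|a^n-b^n|\le n\,|a-b|\,\max(|a|,|b|)^{n-1}$ with $a=\phi(t/\sqrt n)$ and $b=e^{-t^2/(2n)}$ then yields a pointwise estimate of the shape $|\phi_n(t)-e^{-t^2/2}|\le c\,\rho_3\,|t|^3 n^{-1/2}e^{-t^2/3}$. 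The delicate points here are ensuring the remainder bound stays valid across the whole window (this is what forces the $\sqrt n/\rho_3$ scaling of $T$, using $\rho_3\ge\sigma^3=1$) and tracking the resulting numerical constant.

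Finally I would substitute this estimate into the smoothing inequality: dividing by $|t|$ leaves an integrand bounded by $c\,\rho_3\,t^2 n^{-1/2}e^{-t^2/3}$, whose integral over $\Real$ converges to an absolute constant times $\rho_3/\sqrt n$, while the second term contributes $O(1/T)=O(\rho_3/\sqrt n)$. Choosing $T$ to balance the two contributions and collecting constants gives the bound $6\rho_3/\sqrt n$; undoing the standardization restores the factor $\sigma^3$ and produces the stated inequality. For the explicit constant bookkeeping one can follow the computation in \cite[pp.~542--548]{Feller}.
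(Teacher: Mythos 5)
The paper does not actually prove this lemma---it simply cites Feller's Theorem~1 on p.~542---and the argument you sketch is precisely the characteristic-function proof behind that cited theorem: Esseen's smoothing inequality plus a third-order Taylor expansion of $\phi$, with the integration window $|t|\le T\sim\sqrt{n}/\rho_3$ made legitimate by $\rho_3\ge\sigma^3$. Your outline is sound; the one point needing care is that the crude estimate $|a^n-b^n|\le n\,|a-b|\,\max(|a|,|b|)^{n-1}$ by itself yields a constant noticeably larger than $6$ (Esseen's original bookkeeping along exactly these lines gave about $7.59$), so landing on the stated constant genuinely requires the sharper treatment of $\phi^n$ in Feller that you defer to at the end. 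Since the lemma only feeds $O(1/\sqrt{n})$ error terms elsewhere in the thesis, the precise value of the constant is immaterial to how it is used.
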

\begin{proof}
See Berry-Esseen theorem for sum of i.i.d. RVs in \cite[pp. 542, Theorem 1]{Feller}.
\end{proof}
\begin{lem}
\label{lem_sum_of_almost_normal_RVs}
(Sum of two almost normal RVs) Suppose that $X_1$ and $X_2$ are two independent random variables s.t. the PDF of $X_1$ is given by
\begin{equation}
f_{X_1}(x_1) = N(0, \sigma_1^2) + e_n(y)
\\~s.t. \int_{-\infty}^{\infty}|e_n(y)|dy = O\left(\frac{1}{\sqrt{n}}\right), \nonumber
\end{equation}
and the CDF of $X_2$ is given by
\begin{equation}
F_{X_2}(x_2) = F_{N(0,\sigma_2^2)}(x_2) + O\left(\frac{1}{\sqrt{n}}\right). \nonumber
\end{equation}

Let $Y \triangleq X_1 + X_2$, then the following holds:
\begin{equation}
\label{eq_sum_of_almost_normal_CDF}
F_Y(y) = F_{N(0,\sigma_y^2)}(y) + O\left(\frac{1}{\sqrt{n}}\right),
\end{equation}
where $\sigma_y^2 \triangleq \sigma_1^2 + \sigma_2^2$.
\end{lem}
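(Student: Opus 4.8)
The plan is to express the CDF of $Y=X_1+X_2$ as a convolution of the density of $X_1$ against the CDF of $X_2$, substitute the two assumed approximations, and bound the resulting cross terms. Concretely, by independence and conditioning on the value of $X_1$,
\[
F_Y(y) = \int_{-\infty}^{\infty} f_{X_1}(x)\, F_{X_2}(y-x)\,dx.
\]
This particular pairing is exactly what makes the argument work: the $L^1$-type error is attached to the density $f_{X_1}$, while the $L^\infty$-type (uniform) error is attached to the distribution function $F_{X_2}$, and convolving the density with the CDF lets every error term meet a factor that is either of unit mass or bounded by one.

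Next I would write $f_{X_1}(x) = N(0,\sigma_1^2)(x) + e_n(x)$ with $\int_{-\infty}^{\infty}|e_n(x)|\,dx = O\!\left(\frac{1}{\sqrt{n}}\right)$, and $F_{X_2}(u) = F_{N(0,\sigma_2^2)}(u) + r_n(u)$ with $\sup_u|r_n(u)| = O\!\left(\frac{1}{\sqrt{n}}\right)$, then expand the integrand into four products. The leading term
\[
\int_{-\infty}^{\infty} N(0,\sigma_1^2)(x)\, F_{N(0,\sigma_2^2)}(y-x)\,dx
\]
is precisely the CDF of a sum of two independent centered Gaussians, hence equals $F_{N(0,\sigma_y^2)}(y)$ with $\sigma_y^2 = \sigma_1^2 + \sigma_2^2$; this supplies the claimed main term in \eqref{eq_sum_of_almost_normal_CDF}.

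The remaining three terms are each $O\!\left(\frac{1}{\sqrt{n}}\right)$ uniformly in $y$. The term $\int N(0,\sigma_1^2)(x)\,r_n(y-x)\,dx$ is bounded in absolute value by $\sup|r_n|\cdot\int N(0,\sigma_1^2)(x)\,dx = O\!\left(\frac{1}{\sqrt{n}}\right)$; the term $\int e_n(x)\,F_{N(0,\sigma_2^2)}(y-x)\,dx$ is bounded by $\int|e_n(x)|\,dx = O\!\left(\frac{1}{\sqrt{n}}\right)$ because $0\le F_{N(0,\sigma_2^2)}\le 1$; and the cross term $\int e_n(x)\,r_n(y-x)\,dx$ is bounded by $\sup|r_n|\cdot\int|e_n| = O\!\left(\frac{1}{n}\right)$. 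Summing the four contributions yields \eqref{eq_sum_of_almost_normal_CDF}.

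I do not anticipate a serious obstacle. The only point requiring care is that the $O\!\left(\frac{1}{\sqrt{n}}\right)$ remainder be \emph{uniform} in $y$, which holds automatically: the three bounds above never depend on $y$ except through $N(0,\sigma_1^2)$ or $F_{N(0,\sigma_2^2)}$, which have unit mass or are bounded by one, respectively. The single mild subtlety is the asymmetry of the hypotheses (a density error for $X_1$ versus a distribution-function error for $X_2$), which dictates convolving $f_{X_1}$ with $F_{X_2}$ rather than the reverse, so that the absolutely integrable error is always multiplied by a bounded factor and the uniform error is always integrated against a probability density.
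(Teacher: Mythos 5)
Your proof is correct and follows essentially the same route as the paper's: write $F_Y$ as the convolution of $f_{X_1}$ with $F_{X_2}$, identify the Gaussian leading term, and bound the three remainder terms using the $L^1$ bound on $e_n$ and the uniform bound on the CDF error. The only difference is cosmetic: you write the integral over all of $\mathbb{R}$ (correctly), whereas the paper's displayed upper limit of $y$ is a typo.
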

\begin{proof}
See Appendix \ref{app_proof_lemma_sum_of_almost_normal_RVs}.
\end{proof}

Combining Lemmas \ref{lem_ln_chi2_pdf}, \ref{lem_Berry_Esseen} and \ref{lem_sum_of_almost_normal_RVs} we get:
\begin{equation}
\label{eq_SPB_step4}
P_e \geq Q\left(\frac{\zeta}{\sqrt{V}}\right) - O\left(\frac{1}{\sqrt{n}}\right).
\end{equation}
By Stirling approximation for the Gamma function, $V_n$ can be approximated as
\begin{equation}
\frac{\ln(V_n)}{n} = \frac{1}{2}\ln\left(\frac{2\pi e}{n}\right) - \frac{1}{2n}\ln(n) + O\left(\frac{1}{n}\right)
\end{equation}
and hence we get:
\begin{equation}
\label{eq_zeta}
\zeta = \sqrt{n}\left( \delta^* - \delta + \frac{1}{2n}\ln(n) + O\left(\frac{1}{n}\right) \right).
\end{equation}
The assignment of \eqref{eq_zeta} in \eqref{eq_SPB_step4} gives us:
\begin{equation}
\label{eq_SPB_step5}
\epsilon \geq P_e \geq
Q\left(\frac{ \delta^* - \delta + \frac{1}{2n}\ln(n) + O\left(\frac{1}{n}\right) }{ \sqrt{ \frac{V}{n} } }\right) - O\left(\frac{1}{\sqrt{n}}\right).
\end{equation}
Taking $Q^{-1}(\cdot)$ from both sides of \eqref{eq_SPB_step5} gives us:
\begin{equation}
\label{eq_SPB_step6}
\delta \leq \delta^* - \sqrt{ \frac{V}{n} }Q^{-1}\left(\epsilon + O\left(\frac{1}{\sqrt{n}}\right)\right) + \frac{1}{2n}\ln(n) + O\left(\frac{1}{n}\right).
\end{equation}

By Taylor approximation (around $\epsilon$) $  Q^{-1}\left(\epsilon + O\left(\frac{1}{\sqrt{n}}\right)\right) = Q^{-1}(\epsilon) + O\left(\frac{1}{\sqrt{n}}\right)$, which gives us the desired result:
\begin{equation}
\label{eq_SPB_step7}
\delta \leq \delta^* - \sqrt{ \frac{V}{n} }Q^{-1}(\epsilon) + \frac{1}{2n}\ln(n) + O\left(\frac{1}{n}\right).
\end{equation}
\section{Direct Part}
\label{sec_direct_part}
In this section we prove the direct part of Theorem \ref{thm_main_result}. The direct part is based on normal approximation of the \emph{Dependence Testing upper bound} on the average error probability.
The dependence testing upper bound over fading channels is presented in Section \ref{sec_dependence_testing_bound}, and in Section \ref{sec_proof_of_direct_part} we complete the proof by a derivation of an appropriate normal approximation technique.
\subsection{Dependence Testing Bound}
\label{sec_dependence_testing_bound}
In this section we extend Polyanskiy's \emph{Dependence Testing Bound} \cite[Theorems 17,18]{Polyanskiy}, to the case of fading channels with available CSI at the receiver. In \cite{Polyanskiy} the DT bound was used to prove the dispersion analysis for DMCs, or more precisely, for memoryless channels without a power constraint (or any other constraint on the channel input). Here, the channel input does not have any restriction, and hence we can use the DT bound to prove the direct part of our main result.
\begin{theorem}
\label{thm_DT_bound}
(DT bound) For any input distribution $f_X(\cdot)$ on $\mathbb{R}$, there exists a code with $M$ codewords and an average error probability over the fading channel, with available CSI at the receiver, not exceeding
\begin{equation}
\label{eq_DT_bound}
P_e \leq Pr\left\{ i(\mathbf{x};\mathbf{y},\mathbf{H}) \leq \ln\left(\frac{M-1}{2}\right) \right\}
    + \frac{M-1}{2} Pr\left\{i(\mathbf{x};\bar{\mathbf{y}},\mathbf{H}) > \ln\left(\frac{M-1}{2}\right)\right\},
\end{equation}
or equivalently,
\begin{align}
\begin{aligned}
\label{align_DT_bound_equivalence}
P_e &\leq E\left\{e^{-\left[i(\mathbf{x};\mathbf{y},\mathbf{H}) - \ln\left(\frac{M-1}{2}\right)\right]^+}\right\}
\\ &= Pr\left\{ i\left(\mathbf{x};\mathbf{y},\mathbf{H}\right) \leq \ln\left(\frac{M-1}{2}\right) \right\}
+ \frac{M-1}{2} E\left\{ e^{-i\left(\mathbf{x};\mathbf{y},\mathbf{H}\right)}1_{\left\{i\left(\mathbf{x};\mathbf{y},\mathbf{H}\right)>\ln\left(\frac{M-1}{2}\right)\right\}} \right\},
\end{aligned}
\end{align}
where $f_{\mathbf{x}\mathbf{y}\bar{\mathbf{y}}\mathbf{H}}({x},{y},\bar{{y}},{h}) = f_{\mathbf{x}}({x})f_{\mathbf{y}|\mathbf{x},\mathbf{H}}({y}|{x},{h})f_{\mathbf{y}|\mathbf{H}}(\bar{{y}}|{h})f_\mathbf{H}({h})$ is the joint PDF of all the random vectors and matrices arising above, $f_{\mathbf{x}}({x})=\Pi_{i=1}^nf_X(x_i)$ and $i({x};{y},{h}) \triangleq \ln\left( \frac{f_{\mathbf{x}\mathbf{y}\mathbf{H}}({x}, {y}, {h})}{f_{\mathbf{x}}({x})f_{\mathbf{y}\mathbf{H}}({y}, {h})} \right)$.
\end{theorem}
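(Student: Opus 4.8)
The plan is to apply Polyanskiy et al.'s one-shot Dependence Testing bound \cite[Theorems 17,18]{Polyanskiy} to the block ``super-channel'' whose single use has input $\mathbf{x}\in\mathbb{R}^n$ and output the pair $(\mathbf{y},\mathbf{H})$. The observation that makes this extension essentially free is that, because $\mathbf{H}$ is independent of the input, the fading density cancels in the information density:
\begin{equation}
i(\mathbf{x};\mathbf{y},\mathbf{H}) = \ln\frac{f_{\mathbf{x}\mathbf{y}\mathbf{H}}(\mathbf{x},\mathbf{y},\mathbf{h})}{f_{\mathbf{x}}(\mathbf{x})\,f_{\mathbf{y}\mathbf{H}}(\mathbf{y},\mathbf{h})} = \ln\frac{f_{\mathbf{y}|\mathbf{x},\mathbf{H}}(\mathbf{y}|\mathbf{x},\mathbf{h})}{f_{\mathbf{y}|\mathbf{H}}(\mathbf{y}|\mathbf{h})}. \nonumber
\end{equation}
Thus a receiver with perfect CSI faces a channel with output $(\mathbf{y},\mathbf{H})$ and no constraint on the input, so the DT bound, which is a purely one-shot statement requiring no input restriction, applies verbatim once this augmented output is identified.

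First I would set up the random code: draw $M$ codewords $\mathbf{c}_1,\dots,\mathbf{c}_M$ i.i.d.\ from $f_{\mathbf{x}}(\mathbf{x})=\prod_{i=1}^n f_X(x_i)$, fix the threshold $\gamma\triangleq\ln\left(\frac{M-1}{2}\right)$, and use the threshold decoder that outputs the smallest index $j$ for which $i(\mathbf{c}_j;\mathbf{y},\mathbf{H})>\gamma$, declaring an error if no such index exists. Conditioned on message $m$ being sent, an error occurs only if the true codeword fails the test, $\{i(\mathbf{c}_m;\mathbf{y},\mathbf{H})\le\gamma\}$, or some lower-indexed codeword passes it. A union bound over the $m-1$ competitors, using that each competitor $\mathbf{c}_j$ is independent of $(\mathbf{c}_m,\mathbf{y},\mathbf{H})$ and therefore sees the marginal output $\bar{\mathbf{y}}\sim f_{\mathbf{y}|\mathbf{H}}$, bounds the codebook-averaged error for message $m$ by
\begin{equation}
\Pr\left\{i(\mathbf{x};\mathbf{y},\mathbf{H})\le\gamma\right\} + (m-1)\Pr\left\{i(\mathbf{x};\bar{\mathbf{y}},\mathbf{H})>\gamma\right\}. \nonumber
\end{equation}
Averaging over $m$ uniform on $\{1,\dots,M\}$ replaces $(m-1)$ by its mean $\frac{M-1}{2}$, which yields \eqref{eq_DT_bound}; since the codebook-averaged error meets the bound, at least one deterministic codebook does as well.

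To obtain the equivalent exponential form \eqref{align_DT_bound_equivalence}, I would rewrite the second term through the change-of-measure identity $E_{f_{\mathbf{x}}f_{\mathbf{y}\mathbf{H}}}[g]=E_{f_{\mathbf{x}\mathbf{y}\mathbf{H}}}\!\left[e^{-i(\mathbf{x};\mathbf{y},\mathbf{H})}g\right]$, which turns $\frac{M-1}{2}\Pr\{i(\mathbf{x};\bar{\mathbf{y}},\mathbf{H})>\gamma\}$ into $\frac{M-1}{2}E\{e^{-i(\mathbf{x};\mathbf{y},\mathbf{H})}1_{\{i>\gamma\}}\}$, and then recombine using $\frac{M-1}{2}=e^{\gamma}$ so that, on $\{i>\gamma\}$ the exponent becomes $-(i-\gamma)=-(i-\gamma)^+$ while on $\{i\le\gamma\}$ one has $e^{-(i-\gamma)^+}=1$; the two terms then merge into $E\{e^{-[i(\mathbf{x};\mathbf{y},\mathbf{H})-\gamma]^+}\}$.

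The main obstacle I expect is not conceptual but a matter of bookkeeping: getting the indexing and tie-breaking of the threshold decoder exactly right so that the union bound produces precisely the factor $\frac{M-1}{2}$ rather than $M-1$, and matching the strict versus non-strict inequalities in the two terms (which is what forces the strict ``$>\gamma$'' rule in the decoder). The only other point needing care is the measure-theoretic legitimacy of the augmented-output information density — in particular that $f_{\mathbf{y}|\mathbf{H}}$ is the correct reference measure and that the cancellation of $f_{\mathbf{H}}$ holds almost everywhere — which follows from the independence of $\mathbf{H}$ and $\mathbf{x}$ and the regularity of the fading density.
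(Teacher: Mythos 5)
Your proposal is correct and follows essentially the same route as the paper: random coding from $f_{\mathbf{x}}$, the lowest-index threshold decoder with threshold $\ln\left(\frac{M-1}{2}\right)$, a union bound over lower-indexed competitors (which, being independent of the transmitted pair, see the marginal $\bar{\mathbf{y}}\sim f_{\mathbf{y}|\mathbf{H}}$), averaging over equiprobable messages to turn $(m-1)$ into $\frac{M-1}{2}$, and the change-of-measure identity to obtain the equivalent exponential form. Your framing of the result as the one-shot DT bound applied to the augmented output $(\mathbf{y},\mathbf{H})$, with the cancellation of $f_{\mathbf{H}}$ in the information density, is exactly the observation the paper relies on.
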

\begin{proof}
The proof is based on Shannon's random coding technique and on a suboptimal decoder. For a given input distribution $f_X(x)$ , let us define the following deterministic function:
\begin{equation}
g_{\mathbf{x}}\left(\mathbf{y},\mathbf{H}\right) = 1_{\left\{i(\mathbf{x};\mathbf{y},\mathbf{H}) > \ln\left(\frac{M-1}{2}\right)\right\}}.
\end{equation}

For a given codebook $C = \left\{c_1, \dots, c_M\right\}$, the decoder computes the $M$ values of $g_{c_j}\left(\mathbf{y},\mathbf{H}\right)$ for the given channel output $\left(\mathbf{y},\mathbf{H}\right)$ and returns the lowest index $j$ for which $g_{c_j}\left(\mathbf{y},\mathbf{H}\right) = 1$, or declares an error if there is no such index. Hence, the error probability, given that $\mathbf{x} = c_j$ was transmitted, is given by:
\begin{align}
\label{align_DT_bound_step_1}
\begin{aligned}
&Pr\left\{ \{g_{c_j}\left(\mathbf{y},\mathbf{H}\right) = 0\} \bigcup_{i<j} \{g_{c_i}\left(\mathbf{y},\mathbf{H}\right) = 1\} | ~\mathbf{x} = c_j \right\} \leq
\\  &Pr\left\{ i(c_j;\mathbf{y},\mathbf{H}) \leq \ln\left(\frac{M-1}{2}\right) | ~\mathbf{x} = c_j \right\}
 + \sum_{i < j}Pr\left\{ i(c_i;\bar{\mathbf{y}},\mathbf{H}) > \ln\left(\frac{M-1}{2}\right) | ~\mathbf{x} = c_j \right\},
\end{aligned}
\end{align}
where the right hand side (RHS) of \eqref{align_DT_bound_step_1} is obtained by using the union bound and the definition of $\bar{\mathbf{y}}$ as a random vector which is independent of $\mathbf{x}$ and given $\mathbf{H}$ has the same conditional distribution as $\mathbf{y}$ given $\mathbf{H}$.

Let us define the ensemble of the codebooks of size M, that every codeword's component in it is drawn independently of each other by $f_X(x)$. Averaging \eqref{align_DT_bound_step_1} over this ensemble and over the $M$ equiprobable codewords we obtain
\begin{align}
\begin{aligned}
P_e &\leq Pr\left\{ i(\mathbf{x};\mathbf{y},\mathbf{H}) \leq \ln\left(\frac{M-1}{2}\right) \right\}
    + \sum_{j=1}^{M}{ \frac{j-1}{M}Pr\left\{i(\mathbf{x};\bar{\mathbf{y}},\mathbf{H}) > \ln\left(\frac{M-1}{2}\right)\right\} },
\end{aligned}
\end{align}
which completes the proof of the existence of a code with $M$ codewords whose average error probability is upper bounded by \eqref{eq_DT_bound}.

Now we turn to prove the equivalent bound \eqref{align_DT_bound_equivalence} of the theorem. For any positive $\gamma$ the following identities hold:
\begin{align}
E\left\{e^{-\left[i(\mathbf{x};\mathbf{y},\mathbf{H}) - \ln(\gamma)\right]^+}\right\}
   &= E\left\{ 1_{\left\{i\left(\mathbf{x};\mathbf{y},\mathbf{H}\right) \leq \ln(\gamma) \right\}}
   + \gamma e^{-i\left(\mathbf{x};\mathbf{y},\mathbf{H}\right)}1_{\left\{i\left(\mathbf{x};\mathbf{y},\mathbf{H}\right)>\ln(\gamma) \right\}} \right\}
\\ &= Pr\left\{ i\left(\mathbf{x};\mathbf{y},\mathbf{H}\right) \leq \ln(\gamma) \right\}
+ \gamma E\left\{ e^{-i\left(\mathbf{x};\mathbf{y},\mathbf{H}\right)}1_{\left\{i\left(\mathbf{x};\mathbf{y},\mathbf{H}\right)>\ln(\gamma) \right\}} \right\}
\\ &= Pr\left\{ i\left(\mathbf{x};\mathbf{y},\mathbf{H}\right) \leq \ln(\gamma) \right\}
+ \gamma E\left\{ \frac{f(\mathbf{x})f(\mathbf{y},\mathbf{H})}{f(\mathbf{x},\mathbf{y},\mathbf{H})} 1_{\left\{i\left(\mathbf{x};\mathbf{y},\mathbf{H}\right)>\ln(\gamma) \right\}} \right\}
\\ &= Pr\left\{ i\left(\mathbf{x};\mathbf{y},\mathbf{H}\right) \leq \ln(\gamma) \right\}
+ \gamma Pr\left\{ i(\mathbf{x};\bar{\mathbf{y}},\mathbf{H}) > \ln(\gamma) \right\}.
\end{align}
By taking $\gamma = \frac{M-1}{2}$ we complete the proof.

It is important to notice that the dependence testing bound is based on a suboptimal decoder which is actually a threshold crossing decoder. The decoder computes $M$ binary hypothesis tests in parallel and declares as the decoded codeword the first one that crosses the threshold $\ln\left(\frac{M-1}{2}\right)$.
\end{proof}
\subsection{Proof of Direct Part}
\label{sec_proof_of_direct_part}
For the proof of the direct part, we will first construct an ensemble of finite constellations with $M$ codewords, which are uniformly distributed in an $n$ dimensional cube $\text{Cb}(a)$, for some fixed $a$ and $n$. Then, using the \emph{Dependence Testing bound} of Theorem \ref{thm_DT_bound} with $f_X(x) = U(-\frac{a}{2},\frac{a}{2})$, we will find a lower bound on the optimal achievable number of codewords, for a FC in such an ensemble, whose error probability is upper bounded by some fixed $\epsilon > 0$. We will denote this lower bound by $M(n,\epsilon,a/\sigma)$. Theorem \ref{thm_DT_bound} also ensures the existence of such a FC that achieves this lower bound. Finally, we will construct an IC by tiling this FC to the whole space $\mathbb{R}^n$, in a way that will preserve the density of codewords and the error probability, asymptotically in the dimension $n$, as in this FC.

To use the DT bound of Theorem \ref{thm_DT_bound}, we need to prove that for some $\gamma$ the following inequality holds:
\begin{align}
\label{align_DT_bound}
P_e \leq Pr\left\{ i\left(\mathbf{x};\mathbf{y},\mathbf{H}\right) \leq \ln(\gamma) \right\} + \gamma E\left\{ e^{-i\left(\mathbf{x};\mathbf{y},\mathbf{H}\right)}1_{\left\{i\left(\mathbf{x};\mathbf{y},\mathbf{H}\right)>\ln(\gamma)\right\}} \right\} \leq \epsilon.
\end{align}
Denote for arbitrary $\tau$
\begin{equation}
\ln(\gamma) = nI(X;Y,H) - \tau\sqrt{nVar(i(X;Y,H))}.
\end{equation}
The information density is a sum of $n$ i.i.d. RVs:
\begin{equation}
i\left(\mathbf{x};\mathbf{y},\mathbf{H}\right) = \sum_{j=1}^{n}{ i(X_j;Y_j,H_j) },
\end{equation}
where $i(X;Y,H) \triangleq \ln\left( \frac{f(Y|H,X)}{f(Y|H)} \right)$ and its moments, for large enough $a/\sigma$, are given by the following lemma.
\begin{lem}
\label{lem_information_density_moments}
(Information density's moments) If $X\sim U\left(-\frac{a}{2},\frac{a}{2}\right)$ and if the PDF of $H$ is a regular fading distribution, then for large enough $a/\sigma$ and for some positive constant $0<\alpha\leq1$, the moments of the information density $i(X;Y,H)$ are given by:
\begin{enumerate}
  \item $I(X;Y,H) \triangleq E\{i(X;Y,H)\} = E\left\{ \frac{1}{2}\ln\left(\frac{a^2H^2}{2\pi e\sigma^2}\right) \right\} + O\left((\frac{\sigma}{a})^{\alpha}\right)$
  \item $Var(i(X;Y,H)) = \frac{1}{2} + Var(\delta(H)) + O\left((\frac{\sigma}{a})^{\frac{\alpha}{2}}\right)$
  \item $\rho_3 \triangleq E\left\{|i(X;Y,H) - I(X;Y,H)|^3\right\} < \infty$.
\end{enumerate}
\end{lem}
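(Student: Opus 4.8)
The plan is to make the $f(Y\mid H)\approx 1/(aH)$ heuristic precise and to separate a \emph{bulk} term, whose moments I can compute in closed form, from an \emph{edge correction} that I will show is asymptotically negligible. Writing $Y=HX+Z$, the numerator of the information density is the Gaussian $f(Y\mid H,X)=(2\pi\sigma^2)^{-1/2}e^{-Z^2/(2\sigma^2)}$, while averaging $X\sim U(-\tfrac a2,\tfrac a2)$ and substituting $u=Hx$ gives the exact denominator
\begin{equation}
f(Y\mid H)=\frac{1}{aH}\left[1-Q\!\left(\frac{Ha/2-Y}{\sigma}\right)-Q\!\left(\frac{Ha/2+Y}{\sigma}\right)\right].
\end{equation}
Consequently $i(X;Y,H)=i_0+\Delta$, where I set
\begin{equation}
i_0\triangleq\ln(aH)-\tfrac12\ln(2\pi\sigma^2)-\frac{Z^2}{2\sigma^2},\qquad
\Delta\triangleq-\ln\left[1-Q\!\left(\frac{Ha/2-Y}{\sigma}\right)-Q\!\left(\frac{Ha/2+Y}{\sigma}\right)\right]\ge0 .
\end{equation}
The bracket is exactly the Gaussian mass a $N(Y,\sigma^2)$ variable places in $[-Ha/2,Ha/2]$, so $\Delta\ge0$ is the price paid when $Y$ is pushed near or past the edge of the faded cube.

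First I would compute the moments of $i_0$ exactly. Since $H$ and $Z$ are independent and $Z\sim N(0,\sigma^2)$, we have $E\{Z^2/(2\sigma^2)\}=\tfrac12$ and, using $E\{Z^4\}=3\sigma^4$, $Var(Z^2/(2\sigma^2))=\tfrac12$; together with $Var(\ln(aH))=Var(\ln H)=Var(\delta(H))$ this yields $E\{i_0\}=E\{\tfrac12\ln(a^2H^2/(2\pi e\sigma^2))\}$ and $Var(i_0)=\tfrac12+Var(\delta(H))$, precisely the leading terms claimed. The third absolute moment of $i_0$ is finite because $E\{|\ln H|^3\}<\infty$ for a regular fading distribution (near $h=0$ the integrand behaves like $h^{\alpha-1}|\ln h|^3$, which is integrable, and the upper tail is controlled by the fast decay of $f(h)$) and because $Z^2/\sigma^2$ has moments of all orders.

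The main work, and the main obstacle, is to show that $E\{\Delta\}$ and $E\{\Delta^2\}$ are of order $(\sigma/a)^{\alpha}$. The correction $\Delta$ is appreciable in two regimes that must be treated separately by thresholding $H$ at a level $h_0=\sigma/a$. For $H>h_0$ the bracket is close to $1$ except when $|X|$ lies in a boundary layer of width $O(\sigma/H)$ near $\pm a/2$ (or when $Z$ is atypically large), an event of $X$-probability $O(\sigma/(Ha))$ on which $\Delta=O(1+Z^2/\sigma^2)$; integrating over $H$ gives a contribution of order $\frac{\sigma}{a}\int_{h_0}^{\infty}h^{-1}f(h)\,dh$. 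For $H\le h_0$ the heuristic breaks down entirely, but here I use the crude bound $0\le\Delta\lesssim|\ln(aH)|+Z^2/\sigma^2+H^2X^2/\sigma^2$ (valid because $i\to0$ as $H\to0$, so $\Delta\approx-i_0$), weighted by the small mass $Pr\{H\le h_0\}$. Because $f(h)\propto h^{\alpha-1}$ near the origin, both integrals evaluate to $O\big((\sigma/a)^{\alpha'}\big)$ for every $\alpha'<\alpha$, the logarithmic factors being absorbed into a slightly smaller exponent; this is exactly why the lemma asserts the bound only for \emph{some} positive $\alpha$. The delicate point is precisely the coupling of small $H$ with the near-edge event, since naive estimates produce the divergent factor $E\{1/H\}$, and the thresholding is what resolves it.

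Finally I would assemble the moments. Adding $E\{\Delta\}=O((\sigma/a)^{\alpha})$ to $E\{i_0\}$ gives claim (1). For claim (2), $Var(i)=Var(i_0)+2\,Cov(i_0,\Delta)+Var(\Delta)$, and Cauchy--Schwarz gives $|Cov(i_0,\Delta)|\le\sqrt{Var(i_0)\,E\{\Delta^2\}}=O((\sigma/a)^{\alpha/2})$ while $Var(\Delta)\le E\{\Delta^2\}=O((\sigma/a)^{\alpha})$, so the covariance term dominates and produces the stated $O((\sigma/a)^{\alpha/2})$. Claim (3) follows from $i=i_0+\Delta$ together with the finite third moments of $i_0$ and of $\Delta$ (the latter from the same $\Delta\lesssim|\ln(aH)|+Z^2/\sigma^2$ bound) via the triangle inequality.
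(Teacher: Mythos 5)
Your proposal is correct and follows essentially the same route as the paper's proof: the identical decomposition $i=i_0+e_{a/\sigma}$ with the nonnegative edge term $-\ln\bigl(Q(\tfrac{y}{\sigma}-\tfrac{ah}{2\sigma})-Q(\tfrac{y}{\sigma}+\tfrac{ah}{2\sigma})\bigr)$, exact computation of the moments of $i_0$, a split of the $H$-integral at $h_0=\sigma/a$ exploiting $f(h)\propto h^{\alpha-1}$ near the origin to get $O((\sigma/a)^{\alpha})$ for the error moments, Cauchy--Schwarz on the cross term to obtain the $O((\sigma/a)^{\alpha/2})$ variance correction, and Minkowski/triangle for the third absolute moment. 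The only cosmetic difference is that the paper packages the conditional error moments as $\frac{\sigma}{ah}\eta_i(ah/\sigma)$ with explicit small-argument expansions of $\eta_i$, whereas you phrase the same estimates via a boundary-layer/thresholding argument.
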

\begin{proof}
See Appendix \ref{app_proof_lemma_information_density_moments}.
\end{proof}

According to the Berry-Essen lemma (see Lemma \ref{lem_Berry_Esseen}) for i.i.d. RVs,
\begin{equation}
\label{eq_using_Berry_Esseen_UB}
|Pr\{ i\left(\mathbf{x};\mathbf{y},\mathbf{H}\right) \leq \ln\gamma \} - Q(\tau)| \leq \frac{B(a/\sigma)}{\sqrt{n}}
\end{equation}
where
\begin{equation}
\label{eq_B}
B(a/\sigma) = \frac{6\rho_3}{Var^{\frac{3}{2}}(i(X;Y,H))}.
\end{equation}
For sufficiently large $n$, let
\begin{equation}
\tau = Q^{-1}\left(\epsilon - \left(\frac{2\ln(2)}{\sqrt{2\pi Var(i(X;Y,H))}} + 5B(a/\sigma)\right)\frac{1}{\sqrt{n}} \right).
\end{equation}
Then, from \eqref{eq_using_Berry_Esseen_UB} we obtain
\begin{equation}
\label{eq_part1_UB_for_DT_bound}
Pr\left\{ i\left(\mathbf{x};\mathbf{y},\mathbf{H}\right) \leq \ln(\gamma) \right\} \leq \epsilon - 2\left(\frac{\ln(2)}{\sqrt{2\pi Var(i(X;Y,H))}} + 2B(a/\sigma)\right)\frac{1}{\sqrt{n}}.
\end{equation}
Using Lemma \ref{lem_47_in_Polyanskiy} (see in Appendix \ref{app_lemma_47_in_Polyanskiy}), we get
\begin{equation}
\label{eq_part2_UB_for_DT_bound}
\gamma E\left\{ e^{-i\left(\mathbf{x};\mathbf{y},\mathbf{H}\right)}1_{\left\{i\left(\mathbf{x};\mathbf{y},\mathbf{H}\right)>\ln(\gamma)\right\}} \right\} \leq 2\left(\frac{\ln(2)}{\sqrt{2\pi Var(i(X;Y,H))}} + 2B(a/\sigma)\right)\frac{1}{\sqrt{n}}.
\end{equation}
Summing \eqref{eq_part1_UB_for_DT_bound} and \eqref{eq_part2_UB_for_DT_bound} we prove the inequality \eqref{align_DT_bound}. Hence, by Theorem \ref{thm_DT_bound}, there exists a FC, denoted by $S(n, \epsilon, a/\sigma)$, with $M(n,\epsilon,a/\sigma)$ codewords and average error probability upper bounded by $\epsilon$, such that
\begin{align}
\label{align_Direct_step_1}
\begin{aligned}
\ln \left(M(n,\epsilon,a/\sigma)\right) &= \ln(\gamma) + O(1)
\\ &= nI(X;Y,H) - \tau\sqrt{nVar(i(X;Y,H))}  + O(1)
\\ &= nI(X;Y,H) - \sqrt{nVar(i(X;Y,H))}Q^{-1}(\epsilon) + O(1),
\end{aligned}
\end{align}
where the last equality is derived by Taylor approximation for $Q^{-1}\left(\epsilon + O\left(\frac{1}{\sqrt{n}}\right)\right)$ around $\epsilon$.
Let us define the NLD of the FC in $\text{Cb}(a)$ by
\begin{equation}
\label{eq_NLD_in_FC}
\delta(n, \epsilon, a/\sigma) \triangleq \frac{1}{n}\ln\left(\frac{M(n,\epsilon,a/\sigma)}{a^n}\right).
\end{equation}
From \eqref{align_Direct_step_1} we obtain
\begin{align}
\label{align_Direct_step_2}
\delta(n, \epsilon, a/\sigma) = I(X;Y,H) - \ln(a) - \sqrt{\frac{Var(i(X;Y,H))}{n}}Q^{-1}(\epsilon) + O\left(\frac{1}{n}\right).
\end{align}
Note that the results of Lemma \ref{lem_information_density_moments} hold in general for large enough $a$. Specifically, we can choose $a$ to be a monotonic increasing function of $n$ s.t. $\lim_{n \to \infty}a=\infty$, and then the results of Lemma \ref{lem_information_density_moments} will hold for any large enough $n$. Assigning the results of Lemma \ref{lem_information_density_moments} with appropriate choice of $a = a(n)$, we get
\begin{align}
\label{align_Direct_step_3}
\delta(n, \epsilon, a/\sigma) &= \delta^* -
\sqrt{ \frac{V + O\left(\left(\frac{\sigma}{a}\right)^{\frac{\alpha}{2}}\right)}{n} }Q^{-1}(\epsilon)
+ O\left(\frac{1}{n} + \left(\frac{\sigma}{a}\right)^{\alpha}\right).
\end{align}
Using Taylor approximation for large enough $n$,
\begin{equation}
\sqrt{V + O\left(\left(\frac{\sigma}{a}\right)^{\frac{\alpha}{2}}\right)} =
\sqrt{V} + O\left(\left(\frac{\sigma}{a}\right)^{\frac{\alpha}{2}}\right).
\end{equation}
Hence, we get
\begin{align}
\label{align_Direct_step_4}
\delta(n, \epsilon, a/\sigma) = \delta^* - \sqrt{ \frac{V}{n} }Q^{-1}(\epsilon) +
O\left(\frac{1}{n} + \frac{1}{\sqrt{n}}\left(\frac{\sigma}{a}\right)^{\frac{\alpha}{2}} + \left(\frac{\sigma}{a}\right)^{\alpha}\right).
\end{align}

By tiling the FC, denoted by $S(n, \epsilon, a/\sigma)$, to the whole space $\mathbb{R}^n$ and by choosing for example $a(n) = \sigma\cdot n^{2+\frac{2}{\alpha}}$, we can construct an IC (See Appendix \ref{app_tiling} for details) with average error probability which is upper bounded by $\epsilon$ and NLD $\delta(n,\epsilon)$ that satisfies
\begin{equation}
\label{eq_Direct_step_5}
\delta(n, \epsilon) = \delta^* - \sqrt{ \frac{V}{n} }Q^{-1}(\epsilon) + O\left(\frac{1}{n}\right).
\end{equation}
Hence, the optimal NLD $\delta^*(n,\epsilon)$ necessarily satisfies
\begin{align}
\label{align_Direct_step_6}
\delta^*(n, \epsilon) &\geq \delta(n, \epsilon) = \delta^* - \sqrt{ \frac{V}{n} }Q^{-1}(\epsilon) + O\left(\frac{1}{n}\right),
\end{align}
which completes the proof of the direct part.

We can observe that in the case of AWGN, namely $H = 1$ deterministically, our result coincides with the weaker achievability
bound of the dispersion analysis of Ingber et al. in \cite{Ingber}. This weaker bound is based on the suboptimal typicality decoder. The stronger bound in \cite{Ingber}, which is based on the optimal ML decoder, is greater than the typicality bound in $\frac{1}{2n}\ln(n)$. Hence, we conjecture that by using a ML decoder, instead of the suboptimal dependence testing decoder, the achievability bound is, actually, given by:
\begin{equation}
\delta^*(n, \epsilon) \geq \delta^* - \sqrt{ \frac{V}{n} }Q^{-1}(\epsilon) + \frac{1}{2n}\ln(n) + O\left(\frac{1}{n}\right).
\end{equation}
\section{Extension to the Complex Channel Model}
\label{sec_extension_to_the_complex_model}
In this section we extend our main result to the case of scalar complex channel model. First, we will define the complex fading channel model and then we will explain its similarity to the scalar real model. Finally, we will give the outline of the proof of the theorem in this setting.

In the complex model, $Y = H\cdot X + Z$ where $X$, $H$ and $Z$ are independent complex RVs. Moreover, $E\left\{|H|^2\right\} = 1$ and $Z \sim CN(0, \sigma^2)$ with i.i.d. real and imaginary components.

Generally, $H$ is a complex RV, but since in our model the CSI is known at the receiver, we can assume
that $H$ is a real and nonnegative RV, without loss of generality. Hence, the complex model is equivalent to the following two scalar real models:
\begin{align}
Y_r = |H|\cdot X_r + Z_r
\\
Y_i = |H|\cdot X_i + Z_i
\end{align}
where, $X = X_r + jX_i$, $Y = Y_r + jY_i$ and $Z = Z_r + jZ_i$.
\begin{theorem}
\label{thm_main_result_complex_extension}
Let $\epsilon>0$ be a given, fixed, error probability. Denote by $\delta_c^*(n,\epsilon)$ the optimal NLD for which there exists an $n$ complex-dimensional infinite constellation with average error probability at most $\epsilon$. Then, for any regular fading distribution of $|H|$, as $n$ grows,
\begin{equation}
\label{eq_main_result_complex}
\delta_c^*(n,\epsilon) = \delta_c^* - \sqrt{\frac{V_c}{n}}Q^{-1}(\epsilon) + O\left(\frac{\ln(n)}{n}\right),
\end{equation}
where,
\begin{align}
\delta_c^* &\triangleq E\left\{\delta_c(H)\right\} = E\left\{ \ln\left(\frac{|H|^2}{\pi e\sigma^2}\right) \right\}
\\ V_c     &\triangleq 1 + Var(\delta_c(H)) = 1 + Var\left(\ln\left(|H|^2\right)\right)
\end{align}
noting that
\begin{equation}
\delta_c(H) \triangleq \ln\left(\frac{|H|^2}{\pi e\sigma^2}\right).
\end{equation}
\end{theorem}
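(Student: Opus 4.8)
The plan is to reduce the complex channel to the real setting already solved in Theorem \ref{thm_main_result}, while carefully tracking the single structural difference that the complex model introduces. Since the CSI is available, $H$ may be taken real and nonnegative, so each complex channel use $Y=|H|X+Z$ splits into the two real sub-channels $Y_r=|H|X_r+Z_r$ and $Y_i=|H|X_i+Z_i$ displayed above, with $Z_r,Z_i\sim N(0,\sigma^2/2)$ independent. Thus an $n$-complex-dimensional IC is a $2n$-real-dimensional IC, but the fading is \emph{paired}: the same magnitude $|H_j|$ multiplies both real coordinates of the $j$-th channel use, so there are only $n$ i.i.d. fading variables rather than $2n$. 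This coupling, together with the halved per-component noise variance $\sigma^2/2$, is exactly what changes the constants relative to the real case, and handling it correctly is the crux of the proof.

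For the converse, I would re-run the sphere-packing argument of Theorem \ref{thm_sphere_packing_lower_bound} in the $2n$-dimensional real picture. Writing $\tilde\delta=\delta_c/2$ for the NLD per real dimension and using $\det(\tilde{\mathbf{H}})=\prod_{j=1}^n|H_j|^2$, the bound reads $P_e\geq Pr\{\|\mathbf{z}\|^2\geq e^{-2\tilde\delta}(\det(\tilde{\mathbf{H}})/V_{2n})^{1/n}\}$. Taking logarithms and rearranging as in Section \ref{sec_proof_of_converse_part}, the decisive random variable becomes $\zeta_n^{(c)}=Y_{2n}-\sqrt{Var(\delta_c(H))}\,S_n$, where $Y_{2n}$ is the normalized logarithm of $\|\mathbf{z}\|^2/(\sigma^2/2)\sim\chi^2_{2n}$ and $S_n$ is the standardized sum of the $n$ i.i.d. terms $\ln|H_j|^2$. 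Because the noise is now a $\chi^2$ with $2n$ degrees of freedom, Lemma \ref{lem_ln_chi2_pdf} (applied with $2n$) shows $Y_{2n}$ is approximately $N(0,1)$; Lemma \ref{lem_Berry_Esseen} handles $S_n$; and Lemma \ref{lem_sum_of_almost_normal_RVs} combines them, yielding $Var(\zeta_n^{(c)})\to 1+Var(\delta_c(H))=V_c$. The leading $1$ (rather than $\tfrac12$) is precisely the signature of the paired fading: the noise now contributes one full unit of variance per complex use. A Stirling expansion of $V_{2n}$ then collapses the threshold to $\sqrt{n}(\delta_c^*-\delta_c)+O(\ln n/\sqrt{n})$, producing the claimed upper bound on $\delta_c^*(n,\epsilon)$.

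For the direct part I would apply the DT bound of Theorem \ref{thm_DT_bound} with the complex uniform input $X_r,X_i\sim U(-a/2,a/2)$ and study the complex information density $i(X;Y,H)=i(X_r;Y_r,H)+i(X_i;Y_i,H)$. The two summands are conditionally independent given $H$ but are \emph{correlated through the shared fading magnitude}, so $Var(i)=Var(i_r)+Var(i_i)+2\,Cov(i_r,i_i)$ and the covariance, which equals the variance of the common $H$-dependent term, does not let the variance simply double. Computing the moments as in Lemma \ref{lem_information_density_moments} (now with per-component noise variance $\sigma^2/2$), the density behaves, up to the additive constant $\ln a^2$ that cancels under the NLD normalization, like $\delta_c(H)+1-|Z|^2/\sigma^2$; its mean reproduces $\delta_c^*$ and its variance is $Var(\delta_c(H))+Var(|Z|^2/\sigma^2)=Var(\delta_c(H))+1=V_c$, since $|Z|^2/(\sigma^2/2)\sim\chi^2_2$. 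The remainder — Berry-Esseen over the $n$ i.i.d. per-use densities, the choice $\tau=Q^{-1}(\cdots)$, Lemma \ref{lem_information_density_moments} together with Lemma \ref{lem_Berry_Esseen}, and the tiling of the resulting FC into an IC — transfers verbatim from Section \ref{sec_proof_of_direct_part}.

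The main obstacle I anticipate is the correct bookkeeping of the paired fading and the halved per-component noise variance. One must verify that the noise fluctuation contributes $1$ rather than $\tfrac12$ to $V_c$ (through the $\chi^2_{2n}$ term in the converse and the $\chi^2_2$ term in the direct part), that the covariance between the real and imaginary information densities induced by the shared $|H|$ is fully accounted for, and that the Stirling terms for $V_{2n}$ reproduce the $\tfrac{\ln n}{2n}$-type correction. Once these constants are pinned down, every normal-approximation lemma from the real proof applies unchanged, with $n\mapsto 2n$ in the noise term and $n$ fading variables in the fading term.
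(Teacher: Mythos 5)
Your proposal is correct and follows essentially the same route as the paper: splitting the complex channel into the two real sub-channels sharing the fading magnitude $|H|$, proving the direct part via the DT bound with the decomposed information density $i(X;Y,H)=i(X_r;Y_r,|H|)+i(X_i;Y_i,|H|)$, and the converse via the sphere-packing bound with $2\|\mathbf{z}\|^2/\sigma^2\sim\chi^2_{2n}$ followed by the same normal-approximation lemmas. Your bookkeeping of the paired fading (yielding $Var(\ln|H|^2)$ rather than $\tfrac12 Var(\ln|H|^2)$) and of the unit noise-variance contribution per complex use matches the paper's computation of $V_c=1+Var(\ln|H|^2)$.
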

\subsection{Proof outline of the direct part}
In a similar way to the proof of the direct part of scalar real models, we will construct an ensemble of finite constellations with $M$ codewords, which are uniformly distributed in an $n$ complex-dimensional cube $\text{Cb}(a)$. To be more precise, each codeword's component (its real and imaginary parts) in this ensemble is drawn uniformly according to the distribution $U(-\frac{a}{2},\frac{a}{2})$, independently of each other. Then, using the \emph{Dependence Testing bound} of Theorem \ref{thm_DT_bound} over this ensemble and the Berry-Essen lemma (see Lemma \ref{lem_Berry_Esseen}), we can prove the existence of a FC with $M(n,\epsilon,a/\sigma)$ codewords and with an average error probability upper bounded by $\epsilon$, which satisfies the following:
\begin{align}
\label{align_Direct_complex_case_step_1}
\begin{aligned}
\delta_c(n, \epsilon, a/\sigma) &\triangleq \ln\left(\frac{M(n,\epsilon,a/\sigma)}{a^{2n}}\right)
\\ &= I(X;Y,H) - \ln(a^2) - \sqrt{\frac{Var(i(X;Y,H))}{n}}Q^{-1}(\epsilon) + O\left(\frac{1}{n}\right).
\end{aligned}
\end{align}
In this case the information density is given by
\begin{align}
\begin{aligned}
i(X;Y,H) &= \ln\left( \frac{f(Y|X,H)}{f(Y|H)} \right)
\\       &= \ln\left( \frac{f(Y_r|X_r,|H|)f(Y_i|X_i,|H|)}{f(Y_r||H|)f(Y_i||H|)} \right)
\\       &= i(X_r;Y_r,|H|) + i(X_i;Y_i,|H|).
\end{aligned}
\end{align}
Hence, by equivalent calculations as in Lemma \ref{lem_information_density_moments}, we can obtain
\begin{align}
\begin{aligned}
\label{align_moments_in_complex_case}
I(X;Y,H) &= E\left\{ \ln\left(\frac{a^2|H|^2}{\pi e\sigma^2}\right) \right\} + o(1)
\\
Var(i(X;Y,H)) &= 1 + Var\left(\ln\left(\frac{a^2|H|^2}{\pi e\sigma^2}\right)\right) + o(1),
\end{aligned}
\end{align}
where $o(1)$ converges to zero as $\sigma/a$ tends to zero. Combining \eqref{align_Direct_complex_case_step_1} and \eqref{align_moments_in_complex_case} gives us the following:
\begin{align}
\label{align_Direct_complex_case_step_2}
\delta_c(n, \epsilon, a/\sigma) = \delta_c^* + o(1) - \sqrt{\frac{V_c + o(1)}{n}}Q^{-1}(\epsilon) + O\left(\frac{1}{n}\right).
\end{align}

By tiling this FC to the whole space $\mathbb{C}^n$ we can prove the existence of IC with an average error probability upper bounded by $\epsilon$ and NLD that equals the RHS of \eqref{eq_main_result_complex}. This completes the proof of the direct part.
\subsection{Proof outline of the converse part}
Using the same arguments as in the scalar real fading channel model, we can prove that the sphere packing lower bound of complex fading channels, is given by
\begin{equation}
\label{eq_SPB_complex_step1}
P_e \geq P_e^{\text{SB}}\left(\delta_c\right) = Pr\left\{\left\|\mathbf{z}\right\|^2 \geq e^{-\delta_c}\left(\frac{\det\left(\mathbf{H}^{\dagger}\mathbf{H}\right)}{V_{2n}}\right)^{\frac{1}{n}} \right\}
\end{equation}
for any IC $S$ with NLD $\delta_c$, where $2\cdot\left\|\mathbf{z}\right\|^2 / \sigma^2 \sim \chi^2_{2n}$ and $\mathbf{H} = \text{diag}(H_1,\dots,H_n)$.

Using the same normal approximation techniques as in the case of the scalar real fading model, we can prove that for any $n$ complex-dimensional IC, with NLD $\delta_c$ and average error probability upper bounded by $\epsilon$, over the complex fading channel, the following holds:
\begin{equation}
\label{eq_converse_complex_case}
\delta_c \leq \delta^*_c - \sqrt{ \frac{V_c}{n} }Q^{-1}(\epsilon) + \frac{1}{2n}\ln(n) + O\left(\frac{1}{n}\right),
\end{equation}
which completes the proof of the converse part.
\section{Extension for Lattices}
\label{sec_extension_for_lattices}
In this section we extend the validity of our main result in Theorem \ref{thm_main_result} to the special case of \emph{Lattices}.
Lattices are the most practical infinite constellations due to theirs structure, and they are essentially the Euclidean space analog of linear codes.
These properties may allow efficient encoding and decoding algorithms \cite{Sommer}.
The proof here is based on an extension of the suboptimal \emph{Typicality Decoder} proposed by Ingber et al. in \cite{Ingber} for communicating over the unconstrained AWGN channel.
\begin{theorem}
\label{thm_typicality_decoder_based_bound}
(Typicality decoder based bound): Denote $r=r(\mathbf{H})=r_0\sqrt[n]{\det(\mathbf{H})}$. Then for any $n$, $r_0>0$ and $\delta = \frac{1}{n}\ln(\gamma)$, there exists an $n$-dimensional lattice $\Lambda$ with NLD $\delta$ and average (maximal) error probability over the unconstrained fading channel with CSI at the receiver, which satisfy:
\begin{align}
\label{align_typicality_decoder_based_bound}
\begin{aligned}
P_e(\Lambda) &\leq Pr\left\{\|\mathbf{z}\|>r\right\} + \gamma V_n r_0^n
               + Pr\left\{H_{\max} > g_{\max}(n) \cup H_{\min} < g_{\min}(n) \right\}
\end{aligned}
\end{align}
where, $H_{\min/\max}\triangleq\min/\max(H_1,\dots,H_n)$ and $g_{\min}(n)\leq g_{\max}(n)$ are arbitrary thresholds.
\end{theorem}
\begin{proof}
Let $\Lambda$ be a lattice that is used as IC for communicating over the unconstrained fading channel. Suppose that $\lambda\in\Lambda$ was sent. Then, $\mathbf{y}=\mathbf{H}\cdot\lambda + \mathbf{z}$. Denote by $\Lambda_{\mathbf{H}}\triangleq\mathbf{H}\cdot\Lambda$ the receiver's lattice. In addition, let $r$ be a parameter that plays the role of a threshold for decoding using the suboptimal \emph{typicality decoder}, which operates as follows. If the ball $\text{Ball}(\mathbf{y},r)$ contains only a single point $\lambda_{\text{rc}}=\mathbf{H}\cdot\lambda_0$ in the receiver's lattice, then the point $\lambda_0$ will be the decoded codeword. Otherwise, an error will be declared.
We note that the decoding operation is only restricted to the case where the minimal fading coefficient $H_{\min}=\min(H_1,\dots,H_n)$ and the maximal fading coefficient $H_{\max}=\max(H_1,\dots,H_n)$ are not crossing a predefined thresholds $g_{\min}(n)$ and $g_{\max}(n)$, respectively. This is in order to guarantee a finite support of the fading channel, i.e., any fading coefficient satisfies $H \in [g_{\min}(n),g_{\max}(n)]$. Otherwise, an error will also be declared.
Hence, the error probability given $\mathbf{H}$ satisfies:
\begin{align}
\begin{aligned}
P_e(\Lambda|\mathbf{H})
&\leq \mathds{1}\left\{H_{\max} > g_{\max}(n) \cup H_{\min} < g_{\min}(n) |\mathbf{H} \right\} \\
&+ \mathds{1}\left\{H_{\max} \leq g_{\max}(n) \cap H_{\min} \geq g_{\min}(n) |\mathbf{H} \right\} \hspace{-0.2em} \cdot \hspace{-0.2em} Pr\left\{\mathbf{z}\notin\text{Ball}(r) |\mathbf{H} \right\} \\
&+ \hspace{-0.6em} \sum_{\lambda\in\Lambda\setminus\{0\}} \hspace{-0.7em} \mathds{1} \hspace{-0.1em} \left\{H_{\max} \leq g_{\max}(n) \cap H_{\min} \geq g_{\min}(n) |\mathbf{H} \right\} \hspace{-0.2em} \cdot \hspace{-0.2em} Pr\left\{\mathbf{z}\in\text{Ball}(\mathbf{H}\cdot\lambda,r)\cap\text{Ball}(r)|\mathbf{H}\right\} \hspace{-0.2em},
\end{aligned}
\end{align}
where the first term is due to the cases where the fading channel exceeds the predefined finite support, the second term is due to the cases where the decoding ball is empty and the third term is due to the cases where it includes more than one receiver's codeword.
We can simplify the above conditional error probability upper bound by the following:
\begin{align}
\begin{aligned}
P_e(\Lambda|\mathbf{H})
&\leq \mathds{1}\left\{H_{\max} > g_{\max}(n) \cup H_{\min} < g_{\min}(n) |\mathbf{H} \right\}
 + Pr\left\{\mathbf{z}\notin\text{Ball}(r) |\mathbf{H} \right\} \\
&+ \hspace{-0.6em} \sum_{\lambda\in\Lambda\setminus\{0\}} \hspace{-0.7em} \mathds{1} \hspace{-0.1em} \left\{H_{\max} \leq g_{\max}(n) \cap H_{\min} \geq g_{\min}(n) |\mathbf{H} \right\} \hspace{-0.2em} \cdot \hspace{-0.2em} Pr\left\{\mathbf{z}\in\text{Ball}(\mathbf{H}\cdot\lambda,r)\cap\text{Ball}(r)|\mathbf{H}\right\} \hspace{-0.2em}.
\end{aligned}
\end{align}
By averaging over the fading distribution we can upper bound the error probability by the following:
\begin{align}
\label{align_lattice_upper_bound}
\begin{aligned}
P_e(\Lambda) &= E\{P_e(\Lambda|\mathbf{H})\} \\
   &\leq Pr\left\{H_{\max} > g_{\max}(n) \cup H_{\min} < g_{\min}(n) \right\} + Pr\left\{\mathbf{z}\notin\text{Ball}(r)\right\} \\
   &+ \hspace{-0.8em} \sum_{\lambda\in\Lambda\setminus\{0\}}{ \hspace{-0.8em} E\Big\{ \hspace{-0.1em} \mathds{1} \hspace{-0.15em} \left\{H_{\max} \leq g_{\max}(n) \cap H_{\min} \geq g_{\min}(n) |\mathbf{H} \right\} \hspace{-0.2em} \cdot \hspace{-0.2em} Pr \hspace{-0.2em} \left\{\mathbf{z} \hspace{-0.1em} \in \hspace{-0.1em} \text{Ball}(\mathbf{H}\cdot\lambda,r)\cap\text{Ball}(r)|\mathbf{H}\right\} \hspace{-0.3em} \Big\} } \hspace{-0.1em} .
\end{aligned}
\end{align}
\indent
Recall the Minkowski-Hlawka theorem \cite{Hlawka}\cite{Gruber}: Let $f:\mathbb{R}^n\rightarrow\mathbb{R}^+$ be a nonnegative integrable function with bounded support. Then for every $\gamma>0$, there exists a lattice $\Lambda$ with density $\gamma={\det(G_{\Lambda})}^{-1}$ (where $G_{\Lambda}$ is its generator matrix) that satisfies
\begin{equation}
\sum_{\lambda\in\Lambda\setminus\{0\}}f(\lambda) \leq
\gamma\int_{\mathbb{R}^n}f(\lambda)d\lambda.
\end{equation}
We now apply the Minkowski-Hlawka theorem to evaluate \eqref{align_lattice_upper_bound}.
Let us denote,
\begin{equation}
f(\lambda|\mathbf{H}) =  \mathds{1}\left\{H_{\max} \leq g_{\max}(n) \cap H_{\min} \geq g_{\min}(n) |\mathbf{H} \right\} \cdot Pr\left\{\mathbf{z}\in\text{Ball}(\mathbf{H}\cdot\lambda,r)\cap\text{Ball}(r)|\mathbf{H}\right\}\nonumber
\end{equation}
and choose $ f(\lambda) = E\left\{ f(\lambda|\mathbf{H}) \right\}$.
Note that $f(\lambda|\mathbf{H})=0$ for any $\lambda$ such that $\|\mathbf{H}\cdot\lambda\| > 2r$. The following proves that a sufficient condition for this is $\|\lambda\| > 2r_0\cdot\frac{g_{\max}(n)}{g_{\min}(n)}$ (which is not a function of $\mathbf{H}$):
\begin{align}
\begin{aligned}
\|\mathbf{H}\cdot\lambda\|
&\geq H_{\min} \cdot \|\lambda\| \\
&\geq g_{\min}(n) \cdot \|\lambda\| \\
&>    2r_0 \cdot g_{\max}(n) \\
&\geq 2r_0 \cdot H_{\max} \\
&\geq 2r_0 \cdot \sqrt[n]{\det(\mathbf{H})} = 2r,
\end{aligned}
\end{align}
where the second and the fourth inequalities are due to the fact that if there is a fading coefficient which is not in the range $[g_{\min}(n),g_{\max}(n)]$, then $f(\lambda|\mathbf{H}) = 0$ anyway. The third inequality is from the assumption. As an immediate consequence we get that $f(\lambda)$ also has a bounded support.
\newline
\indent
Combining all the above, there exists a lattice $\Lambda$ with average error probability (using the typicality decoder) that satisfies the following:
\begin{align}
\begin{aligned}
&P_e(\Lambda)
   \leq Pr\left\{H_{\max} > g_{\max}(n) \cup H_{\min} < g_{\min}(n) \right\} + Pr\left\{\|\mathbf{z}\|>r\right\} \\
   &+ \gamma \hspace{-0.05em} \int_{\mathbb{R}^n}{ \hspace{-0.45em} E\Big\{ \mathds{1}\left\{H_{\max} \leq g_{\max}(n) \cap H_{\min} \geq g_{\min}(n) |\mathbf{H} \right\} Pr\left\{\mathbf{z}\in\text{Ball}(\mathbf{H}\cdot\lambda,r)\cap\text{Ball}(r)|\mathbf{H}\right\}\Big\} }d\lambda \hspace{-0.05em}.
\end{aligned}
\end{align}
Trivially, we can simplify the above upper bound by replacing the indicator function $\mathds{1}\{\cdot\}$ by the value of one, which leads to the following after simple mathematical manipulations:
\begin{align}
\begin{aligned}
P_e(\Lambda)
   &\leq Pr\left\{H_{\max} > g_{\max}(n) \cup H_{\min} < g_{\min}(n) \right\} + Pr\left\{\|\mathbf{z}\|>r\right\}
   \\
   &+ E\left\{\frac{\gamma}{\det(\mathbf{H})}\int_{\mathbb{R}^n}Pr\left\{\mathbf{z}\in\text{Ball}(\lambda,r)\cap\text{Ball}(r)|\mathbf{H}\right\}d\lambda\right\}
\\&= Pr\left\{H_{\max} > g_{\max}(n) \cup H_{\min} < g_{\min}(n) \right\} + Pr\left\{\|\mathbf{z}\|>r\right\}
\\
      &+E\left\{\frac{\gamma}{\det(\mathbf{H})}\int_{\mathbb{R}^n}\int_{\text{Ball}(\lambda,r)\cap\text{Ball}(r)}{f_{\mathbf{Z}}(\mathbf{z})}d\mathbf{z}d\lambda\right\},
\end{aligned}
\end{align}
where $f_{\mathbf{Z}}(\mathbf{z})$ stands for the multivariate-normal distribution of the noise vector.
Finally, since $\text{Ball}(\lambda,r)\cap\text{Ball}(r) \subseteq \text{Ball}(\lambda,r)$ we obtain,
\begin{align}
\begin{aligned}
P_e(\Lambda) &\leq Pr\left\{H_{\max} > g_{\max}(n) \cup H_{\min} < g_{\min}(n) \right\} + Pr\left\{\|\mathbf{z}\|>r\right\}
\\    &+E\left\{\frac{\gamma}{\det(\mathbf{H})}\int_{\mathbb{R}^n}\int_{\text{Ball}(\lambda,r)}{f_{\mathbf{Z}}(\mathbf{z})}d\mathbf{z}d\lambda\right\}
\\ &= Pr\left\{H_{\max} > g_{\max}(n) \cup H_{\min} < g_{\min}(n) \right\} + Pr\left\{\|\mathbf{z}\|>r\right\}
\\    &+ E\left\{\frac{\gamma}{\det(\mathbf{H})}\int_{\text{Ball}(r)}\int_{\mathbb{R}^n}{f_{\mathbf{Z}}(\mathbf{z-\lambda})}d\lambda d\mathbf{z}\right\}
\\ &= Pr\left\{H_{\max} > g_{\max}(n) \cup H_{\min} < g_{\min}(n) \right\} + Pr\left\{\|\mathbf{z}\|>r\right\}
       + E\left\{\frac{\gamma V_n r^n}{\det(\mathbf{H})}\right\}
\\ &= Pr\left\{H_{\max} > g_{\max}(n) \cup H_{\min} < g_{\min}(n) \right\} + Pr\left\{\|\mathbf{z}\|>r\right\} + \gamma V_n r_0^n.
\end{aligned}
\end{align}
\end{proof}
It is interesting to observe the similarity between the \emph{Typicality decoder based bound} in \eqref{align_typicality_decoder_based_bound} and the \emph{Dependence testing bound} in \eqref{eq_DT_bound}. In both, the bound includes a sum of two probabilities, where the first is the probability that the correct codeword does not cross the decoding threshold, and the second is the probability that other codewords cross the threshold.

The following Lemma simplifies the typicality decoder bound of Theorem \ref{thm_typicality_decoder_based_bound} in a way that is sufficient for the extension of our main result to the special case of Lattices.
\begin{lem}
\label{lem_sufficient_typicality_decoder_based_bound}
(Sufficient typicality decoder based bound): For any $r=r_0\sqrt[n]{\det(\mathbf{H})}$, $r_0>0$, $\delta = \frac{1}{n}\ln(\gamma)$ and large enough $n$, there exist a positive constant $C>0$ and an $n$-dimensional lattice $\Lambda$ with NLD $\delta$ and average (maximal) error probability over the unconstrained fading channel with CSI at the receiver, which satisfy:
\begin{align}
\label{align_sufficient_typicality_decoder_based_bound}
\begin{aligned}
P_e(\Lambda) &\leq Pr\left\{\|\mathbf{z}\|>r\right\} + \gamma V_n r_0^n
               + \frac{C}{n^2}.
\end{aligned}
\end{align}
\end{lem}
\begin{proof}
See Appendix \ref{app_proof_of_the_sufficient_typicality_decoder_based_bound}.
\end{proof}
\begin{theorem}
\label{thm_lattices}
Let $\epsilon>0$ be a given, fixed, error probability. Denote by $\delta^*(n,\epsilon)$ the optimal NLD for which there exists an $n$-dimensional lattice with average (maximal) error probability at most $\epsilon$. Then, for any regular fading distribution of $H$, as $n$ grows,
\begin{equation}
\label{eq_dispersion_lattices}
\delta^*(n,\epsilon) = \delta^* - \sqrt{\frac{V}{n}}Q^{-1}(\epsilon) + O\left(\frac{\ln(n)}{n}\right),
\end{equation}
where,
\begin{align}
\delta^* &\triangleq E\left\{ \delta(H) \right\} = E\left\{ \frac{1}{2}\ln\left(\frac{H^2}{2\pi e\sigma^2}\right) \right\}
\\ V     &\triangleq \frac{1}{2} + Var(\delta(H)) = \frac{1}{2} + Var\left(\frac{1}{2}\ln(H^2)\right)
\end{align}
noting that
\begin{equation}
\delta(H) \triangleq \frac{1}{2}\ln\left(\frac{H^2}{2\pi e\sigma^2}\right).
\end{equation}
\end{theorem}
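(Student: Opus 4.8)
The plan is to obtain the converse essentially for free and to spend the real effort on the achievability. The converse inequality $\delta^*(n,\epsilon)\le \delta^*-\sqrt{V/n}\,Q^{-1}(\epsilon)+\frac{1}{2n}\ln(n)+O(1/n)$ proved for Theorem \ref{thm_main_result} in Section \ref{sec_converse_part} rests only on the sphere packing bound of Theorem \ref{thm_sphere_packing_lower_bound}, which holds for \emph{every} infinite constellation. Since a lattice is in particular an infinite constellation, its optimal NLD cannot exceed the optimal NLD over all IC's, so the same upper bound applies verbatim and no new argument is needed. It remains to exhibit a \emph{lattice} attaining this NLD, which the generic achievability of Section \ref{sec_direct_part} does not provide, since the tiling construction there produces an unstructured IC.

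For the direct part I would invoke the typicality-decoder bound of Theorem \ref{thm_typicality_decoder_based_bound}. Fix an NLD $\delta$, so $\gamma=e^{n\delta}$. Exploiting the fact that the CSI $\mathbf{H}$ is available at the receiver, I would let the decoding radius be \emph{adapted} to the channel, $r(\mathbf{H})=\theta\,r_{\text{eff}}(\mathbf{H})$ with $V_n r_{\text{eff}}^n(\mathbf{H})=\gamma^{-1}\det(\mathbf{H})$ and $\theta=1-\Theta(\ln(n)/n)<1$; repeating the derivation of Theorem \ref{thm_typicality_decoder_based_bound} conditionally on $\mathbf{H}$ (the Minkowski-Hlawka step applied to the $\mathbf{H}$-averaged kernel, whose support stays bounded once $r(\mathbf{H})\le\theta\,r_{\text{eff}}(\mathbf{H})$) then gives $P_e(\Lambda)\le E\{Pr\{\|\mathbf{z}\|>\theta\,r_{\text{eff}}(\mathbf{H})\mid\mathbf{H}\}\}+\theta^{n}$. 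The confusion term is now the deterministic quantity $\theta^{n}=n^{-\Omega(1)}\to0$, so the error is governed by the noise-escape term alone.

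I would then analyze the noise-escape term by taking logarithms inside the event $\{\|\mathbf{z}\|^2>\theta^2 r_{\text{eff}}^2(\mathbf{H})\}$. With $\|\mathbf{z}\|^2/\sigma^2\sim\chi^2_n$ and $\ln\det(\mathbf{H})=\sum_i\ln(H_i)$, the event becomes $Y_n-\sqrt2\,D_n>\sqrt{2n}\,(\delta^*-\delta+\ln\theta)+O(\ln(n)/\sqrt n)$, where $Y_n=(\ln(\chi^2_n)-\ln(n))/\sqrt{2/n}$ and $D_n=n^{-1/2}\sum_i(\ln(H_i)-E\{\ln H\})$, the Stirling expansion of $V_n$ (as in Section \ref{sec_proof_of_converse_part}) together with the identity $2E\{\ln H\}-\ln(2\pi e\sigma^2)=2\delta^*$ collapsing all deterministic constants. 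Lemma \ref{lem_ln_chi2_pdf} makes the PDF of $Y_n$ normal up to $L^1$-error $O(1/\sqrt n)$; the Berry-Esseen Lemma \ref{lem_Berry_Esseen} makes the CDF of $D_n$ normal with variance $Var(\delta(H))$ (its third absolute moment is finite for any regular fading, since $\int_0|\ln h|^3 h^{\alpha-1}\,dh<\infty$); and Lemma \ref{lem_sum_of_almost_normal_RVs} combines the two independent pieces into a single almost-normal variable of variance $1+2\,Var(\delta(H))$. Hence $P_e(\Lambda)=Q\big(\sqrt{2n}(\delta^*-\delta+\ln\theta)/\sqrt{1+2Var(\delta(H))}\big)+O(1/\sqrt n)$; setting this equal to $\epsilon$, noting $\sqrt{(1+2Var(\delta(H)))/(2n)}=\sqrt{V/n}$, inverting $Q$ and absorbing $\ln\theta=O(\ln(n)/n)$ yields $\delta=\delta^*-\sqrt{V/n}\,Q^{-1}(\epsilon)+O(\ln(n)/n)$, the claimed achievable NLD. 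Finally, geometric uniformity of lattices gives average $=$ maximal error probability, justifying the parenthetical ``(maximal)''.

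The step I expect to be the main obstacle is precisely the treatment of the weak-fading realizations. With a \emph{fixed} radius $r$ the confusion term of Theorem \ref{thm_typicality_decoder_based_bound} equals $r^n V_n\gamma\,E\{1/\det(\mathbf{H})\}=r^nV_n\gamma\,(E\{1/H\})^n$, and $E\{1/H\}=\int_0^\infty h^{-1}f(h)\,dh$ diverges for every regular fading with $\alpha\le1$ (for instance Nakagami with $m=\tfrac12$, where $\alpha=1$), so the naive bound is vacuous for exactly the distributions the theorem must cover. Letting the decoding radius track $r_{\text{eff}}(\mathbf{H})$ replaces the divergent fading expectation by the harmless constant $\theta^{n}$ and is the device that lets the argument go through for all regular fading; verifying that the conditional Minkowski-Hlawka construction still produces a single lattice (rather than an $\mathbf{H}$-dependent one) is the technical point that must be checked with care.
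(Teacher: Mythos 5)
Your proposal matches the paper's proof essentially step for step: the converse is inherited verbatim from the general IC sphere-packing converse, and the direct part applies the typicality-decoder (Minkowski--Hlawka) bound of Theorem \ref{thm_typicality_decoder_based_bound} with a channel-adapted radius proportional to $r_{\text{eff}}(\mathbf{H})$ --- the paper takes $r^n = r_{\text{eff}}^n(\mathbf{H})\,\epsilon/\sqrt{n}$, i.e.\ your $\theta^n$ equal to $\epsilon/\sqrt{n}$ --- followed by the same normal approximation via Lemmas \ref{lem_ln_chi2_pdf}, \ref{lem_Berry_Esseen} and \ref{lem_sum_of_almost_normal_RVs}, and the same symmetry argument for maximal error probability. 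The $\mathbf{H}$-dependent decoding radius inside the Minkowski--Hlawka step, which you rightly flag as the point needing care, is exactly what the paper uses as well (without further comment).
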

\begin{proof}
First, in Section \ref{sec_proof_of_converse_part} we have already proved the converse part for any IC, which includes the special case of lattices. Hence, we only need to prove the existence of a lattice with error probability $\epsilon$ and with NLD that satisfies the RHS of \eqref{eq_dispersion_lattices}.
For doing so, let us use Lemma \ref{lem_sufficient_typicality_decoder_based_bound} with $r^n=(\gamma V_n)^{-1}\frac{\epsilon}{\sqrt{n}}\det(\mathbf{H})$ and $\delta$ s.t. $Pr\left\{\|\mathbf{z}\|>r\right\}=\epsilon\left(1-\frac{1}{\sqrt{n}}\right)-C/n^2$ (for some positive constant $C$ and large enough $n$ s.t. the RHS is positive and the conditions of Lemma \ref{lem_sufficient_typicality_decoder_based_bound} hold). Hence, for large enough $n$ there exists a lattice with NLD $\delta$ and error probability not greater than $\epsilon$ such that:
\begin{align}
\epsilon\left(1-\frac{1}{\sqrt{n}}\right) - \frac{C}{n^2} &= Pr\left\{\|\mathbf{z}\|>r\right\}
\\  &=    Pr\left\{\ln\left(\|\mathbf{z}\|^2\right)>\ln(r^2)\right\}
\\  \label{align_expand_rhs}
    &=    Pr\left\{\frac{1}{\sqrt{2}}Y_n>\sqrt{n}\cdot\frac{1}{2}\ln\left(\frac{r^2}{n\sigma^2}\right)\right\},
\end{align}
where $Y_n\triangleq\sqrt{\frac{n}{2}}\ln\left(\frac{\|\mathbf{z}\|^2}{n\sigma^2}\right)$. Expanding the RHS in the argument of \eqref{align_expand_rhs}
\begin{align}
\frac{1}{2}\ln\left(\frac{r^2}{n\sigma^2}\right) &= -\frac{\ln(V_n)}{n} - \delta + \frac{1}{n}\sum_{i=1}^{n}\ln(H_i) + \frac{1}{2n}\ln\left(\frac{\epsilon^2}{n}\right) - \frac{1}{2}\ln(n\sigma^2)
\\ &= \delta^* - \delta + \sqrt{\frac{Var(\ln(H))}{n}}S_n + O\left(\frac{1}{n}\right),
\end{align}
where the last equality is due to Stirling's approximation for $V_n$, and the definition of $S_n \triangleq \sum_{i=1}^{n}\frac{\ln(H_i) - E\{\ln(H)\}}{\sqrt{nVar(\ln(H))}}$.
Combining all the above we obtain the following:
\begin{align}
Pr\left\{\zeta_n>\sqrt{n}\left(\delta^*-\delta+O\left(\frac{1}{n}\right)\right)\right\} = \epsilon\left(1-\frac{1}{\sqrt{n}}\right) - \frac{C}{n^2},
\end{align}
where, $\zeta_n \triangleq \frac{1}{\sqrt{2}}Y_n - \sqrt{Var(\ln(H))}S_n$. According to Lemmas \ref{lem_ln_chi2_pdf}, \ref{lem_Berry_Esseen} and \ref{lem_sum_of_almost_normal_RVs} we get the following:
\begin{equation}
Q\left(\sqrt{\frac{n}{V}}\left(\delta^*-\delta\right) + O\left(\frac{1}{\sqrt{n}}\right)\right) = \epsilon + O\left(\frac{1}{\sqrt{n}}\right),
\end{equation}
or equivalently by algebraic manipulations and first order Taylor's approximation,
\begin{equation}
\delta = \delta^* - \sqrt{\frac{V}{n}}Q^{-1}(\epsilon) + O\left(\frac{1}{n}\right).
\end{equation}
Because of the symmetric structure of lattices, our achievability result holds also in the stronger sense of maximal error probability.
\end{proof}
\section{Volume to Noise Ratio Analysis}
\label{sec_VNR}
The analogous term for the SNR for lattices is the VNR (\emph{Volume to Noise Ratio}). Ingber et al. extended the definition of the VNR in \cite{Ingber}, to any IC $S$ over the unconstrained AWGN channel.
In a similar way, let define the VNR of IC $S$, over the unconstrained fading channel, as the ratio between the highest noise variance that is tolerable for the given NLD $\delta$ of $S$, and the actual noise variance $\sigma^2$. Therefore, the VNR $\mu$, is given by:
\begin{equation}
\mu = \frac{e^{-2\delta+E\{\ln(H^2)\}}}{2\pi e\sigma^2} = e^{2(\delta^*-\delta)}.
\end{equation}
Clearly, $\mu=1$ for a capacity achieving IC, and otherwise $\mu>1$.
Inspired by \cite{Ingber}, let define also the VNR as function of the IC $S$ and the error probability $\epsilon$, over the unconstrained fading channel, by the following:
\begin{equation}
\mu(S,\epsilon) = \frac{e^{-2\delta(S)+E\{\ln(H^2)\}}}{2\pi e\sigma^2(\epsilon)},
\end{equation}
where $\sigma^2(\epsilon)$ is the noise variance such that the error probability of $S$ is exactly $\epsilon$. In the same manner, let denote by $\mu^*(n,\epsilon)$, the lowest $\mu(S,\epsilon)$ for a given error probability $\epsilon$, over all the $n$-dimensional IC's. The rate of convergence of $\mu^*(n,\epsilon) \to 1$, when $n$ tends to infinity, is given by the following theorem.
\begin{theorem}
Let $\epsilon>0$ be a given, fixed, error probability. Denote by $\mu^*(n,\epsilon)$ the optimal (minimal) VNR for which there exists an $n$-dimensional infinite constellation with average error probability at most $\epsilon$. Then, for any regular fading distribution of $H$, as $n$ grows,
\begin{equation}
\mu^*(n,\epsilon) = 1 + \sqrt{\frac{2 + Var(\ln(H^2))}{n}}Q^{-1}(\epsilon) + O\left(\frac{\ln(n)}{n}\right).
\end{equation}
\end{theorem}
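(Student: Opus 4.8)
\section*{Proof proposal}

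The plan is to reduce this statement to the main result, Theorem \ref{thm_main_result}, by exploiting a scaling invariance that identifies the optimal VNR with the optimal NLD. First I would establish the exact relation
\[
\mu^*(n,\epsilon) = e^{2\left(\delta^* - \delta^*(n,\epsilon)\right)}.
\]
The key observation is that the channel is invariant under joint scaling: replacing an IC $S$ by $cS$ and the noise standard deviation $\sigma$ by $c\sigma$ leaves the average error probability unchanged, while $\delta(cS) = \delta(S) - \ln(c)$ and the Poltyrev capacity transforms as $\delta^* \mapsto \delta^* - \ln(c)$. Consequently the difference $\delta^* - \delta(S)$, and hence the VNR $\mu = e^{2(\delta^*-\delta)}$, is invariant under this scaling. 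This is checked directly from the definition $\mu(S,\epsilon) = e^{-2\delta(S)+E\{\ln(H^2)\}}/(2\pi e\sigma^2(\epsilon))$, noting that the noise variance making the error of $cS$ equal to $\epsilon$ is exactly $c^2$ times the one for $S$.

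Using this invariance I would fix a reference noise variance $\sigma^2$ once and for all. Every IC can be rescaled, without changing its VNR, to one whose average error probability equals exactly $\epsilon$ at this fixed $\sigma^2$; for such an IC one has $\mu(S,\epsilon) = e^{2(\delta^*-\delta(S))}$ with $\delta^*$ the now-fixed Poltyrev capacity. Minimizing $\mu$ over all $n$-dimensional IC's is therefore the same as maximizing $\delta(S)$ subject to the error constraint, whose optimum is precisely $\delta^*(n,\epsilon)$. This yields the displayed identity above.

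The remaining steps are routine. Substituting Theorem \ref{thm_main_result} gives
\[
\mu^*(n,\epsilon) = \exp\left(2\sqrt{\tfrac{V}{n}}\,Q^{-1}(\epsilon) + O\!\left(\tfrac{\ln(n)}{n}\right)\right),
\]
and then using $V = \tfrac12 + Var\!\left(\tfrac12\ln(H^2)\right) = \tfrac14\left(2 + Var(\ln(H^2))\right)$, so that $2\sqrt{V} = \sqrt{2 + Var(\ln(H^2))}$, one obtains
\[
\mu^*(n,\epsilon) = \exp\left(\sqrt{\tfrac{2+Var(\ln(H^2))}{n}}\,Q^{-1}(\epsilon) + O\!\left(\tfrac{\ln(n)}{n}\right)\right).
\]
Since the exponent is $O(1/\sqrt{n})$, a first-order Taylor expansion $e^x = 1 + x + O(x^2)$ absorbs the quadratic term into $O(1/n)\subseteq O(\ln(n)/n)$ and produces the claimed expression.

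The main obstacle is the first reduction: carefully justifying that minimizing the VNR is equivalent to the NLD optimization solved by Theorem \ref{thm_main_result}. In particular one must verify the joint-scaling invariance of both the error probability and the VNR, and confirm that the error constraint is active at the optimum, so that the minimal VNR corresponds to the maximal achievable NLD $\delta^*(n,\epsilon)$ rather than to a strictly smaller value. Once this reduction is in place, everything else is the substitution and Taylor expansion above.
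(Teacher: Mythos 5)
Your proposal is correct and follows essentially the same route as the paper: reduce to the identity $\mu^*(n,\epsilon)=e^{2(\delta^*-\delta^*(n,\epsilon))}$, substitute Theorem \ref{thm_main_result} using $4V=2+Var(\ln(H^2))$, and Taylor-expand $e^x=1+x+O(x^2)$ with $x=O(1/\sqrt{n})$. The only difference is that you spell out the scaling-invariance argument behind the first identity, which the paper simply asserts ``from the definitions''; this is a welcome elaboration rather than a departure.
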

\begin{proof}
From the definitions of $\mu^*(n,\epsilon)$ and $\delta^*(n,\epsilon)$, then:
\begin{align}
\mu^*(n,\epsilon) &= e^{2(\delta^*-\delta^*(n,\epsilon))}
\\                &= e^{\sqrt{\frac{4V}{n}}Q^{-1}(\epsilon) + O\left(\frac{\ln(n)}{n}\right)},
\end{align}
where the last equality is due to theorem \ref{thm_main_result}, and $V = \frac{1}{2}+Var\left(\frac{1}{2}\ln(H^2)\right)$.
Finally, for large enough $n$, we can use the first order Taylor's approximation of $e^x$ around zero, to get the desired result:
\begin{equation}
\mu^*(n,\epsilon) = 1 + \sqrt{\frac{2 + Var(\ln(H^2))}{n}}Q^{-1}(\epsilon) + O\left(\frac{\ln(n)}{n}\right).
\end{equation}
\end{proof}
\section{Relation to the Power Constrained Model}
\label{sec_relation_to_the_power_constrained_model}
The error exponent at rates near the capacity can be approximated by a parabola of the form
\begin{equation}
\label{eq_error_exponent_near_capacity}
E\left(R\right) \approx \frac{\left(C-R\right)^2}{2V},
\end{equation}
where $V$ is the channel dispersion. This fact was already known to Shannon (see \cite[Figure 18]{Polyanskiy}). By taking uniform input distribution in Gallager's random coding error exponent, precisely $X \sim U\left(-\frac{a}{2},\frac{a}{2}\right)$, over the power constrained fast fading channel with available CSI at the receiver, it can be shown (see Appendix \ref{app_scalar_error_exponent_uniform_prior_sec}) that \eqref{eq_error_exponent_near_capacity} holds with $C = E\left\{\frac{1}{2}\ln\left( \frac{a^2H^2}{2\pi e\sigma^2} \right)\right\}$ and $V = \frac{1}{2} + Var\left(\frac{1}{2}\ln\left(H^2\right)\right)$, when $a/\sigma$ tends to infinity (the high SNR regime). Since the unconstrained setting can be thought of as the limit of the power constrained setting, when the SNR tends to infinity, this result hints that $\delta^* = E\left\{ \frac{1}{2}\ln\left(\frac{H^2}{2\pi e\sigma^2}\right) \right\}$ and $V = \frac{1}{2} + Var\left(\frac{1}{2}\ln\left(H^2\right)\right)$, in that setting.

In \cite{PolyanskiyFading} Polyanskiy et al. studied the dispersion of the general case of power constrained stationary fading channels.
In case of fast fading channels with power constraint $P$, and AWGN variance $\sigma^2$, this dispersion (in $\text{nats}^2$ per channel use) is given by
\begin{equation}
\label{eq_dispesion_fading_power_constrained}
V = Var\left(\frac{1}{2}\ln\left(1+SNR\cdot H^2\right)\right) + \frac{1}{2}\left(1 - E^2\left\{\frac{1}{1+SNR\cdot H^2}\right\}\right),
\end{equation}
where $SNR \triangleq P/\sigma^2$. Another indication to the channel dispersion value in the unconstrained case, is given by taking the limit of \eqref{eq_dispesion_fading_power_constrained}, when the SNR tends to infinity. In the high SNR regime \eqref{eq_dispesion_fading_power_constrained} can be approximated by
\begin{align}
V &\approx \frac{1}{2} + Var\left(\frac{1}{2}\ln\left(SNR\cdot H^2\right)\right)
\\&= \frac{1}{2} + Var\left(\frac{1}{2}\ln\left(H^2\right)\right),
\end{align}
which coincides with the previous hint to the channel dispersion value in the unconstrained setting. The case of unconstrained stationary fading channels with memory, will be discussed later on in Section \ref{sec_fading_channels_with_memory}. We will see there a similar relations to the power constrained fading channels, as in the case of fast fading channels.

It should be noted that while the dispersion analysis accuracy of power constrained fading channels in \cite{PolyanskiyFading} is $o\left(\frac{1}{\sqrt{n}}\right)$, in our analysis the accuracy is slightly better, $O\left(\frac{\ln(n)}{n}\right)$. This faster convergence might be due to the fact that in \cite{PolyanskiyFading} a more general fading model was analyzed.

In Figure \ref{fig_dispersion_Vs_snr} we can see the power constrained channel dispersion rate of convergence to the unconstrained channel dispersion limit, with growing SNRs, at the popular Rayleigh fading channel.
\begin{figure}[htp]
\center{\includegraphics[width=0.7\columnwidth]{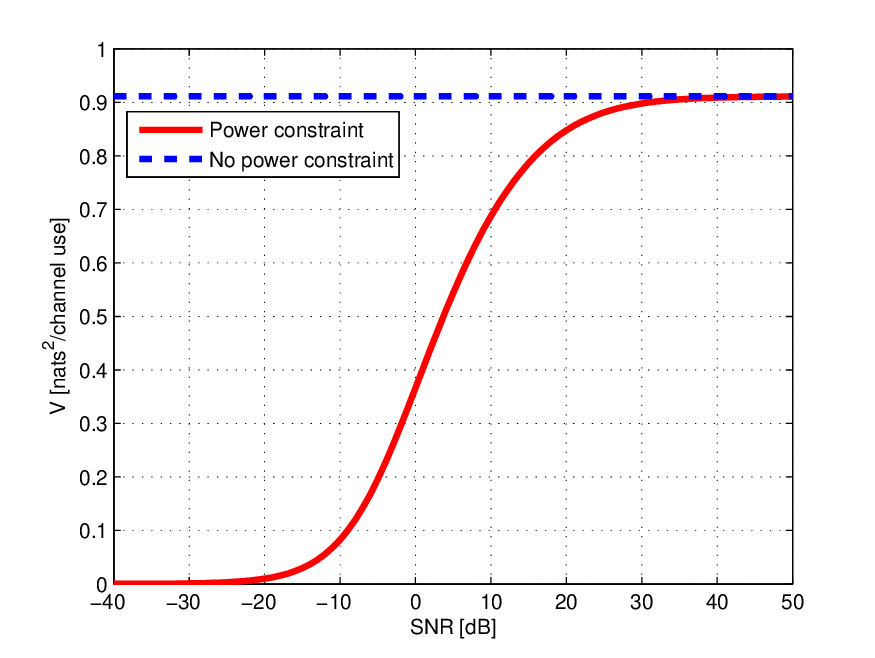}}
\caption{\label{fig_dispersion_Vs_snr} The power-constrained Rayleigh fast fading channel dispersion vs. the unconstrained channel dispersion.}
\end{figure}
\section{Comparison to the AWGN Channel}
\label{sec_comparison_to_the_awgn_channel}
Let's start with the comparison of the unconstrained fast fading channel to the AWGN channel, in terms of Poltyrev's capacity. By Jensen's inequality and the concavity of the logarithm function, we can derive the following result:
\begin{equation}
\delta^* \triangleq E\left\{ \frac{1}{2}\ln\left(\frac{H^2}{2\pi e\sigma^2}\right) \right\}
\leq \frac{1}{2}\ln\left( E\left\{\frac{H^2}{2\pi e\sigma^2}\right\} \right)
= \frac{1}{2}\ln\left(\frac{1}{2\pi e\sigma^2}\right) = \delta^*_{\text{AWGN}}.
\end{equation}
This proves that in the AWGN channel the Poltyrev's capacity is greater than its equivalent in the fast fading channel (with the same noise variance $\sigma^2$).
In Section \ref{sec_fading_channels_with_memory}, we will see that the Poltyrev's capacity, in stationary fading processes, is not affected by the dynamics of the channel. Hence, this result also holds for stationary fading processes.

This loss, relative to the AWGN channel, is given exactly by $-E\left\{\ln(H)\right\}$ in nats per channel use. Alternatively, this loss can be measured as the ratio between the highest noise variance that is tolerable in each channel model. It is easy to show that this ratio is given by $e^{-2E\left\{\ln(H)\right\}}$ in linear scale, or by $-8.6859E\left\{\ln(H)\right\}$ in dB. For example, this loss equals approximately 0.288 nats per channel use, or 2.5 dB, in the Rayleigh fading channel.

For the comparison in terms of channel dispersion, notice that according to \cite{Ingber}, the unconstrained AWGN channel dispersion is given by $V_{\text{AWGN}}=\frac{1}{2}$.
Hence, we can get the following inequality for fast fading channel dispersion:
\begin{equation}
V = \frac{1}{2} + Var\left(\frac{1}{2}\ln(H^2)\right) \geq V_{\text{AWGN}}. \nonumber
\end{equation}
In Section \ref{sec_fading_channels_with_memory}, we will prove that the inequality, $V\geq V_{\text{AWGN}}$, also holds for stationary fading processes.
This fact shows that there is another loss relative to the AWGN channel in the setting of fixed error probability and finite block length.
For example, in Rayleigh fast fading channel with $\epsilon=10^{-5}$ and $n=100$, there is another loss of approximately 0.92 dB.

\begin{figure}[htp]
\center{\includegraphics[width=0.7\columnwidth]{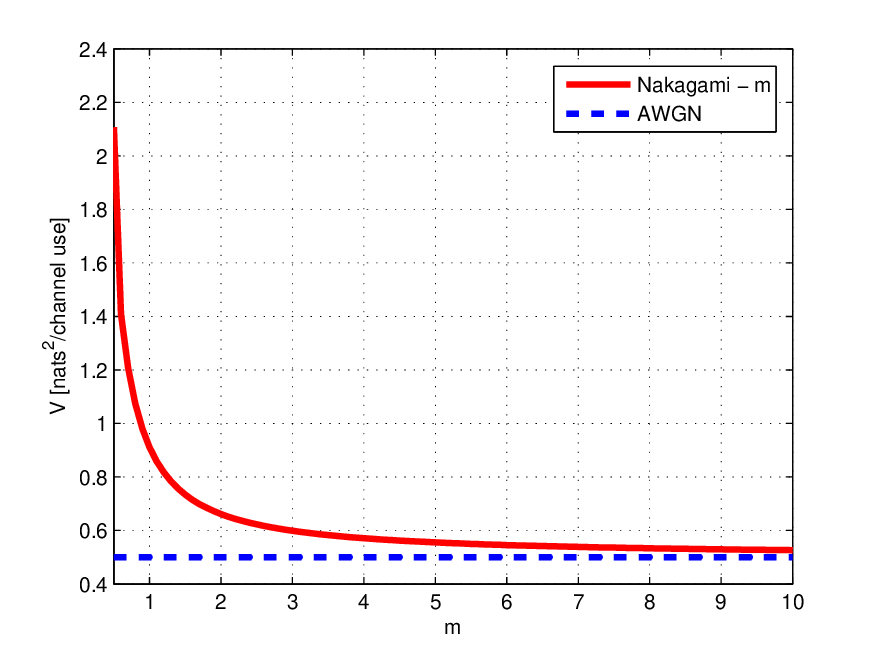}}
\caption{\label{fig_Nakagami_m_dispersion} The IC's channel dispersion of the Nakagami-$m$ fading channel converges to the channel dispersion of the AWGN channel.}
\end{figure}

In Figure \ref{fig_Nakagami_m_dispersion} we can see the unconstrained channel dispersion of the Nakagami-$m$ fading, for various values of $m$.
As we have already seen in Chapter \ref{chapter:BasicDefinitions}, this popular family of fading distributions are given by:
\begin{equation}
f_m(h) = \frac{2m^m}{\Gamma(m)}h^{2m-1}e^{-mh^2},~h\geq0,~m\geq\frac{1}{2}.\nonumber
\end{equation}
It can be seen that when $m\rightarrow\infty$ the dispersion converges from above to the unconstrained AWGN channel dispersion $\frac{1}{2}$, as expected (since in that case, the Nakagami-$m$ distribution converges to the $H=1$ with probability one).
\section{Fading Channels with Memory}
\label{sec_fading_channels_with_memory}
In this section we extend our main result, of dispersion analysis for IC's over fast fading channels, to the general case of stationary fading processes.
Loosely speaking, we will show that if the memory of the fading process decays fast enough, then the dispersion analysis holds, but with a channel dispersion $V$ that depends on the dynamics of the fading process. We will call such a process a \emph{weakly dependent process}.

Let $H_1, H_2, \dots$ be a narrow-sense stationary sequence of RV's. In the following we define rigorously three types of such a \emph{weakly dependent processes}.
\begin{defn}
(Strong mixing): If the sequence $\{H_i\}_{i=1}^{\infty}$ satisfies as $n \to \infty$,
\begin{equation}
\alpha_H(n)=\sup_{A,B}|P(A,B)-P(A)P(B)| \to 0,
\end{equation}
where the supremum is over all RV's $A \in \mathfrak{M}_{-\infty}^{k}$ and $B \in \mathfrak{M}_{k+n}^{\infty}$ ($\mathfrak{M}_{a}^{b}$ denotes the $\sigma$-algebra generated by the RV's $H_i$ when $i\in[a,b]$).
\end{defn}
\begin{defn}
(Complete regular): If the sequence $\{H_i\}_{i=1}^{\infty}$ satisfies as $n \to \infty$,
\begin{equation}
\rho_H(n)=\sup_{f,g}\frac{\left|Corr\left(f(\dots,H_{k-1},H_k),g(H_{k+n},H_{k+n+1},\dots)\right)\right|}{\sqrt{Var\left(f(\dots,H_{k-1},H_k)\right) \cdot Var\left(g(H_{k+n},H_{k+n+1},\dots)\right)}} \to 0,
\end{equation}
where the supremum is over all the functions $f$ and $g$ which are measurable w.r.t. the $\sigma$-algebras $\mathfrak{M}_{-\infty}^{k}$ and $\mathfrak{M}_{k+n}^{\infty}$.
\end{defn}
\begin{defn}
(m-dependent): If the sequence $\{H_i\}_{i=1}^{\infty}$ satisfies for any two vectors of the form $(H_{a-p},H_{a-p+1},\dots,H_{a-1},H_{a})$ and $(H_{b},H_{b+1},\dots,H_{b+q})$ are independent for $b-a>m$.
\end{defn}
\noindent
Roughly speaking, under some other restrictions, the distribution of the sum
\begin{equation}
S_n \triangleq \frac{\sum_{i=1}^{n}(H_i-E\{H\})}{\sqrt{Var(\sum_{i=1}^{n}H_i)}},\nonumber
\end{equation}
for weakly dependent processes, converges uniformly to the normal distribution. Hence, we can apply a similar analysis as we done for IC's over fast fading channels, to get the dispersion analysis of stationary weakly dependent fading processes.

\begin{lem}
\label{lem_Tikhomirov}
(Tikhomirov) If the narrow-sense stationary process $H_1,H_2,\dots$ is a strong mixing (complete regular) such that for some positive constants K and $\beta$
\begin{equation}
\alpha_H(n) \leq Ke^{-\beta n}~ \left(\rho_{H}(n) \leq Ke^{-\beta n}\right)
\end{equation}
and
\begin{equation}
E\left\{\left|X_1 - E\left\{X_1\right\}\right|^3\right\} < \infty.
\end{equation}
Then,
\begin{equation}
\sigma^2 \triangleq \lim_{n\to\infty}\frac{1}{n}Var\left(\sum_{i=1}^{n}X_i\right) = S_X\left(e^{j\omega}\right)\big|_{\omega=0} = R_X(0) + 2\sum_{k=1}^{\infty}R_X(k)
\end{equation}
and if $\sigma^2>0$ for any $-\infty<s<\infty$
\begin{equation}
\left|Pr\left\{ \frac{\sum_{i=1}^{n}\left( X_i - E\{X_i\} \right)}{\sqrt{n}\sigma} \leq s \right\} - F_{N(0,1)}(s) \right| \leq \frac{A\ln^2(n)}{\sqrt{n}},
\end{equation}
where the process $X_1,X_2,\dots$ is given by $X_i = \frac{1}{2}\ln(H_i^2)$, and its auto-correlation and PSD (power spectral density) are given, respectively, by
\begin{equation}
R_X(k) \triangleq E\{(X_{k+1}-E\{X_{k+1}\})(X_{1}-E\{X_{1}\})\}
\end{equation}
\begin{equation}
S_X\left(e^{j\omega}\right)\triangleq\sum_{k=-\infty}^{\infty}R_X(k)e^{-j \omega k}
\end{equation}
and $A$ is some positive constant.
\end{lem}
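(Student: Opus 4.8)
The plan is to treat the two assertions separately: first the variance formula, which is a second-moment computation, and then the Berry--Esseen-type rate, which is a characteristic-function argument. Throughout I would write $X_i=\frac{1}{2}\ln(H_i^2)$ and $\mu=E\{X_1\}$, keeping in mind that stationarity and the finite third-moment hypothesis are in force, and that this lemma reproduces the conclusion of Tikhomirov's theorem.

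For the variance I would start from the exact identity
\begin{equation}
\frac{1}{n}Var\left(\sum_{i=1}^{n}X_i\right) = R_X(0) + 2\sum_{k=1}^{n-1}\left(1-\frac{k}{n}\right)R_X(k),
\end{equation}
which follows from stationarity since $Cov(X_i,X_j)=R_X(|i-j|)$. The only thing required is absolute summability of $\{R_X(k)\}$, after which dominated convergence removes the $k/n$ factor and yields $R_X(0)+2\sum_{k\geq1}R_X(k)$; identifying this with $S_X(e^{j\omega})|_{\omega=0}$ is immediate from $R_X(-k)=R_X(k)$. Summability itself I would obtain from a covariance inequality for weakly dependent sequences: in the completely regular case $|R_X(k)|\leq\rho_H(k)R_X(0)$ directly, and in the strong-mixing case Davydov's inequality with exponents $p=q=3$ gives $|R_X(k)|\leq C\,\alpha_H(k)^{1/3}\,E^{2/3}\{|X_1-\mu|^3\}$. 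Either way the exponential decay $\alpha_H(k),\rho_H(k)\leq Ke^{-\beta k}$ forces $|R_X(k)|$ to decay exponentially, so summability is automatic and the limit is finite.

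The rate bound is the substantive part. I would reduce the CDF estimate to a characteristic-function estimate via Esseen's smoothing inequality, so that the task becomes bounding $|\phi_n(t)-e^{-t^2/2}|$ for the normalized sum over $|t|\leq T_n$ with $T_n$ of order $\sqrt{n}$. To control $\phi_n$ under dependence I would invoke Bernstein's blocking method: partition $\{1,\dots,n\}$ into alternating large blocks of length $p_n$ and small separating blocks of length $q_n$. The exponential mixing lets me take $q_n=O(\ln n)$, so that each junction contributes an error of order $\alpha_H(q_n)\leq Ke^{-\beta q_n}$ that is polynomially small in $n$; this decouples the large blocks into an almost-independent sum, to which a Lyapunov-type Taylor expansion of the per-block characteristic function (using the finite third moment) applies. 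The small blocks carry negligible variance and are absorbed into the error term.

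The main obstacle is extracting the \emph{precise} $A\ln^2(n)/\sqrt{n}$ rate rather than a cruder $n^{-1/2+\delta}$ bound. This requires carefully balancing the three competing error sources---the mixing error from the $O(n/p_n)$ junctions, the Lyapunov error from the large blocks, and the smoothing error from the integration range $T_n$---through an optimal choice of $p_n$ and $q_n$. The two logarithmic factors arise jointly from the $O(\ln n)$ block length permitted by exponential mixing and from the logarithmic contribution of the smoothing integral; pinning down this bookkeeping to a power of exactly two is the delicate technical heart of the estimate, and at that point I would follow Tikhomirov's argument rather than reconstruct it from scratch.
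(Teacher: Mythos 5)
Your proposal is correct in substance but takes a noticeably different route from the paper. The paper's entire proof is a two-line reduction: since $X_i=\frac{1}{2}\ln(H_i^2)$ is a measurable function of $H_i$, the $\sigma$-algebras generated by the $X$-process are contained in those generated by the $H$-process, hence $\alpha_X(n)\leq\alpha_H(n)\leq Ke^{-\beta n}$ (resp. $\rho_X(n)\leq\rho_H(n)$), and both conclusions then follow by citing Tikhomirov's Theorems 1--3 with $\delta=1$. You instead re-derive the variance assertion from scratch --- the C\'esaro identity $\frac{1}{n}Var(\sum_i X_i)=R_X(0)+2\sum_{k=1}^{n-1}(1-\frac{k}{n})R_X(k)$ together with Davydov's inequality (exponents $p=q=3$, matching the finite-third-moment hypothesis) in the mixing case, or the direct bound $|R_X(k)|\leq\rho_H(k)R_X(0)$ in the completely regular case --- which is a correct and more self-contained treatment of that half; note that your use of $\alpha_H(k)$ and $\rho_H(k)$ in these covariance bounds implicitly relies on the same sub-$\sigma$-algebra observation that the paper makes explicit, and it would be worth stating it. For the Berry--Esseen rate you sketch the classical Esseen-smoothing-plus-Bernstein-blocking scheme but concede that extracting the precise $\ln^2(n)/\sqrt{n}$ exponent requires following Tikhomirov's own argument (which in fact proceeds by a Stein-type differential inequality for the characteristic function rather than pure blocking); since you ultimately defer to the cited theorem there, just as the paper does, this is not a gap, only a longer path to the same citation. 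What your approach buys is transparency about where each hypothesis (stationarity, exponential mixing, third moment) enters the variance formula; what the paper's approach buys is brevity, since the lemma is attributed wholesale to Tikhomirov anyway.
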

\begin{proof}
Clearly, $X_1,X_2,\dots$ is a narrow-sense stationary process. Moreover, since $X_i$ is a function of $H_i$ we obtain
\begin{equation}
\alpha_X(n) \leq \alpha_H(n) \leq Ke^{-\beta n} ~ \left(\rho_X(n) \leq \rho_H(n) \leq Ke^{-\beta n}\right).
\end{equation}
Hence, by using \cite[Theorems 1,2,3]{Tikhomirov} with $\delta=1$ we complete the proof of the lemma.
\end{proof}
The previous lemma will serve the same purpose as the Berry-Esseen lemma does in the proof of our main result in case of fast fading channels.

\begin{theorem}
\label{thm_dispersion_of_wealy_dependent_processes}
Let $\epsilon>0$ be a given, fixed, error probability. Denote by $\delta^*(n,\epsilon)$ the optimal NLD for which there exists an $n$-dimensional infinite constellation with average error probability at most $\epsilon$. Then, for any strong mixing (complete regular) narrow-sense stationary fading process $H_1,H_2,\dots$, such that
\begin{enumerate}
  \item $E\{H_i^2\} = 1$,
  \item The marginal distribution of $H_i$ is a regular fading distribution,
  \item $\alpha_H(n) \leq Ke^{-\beta n}~ \left(\rho_{H}(n) \leq Ke^{-\beta n}\right)$ for some positive constants K and $\beta$,
  \item $E\left\{\left|\frac{1}{2}\ln(H_1^2) - E\left\{\frac{1}{2}\ln(H_1^2)\right\}\right|^3\right\} < \infty$,
\end{enumerate}
as $n$ grows,
\begin{equation}
\delta^*(n,\epsilon) = \delta^* - \sqrt{\frac{V}{n}}Q^{-1}(\epsilon) + O\left(\frac{\ln^2(n)}{n}\right),
\end{equation}
where,
\begin{equation}
\delta^* = E\left\{ \frac{1}{2}\ln\left(\frac{H^2}{2\pi e\sigma^2}\right) \right\}
\end{equation}
and
\begin{align}
\label{align_channel_dispersion_with_memory}
V  &= \frac{1}{2} + \lim_{n\to\infty}\frac{1}{n}Var\left( \sum_{i=1}^{n}\frac{1}{2}\ln\left(H_i^2\right) \right)
\\ &= \frac{1}{2} + S_{{\frac{1}{2}\ln(H^2)}}\left(e^{j\omega}\right)\big|_{\omega=0}
\\ &= \frac{1}{2} + Var\left(\frac{1}{2}\ln(H_1^2)\right) + 2\sum_{k=1}^{\infty}R_{\frac{1}{2}\ln(H^2)}(k).
\end{align}
\end{theorem}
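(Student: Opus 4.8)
The plan is to reproduce the two-part structure of the proof of Theorem \ref{thm_main_result} — converse via the sphere packing bound, direct via the Dependence Testing (DT) bound — and to replace the i.i.d. Berry--Esseen estimate (Lemma \ref{lem_Berry_Esseen}) by Tikhomirov's central limit theorem for weakly dependent sequences (Lemma \ref{lem_Tikhomirov}) at every place where a sum over the $n$ channel uses is approximated by a Gaussian. The Poltyrev capacity $\delta^*$ depends only on the marginal law of $H$, so it is unchanged; the entire effect of the memory is funneled into the asymptotic variance that plays the role of $V$, and the weaker Tikhomirov rate $O(\ln^2(n)/\sqrt n)$ (versus $O(1/\sqrt n)$) is what degrades the remainder from $O(\ln(n)/n)$ to $O(\ln^2(n)/n)$.

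\textbf{Converse.} First I would note that the sphere packing bound of Theorem \ref{thm_sphere_packing_lower_bound} and the regularization lemma (Lemma \ref{lem_regularixation}) carry over unchanged, since both condition on the realization $\mathbf{H}$ and use only the regularity of the marginal fading law (the $\xi$-strong realizations are defined through $H_{\min}$, whose distribution is unaffected by the dependence structure). Taking logarithms in the sphere packing bound exactly as in Section \ref{sec_proof_of_converse_part} gives $P_e \geq Pr\{\zeta_n \geq \zeta\}$, where $\zeta_n = \frac{1}{\sqrt 2}Y_n - \frac{1}{\sqrt n}\sum_{i=1}^n\left(\frac12\ln H_i^2 - E\{\frac12\ln H^2\}\right)$ and $Y_n$ is the normalized logarithm of $\|\mathbf{z}\|^2$. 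The only change from the fast fading case is that the fading sum is now over a weakly dependent process. Applying Lemma \ref{lem_Tikhomirov} to $X_i = \frac12\ln H_i^2$ shows this normalized sum is approximately $N(0,\sigma^2)$ with $\sigma^2 = \lim_n \frac1n Var(\sum_i \frac12\ln H_i^2)$ and error $O(\ln^2(n)/\sqrt n)$. Combining with Lemma \ref{lem_ln_chi2_pdf} for $Y_n$ and a version of Lemma \ref{lem_sum_of_almost_normal_RVs} whose $O(1/\sqrt n)$ error budget is relaxed to $O(\ln^2(n)/\sqrt n)$ (the proof tolerates this), one concludes that $\zeta_n$ is approximately $N(0,V)$ with $V = \frac12 + \sigma^2$. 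Propagating through $Q^{-1}$ and the Stirling estimate for $V_n$ as in Section \ref{sec_proof_of_converse_part} then yields $\delta \leq \delta^* - \sqrt{V/n}\,Q^{-1}(\epsilon) + O(\ln^2(n)/n)$.

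\textbf{Direct.} Here I would again invoke the DT bound (Theorem \ref{thm_DT_bound}) with $X \sim U(-\frac a2,\frac a2)$, so that the information density is $i(\mathbf{x};\mathbf{y},\mathbf{H}) = \sum_{j=1}^n W_j$ with $W_j \triangleq i(X_j;Y_j,H_j)$. The decisive new point is that $\{W_j\}$ is itself weakly dependent: since $W_j$ is a deterministic function of $(H_j,X_j,Z_j)$, and the pair $(X_j,Z_j)$ is i.i.d. and independent of the fading process, the joint process $(H_j,X_j,Z_j)$ has the same mixing coefficients as $\{H_j\}$, whence $\alpha_W(n) \leq \alpha_H(n) \leq Ke^{-\beta n}$ (and likewise for $\rho_W$), exactly the inheritance argument used in the proof of Lemma \ref{lem_Tikhomirov}. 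With the finite third moment supplied by Lemma \ref{lem_information_density_moments}, Tikhomirov's lemma applies to $\{W_j\}$ and gives a normal approximation with variance $\sigma_W^2 = \lim_n \frac1n Var(\sum_j W_j)$ and error $O(\ln^2(n)/\sqrt n)$. I would feed this into the DT bound as in Section \ref{sec_proof_of_direct_part} (using Lemma \ref{lem_47_in_Polyanskiy} to control the second term), construct a finite constellation in $\text{Cb}(a)$, let $a=a(n)\to\infty$ slowly, and tile it to $\mathbb{R}^n$ as in Appendix \ref{app_tiling}.

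\textbf{Main obstacle.} The crux is identifying $\sigma_W^2$ in the high-SNR regime and showing it equals $V$. For large $a/\sigma$ the approximation underlying Lemma \ref{lem_information_density_moments} gives $W_j \approx \frac12\ln\frac{a^2 H_j^2}{2\pi\sigma^2} - \frac{Z_j^2}{2\sigma^2}$, so the fluctuation of $W_j$ splits into a fading part $\frac12\ln H_j^2$, which carries the memory, and a noise part $-Z_j^2/2\sigma^2$, which is i.i.d. and independent of the fading. The noise part contributes variance $\frac12$ to $Var(W_1)$ but, being independent across $j$, contributes nothing to $R_W(k)$ for $k\geq1$; hence $\sigma_W^2 \to \frac12 + Var(\frac12\ln H_1^2) + 2\sum_{k\geq1}R_{\frac12\ln H^2}(k) = V$. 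Making this rigorous — controlling the autocorrelations $R_W(k)$ uniformly in $k$ as $a\to\infty$ so that the limit of the variance series equals the series of the limits, and reconciling the $o(1)$ errors in $a$ with the $O(\ln^2(n)/\sqrt n)$ CLT error and the tiling loss so that the final remainder is $O(\ln^2(n)/n)$ — is the part demanding the most care. It is also exactly where the qualitative claim becomes transparent: only the marginal enters $\delta^*$, whereas the fading autocorrelations enter $V$ solely through the noise-free part of the information density.
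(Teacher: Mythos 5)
Your proposal is correct and follows essentially the same route as the paper's own proof: both parts reuse the sphere packing and Dependence Testing machinery from the fast fading case and substitute Tikhomirov's CLT for weakly dependent processes in place of the Berry--Esseen lemma, with the memory entering only through the asymptotic variance and the remainder degrading to $O\left(\frac{\ln^2(n)}{n}\right)$. Your discussion of the mixing-coefficient inheritance for the information density process and of identifying $\sigma_W^2$ with $V$ in the high-SNR limit in fact spells out details the paper leaves implicit.
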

\begin{proof}
The proof is very similar to the proof of our main result in case of fast fading channels, except that, instead of the Berry-Esseen lemma for i.i.d. RV's, we use Tikhomirov lemma for weakly dependent processes.

In the direct part, we choose a uniform input distribution within a cube $\text{Cb}(a)$, and then by the dependence testing bound of Theorem \ref{thm_DT_bound}, for large enough $a(n)$ and Lemma \ref{lem_Tikhomirov}, we prove that there exists a finite cube constellation, with average error probability upper bounded by $\epsilon$, that holds the following:
\begin{equation}
\delta(n,\epsilon,a/\sigma) = \delta^* - \sqrt{\frac{V}{n}}Q^{-1}(\epsilon) + O\left(\frac{\ln^2(n)}{n}\right),
\end{equation}
where $\delta(n,\epsilon,a/\sigma)$ is the NLD of the finite cube constellation within $\text{Cb}(a)$. Finally, by the tiling operation of this finite constellation to the whole space $\mathbb{R}^n$, we complete the proof of the direct part.

In the converse part, using the sphere packing bound of Theorem \ref{thm_sphere_packing_lower_bound}, Lemma \ref{lem_ln_chi2_pdf}, Lemma \ref{lem_Tikhomirov} and a similar arguments as in Lemma \ref{lem_sum_of_almost_normal_RVs}, we prove that
\begin{equation}
\delta^*(n,\epsilon) \leq \delta^* - \sqrt{\frac{V}{n}}Q^{-1}(\epsilon) + \frac{1}{2n}\ln(n) + O\left(\frac{\ln^2(n)}{n}\right),
\end{equation}
which completes the proof of the converse part.
\end{proof}

Note that according to Theorem \ref{thm_dispersion_of_wealy_dependent_processes}, the channel dispersion is affected by the fading dynamics, this is in contrary to the Poltyrev's capacity, which is independent of this dynamics \cite{FadingChannels}\cite{PolyanskiyFading}. In \cite{PolyanskiyFading}, it was shown, that the channel dispersion of power constrained stationary fading processes, is given by:
\begin{equation}
V = \lim_{n\to\infty}\frac{1}{n}Var\left( \sum_{i=1}^{n}\frac{1}{2}\ln\left(1 + SNR \cdot H_i^2\right) \right) + \frac{1}{2}\left(1 - E^2\left\{\frac{1}{1+SNR \cdot H^2}\right\}\right).
\end{equation}
Hence, the limit of the power constrained channel dispersion, when $SNR\to\infty$, equals to the unconstrained channel dispersion \eqref{align_channel_dispersion_with_memory}, in the general case of stationary fading processes, as we have already seen in the special case of fast fading channels. Moreover, since $V_{\text{AWGN}}=\frac{1}{2}$ \cite{Ingber}, then it is obvious according to \eqref{align_channel_dispersion_with_memory} that $V \geq V_{\text{AWGN}}$, for stationary fading processes (as we also have already seen in fast fading channels).

\begin{cor}
\label{cor_Gausssian_ARMA_dispersion}
Let $\epsilon>0$ be a given, fixed, average error probability.
If the process $H_1,H_2,\dots$ is a finite-order auto-regressive moving average (ARMA) Gaussian process, then as $n$ grows,
\begin{equation}
\delta^*(n,\epsilon) = \delta^* - \sqrt{\frac{V}{n}}Q^{-1}(\epsilon) + O\left(\frac{\ln^2(n)}{n}\right),
\end{equation}
where,
\begin{equation}
\delta^* = E\left\{ \frac{1}{2}\ln\left(\frac{H^2}{2\pi e\sigma^2}\right) \right\}
\end{equation}
and
\begin{equation}
V = \frac{1}{2} + S_{{\frac{1}{2}\ln(H^2)}}\left(e^{j\omega}\right)\big|_{\omega=0} = \frac{1}{2} + Var\left(\frac{1}{2}\ln(H_1^2)\right) + 2\sum_{k=1}^{\infty}R_{\frac{1}{2}\ln(H^2)}(k).
\end{equation}
\end{cor}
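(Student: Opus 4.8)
The plan is to recognize Corollary \ref{cor_Gausssian_ARMA_dispersion} as the specialization of Theorem \ref{thm_dispersion_of_wealy_dependent_processes} to a finite-order ARMA Gaussian fading process. Since the expressions for $\delta^*$ and for $V$ stated in the corollary are literally those of the theorem, the entire task reduces to checking that such a process satisfies the four hypotheses of Theorem \ref{thm_dispersion_of_wealy_dependent_processes}. Once this is done, the conclusion, including the $O(\ln^2(n)/n)$ remainder inherited from the Tikhomirov lemma (Lemma \ref{lem_Tikhomirov}), follows without further work.

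First I would dispose of the three ``soft'' hypotheses. Hypothesis 1 is merely a normalization: after scaling the innovation variance we may assume the stationary variance satisfies $E\{H_i^2\}=1$. For Hypothesis 2, I would observe that the channel, as well as $\delta^*$ and $V$, depend on the fading only through $H_i^2$, so the effective fading magnitude is $|H_i|$, whose marginal is a half-normal density $f(h)=\sqrt{2/\pi}\,e^{-h^2/2}$ for $h\ge0$; since $f(h)\to\sqrt{2/\pi}>0$ as $h\downarrow0$, this is a regular fading distribution with $\alpha=1$. For Hypothesis 4, I would note that the centered variable $\tfrac12\ln(H_1^2)=\ln|H_1|$ has a finite third absolute moment, because the Gaussian upper tail controls large values while near the origin the logarithmic singularity is integrable, $\int_0^1|\ln h|^3\,dh<\infty$.

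The main obstacle will be Hypothesis 3, the requirement of exponential mixing, which I would establish in the complete-regularity (i.e.\ $\rho$-mixing) form $\rho_H(n)\le Ke^{-\beta n}$. Here I would invoke the classical theory of mixing for stationary Gaussian sequences. A causal and invertible finite-order ARMA process has a rational spectral density $f(\lambda)=\frac{\sigma^2}{2\pi}\,|\theta(e^{i\lambda})|^2/|\phi(e^{i\lambda})|^2$ whose AR and MA polynomials have no roots on the unit circle, so $f$ is analytic and bounded away from both $0$ and $\infty$ on the circle. By the Kolmogorov--Rozanov and Ibragimov--Rozanov results on the maximal correlation coefficient of a Gaussian sequence, such a spectral density forces $\rho_H(n)$ to decay geometrically. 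Finally, since $X_i=\tfrac12\ln(H_i^2)$ is a measurable function of $H_i$, the comparison $\rho_X(n)\le\rho_H(n)$ (the same monotonicity under functions used in the proof of Lemma \ref{lem_Tikhomirov}) transfers the exponential mixing to the process $\{\tfrac12\ln(H_i^2)\}$, which is precisely what Lemma \ref{lem_Tikhomirov} consumes.

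With all four hypotheses verified, Theorem \ref{thm_dispersion_of_wealy_dependent_processes} applies directly and yields the claimed expansion, with $V=\tfrac12+S_{\frac12\ln(H^2)}(e^{j\omega})|_{\omega=0}=\tfrac12+Var(\tfrac12\ln(H_1^2))+2\sum_{k=1}^{\infty}R_{\frac12\ln(H^2)}(k)$ and $\delta^*=E\{\tfrac12\ln(H^2/(2\pi e\sigma^2))\}$. The only genuinely substantive step is the spectral/mixing argument of the third paragraph; everything else is bookkeeping that matches the general theorem term by term.
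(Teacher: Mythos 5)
Your proposal is correct and follows essentially the same route as the paper: reduce the corollary to verifying the four hypotheses of Theorem \ref{thm_dispersion_of_wealy_dependent_processes}, with the only substantive step being that a stationary finite-order ARMA Gaussian process has a rational spectral density and hence mixes exponentially fast by the Kolmogorov--Rozanov theory. The paper states this for the strong-mixing coefficient $\alpha_H(n)$ while you use the complete-regularity coefficient $\rho_H(n)$, but the theorem accepts either, so this is an immaterial variation; your checks of the regular-fading and finite-third-moment hypotheses also match the paper's (which simply quotes the numerical value $\approx 2.9486$ for the third moment).
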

\begin{proof}
According to \cite{Kolmogorov} if $H_1,H_2,\dots$ is a stationary Gaussian process with a PSD $S_H(e^{j\omega})$, which is rational w.r.t. $e^{i\omega}$, then $\alpha_H(n)$ decreases exponentially. Hence, if the process is an ARMA Gaussian process, then $\alpha_H(n)$ decreases exponentially. Moreover, when e.g. $H_1 \sim N(0,1)$, then the marginal distribution of $|H_1|$ is a regular fading distribution, and in addition
\begin{equation}
E\left\{\left|\frac{1}{2}\ln(H_1^2) - E\left\{\frac{1}{2}\ln(H_1^2)\right\}\right|^3\right\} \approx 2.9486 < \infty.\nonumber
\end{equation}
Therefore, all the conditions of Theorem \ref{thm_dispersion_of_wealy_dependent_processes} are satisfied.
\end{proof}
For an illustrative example of Corollary \ref{cor_Gausssian_ARMA_dispersion}, let define the Gaussian $AR(1)$ fading process with the parameter-$a$, by the following:
\begin{equation}
H_i = aH_{i-1} + W_i,~ W_i \sim N(0, 1-a^2),\nonumber
\end{equation}
where $|a| < 1$ and $W_i$ is a white process. In Figure \ref{fig_Gaussian_AR_dispersion} we can see its channel dispersion as function of the parameter $a$. Since the coherence time of the process increases with $a$, we can observe that the channel dispersion also increases, as $a$ grows.

A fading process, such that $\sum_{k=1}^{\infty}R_{\frac{1}{2}\ln(H^2)}(k)>0$, will be called fading process with ``positive correlation''. Clearly, the Gaussian $AR(1)$, is an example of such a fading process. The usage of random interleaver in practical systems with finite block-length, over such fading processes, seems very beneficial, in order to get effectively a fast fading channel, with smaller channel dispersion.

\begin{figure}[htp]
\center{\includegraphics[width=0.7\columnwidth]{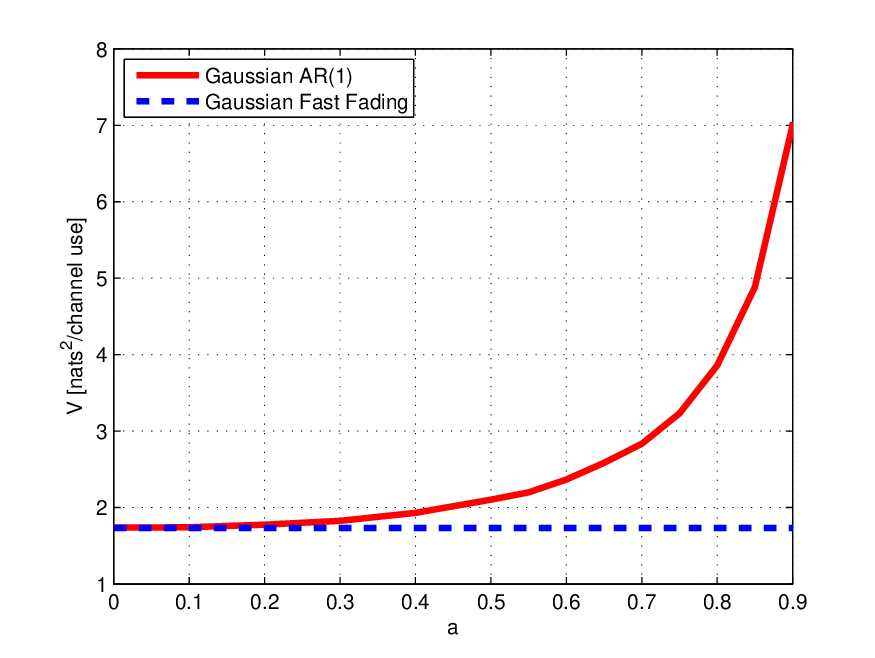}}
\caption{\label{fig_Gaussian_AR_dispersion} The dispersion of Gaussian AR(1) process fading as function of the parameter-$a$.}
\end{figure}

Finally, note that in \cite{Tikhomirov}, we can find theorems with more relaxed conditions on the dependency of the process, which also guarantee a uniform convergence to the normal distribution, but with a greater error than the guaranteed error of Lemma \ref{lem_Tikhomirov}. On the other hand, \cite[Theorem 5]{Tikhomirov} discuss the stronger dependency condition of \emph{m-dependent} processes, and shows that the convergence rate is $O\left(\frac{1}{\sqrt{n}}\right)$ in that case. Hence, for moving average (MA) processes that generated by i.i.d. white noise, for example, we can get the dispersion result of Theorem \ref{thm_dispersion_of_wealy_dependent_processes} with the accuracy of $O\left(\frac{\ln(n)}{n}\right)$. Moreover, in \cite{PolyanskiyFading} Polyanskiy et al. derived the dispersion analysis of the power constrained weakly dependent processes, with much relaxed conditions than those of Theorem \ref{thm_dispersion_of_wealy_dependent_processes}, but at the cost of accuracy which is only $o\left(\frac{1}{\sqrt{n}}\right)$.

\chapter{Dispersion of Infinite Constellations in MIMO Fading Channels}
\label{chapter:DispersionOfICInMIMOFadingChannels}
\renewcommand{\thefootnote}{} 
\footnotetext[1]{The material in this chapter was partially presented in \cite{Eilat}.}
\renewcommand{\thefootnote}{\arabic{footnote}} 
In this chapter we analyze the dispersion of infinite constellation in MIMO fast fading channels without power constraint and under the constraint of \emph{Full Dimensional Transmission} (\emph{FDT}). In Section \ref{sec_mimo_main_result} we present our main result, whose converse and direct parts are proven in Sections \ref{sec_mimo_converse_part} and \ref{sec_mimo_direct_part}, respectively.
Later on, in Section \ref{sec_derivation_of_simple_expressions_for_V_and_delta} we derive an extremely simple expressions for Poltyrev's capacity and channel dispersion in MIMO fading channels under the \emph{FDT} constraint. Relation to the power constrained MIMO fading channel and comparison to the unconstrained independent parallel channels model are also discussed in Sections \ref{sec_relation_to_the_MIMO_power_constrained_model} and \ref{sec_comparison_to_the_parallel_channels_model}, respectively. Finally, in Section \ref{sec_generalization} we discuss the general case of IC's in MIMO fast fading channels without any constraint.
\section{Main Result - FDT's Dispersion}
\label{sec_mimo_main_result}
Assume a transmission of an $l=nt$ complex dimensional IC over the unconstrained $t \times r$ MIMO model ($t \leq r$) using $n$ channel uses. Let us call such a transmission \emph{Full Dimensional Transmission} (\emph{FDT}).
In this section we present the dispersion analysis of MIMO channels under the constraint of \emph{FDT}.
\begin{theorem}[MIMO dispersion under the FDT constraint]
\label{thm_mimo_main_result}
Let $\epsilon>0$ be a given, fixed, error probability. Denote by $\delta^*(n,\epsilon)$ the optimal NLD for which there exists an $l = n \cdot t$ complex dimensional infinite constellation with average error probability at most $\epsilon$, over the $t \times r$ MIMO model ($t \leq r$). Then, as $n$ grows,
\begin{equation}
\label{eq_mimo_main_result}
\delta^*(n,\epsilon) = \delta^* - \sqrt{\frac{V}{n}}Q^{-1}(\epsilon) + O\left(\frac{\ln(n)}{n}\right),
\end{equation}
\noindent
where,
\begin{equation}
\delta^* \triangleq E\left\{\ln\left({\det\left(\frac{\mathbf{H}^{\dagger}\mathbf{H}}{\pi e\sigma^2}\right)}\right) \right\}
\end{equation}
\noindent
and
\begin{equation}
V \triangleq t + Var\left( \ln\left(\det(\mathbf{H}^{\dagger}\mathbf{H})\right) \right).
\end{equation}
\end{theorem}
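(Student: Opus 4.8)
The plan is to mirror the two–sided argument of Theorem~\ref{thm_main_result}: a converse built on a sphere–packing lower bound followed by a normal–approximation step, and a direct part built on the dependence–testing bound of Theorem~\ref{thm_DT_bound} followed by the same approximation. Throughout I would work with the equivalent reduced model~\eqref{eq_equiv_MIMO_model}, so that each channel use contributes $t$ complex (hence $2t$ real) dimensions and the total complex dimension is $l=nt$. The two scalar random quantities that drive everything are the noise energy $\|\mathbf{z}^n\|^2$, whose normalization satisfies $2\|\mathbf{z}^n\|^2/\sigma^2\sim\chi^2_{2l}$, and the i.i.d.\ fading sum $\sum_{i=1}^n\ln\det(\mathbf{H}_i^{\dagger}\mathbf{H}_i)$.

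For the converse I would first extend Theorem~\ref{thm_sphere_packing_lower_bound} to the MIMO model. Using the received density $\gamma_{\text{rc}}=\gamma/\prod_{i=1}^n\det(\mathbf{H}_i^{\dagger}\mathbf{H}_i)$ and the equivalent–sphere argument exactly as in Lemma~\ref{lem_sphere_packing_lower_bound_for_regular_ICs}, the bound reads
\begin{equation}
P_e(S)\geq Pr\left\{\|\mathbf{z}^n\|^2\geq e^{-\delta/t}\left(\frac{\prod_{i=1}^n\det(\mathbf{H}_i^{\dagger}\mathbf{H}_i)}{V_{2l}}\right)^{1/l}\right\},
\end{equation}
where the regularization of Lemma~\ref{lem_regularixation} is repeated after excising the ``near–singular'' realizations (those whose smallest singular value is atypically small), whose probability is forced to zero by the regular–fading assumption. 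Taking logarithms isolates the quantity $-\ln\|\mathbf{z}^n\|^2+\tfrac1l\sum_i\ln\det(\mathbf{H}_i^{\dagger}\mathbf{H}_i)$, whose mean is $\delta^*/t+o(1)$ after a Stirling estimate of $\tfrac1l\ln V_{2l}$. The noise part is handled by Lemma~\ref{lem_ln_chi2_pdf} applied with $2l=2nt$ degrees of freedom, the fading part by the Berry--Esseen Lemma~\ref{lem_Berry_Esseen}, and the two are merged by Lemma~\ref{lem_sum_of_almost_normal_RVs}; a direct variance bookkeeping gives $\tfrac{1}{nt^2}\big(t+Var(\ln\det(\mathbf{H}^{\dagger}\mathbf{H}))\big)=V/(nt^2)$, so that the $\delta/t$ normalization produces $\delta\leq\delta^*-\sqrt{V/n}\,Q^{-1}(\epsilon)+\tfrac{1}{2n}\ln n+O(1/n)$.

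For the direct part the dependence–testing bound of Theorem~\ref{thm_DT_bound} applies verbatim, since the input is unconstrained; I would take the input uniform on the cube, drawing every real and imaginary coordinate independently from $U(-\tfrac a2,\tfrac a2)$. The information density $i(\mathbf{x};\mathbf{y},\mathbf{H})=\sum_{i=1}^n i(\mathbf{x}_i;\mathbf{y}_i,\mathbf{H}_i)$ is again a sum of $n$ i.i.d.\ terms, and the crux is the MIMO counterpart of Lemma~\ref{lem_information_density_moments}: for large $a/\sigma$ the conditional output density $f(\mathbf{y}|\mathbf{H})$ is asymptotically uniform on the image parallelepiped $\mathbf{H}\cdot\text{Cb}(a,t)$, of real volume $a^{2t}\det(\mathbf{H}^{\dagger}\mathbf{H})$, so that $E\{i\}=E\{\ln\det(a^2\mathbf{H}^{\dagger}\mathbf{H}/(\pi e\sigma^2))\}+o(1)$ and, since $i\approx\ln(a^{2t}\det(\mathbf{H}^{\dagger}\mathbf{H}))-t\ln(\pi\sigma^2)-\|\mathbf{z}\|^2/\sigma^2$ with $Var(\|\mathbf{z}\|^2/\sigma^2)=t$, one gets $Var(i)=t+Var(\ln\det(\mathbf{H}^{\dagger}\mathbf{H}))+o(1)=V+o(1)$. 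Feeding these moments into the Berry--Esseen estimate and the dependence–testing bound yields a finite constellation of NLD $\delta(n,\epsilon,a/\sigma)=\delta^*-\sqrt{V/n}\,Q^{-1}(\epsilon)+O(\cdot)$ after subtracting the $2t\ln a$ normalization; choosing $a=a(n)\to\infty$ at a polynomial rate and tiling to $\mathbb{C}^l$ as in Appendix~\ref{app_tiling} converts it into an IC with the same asymptotics.

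The main obstacle is the MIMO information–density moment computation together with its converse twin, the control of the moments of $\ln\det(\mathbf{H}^{\dagger}\mathbf{H})$. Two points need genuine care. First, because the input cube is rotated by the $\mathbf{H}$–dependent unitary $\mathbf{V}^{\dagger}$, the ``asymptotically uniform output'' claim must be made uniform over $\mathbf{H}$, controlling the Gaussian smearing along the faces and corners of the rotated parallelepiped; this is the true analog of the appendix proof of Lemma~\ref{lem_information_density_moments}. Second, $\det(\mathbf{H}^{\dagger}\mathbf{H})$ becomes small precisely when $\mathbf{H}$ is near–singular, so the left tail of $\ln\det(\mathbf{H}^{\dagger}\mathbf{H})$ must be shown integrable to third order. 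Here the Rayleigh assumption is essential: by the Bartlett/QR decomposition $\det(\mathbf{H}^{\dagger}\mathbf{H})$ factors as a product of independent chi–square variables, each contributing a $\ln\chi^2$ term with finite moments, so $\ln\det(\mathbf{H}^{\dagger}\mathbf{H})$ has finite third absolute moment and the Berry--Esseen hypothesis is met. This same factorization is what makes the near–singular realizations negligible in the converse regularization step.
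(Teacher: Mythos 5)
Your proposal is correct and follows essentially the same route as the paper: a sphere-packing lower bound (with regularization of non-regular IC's and exclusion of near-singular fading realizations) combined with Lemmas \ref{lem_ln_chi2_pdf}, \ref{lem_Berry_Esseen} and \ref{lem_sum_of_almost_normal_RVs} for the converse, and the dependence-testing bound with a uniform input on $\text{Cb}(a,t)$, the MIMO information-density moment computation, Berry--Esseen and tiling for the direct part. Your variance bookkeeping ($V/(nt^2)$ undone by the $\delta/t$ normalization) and your observation that the Bartlett/QR factorization of $\det(\mathbf{H}^{\dagger}\mathbf{H})$ into independent chi-square factors secures the third-moment hypothesis both match what the paper does (the latter implicitly, via Lemma \ref{lem_det_log_dist}).
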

The converse and the direct parts of the proof of this theorem are given in Sections \ref{sec_mimo_converse_part} and \ref{sec_mimo_direct_part}, respectively.
\begin{cor}
The highest achievable NLD with arbitrary small error probability, namely the Poltyrev's capacity, over the $t \times r$ MIMO model ($t \leq r$) without power constraint and under the FDT constraint, with available CSI at the receiver, is given by
\begin{equation}
\delta^* \triangleq E\left\{\ln\left({\det\left(\frac{\mathbf{H}^{\dagger}\mathbf{H}}{\pi e\sigma^2}\right)}\right) \right\}.
\end{equation}
\end{cor}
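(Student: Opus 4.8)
The plan is to mirror the two-sided argument developed for the scalar complex channel (Theorem \ref{thm_main_result_complex_extension}), replacing the scalar gain $|H|^2$ everywhere by $\det(\mathbf{H}^{\dagger}\mathbf{H})$ and the single-complex-dimension noise contribution $1$ by $t$. The natural starting point is the equivalent diagonal model \eqref{eq_equiv_MIMO_model}, $\mathbf{y}' = \mathbf{D}'\mathbf{V}^{\dagger}\mathbf{x} + \mathbf{z}'$, in which the $t \times r$ channel (with $t \le r$) is reduced to an effective $t \times t$ system whose squared singular values $\{(D'_k)^2\}$ are the eigenvalues of $\mathbf{H}^{\dagger}\mathbf{H}$, so that $\prod_k (D'_k)^2 = \det(\mathbf{H}^{\dagger}\mathbf{H})$ and $\sum_k \ln (D'_k)^2 = \ln\det(\mathbf{H}^{\dagger}\mathbf{H})$. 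Since we use $n$ channel uses with $l = nt$ complex dimensions, the receiver density obeys $\gamma_{\text{rc}} = \gamma / \prod_{i=1}^{n}\det(\mathbf{H}_i^{\dagger}\mathbf{H}_i)$, and taking $\ln\det$ turns this product into a sum of $n$ i.i.d. terms, which is exactly the structure the central-limit machinery of Chapter \ref{chapter:DispersionOfICInFastFadingChannels} requires.

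For the converse I would first establish the MIMO sphere-packing bound by repeating the regularization-and-equivalent-sphere argument of Theorem \ref{thm_sphere_packing_lower_bound} (the $\xi$-strong-fading separation and Lemma \ref{lem_regularixation} carry over, now with the per-block smallest singular value playing the role of $H_{\min}$). This yields the analogue of \eqref{eq_SPB_complex_step1}, a bound of the form $P_e \ge Pr\{\|\mathbf{z}^n\|^2 \ge e^{-\delta/t}(\prod_i \det(\mathbf{H}_i^{\dagger}\mathbf{H}_i)/V_{2l})^{1/l}\}$ with $2\|\mathbf{z}^n\|^2/\sigma^2 \sim \chi^2_{2l}$. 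Taking logarithms and rearranging, the event becomes the tail of a sum of two independent pieces: a normalized log-$\chi^2_{2l}$ noise term, handled by Lemma \ref{lem_ln_chi2_pdf}, and the centred i.i.d. sum $\sum_{i=1}^{n}(\ln\det(\mathbf{H}_i^{\dagger}\mathbf{H}_i) - E\{\ln\det(\mathbf{H}^{\dagger}\mathbf{H})\})$, handled by Berry--Esseen (Lemma \ref{lem_Berry_Esseen}). Lemma \ref{lem_sum_of_almost_normal_RVs} then combines them into a single approximately-normal variable $\zeta_n = \sqrt{t}\,Y_{2l} - \sqrt{v_d}\,S_n$ whose variance is precisely $V = t + Var(\ln\det(\mathbf{H}^{\dagger}\mathbf{H}))$, after which Stirling's approximation for $V_{2l}$ and a first-order Taylor expansion of $Q^{-1}$ reproduce the converse inequality with the stated $\delta^*$, the residual $\tfrac{1}{2n}\ln n$ being absorbed into $O(\ln(n)/n)$.

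For the direct part I would apply the dependence-testing bound (Theorem \ref{thm_DT_bound}) with a uniform input on the per-channel-use complex cube $\text{Cb}(a,t)$, drawing each real and imaginary coordinate independently as $U(-a/2,a/2)$, and prove a MIMO version of the moment Lemma \ref{lem_information_density_moments}. The computation hinges on the high-SNR approximation $i(\mathbf{x};\mathbf{y},\mathbf{H}) \approx \ln(a^{2t}\det(\mathbf{H}^{\dagger}\mathbf{H})/(\pi\sigma^2)^t) - \|\mathbf{z}'\|^2/\sigma^2$, valid away from the boundary of the image of the cube; since $\|\mathbf{z}'\|^2/\sigma^2 \sim \tfrac{1}{2}\chi^2_{2t}$ has mean $t$ and variance $t$, this gives $I(\mathbf{x};\mathbf{y},\mathbf{H}) = \ln(a^{2t}) + E\{\ln\det(\mathbf{H}^{\dagger}\mathbf{H}/(\pi e\sigma^2))\} + o(1)$ and, by independence of the noise and the fading, $Var(i(\mathbf{x};\mathbf{y},\mathbf{H})) = t + Var(\ln\det(\mathbf{H}^{\dagger}\mathbf{H})) + o(1)$, together with a finite third moment. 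Feeding these into the Berry--Esseen step of the DT bound exactly as in Section \ref{sec_proof_of_direct_part} produces a finite constellation of the right NLD, and the same tiling construction used in the scalar case upgrades it to an IC, giving the achievability half of \eqref{eq_mimo_main_result}.

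The main obstacle is the moment lemma for the MIMO information density, where two features are genuinely new relative to the scalar case. First, the effective channel $\mathbf{D}'\mathbf{V}^{\dagger}$ involves the $\mathbf{H}$-dependent unitary rotation $\mathbf{V}^{\dagger}$, so the rotated input is no longer a product distribution; I must argue that at high SNR the information density is governed only by the log-volume $\ln\det(\mathbf{H}^{\dagger}\mathbf{H})$ and the local Gaussian fluctuation $\|\mathbf{z}'\|^2/\sigma^2$, so that the rotation is immaterial to leading order and the claimed mean and variance survive. Second, the edge corrections to the approximation of $f(\mathbf{y}\,|\,\mathbf{H})$ now live on the boundary of a $2t$-dimensional skewed parallelepiped and must be shown to be uniformly negligible as $a/\sigma \to \infty$; this is where the regularity of the Wishart eigenvalue distribution (the small singular values of $\mathbf{H}$ carrying controlled probability, via the $\xi$-strong-fading device) is essential, and it is also what guarantees finiteness of $Var(\ln\det(\mathbf{H}^{\dagger}\mathbf{H}))$ and of the third absolute moment.
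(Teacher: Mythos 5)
Your proposal is correct and follows essentially the same route as the paper: the corollary is obtained there by letting $n\to\infty$ in the dispersion expansion of Theorem \ref{thm_mimo_main_result}, and that theorem is proved exactly as you outline — the equivalent diagonal model, the MIMO sphere-packing bound with regularization for the converse (combining Lemmas \ref{lem_ln_chi2_pdf}, \ref{lem_Berry_Esseen} and \ref{lem_sum_of_almost_normal_RVs}), and the dependence-testing bound with uniform input on $\text{Cb}(a,t)$, the moment Lemma \ref{lem_information_density_moments_mimo} and tiling for the direct part. Even the obstacle you flag (the skewed parallelepiped $\mathbf{D}'\mathbf{V}^{\dagger}\text{Cb}(a,t)$) is resolved in the paper the way you suggest, by sandwiching it between balls of radii governed by $\lambda_{\min}$ and $\lambda_{\max}$ and showing the resulting error terms are $O\left(\left(\frac{\sigma}{a}\right)^{2t}\right)$.
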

\begin{proof}
By taking the limit $n\rightarrow\infty$ in \eqref{eq_mimo_main_result} we get the desired result (for any $0 < \epsilon < 1$).
\end{proof}
\section{Converse Part}
\label{sec_mimo_converse_part}
In this section we prove the converse part of Theorem \ref{thm_mimo_main_result}. The converse part is based on normal approximation of the \emph{sphere packing lower bound} on the average error probability under the \emph{FDT} constraint.
The sphere packing lower bound of IC's over MIMO fading channels under the \emph{FDT} constraint is presented in Section \ref{sec_mimo_sphere_packing_bound}, and in Section \ref{sec_mimo_proof_of_converse_part} we complete the proof by a derivation of an appropriate normal approximation technique.
\subsection{The Sphere Packing Bound}
\label{sec_mimo_sphere_packing_bound}
In this section we give a sketch of proof for the following \emph{sphere packing bound}, for IC's over the MIMO fading channel under the \emph{FDT} constraint.
Note that the following theorem, is an extension of Theorem \ref{thm_sphere_packing_lower_bound}.
\begin{theorem}
\label{thm_sphere_packing_lower_bound_mimo}
For any IC $S$ with NLD $\delta$, over the $t \times r$ MIMO channel ($t \leq r$) under the \emph{FDT} constraint, the average error probability is lower bounded by the following sphere packing bound:
\begin{equation}
\label{eq_general_SPB_mimo}
P_e\left(S\right) \geq Pr\left\{ \hspace{-0.05cm} \left\|\mathbf{z}'^n\right\|^2 \geq e^{-\frac{\delta}{t}}\left(\frac{\det(\mathbf{H}^{n\dagger}\mathbf{H}^n)}{V_{2nt}}\right)^{\frac{1}{nt}} \hspace{-0.05cm} \right\}.
\end{equation}
\end{theorem}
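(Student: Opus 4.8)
The plan is to reproduce, stage by stage, the three-part argument that established Theorem \ref{thm_sphere_packing_lower_bound} in the scalar real case, carrying each step over to the MIMO setting where the ambient real dimension is now $2nt$ (since the complex dimension is $l=nt$) and the relevant volume-scaling factor is $\det(\mathbf{H}^{n\dagger}\mathbf{H}^n)=\prod_{i=1}^{n}\det(\mathbf{H}_i^{\dagger}\mathbf{H}_i)$. Throughout I use that $2\left\|\mathbf{z}'^n\right\|^2/\sigma^2\sim\chi^2_{2nt}$, the complex analog of the scalar statement.

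First I would treat IC's whose Voronoi cells all have equal volume $V_{\text{tr}}$ (e.g.\ lattices). Viewing multiplication by the block-diagonal $\mathbf{H}^n$ as a real linear map on $\mathbb{R}^{2nt}$, its real Jacobian determinant equals $\det(\mathbf{H}^{n\dagger}\mathbf{H}^n)$, so each received cell has volume $V_{\text{rc}}=V_{\text{tr}}\det(\mathbf{H}^{n\dagger}\mathbf{H}^n)$, consistent with $\gamma_{\text{rc}}=\gamma/\det(\mathbf{H}^{n\dagger}\mathbf{H}^n)$. By the equivalent-sphere argument, the probability that $\mathbf{z}'^n$ escapes a received cell is at least the probability that it escapes a ball of equal volume, living in real dimension $2nt$, with $V_{2nt}\,r_{\text{eff}}^{2nt}=V_{\text{rc}}$. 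Substituting $V_{\text{tr}}=\gamma^{-1}=e^{-n\delta}$ gives $r_{\text{eff}}^{2}=e^{-\delta/t}\bigl(\det(\mathbf{H}^{n\dagger}\mathbf{H}^n)/V_{2nt}\bigr)^{1/(nt)}$, which is exactly the threshold appearing in \eqref{eq_general_SPB_mimo}.

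Second, I would extend the bound to regular IC's (uniformly bounded cell volumes), following Lemma \ref{lem_sphere_packing_lower_bound_for_regular_ICs}. Writing $\text{SPB}(v|\mathbf{H}^n)$ for the probability that $\mathbf{z}'^n$ leaves a ball of volume $v$, this function retains the same convexity, monotonicity and continuity in $v$ as in the scalar case, so Jensen's inequality applied to the conditional average of $P_e(s_{\text{rc}}|\mathbf{H}^n)$ over the code points in a faded hypercube yields a lower bound in terms of $\text{SPB}$ evaluated at the average received cell volume $V(S_{\mathbf{H}^n})$; taking the expectation over $\mathbf{H}^n$ reproduces \eqref{eq_general_SPB_mimo} for regular IC's.

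Third, and this is the least routine step, I would pass to arbitrary IC's (possibly unbounded cells or a density defined only through a $\limsup$) by the regularization procedure of Lemma \ref{lem_regularixation}, which hinges on separating out the \emph{strong} fading realizations. The scalar argument controlled the cell-bounding radius through $h_{\min}=\min_i h_i$; the correct MIMO analog is the \emph{smallest singular value of the fading matrices across all $n$ channel uses}, i.e.\ the least diagonal entry of the matrices $\mathbf{D}'$ in the equivalent model \eqref{eq_equiv_MIMO_model}, since a small singular value is what stretches a received cell and prevents a uniform bounding radius. The main obstacle is therefore twofold: (a) verifying that whenever this minimum singular value exceeds a threshold the regularization goes through, producing a regular IC whose density and error probability degrade by at most a factor $(1+\xi)$; and (b) showing that for i.i.d.\ Rayleigh fading the event that the minimum singular value falls below the $\xi$-threshold occupies an arbitrarily small fraction $\xi$ of the realization space. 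Point (b) is the MIMO counterpart of the regular-fading hypothesis used in the scalar proof, and it rests on the fact that the smallest eigenvalue of each Wishart matrix $\mathbf{H}_i^{\dagger}\mathbf{H}_i$ has a density integrable near the origin, so that the relevant probability vanishes as the threshold tends to $0$. With (a) and (b) established, the chain of inequalities \eqref{align_spb_step_1}--\eqref{align_spb_step_7} carries over verbatim, and letting $\xi\to0$ together with the continuity of $\text{SPB}(\cdot|\mathbf{H}^n)$ removes the correction terms and yields \eqref{eq_general_SPB_mimo}.
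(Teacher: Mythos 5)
Your proposal follows the paper's own three-stage argument exactly: the equivalent-sphere bound for equal-volume cells with the real Jacobian $\det(\mathbf{H}^{n\dagger}\mathbf{H}^n)$ in dimension $2nt$, the extension to regular IC's via Jensen's inequality as in Lemma \ref{lem_sphere_packing_lower_bound_for_regular_ICs}, and the regularization of Lemma \ref{lem_regularixation} for general IC's. In fact you supply slightly more detail than the paper's sketch on the last step — correctly identifying the minimum singular value across the $n$ fading matrices as the MIMO analog of $h_{\min}$ and the near-origin integrability of the smallest Wishart eigenvalue's density as the analog of the regular-fading hypothesis — so the argument is sound and matches the intended proof.
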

\begin{proof}
Assume a transmission of an $l = nt$ complex dimensional IC over the $t \times r$ MIMO channel using $n$ channel uses.
For IC where all the Voronoi cells have equal volume $V_{\text{tr}}$, such as lattices, in the receiver given the CSI, we get an IC with Voronoi cell volume that equals $V_{\text{rc}} = V_{\text{tr}}\cdot\det(\mathbf{H}^{n\dagger}\mathbf{H}^n)$.
By the \emph{equivalent sphere} argument \cite{Poltyrev}\cite{Tarokh}, the probability that the noise leaves the Voronoi cell in the receiver is lower bounded by the probability to leave a sphere of the same volume:
\begin{equation}
\label{eq_SPB_mimo}
P_e\left(S\right) \geq Pr\left\{\left\|\mathbf{z}'^n\right\|^2 \geq r_{\text{eff}}^{2}(\mathbf{H}^n) \right\},
\end{equation}
where $V_{2nt} \cdot r_{\text{eff}}^{2nt}(\mathbf{H}^n) \triangleq V_{\text{rc}}$ and $V_l = \frac{\pi^{l/2}}{\frac{l}{2}\Gamma\left(\frac{l}{2}\right)}$.
Combining \eqref{eq_SPB_mimo} with the definition of $\delta = -\frac{\ln(V_{\text{tr}})}{n}$ leads to \eqref{eq_general_SPB_mimo}.

To complete the proof of the converse part, we need to prove that \eqref{eq_general_SPB_mimo} holds for any $l = nt$ complex dimensional IC.
This includes regular IC's with bounded Voronoi's cells and also non-regular IC's, such as IC's with unbounded Voronoi's cells and IC's with density which oscillates with the cube size $a$ (i.e. only the limsup exists in the definition of $\gamma$).
The proof in the case of regular IC's, can be done by applying the \emph{equivalent sphere} argument for any codeword's Voronoi's cell volume given the CSI, and using the Jensen's inequality and the convexity of the obtained lower bound, exactly as done in Lemma \ref{lem_sphere_packing_lower_bound_for_regular_ICs}.
The extension to the case of non-regular IC's, can be done by a very similar regularization process as done in Lemma \ref{lem_regularixation} (for proving Theorem \ref{thm_sphere_packing_lower_bound}), for the received IC's over the MIMO channel.
\end{proof}
\subsection{Proof of Converse Part}
\label{sec_mimo_proof_of_converse_part}
Assume a transmission of IC $S$ with NLD $\delta$, over the MIMO channel under the \emph{FDT} constraint.
Let us define,
\begin{equation}
\zeta_n \triangleq \sqrt{t} \cdot Y_n - \sqrt{Var(\ln(\det(\mathbf{H}^{\dagger}\mathbf{H})))} \cdot S_n,
\end{equation}
where, $Y_n \triangleq \sqrt{nt}\cdot\big( \ln\big(\|\mathbf{z}'^n\|^2\big) - \ln\left(nt\sigma^2\right) \big)$, $S_n \triangleq \frac{\sum_{i=1}^{n}X_i}{\sqrt{n}}$, and $X_i \triangleq \frac{\ln(\det(\mathbf{H}_i^{\dagger}\mathbf{H}_i))-E\{\ln(\det(\mathbf{H}^{\dagger}\mathbf{H}))\}}{\sqrt{Var(\ln(\det(\mathbf{H}^{\dagger}\mathbf{H})))}}$.
Then, by taking the logarithm and rearranging the inequality in the argument of \eqref{eq_general_SPB_mimo}, we obtain:
\begin{equation}
P_e \geq Pr\left\{ \zeta_n \geq \zeta \right\},
\end{equation}
where, $\zeta \triangleq \sqrt{n}\big( \delta^* - \delta + t\ln\left(\frac{\pi e}{nt}\right) - \frac{\ln(V_{2nt})}{n} \big)$.

In a similar way as we done in the case of scalar fading channels, the combination of Lemma \ref{lem_ln_chi2_pdf}, Lemma \ref{lem_Berry_Esseen} and Lemma \ref{lem_sum_of_almost_normal_RVs}, proves that the distribution of $\zeta_n$ is asymptotically normal distribution, with zero mean and variance $V$. Hence,
\begin{equation}
\label{eq_SPB_step4_mimo}
P_e \geq Q\left(\frac{\zeta}{\sqrt{V}}\right) - O\left(\frac{1}{\sqrt{n}}\right).
\end{equation}
By Stirling approximation for the Gamma function, $V_{2nt}$ can be approximated as
\begin{equation}
\frac{\ln(V_{2nt})}{nt} = \ln\left(\frac{\pi e}{nt}\right) - \frac{1}{2nt}\ln(n) + O\left(\frac{1}{n}\right)
\end{equation}
and hence we get:
\begin{equation}
\label{eq_zeta_mimo}
\zeta = \sqrt{n}\left( \delta^* - \delta + \frac{1}{2n}\ln(n) + O\left(\frac{1}{n}\right) \right).
\end{equation}
The assignment of \eqref{eq_zeta_mimo} in \eqref{eq_SPB_step4_mimo} gives us:
\begin{equation}
\label{eq_SPB_step5_mimo}
\hspace{-1.5cm}\epsilon \geq P_e \geq
Q\left(\frac{ \delta^* - \delta + \frac{1}{2n}\ln(n) + O\left(\frac{1}{n}\right) }{ \sqrt{ \frac{V}{n} } }\right) \hspace{-0.1cm} - \hspace{-0.05cm} O\left(\frac{1}{\sqrt{n}}\right) \hspace{-0.1cm}. \hspace{-0.8cm}
\end{equation}
Taking $Q^{-1}(\cdot)$ from both sides of \eqref{eq_SPB_step5_mimo} and using the following Taylor approximation, $  Q^{-1}\left(\epsilon + O\left(\frac{1}{\sqrt{n}}\right)\right) = Q^{-1}(\epsilon) + O\left(\frac{1}{\sqrt{n}}\right)$, gives us the desired result:
\begin{align}
\begin{aligned}
\delta &\leq \delta^* - \sqrt{ \frac{V}{n} }Q^{-1}\left(\epsilon\right) + \frac{1}{2n}\ln(n) + O\left(\frac{1}{n}\right),
\end{aligned}
\end{align}
which completes the proof of the converse part.
\section{Direct Part}
\label{sec_mimo_direct_part}
In this section we prove the direct part of Theorem \ref{thm_mimo_main_result}. The direct part is based on normal approximation of the \emph{Dependence Testing upper bound} on the average error probability.
The dependence testing upper bound over MIMO fading channels is presented in Section \ref{sec_mimo_dependence_testing_bound}, and in Section \ref{sec_mimo_proof_of_direct_part} we complete the proof by a derivation of an appropriate normal approximation technique.
\subsection{Dependence Testing Bound}
\label{sec_mimo_dependence_testing_bound}
In this section we present an extension of Polyanskiy's \emph{Dependence Testing Bound} to the case of MIMO fast fading channels with available CSI at the receiver. In \cite{Polyanskiy} the DT bound was used to prove the dispersion analysis for DMCs, or more precisely, for memoryless channels without a power constraint (or any other constraint on the channel input). Here, the channel input does not have any restriction, and hence we can use the DT bound to prove the direct part of our main result.
\begin{theorem}
\label{thm_DT_bound_mimo}
(DT bound) For any input distribution $f_{\mathbf{x}}(\cdot)$ on $\mathbb{C}^t$, there exists a code with $M$ codewords and an average error probability over the MIMO fast fading channel, with available CSI at the receiver, not exceeding
\begin{align}
\begin{aligned}
P_e &\leq E\left\{e^{-\left[i(\mathbf{x}^n;\mathbf{y}'^n,\mathbf{H}^n) - \ln\left(\frac{M-1}{2}\right)\right]^+}\right\}
\\  &= Pr\left\{ i\left(\mathbf{x}^n;\mathbf{y}'^n,\mathbf{H}^n\right) \leq \ln\left(\frac{M-1}{2}\right) \right\}
\\  &+ \frac{M-1}{2} E\left\{ e^{-i\left(\mathbf{x}^n;\mathbf{y}'^n,\mathbf{H}^n\right)}1_{\left\{i\left(\mathbf{x}^n;\mathbf{y}'^n,\mathbf{H}^n\right)>\ln\left(\frac{M-1}{2}\right)\right\}} \right\},
\end{aligned}
\end{align}
where $f_{\mathbf{x}^n\mathbf{y}'^n\mathbf{H}^n}({x},{y'},{h}) = f_{\mathbf{x}^n}({x})f_{\mathbf{y}'^n|\mathbf{x}^n,\mathbf{H}^n}({y}'|{x},{h})f_{\mathbf{H}^n}({h})$ is the joint PDF of all the random vectors and matrices arising above, $f_{\mathbf{x}^n}({x})=\Pi_{i=1}^nf_{\mathbf{x}}(x_i)$ and $i({x};{y},{h}) \triangleq \ln\left( \frac{f_{\mathbf{x}^n\mathbf{y}'^n\mathbf{H}^n}({x}, {y}', {h})}{f_{\mathbf{x}^n}({x})f_{\mathbf{y}'^n\mathbf{H}^n}({y}, {h})} \right)$.
\end{theorem}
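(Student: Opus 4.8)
The plan is to adapt essentially verbatim the argument used for the scalar DT bound in Theorem~\ref{thm_DT_bound}, since the MIMO channel, after passing to the equivalent model \eqref{eq_equiv_MIMO_model}, is again memoryless across the $n$ channel uses (the matrices $\{\mathbf{H}_i\}$ are i.i.d.) with perfect CSI at the receiver. The only structural change is that each channel symbol is now a vector in $\mathbb{C}^t$ observed through $\mathbf{H}^n=\text{diag}(\mathbf{H}_1,\dots,\mathbf{H}_n)$, so the information density $i(\mathbf{x}^n;\mathbf{y}'^n,\mathbf{H}^n)$ plays exactly the role that $i(\mathbf{x};\mathbf{y},\mathbf{H})$ played before. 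The proof is therefore a Shannon random-coding argument combined with a suboptimal threshold-crossing decoder.

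First I would fix the threshold $\gamma=\frac{M-1}{2}$ and define the deterministic binary test $g_{\mathbf{x}^n}(\mathbf{y}'^n,\mathbf{H}^n)=1_{\{i(\mathbf{x}^n;\mathbf{y}'^n,\mathbf{H}^n)>\ln\gamma\}}$, together with the decoder that, given $(\mathbf{y}'^n,\mathbf{H}^n)$, outputs the smallest index $j$ with $g_{c_j}=1$ and declares an error otherwise. For a fixed codebook $\{c_1,\dots,c_M\}$ with transmitted $\mathbf{x}^n=c_j$, the error event is contained in the union of the ``miss'' event $\{g_{c_j}=0\}$ and the ``false alarm'' events $\bigcup_{i<j}\{g_{c_i}=1\}$; a union bound then produces one term $Pr\{i(c_j;\mathbf{y}'^n,\mathbf{H}^n)\le\ln\gamma\}$ plus a sum of terms $Pr\{i(c_i;\bar{\mathbf{y}}'^n,\mathbf{H}^n)>\ln\gamma\}$, where $\bar{\mathbf{y}}'^n$ is an independent copy of the output having the same conditional law given $\mathbf{H}^n$ but independent of the input.

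Next I would average over the ensemble in which every codeword is drawn i.i.d.\ from $f_{\mathbf{x}^n}(x)=\prod_{i=1}^n f_{\mathbf{x}}(x_i)$ and over the $M$ equiprobable messages. The index-dependent weights $\frac{j-1}{M}$ telescope, exactly as in the scalar case, into the factor $\frac{M-1}{2}$ multiplying the false-alarm probability, which is the intermediate (probability) form. To obtain the equivalent exponential form, I would then reproduce the chain of identities establishing
\begin{equation}
E\left\{e^{-[i(\mathbf{x}^n;\mathbf{y}'^n,\mathbf{H}^n)-\ln\gamma]^+}\right\}=Pr\{i\le\ln\gamma\}+\gamma\,E\left\{e^{-i}1_{\{i>\ln\gamma\}}\right\},
\end{equation}
and convert the last expectation into $\gamma\,Pr\{i(\mathbf{x}^n;\bar{\mathbf{y}}'^n,\mathbf{H}^n)>\ln\gamma\}$ by the change of measure $e^{-i}=\frac{f_{\mathbf{x}^n}(x)\,f_{\mathbf{y}'^n\mathbf{H}^n}(y',h)}{f_{\mathbf{x}^n\mathbf{y}'^n\mathbf{H}^n}(x,y',h)}$.

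The single point requiring care---and the only real obstacle---is verifying this change-of-measure identity with the complex, block-diagonal structure: one must confirm that the joint density factors as $f_{\mathbf{x}^n}(x)\,f_{\mathbf{y}'^n|\mathbf{x}^n,\mathbf{H}^n}(y'|x,h)\,f_{\mathbf{H}^n}(h)$ and that replacing $f_{\mathbf{y}'^n|\mathbf{x}^n,\mathbf{H}^n}$ by the input-independent $f_{\mathbf{y}'^n|\mathbf{H}^n}$ is precisely what defines $\bar{\mathbf{y}}'^n$, so that integrating $e^{-i}1_{\{i>\ln\gamma\}}$ against the joint law reweights it to the product law under which $\bar{\mathbf{y}}'^n$ is drawn. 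Since the CSI $\mathbf{H}^n$ enters identically in numerator and denominator and is merely carried along, this is routine bookkeeping with no new analytic content beyond the scalar proof.
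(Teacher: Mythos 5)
Your proposal is correct and follows exactly the route the paper intends: the paper's own proof of this theorem is the single line ``a trivial extension of Theorem \ref{thm_DT_bound},'' and what you write out is precisely that extension --- the same threshold decoder, union bound, ensemble averaging with the $\frac{j-1}{M}$ weights summing to $\frac{M-1}{2}$, and the same change-of-measure identity, with the only modification being that each symbol is a vector in $\mathbb{C}^t$ and the CSI is the block-diagonal $\mathbf{H}^n$. Nothing further is needed.
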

\begin{proof}
A trivial extension of Theorem \ref{thm_DT_bound}.
\end{proof}
\subsection{Proof of Direct Part}
\label{sec_mimo_proof_of_direct_part}
For the proof of the direct part, we will first construct an ensemble of finite constellations with $M$ codewords, which are uniformly distributed in an $l = n \cdot t$ complex dimensional cube $\text{Cb}(a,l)$, for some fixed $a$, $n$ and $t \leq r$. Then, using the \emph{Dependence Testing bound} of Theorem \ref{thm_DT_bound_mimo} with $f_{\mathbf{x}}(x) = \frac{1_{\{x\in \text{Cb}(a,t)\}}}{a^{2t}}$, we will find a lower bound on the optimal achievable number of codewords, for a FC in such an ensemble, whose error probability is upper bounded by some fixed $\epsilon > 0$. We will denote this lower bound by $M(n,\epsilon,a/\sigma)$. Theorem \ref{thm_DT_bound_mimo} also ensures the existence of such a FC that achieves this lower bound. Finally, we will construct an IC by tiling this FC to the whole space $\mathbb{C}^{l}$, in a way that will preserve the density of codewords and the error probability, asymptotically in number of the channel uses $n$, as in the FC.

To use the DT bound of Theorem \ref{thm_DT_bound_mimo}, we need to prove that for some $\gamma$ the following inequality holds:
\begin{align}
\label{align_DT_bound_2}
\begin{aligned}
P_e &\leq Pr\left\{ i\left(\mathbf{x}^n;\mathbf{y}'^n,\mathbf{H}^n\right) \leq \ln(\gamma) \right\}
\\  &+ \gamma E\left\{ e^{-i\left(\mathbf{x}^n;\mathbf{y}'^n,\mathbf{H}^n\right)}1_{\left\{i\left(\mathbf{x}^n;\mathbf{y}'^n,\mathbf{H}^n\right)>\ln(\gamma)\right\}} \right\} \leq \epsilon.
\end{aligned}
\end{align}
Denote for arbitrary $\tau$
\begin{equation}
\ln(\gamma) = nI(\mathbf{x};\mathbf{y}',\mathbf{H}) - \tau\sqrt{nVar(i(\mathbf{x};\mathbf{y}',\mathbf{H}))}.
\end{equation}
The information density is a sum of $n$ i.i.d. RVs:
\begin{equation}
i\left(\mathbf{x}^n;\mathbf{y}'^n,\mathbf{H}^n\right) = \sum_{j=1}^{n}{ i(\mathbf{x}_j;\mathbf{y}'_j,\mathbf{H}_j) },
\end{equation}
\noindent
where $i(\mathbf{x};\mathbf{y}',\mathbf{H}) \triangleq \ln\left( \frac{f(\mathbf{y}'|\mathbf{H},\mathbf{x})}{f(\mathbf{y}'|\mathbf{H})} \right)$ and its moments are given by the following lemma.
\begin{lem}
\label{lem_information_density_moments_mimo}
(Information density's moments) If $\mathbf{x}$ is distributed uniformly in $\emph{Cb}(a,t)$, then for large enough $a/\sigma$ the moments of the information density $i(\mathbf{x};\mathbf{y}',\mathbf{H})$ are given by:
\begin{enumerate}
  \item $I(\mathbf{x};\mathbf{y}',\mathbf{H}) \triangleq E\{i(\mathbf{x};\mathbf{y}',\mathbf{H})\} =
E\left\{\ln\left({\det\left(\frac{a^2\mathbf{H}^{\dagger}\mathbf{H}}{\pi e\sigma^2}\right)}\right) \right\} + O\left(\left(\frac{\sigma}{a}\right)^{2t}\right)$
  \item $Var(i(\mathbf{x};\mathbf{y}',\mathbf{H})) = t + Var\left( \ln\left(\det(\mathbf{H}^{\dagger}\mathbf{H})\right) \right) + O\left(\left(\frac{\sigma}{a}\right)^{2t}\right)$
  \item $\rho_3 \triangleq E\left\{|i(\mathbf{x};\mathbf{y}',\mathbf{H}) - I(\mathbf{x};\mathbf{y}',\mathbf{H})|^3\right\} < \infty$.
\end{enumerate}
\end{lem}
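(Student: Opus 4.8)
The plan is to follow the structure of the scalar information-density moment lemma (Lemma \ref{lem_information_density_moments}), adapting each step to the complex $t$-dimensional per-channel-use setting. Working in the equivalent model \eqref{eq_equiv_MIMO_model}, I write $\mathbf{G}\triangleq\mathbf{D}'\mathbf{V}^{\dagger}$ so that $\mathbf{y}'=\mathbf{G}\mathbf{x}+\mathbf{z}'$ with $\mathbf{z}'\sim CN(0,\sigma^2 I_t)$ and $\det(\mathbf{G}^{\dagger}\mathbf{G})=\det(\mathbf{H}^{\dagger}\mathbf{H})$. The information density splits as $i(\mathbf{x};\mathbf{y}',\mathbf{H})=\ln f(\mathbf{y}'|\mathbf{H},\mathbf{x})-\ln f(\mathbf{y}'|\mathbf{H})$, and the first term is exact: since $f(\mathbf{y}'|\mathbf{H},\mathbf{x})=(\pi\sigma^2)^{-t}e^{-\|\mathbf{z}'\|^2/\sigma^2}$ we get $\ln f(\mathbf{y}'|\mathbf{H},\mathbf{x})=-t\ln(\pi\sigma^2)-\|\mathbf{z}'\|^2/\sigma^2$, where $2\|\mathbf{z}'\|^2/\sigma^2\sim\chi^2_{2t}$. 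All the difficulty therefore concentrates on the marginal $f(\mathbf{y}'|\mathbf{H})$.

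Second, I would approximate the marginal by a uniform density over the faded cube. Changing variables $\mathbf{u}=\mathbf{G}\mathbf{x}$ (whose real Jacobian is $\det(\mathbf{H}^{\dagger}\mathbf{H})$) gives
\[
f(\mathbf{y}'|\mathbf{H})=\frac{\Phi(\mathbf{y}',\mathbf{H})}{a^{2t}\det(\mathbf{H}^{\dagger}\mathbf{H})},\qquad \Phi(\mathbf{y}',\mathbf{H})\triangleq\int_{\mathbf{G}\cdot\text{Cb}(a,t)}\frac{1}{(\pi\sigma^2)^t}e^{-\|\mathbf{y}'-\mathbf{u}\|^2/\sigma^2}d\mathbf{u},
\]
so that $\Phi$ is precisely the probability that a $CN(\mathbf{y}',\sigma^2 I_t)$ vector falls inside the faded cube $\mathbf{G}\cdot\text{Cb}(a,t)$. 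Hence
\[
i=\ln\!\left(\det\!\left(\frac{a^2\mathbf{H}^{\dagger}\mathbf{H}}{\pi\sigma^2}\right)\right)-\frac{\|\mathbf{z}'\|^2}{\sigma^2}-\ln\Phi(\mathbf{y}',\mathbf{H}).
\]
For $\mathbf{y}'$ lying well inside the faded cube (at distance $\gg\sigma$ from its boundary) one has $\Phi\approx 1$, i.e. $\ln\Phi\approx 0$, and the leading terms reproduce exactly the claimed moments: using $E\{\|\mathbf{z}'\|^2/\sigma^2\}=t$ and $Var(\|\mathbf{z}'\|^2/\sigma^2)=t$ (half of $\chi^2_{2t}$), independence of $\mathbf{z}'$ and $\mathbf{H}$, and the identity $\ln\det(a^2\mathbf{H}^{\dagger}\mathbf{H}/(\pi\sigma^2))-t=\ln\det(a^2\mathbf{H}^{\dagger}\mathbf{H}/(\pi e\sigma^2))$, the mean becomes $E\{\ln\det(a^2\mathbf{H}^{\dagger}\mathbf{H}/(\pi e\sigma^2))\}$ and the variance becomes $t+Var(\ln\det(\mathbf{H}^{\dagger}\mathbf{H}))$.

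The main obstacle is the rigorous control of the residual $\ln\Phi$ and its contribution to the first two moments, uniformly in $\mathbf{H}$. I would bound $-\ln\Phi$ by a boundary-layer argument: writing $\mathbf{y}'=\mathbf{G}\mathbf{x}+\mathbf{z}'$, the event that $\Phi$ is not close to $1$ is the event that $\mathbf{x}$ plus an effective perturbation of covariance $\sigma^2(\mathbf{H}^{\dagger}\mathbf{H})^{-1}$ exits $\text{Cb}(a,t)$, which is negligible unless the smallest singular value of $\mathbf{H}$ is of order $\sigma/a$ or smaller. Thus the delicate part is showing that the contributions of the near-singular realizations of $\mathbf{H}$ — where the faded cube degenerates and the Gaussian spills across its faces — are of order $(\sigma/a)^{2t}$; this uses the near-origin behavior of the complex Wishart eigenvalue density of $\mathbf{H}^{\dagger}\mathbf{H}$, exactly as the regular-fading hypothesis $f(h)\propto h^{\alpha-1}$ was used in the scalar proof. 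Finally, the third-moment claim (item 3) follows from the elementary inequality $E|A+B+C|^3\le 9\,(E|A|^3+E|B|^3+E|C|^3)$ applied to the three terms above: $\|\mathbf{z}'\|^2/\sigma^2$ has all moments finite, $\ln\det(\mathbf{H}^{\dagger}\mathbf{H})$ has a finite third absolute moment because the Wishart eigenvalue density carries a factor $\prod_i\lambda_i^{\,r-t}$ near the origin making $\int(\ln\lambda)^3\lambda^{r-t}d\lambda$ converge for $r\ge t$, and $-\ln\Phi\in[0,\infty)$ is integrable to third order by the same boundary estimate.
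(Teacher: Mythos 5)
Your proposal follows essentially the same route as the paper's proof: the identical decomposition $i=\ln\det\left(\frac{a^2\mathbf{H}^{\dagger}\mathbf{H}}{\pi e\sigma^2}\right)-\frac{\|\mathbf{z}'\|^2-t\sigma^2}{\sigma^2}+(-\ln\Phi)$ with the nonnegative residual $-\ln\Phi$ (the paper's $e_{a/\sigma}(\mathbf{y}',\mathbf{H})$), the same leading-order moment computation via the $\chi^2_{2t}$ noise term, and the same appeal to the near-origin behavior of the Wishart law to control the residual; the only cosmetic difference is that the paper bounds $-\ln\Phi$ by sandwiching the faded cube between balls of radii $\lambda_{\min}^{1/2}a/2$ and $\lambda_{\max}^{1/2}a/2$ and integrating explicitly, whereas you phrase it as a boundary-layer estimate, which amounts to the same bound. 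Note that both your sketch and the paper stop short of fully justifying the expectation over near-singular $\mathbf{H}$ (where $\det(\mathbf{H}^{\dagger}\mathbf{H})^{-1}$ has heavy tails, e.g.\ when $r=t$), so the precise $O\left(\left(\frac{\sigma}{a}\right)^{2t}\right)$ rate is asserted rather than derived in either argument.
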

\begin{proof}
It is easy to show that the PDF of $\mathbf{y}'$ given $\mathbf{H}$ is given by
\begin{align}
\begin{aligned}
f(\mathbf{y}'|\mathbf{H}) &= \int_{\mathbf{x} \in E}{f(\mathbf{y}'|\mathbf{x},\mathbf{H})}d\mathbf{x}
\\                        &= \frac{1}{a^{2t}\det(\mathbf{H}^{\dagger}\mathbf{H})}\int_{\mathbf{x} \in E}{\frac{1}{(\pi\sigma^2)^t}e^{-\frac{\|\mathbf{y}'-\mathbf{x}\|^2}{\sigma^2}}}d\mathbf{x}
\\                        &= \frac{\sigma^{2t}}{a^{2t}\det(\mathbf{H}^{\dagger}\mathbf{H})}\int_{\mathbf{x} \in E/\sigma}{\frac{1}{\pi^t}e^{-\|\frac{\mathbf{y}'}{\sigma}-\mathbf{x}\|^2}}d\mathbf{x}
\end{aligned}
\end{align}
where $E \triangleq \mathbf{D}'\mathbf{V}^{\dagger}\text{Cb}(a,t)$ and $\mathbf{D}' = \text{diag}(\lambda_1^{\frac{1}{2}},\dots,\lambda_t^{\frac{1}{2}})$.
Since $\text{Ball}(\lambda^{\frac{1}{2}}_{\min}a/2) \subseteq E \subseteq \text{Ball}(\lambda^{\frac{1}{2}}_{\max}a/2)$ where $\lambda_{\min}\triangleq\min(\lambda_1,\dots,\lambda_t)$ and $\lambda_{\max}\triangleq\max(\lambda_1,\dots,\lambda_t)$, then
\begin{align}
\begin{aligned}
f(\mathbf{y}'|\mathbf{H}) &\leq C_U'\cdot\frac{\sigma^{2t}e^{-\frac{\|\mathbf{y}'\|^2}{\sigma^2}}}{a^{2t}\det(\mathbf{H}^{\dagger}\mathbf{H})} \int_{0}^{\lambda^{\frac{1}{2}}_{\max}\frac{a}{2\sigma}}{r^{2t-1}e^{-\left(r^2+\frac{2r\|\mathbf{y}'\|}{\sigma}\right)}}dr
\\&\leq C_U'\cdot\frac{\sigma^{2t}e^{-{\frac{\|\mathbf{y}'\|^2}{\sigma^2}}  }}{a^{2t}\det(\mathbf{H}^{\dagger}\mathbf{H})}\int_{0}^{\infty}{r^{2t-1}e^{-r^2}}dr
\\&= C_U\cdot\frac{\sigma^{2t}e^{-{\frac{\|\mathbf{y}'\|^2}{\sigma^2}}  }}{a^{2t}\det(\mathbf{H}^{\dagger}\mathbf{H})}
\triangleq f_{U}(\mathbf{y}'|\mathbf{H})
\end{aligned}
\end{align}
and for $a/\sigma\geq2$
\begin{align}
\begin{aligned}
f(\mathbf{y}'|\mathbf{H}) &\geq
C_L'\cdot\frac{\sigma^{2t} e^{-\frac{\|\mathbf{y}'\|^2}{\sigma^2}}}{a^{2t}\det(\mathbf{H}^{\dagger}\mathbf{H})}\int_{0}^{\lambda^{\frac{1}{2}}_{\min}\frac{a}{2\sigma}}{r^{2t-1}
e^{-\left(r^2+\frac{2r\|\mathbf{y}'\|}{\sigma}\right)}}dr
\\&\geq
C_L'\cdot\frac{\sigma^{2t} e^{-\frac{\|\mathbf{y}'\|^2}{\sigma^2}}}{a^{2t}\det(\mathbf{H}^{\dagger}\mathbf{H})}\int_{0}^{\lambda^{\frac{1}{2}}_{\min}}{r^{2t-1}
e^{-\big(\lambda_{\min}+\frac{2\lambda^{\frac{1}{2}}_{\min}\|\mathbf{y}'\|}{\sigma}\big)}}dr
\\&= C_L\cdot\frac{\sigma^{2t}\lambda^{t}_{\min}e^{-\big({\frac{\|\mathbf{y}'\|}{\sigma}} + \sqrt{\lambda_{\min}}\big)^2 }}{a^{2t}\det(\mathbf{H}^{\dagger}\mathbf{H})}
\triangleq f_{L}(\mathbf{y}'|\mathbf{H}),
\end{aligned}
\end{align}
for some positive constants $C_U$ and $C_L$.
By straight forward algebraic manipulations over the definition of $i(\mathbf{x};\mathbf{y}',\mathbf{H})$, we obtain
\begin{equation}
i(\mathbf{x};\mathbf{y}',\mathbf{H}) = \ln\left( \det\left( \frac{a^2\mathbf{H}^{\dagger}\mathbf{H}}{\pi e \sigma^2} \right) \right) - \frac{\|\mathbf{z}'\|-t\sigma^2}{\sigma^2} + e_{a/\sigma}\left(\mathbf{y}',\mathbf{H}\right)
\end{equation}
where the error random variable is given by,
\begin{equation}
e_{a/\sigma}\left(\mathbf{y}',\mathbf{H}\right) \triangleq -\ln\left(\frac{a^{2t}\det(\mathbf{H}^{\dagger}\mathbf{H})}{\sigma^{2t}}f(\mathbf{y}'|\mathbf{H})\right)\geq0.
\end{equation}
The conditional expectation of the error random variable, given $\mathbf{H}$, is given by
\begin{align}
\begin{aligned}
&e_{a/\sigma}\left(\mathbf{H}\right) \triangleq E\{e_{a/\sigma}\left(\mathbf{y}',\mathbf{H}\right)|\mathbf{H}\} = \int_{\mathbf{y}'\in \mathbb{C}^{t}}{f(\mathbf{y}'|\mathbf{H})e_{a/\sigma}\left(\mathbf{y}',\mathbf{H}\right)}d\mathbf{y}'
\\ &\leq -\int_{\mathbf{y}'\in \mathbb{C}^{t}}{f_U(\mathbf{y}'|\mathbf{H})\ln\left(\frac{a^{2t}\det(\mathbf{H}^{\dagger}\mathbf{H})}{\sigma^{2t}}f_L(\mathbf{y}'|\mathbf{H})\right)}d\mathbf{y}'
\\ &= \frac{\sigma^{2t}}{a^{2t}\det(\mathbf{H}^{\dagger}\mathbf{H})}\cdot\left(c_0 + c_1\ln(\lambda_{\min}) + c_2\lambda^{\frac{1}{2}}_{\min} + c_3\lambda_{\min}\right).
\end{aligned}
\end{align}
Finally, the error's expectation, is given by
\begin{align}
\begin{aligned}
e_{a/\sigma} \triangleq E\{e_a\left(\mathbf{H}\right)\} = O\left(\left(\frac{\sigma}{a}\right)^{2t}\right).
\end{aligned}
\end{align}
Hence,
\begin{equation}
I(\mathbf{x};\mathbf{y}',\mathbf{H}) = E\left\{\ln\left({\det\left(\frac{a^2\mathbf{H}^{\dagger}\mathbf{H}}{\pi e\sigma^2}\right)}\right) \right\} + O\left(\left(\frac{\sigma}{a}\right)^{2t}\right).
\end{equation}
In a similar way we can calculate the variance and also to bound the third absolute moment.
\end{proof}
According to the Berry-Essen lemma (see Lemma \ref{lem_Berry_Esseen}) for i.i.d. RVs,
\begin{equation}
\label{eq_using_Berry_Esseen_UB_mimo}
|Pr\{ i\left(\mathbf{x}^n;\mathbf{y}'^n,\mathbf{H}^n\right) \leq \ln\gamma \} - Q(\tau)| \leq \frac{B(a/\sigma)}{\sqrt{n}}
\end{equation}
where $B(a/\sigma) = \frac{6\rho_3}{Var^{\frac{3}{2}}(i(\mathbf{x};\mathbf{y}',\mathbf{H}))}$.

\noindent
For sufficiently large $n$, let
\begin{equation}
\tau = Q^{-1}\left(\epsilon - \left(\frac{2\ln(2)}{\sqrt{2\pi Var(i(\mathbf{x};\mathbf{y}',\mathbf{H}))}} + 5B(a/\sigma)\right)\frac{1}{\sqrt{n}} \right).
\end{equation}
Then, from \eqref{eq_using_Berry_Esseen_UB_mimo} we obtain
\begin{align}
\label{align_part1_UB_for_DT_bound}
\begin{aligned}
&Pr\left\{ i\left(\mathbf{x}^n;\mathbf{y}'^n,\mathbf{H}^n\right) \leq \ln(\gamma) \right\} \leq
\\ & \epsilon - 2\left(\frac{\ln(2)}{\sqrt{2\pi Var(i(\mathbf{x};\mathbf{y}',\mathbf{H}))}} + 2B(a/\sigma)\right)\frac{1}{\sqrt{n}}.
\end{aligned}
\end{align}
Using Lemma \ref{lem_47_in_Polyanskiy} (see in Appendix \ref{app_lemma_47_in_Polyanskiy}), we get
\begin{align}
\label{align_part2_UB_for_DT_bound}
\begin{aligned}
&\gamma E\left\{ e^{-i\left(\mathbf{x}^n;\mathbf{y}'^n,\mathbf{H}^n\right)}1_{\left\{i\left(\mathbf{x}^n;\mathbf{y}'^n,\mathbf{H}^n\right)>\ln(\gamma)\right\}} \right\} \leq \\ &2\left(\frac{\ln(2)}{\sqrt{2\pi Var(i(\mathbf{x};\mathbf{y}',\mathbf{H}))}} + 2B(a/\sigma)\right)\frac{1}{\sqrt{n}}.
\end{aligned}
\end{align}
Summing \eqref{align_part1_UB_for_DT_bound} and \eqref{align_part2_UB_for_DT_bound} we prove the inequality \eqref{align_DT_bound_2}. Hence, by Theorem \ref{thm_DT_bound_mimo}, there exists a FC with $M(n,\epsilon,a/\sigma)$ codewords, denoted by $S(n, \epsilon, a/\sigma)$, such that
\begin{align}
\label{align_Direct_step_1_mimo}
\begin{aligned}
&\ln \left(M(n,\epsilon,a/\sigma)\right) = \ln(\gamma) + O(1)
\\ &= nI(\mathbf{x};\mathbf{y}',\mathbf{H}) - \tau\sqrt{nVar(i(\mathbf{x};\mathbf{y}',\mathbf{H}))}  + O(1)
\\ &= nI(\mathbf{x};\mathbf{y}',\mathbf{H}) - \sqrt{nVar(i(\mathbf{x};\mathbf{y}',\mathbf{H}))}Q^{-1}(\epsilon) + O(1),
\end{aligned} \hspace{-1cm}
\end{align}
where the last equality is derived by a first order Taylor's approximation for $Q^{-1}\left(\epsilon + O\left(\frac{1}{\sqrt{n}}\right)\right)$ around $\epsilon$.
Let us define the NLD of the FC in $\text{Cb}(a,l)$ by
\begin{equation}
\label{eq_NLD_in_FC}
\delta(n, \epsilon, a/\sigma) \triangleq \frac{1}{n}\ln\left(\frac{M(n,\epsilon,a/\sigma)}{a^{2nt}}\right).
\end{equation}
From \eqref{align_Direct_step_1_mimo} we obtain
\begin{align}
\begin{aligned}
\delta(n, \epsilon, a/\sigma) &= I(\mathbf{x};\mathbf{y}',\mathbf{H}) - \ln(a^{2t})
\\ &- \sqrt{\frac{Var(i(\mathbf{x};\mathbf{y}',\mathbf{H}))}{n}}Q^{-1}(\epsilon) + O\left(\frac{1}{n}\right).\nonumber
\end{aligned}
\end{align}
Note that the results of Lemma \ref{lem_information_density_moments_mimo} hold in general for large enough $a$. Specifically, we can choose $a$ to be a monotonic increasing function of $n$ s.t. $\lim_{n \to \infty}a=\infty$, and then the results of Lemma \ref{lem_information_density_moments_mimo} will hold for any large enough $n$. Assigning the results of Lemma \ref{lem_information_density_moments_mimo} with appropriate choice of $a = a(n)$, we get
\begin{align}
\begin{aligned}
\delta(n, \epsilon, a/\sigma) &= \delta^* - \sqrt{ \frac{V + O\left(\left(\frac{\sigma}{a}\right)^{2t}\right)}{n} }Q^{-1}(\epsilon) + O\left(\frac{1}{n} + \left(\frac{\sigma}{a}\right)^{2t}\right)
\\&= \delta^* - \sqrt{ \frac{V}{n} }Q^{-1}(\epsilon) + O\left(\frac{1}{n} + \left(\frac{\sigma}{a}\right)^{2t}\right)
,\nonumber
\end{aligned}
\end{align}
where the last equality is derived by Taylor approximation for large enough $n$.
By tiling the FC, denoted by $S(n, \epsilon, a/\sigma)$, to the whole space $\mathbb{C}^{l}$ (in a similar way as done in Appendix \ref{app_tiling}, for scalar fading channels), we can construct an IC with average error probability which is upper bounded by $\epsilon$, and NLD $\delta(n,\epsilon)$ that satisfies
\begin{equation}
\label{eq_Direct_step_5}
\delta(n, \epsilon) = \delta^* - \sqrt{ \frac{V}{n} }Q^{-1}(\epsilon) + O\left(\frac{1}{n}\right).
\end{equation}
\noindent
Hence, the optimal NLD necessarily satisfies $\delta^*(n,\epsilon) \geq \delta(n, \epsilon)$. This completes the proof of the direct part.
\section{Derivation of simple expressions for $V$ and $\delta^*$}
\label{sec_derivation_of_simple_expressions_for_V_and_delta}
From Theorem \ref{thm_mimo_main_result}, the Poltyrev's capacity and the channel dispersion in MIMO fading channels under the \emph{FDT} constraint are given by the following:
\begin{align}
\delta^* &= E\left\{ \ln\left(\det\left(\frac{\mathbf{H}^{\dagger}\mathbf{H}}{\pi e\sigma^2}\right)\right) \right\}
\\V     &= t + Var(\ln(\det(\mathbf{H}^{\dagger}\mathbf{H}))).
\end{align}
Although at first glance, the evaluation of $\delta^*$ and $V$ seems an extremely difficult problem, in this section we derive a very simple expressions for them. These simplified expressions involve only summation operations that depend on the basic model parameters $t, r$ and $\sigma^2$.
This derivation is based on the distribution of the random variable $W \triangleq \ln\left(\det\left(\mathbf{H}^{\dagger}\mathbf{H}\right)\right)$, which appears in $\delta^*$ and $V$. This distribution can be derived immediately, by the following lemma.
\begin{lem}[The determinant's logarithm distribution]
\label{lem_det_log_dist}
The random variable
\begin{equation}
\hat{W} \triangleq \ln\left(\det\left(\mathbf{H}^{\dagger}\mathbf{H}\right)\right) + \ln(2^t),\nonumber
\end{equation}
where $\mathbf{H} \in \mathbb{C}^{r \times t}$ is a random matrix with entries that are distributed as i.i.d. circular symmetric $CN(0,1)$ random variables and  $t \leq r$, is distributed as is the sum of t independent $\chi^2$ random variables with $2r,2(r-1),\dots,2(r-t+1)$ degrees of freedom respectively.
\end{lem}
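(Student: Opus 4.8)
The plan is to exploit the QR (Gram--Schmidt / Bartlett) decomposition of the complex Gaussian matrix $\mathbf{H}$ together with the unitary invariance of the $CN(0,I_r)$ law. First I would write $\mathbf{H} = \mathbf{Q}\mathbf{R}$, where $\mathbf{Q} \in \mathbb{C}^{r \times t}$ has orthonormal columns and $\mathbf{R} \in \mathbb{C}^{t \times t}$ is upper triangular with real positive diagonal $R_{11},\dots,R_{tt}$. Since $\mathbf{Q}^{\dagger}\mathbf{Q} = I_t$, we have $\mathbf{H}^{\dagger}\mathbf{H} = \mathbf{R}^{\dagger}\mathbf{R}$, and therefore
\begin{equation}
\det\left(\mathbf{H}^{\dagger}\mathbf{H}\right) = \det\left(\mathbf{R}^{\dagger}\mathbf{R}\right) = \prod_{j=1}^{t}R_{jj}^2. \nonumber
\end{equation}
Taking logarithms turns this product into a sum, so $\hat{W} = \sum_{j=1}^{t}\ln\left(2R_{jj}^2\right)$, and the problem reduces to identifying the joint law of the diagonal entries $\{R_{jj}^2\}_{j=1}^{t}$.

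Next I would pin down that joint law by running Gram--Schmidt column by column. Writing the columns of $\mathbf{H}$ as i.i.d. $CN(0,I_r)$ vectors $\mathbf{h}_1,\dots,\mathbf{h}_t$, the $j$-th diagonal entry is $R_{jj}^2 = \left\|P_{j-1}^{\perp}\mathbf{h}_j\right\|^2$, where $P_{j-1}^{\perp}$ projects onto the orthogonal complement of $\mathrm{span}\{\mathbf{h}_1,\dots,\mathbf{h}_{j-1}\}$ (almost surely of dimension $r-j+1$, since $j-1\le t-1<t\le r$). By the unitary invariance of the complex Gaussian distribution, conditioned on $\mathbf{h}_1,\dots,\mathbf{h}_{j-1}$ the residual $P_{j-1}^{\perp}\mathbf{h}_j$ is a standard $CN(0,I_{r-j+1})$ vector on $\mathrm{range}(P_{j-1}^{\perp})$ and is independent of the past columns. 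Hence the $R_{jj}^2$ are mutually independent, and because the squared norm of a $CN(0,I_m)$ vector equals $\tfrac12\chi^2_{2m}$ (each complex coordinate contributes real and imaginary parts of variance $\tfrac12$), we obtain $2R_{jj}^2 \sim \chi^2_{2(r-j+1)}$ for $j=1,\dots,t$.

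Combining the two steps yields $2^t\det(\mathbf{H}^{\dagger}\mathbf{H}) = \prod_{j=1}^{t}\left(2R_{jj}^2\right)$, a product of $t$ independent chi-square variates with $2r,2(r-1),\dots,2(r-t+1)$ degrees of freedom, which is precisely the distributional identity claimed by the lemma (so that $\hat{W}$ is the corresponding sum of the logarithms of these independent variates). The main obstacle to make fully rigorous is the conditional projection-and-independence step: one must justify that the residual of the $j$-th column after removing the previous ones is again standard complex Gaussian on the residual subspace and independent of the past. The cleanest way is to argue it conditionally --- fix $\mathbf{h}_1,\dots,\mathbf{h}_{j-1}$, pick any unitary that rotates $\mathrm{range}(P_{j-1}^{\perp})$ onto the last $r-j+1$ coordinate axes, and invoke the invariance of the $CN(0,I_r)$ law under this unitary. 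If one prefers to sidestep the geometric argument, an alternative is to compute the Mellin transform $E\{\det(\mathbf{H}^{\dagger}\mathbf{H})^s\}$ from the complex Wishart eigenvalue density and verify that it factors as $\prod_{j=1}^{t}E\{(\tfrac12\chi^2_{2(r-j+1)})^s\}$, but the QR argument is more transparent and directly yields the independence.
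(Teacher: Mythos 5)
Your argument is correct, but it takes a genuinely different route from the paper: the thesis proves this lemma in one line by citing Goodman's Theorem~1.1 on the distribution of the determinant of a complex Wishart matrix, whereas you reprove that result from scratch via the QR (Bartlett) decomposition. Your chain $\det(\mathbf{H}^{\dagger}\mathbf{H})=\prod_{j}R_{jj}^2$ with $2R_{jj}^2\sim\chi^2_{2(r-j+1)}$ independent is exactly the standard proof underlying Goodman's theorem, and the conditional projection-and-invariance step you single out is the right thing to justify (conditioning on the first $j-1$ columns, rotating $\mathrm{range}(P_{j-1}^{\perp})$ onto coordinate axes, and using unitary invariance of $CN(0,I_r)$ does it). What your approach buys is self-containedness and transparency about where the degrees of freedom $2r,2(r-1),\dots,2(r-t+1)$ come from; what the paper's citation buys is brevity. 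One point in your favor worth keeping: you correctly read the lemma's statement as saying that $e^{\hat W}=2^t\det(\mathbf{H}^{\dagger}\mathbf{H})$ is a \emph{product} of independent chi-squares, i.e.\ $\hat W$ is the sum of their \emph{logarithms}; the lemma's literal wording (``the sum of $t$ independent $\chi^2$ random variables'') is imprecise, and the subsequent moment computation $E\{W\}=\sum_i E\{\ln(X_i/2)\}$ confirms your reading is the intended one.
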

\begin{proof}
Follows directly from \cite[Theorem 1.1]{Goodman}.
\end{proof}
In the next lemma we will derive a simple analytic expressions for the expectation and the variance of $W$.
\begin{lem}[The determinant's logarithm moments]
\label{lem_det_log_moments}
The expectation and the variance of the random variable $W \triangleq \ln\left(\det\left(\mathbf{H}^{\dagger}\mathbf{H}\right)\right)$, where $\mathbf{H} \in \mathbb{C}^{r \times t}$ is a random matrix with entries that are distributed as i.i.d. circular symmetric $CN(0,1)$ random variables and $t \leq r$, are given by:
\begin{align}
E\{W\} &= -\gamma t + 1 - t + t\sum_{p=1}^{r-t}\frac{1}{p} + r\sum_{p=r-t+1}^{r-1}\frac{1}{p}
\\
Var(W) &= \frac{\pi^2t}{6} - t\sum_{p=1}^{r-t}\frac{1}{p^2} - \sum_{p=r-t+1}^{r-1}\frac{r-p}{p^2},
\end{align}
where $\gamma=0.577\dots$ is the Euler's constant.
\end{lem}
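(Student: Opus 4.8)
The plan is to reduce everything to the moments of the logarithm of a single chi-square random variable, using the distributional identity of Lemma~\ref{lem_det_log_dist}. That lemma expresses $\hat{W} = W + t\ln 2$ as the logarithm of a product of $t$ independent chi-square variables, equivalently $\hat{W} \stackrel{d}{=} \sum_{k=0}^{t-1}\ln Y_k$ with $Y_k \sim \chi^2_{2(r-k)}$ independent. Since $W$ and $\hat{W}$ differ only by the deterministic constant $t\ln 2$, I would immediately obtain $E\{W\} = \sum_{k=0}^{t-1}E\{\ln Y_k\} - t\ln 2$ and $\mathrm{Var}(W) = \mathrm{Var}(\hat{W}) = \sum_{k=0}^{t-1}\mathrm{Var}(\ln Y_k)$, the latter because the $Y_k$ are independent.

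First I would record the one-dimensional fact. Writing $Y \sim \chi^2_{2m}$ as $Y = 2G$ with $G$ a unit-scale Gamma variable of integer shape $m$, the cumulant generating function of $\ln G$ is $\ln\Gamma(m+s) - \ln\Gamma(m)$, so that $E\{\ln Y\} = \ln 2 + \psi(m)$ and $\mathrm{Var}(\ln Y) = \psi'(m)$, where $\psi$ and $\psi'$ are the digamma and trigamma functions. Summing over $k = 0,\dots,t-1$, i.e. over $m = r, r-1, \dots, r-t+1$, the $t\ln 2$ contributions cancel against the shift, leaving the clean intermediate expressions $E\{W\} = \sum_{m=r-t+1}^{r}\psi(m)$ and $\mathrm{Var}(W) = \sum_{m=r-t+1}^{r}\psi'(m)$.

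The remaining work is to turn these digamma/trigamma sums into the stated finite harmonic sums. Here I would substitute the integer-argument identities $\psi(m) = -\gamma + \sum_{j=1}^{m-1}\frac{1}{j}$ and $\psi'(m) = \frac{\pi^2}{6} - \sum_{j=1}^{m-1}\frac{1}{j^2}$, giving $E\{W\} = -\gamma t + \sum_{m=r-t+1}^{r}\sum_{p=1}^{m-1}\frac{1}{p}$ and $\mathrm{Var}(W) = \frac{\pi^2 t}{6} - \sum_{m=r-t+1}^{r}\sum_{p=1}^{m-1}\frac{1}{p^2}$. The key manipulation is to interchange the order of summation in the double sum $\sum_{m=r-t+1}^{r}\sum_{p=1}^{m-1}(\cdot)$: for each $p$ one counts the $m \in [r-t+1, r]$ with $m \ge p+1$, which equals $t$ when $p \le r-t$ and $r-p$ when $r-t+1 \le p \le r-1$. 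This splits each double sum into $t\sum_{p=1}^{r-t}(\cdot) + \sum_{p=r-t+1}^{r-1}(r-p)(\cdot)$; for the variance this is already the claimed form, while for the mean the tail $\sum_{p=r-t+1}^{r-1}(r-p)\frac{1}{p}$ is rewritten as $r\sum_{p=r-t+1}^{r-1}\frac{1}{p} - (t-1)$, which produces the $1 - t$ constant.

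The main obstacle is essentially bookkeeping: getting the summation limits right in the index interchange, in particular the boundary between the cases $p \le r-t$ and $p \ge r-t+1$, and confirming that the scale factor $\ln 2$ cancels cleanly so that no spurious additive constant survives in the mean. Both parts reduce to the same combinatorial count, so once the mean is handled the variance follows verbatim under $\frac{1}{p} \mapsto \frac{1}{p^2}$, with no further constant to extract. A couple of small cases (e.g. $t=1$, and $t=r=2$) would serve as a consistency check on the constants.
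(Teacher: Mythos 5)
Your proposal is correct and follows essentially the same route as the paper: reduce via Lemma \ref{lem_det_log_dist} to $E\{W\}=\sum_{m=r-t+1}^{r}\psi(m)$ and $\mathrm{Var}(W)=\sum_{m=r-t+1}^{r}\psi'(m)$, substitute the integer-argument digamma/trigamma identities, and interchange the order of summation to obtain the stated finite harmonic sums (including the $1-t$ constant from $\sum_{p=r-t+1}^{r-1}(r-p)\frac{1}{p}=r\sum_{p=r-t+1}^{r-1}\frac{1}{p}-(t-1)$). The only cosmetic difference is that you justify the one-dimensional moment facts via the Gamma cumulant generating function, whereas the paper simply cites them as known.
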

\begin{proof}
Using the result of Lemma \ref{lem_det_log_dist} we get immediately that the expectation and the variance of $W$ are given by:
\begin{equation}
E\{W\} = \sum_{i=1}^{t}E\left\{\ln\left(\frac{X_i}{2}\right)\right\}
\end{equation}
and
\begin{equation}
Var(W) = \sum_{i=1}^{t}Var\left(\ln\left(X_i\right)\right),
\end{equation}
where $X_i\sim\chi^2_{2(r-i+1)}$ for $i=1,2,\dots,t$.
Moreover, it is known that $E\left\{\ln\left(\frac{X_i}{2}\right)\right\} = \psi(i)$ and $Var\left(\ln\left(X_i\right)\right) = \psi'(i)$, where $\psi(x) \triangleq \frac{d}{dx}\ln(\Gamma(x))$ is the \emph{digamma} function. From the \emph{digamma} function properties we have that $\psi(x) = -\gamma + \sum_{p=1}^{x-1}\frac{1}{p}$ for integer $x$, and $\psi'(x) = \sum_{p=1}^{\infty}\frac{1}{(p+x-1)^2}$ for any $x$ (see for example \cite{Oyman}\footnote{Note that in \cite{Oyman} $\chi^2_{n}$ was defined as the distribution of the sum of squares of $n$ i.i.d. $N(0,\frac{1}{2})$ RVs, and not of $N(0,1)$ as commonly used.}).
Combining the above we get that
\begin{align}
E\{W\} &= \sum_{i=1}^{t}\psi(r-i+1)
\\     &= -\gamma t + \sum_{i=1}^{t}\sum_{p=1}^{r-i}\frac{1}{p}
\\     &= -\gamma t + \sum_{p=1}^{r-t}\sum_{i=1}^{t}\frac{1}{p} + \sum_{p=r-t+1}^{r-1}\sum_{i=1}^{r-p}\frac{1}{p}
\\     &= -\gamma t + 1 - t + t\sum_{p=1}^{r-t}\frac{1}{p} + r\sum_{p=r-t+1}^{r-1}\frac{1}{p}
\end{align}
and
\begin{align}
Var(W) &= \sum_{i=1}^{t}\psi'(r-i+1)
\\     &= \sum_{i=1}^{t}\sum_{p=1}^{\infty}\frac{1}{(p+r-i)^2}
\\     &= \sum_{i=1}^{t}\left(\sum_{p=1}^{\infty}\frac{1}{p^2} - \sum_{p=1}^{r-i}\frac{1}{p^2}\right)
\\     &= \frac{\pi^2t}{6} - \sum_{i=1}^{t}\sum_{p=1}^{r-i}\frac{1}{p^2}
\\     &= \frac{\pi^2t}{6} - t\sum_{p=1}^{r-t}\frac{1}{p^2} - \sum_{p=r-t+1}^{r-1}\frac{r-p}{p^2}
\end{align}
where, $\sum_{p=1}^{\infty}\frac{1}{p^2} = \frac{\pi^2}{6}$ is the known solution for the \emph{Basel problem} (see for example \cite{Basel}).
\end{proof}
By Lemma \ref{lem_det_log_moments} we can derive simple analytic expressions for $\delta^*$ and $V$ in the MIMO fading channel, which are summarized by the following theorem.
\begin{theorem}
\label{thm_MIMO_channel_delta_and_V}
The Poltyrev's capacity $\delta^*$ and the channel dispersion $V$ of the $t \times r$ MIMO fast fading channel ($t \leq r$) under the \emph{FDT} constraint, are given by:
\begin{align}
\delta^* &= -\gamma t + 1 - t + t\sum_{p=1}^{r-t}\frac{1}{p} + r\sum_{p=r-t+1}^{r-1}\frac{1}{p} - t\ln(\pi e \sigma^2)
\\
\label{align_analytic_V_for_mimo}
V &= t + \frac{\pi^2t}{6} - t\sum_{p=1}^{r-t}\frac{1}{p^2} - \sum_{p=r-t+1}^{r-1}\frac{r-p}{p^2}.
\end{align}
\end{theorem}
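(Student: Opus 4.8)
The plan is to obtain both identities by straightforward substitution, since the substantive computations have already been carried out in Theorem \ref{thm_mimo_main_result} and Lemma \ref{lem_det_log_moments}. Writing $W \triangleq \ln\left(\det\left(\mathbf{H}^{\dagger}\mathbf{H}\right)\right)$, Theorem \ref{thm_mimo_main_result} already gives $\delta^* = E\left\{\ln\left(\det\left(\frac{\mathbf{H}^{\dagger}\mathbf{H}}{\pi e \sigma^2}\right)\right)\right\}$ and $V = t + Var(W)$, so the only genuine step is to separate the deterministic noise-scaling factor from the random matrix and then invoke the closed-form moments of $W$.

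First I would exploit that $\mathbf{H}^{\dagger}\mathbf{H}$ is a $t \times t$ matrix, so dividing it entrywise by the scalar $\pi e \sigma^2$ scales its determinant by $(\pi e \sigma^2)^{-t}$:
\[
\det\left(\frac{\mathbf{H}^{\dagger}\mathbf{H}}{\pi e \sigma^2}\right) = \frac{\det\left(\mathbf{H}^{\dagger}\mathbf{H}\right)}{(\pi e \sigma^2)^{t}}.
\]
Taking logarithms and expectations yields $\delta^* = E\{W\} - t\ln(\pi e \sigma^2)$. Substituting the closed form for $E\{W\}$ from Lemma \ref{lem_det_log_moments} reproduces the claimed expression for $\delta^*$ verbatim, including the additive $-t\ln(\pi e\sigma^2)$ term.

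For the dispersion, I would observe that passing from $\det(\mathbf{H}^{\dagger}\mathbf{H})$ to $\det\left(\frac{\mathbf{H}^{\dagger}\mathbf{H}}{\pi e \sigma^2}\right)$ only adds the deterministic constant $-t\ln(\pi e \sigma^2)$ inside the logarithm, which leaves the variance unchanged; hence $V = t + Var(W)$ involves no noise dependence at all. Plugging the closed form for $Var(W)$ from Lemma \ref{lem_det_log_moments} then gives the stated formula for $V$ directly.

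There is essentially no obstacle remaining at this level: all the difficulty has been pushed into Lemma \ref{lem_det_log_dist}, which identifies $\hat{W} = W + \ln(2^t)$ with a sum of $t$ independent $\chi^2$ variables of degrees $2r, 2(r-1), \dots, 2(r-t+1)$, and into the digamma/trigamma evaluations of Lemma \ref{lem_det_log_moments}. The only point meriting care is the exponent $t$ in the scalar extraction above and the matching $-t\ln(\pi e\sigma^2)$ shift; once that is handled correctly, both identities follow by direct assembly of the two preceding results, so the proof reduces to bookkeeping rather than any new estimate.
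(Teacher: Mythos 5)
Your proposal is correct and follows essentially the same route as the paper: both reduce the theorem to the expressions $\delta^* = E\{\ln\det(\mathbf{H}^{\dagger}\mathbf{H}/(\pi e\sigma^2))\}$ and $V = t + Var(\ln\det(\mathbf{H}^{\dagger}\mathbf{H}))$ from Theorem \ref{thm_mimo_main_result} and then substitute the closed-form moments of Lemma \ref{lem_det_log_moments}. Your explicit extraction of the scalar factor $(\pi e\sigma^2)^{-t}$ and the observation that the deterministic shift does not affect the variance are exactly the (unstated) bookkeeping behind the paper's one-line proof.
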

\begin{proof}
Follows directly from the fact that
\begin{align}
\delta^* &= E\left\{ \ln\left(\det\left(\frac{\mathbf{H}^{\dagger}\mathbf{H}}{\pi e\sigma^2}\right)\right) \right\}
\\V     &= t + Var(\ln(\det(\mathbf{H}^{\dagger}\mathbf{H})))
\end{align}
and from Lemma \ref{lem_det_log_moments}.
\end{proof}
In Figures \ref{fig_delta_star_vs_r} and \ref{fig_channel_dispersion_vs_r} we demonstrate this result for different number of transmit and receive antennas. It can be observed in Figure \ref{fig_channel_dispersion_vs_r}, that for fixed number of transmit antennas $t$, the channel dispersion decreases as the number of the receiver antennas grows. In addition, this dispersion converges to $t$, when $r\to\infty$ (clearly, from \eqref{align_analytic_V_for_mimo} and the solution of \emph{Basel problem} \cite{Basel}). Note that $t$ is the channel dispersion of $t$ parallel, identical and independent complex AWGN channels. This hints us that increasing the number of receive antennas whitens the MIMO fading channel.
\begin{figure}[htp]
\center{\includegraphics[width=0.7\columnwidth]{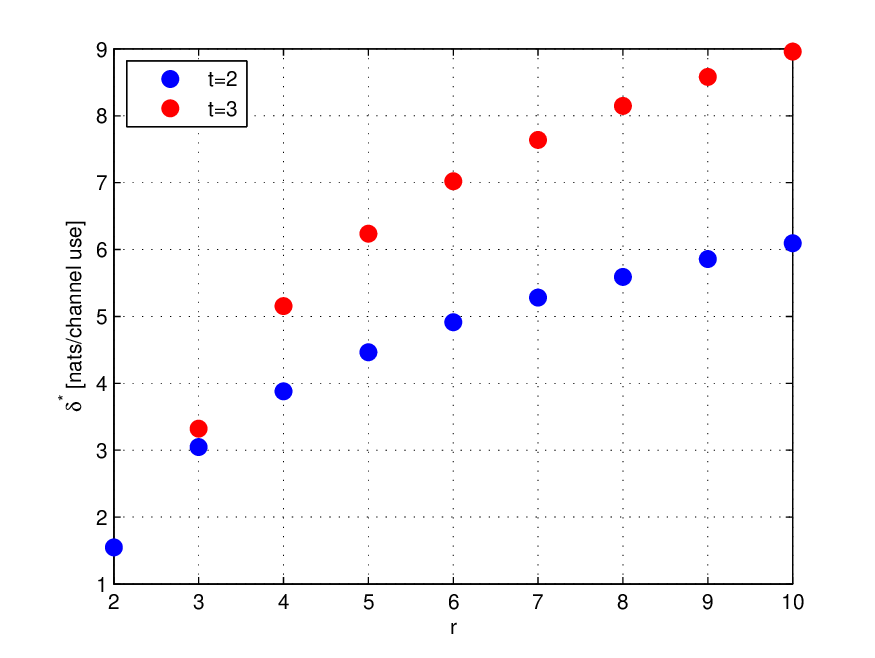}}
\caption{\label{fig_delta_star_vs_r} Poltyrev's capacity under the \emph{FDT} constraint vs. the number of receive antennas $r$, for fixed number of transmit antennas $t$ and noise variance $\sigma^2=0.05$.}
\end{figure}
\begin{figure}[htp]
\center{\includegraphics[width=0.7\columnwidth]{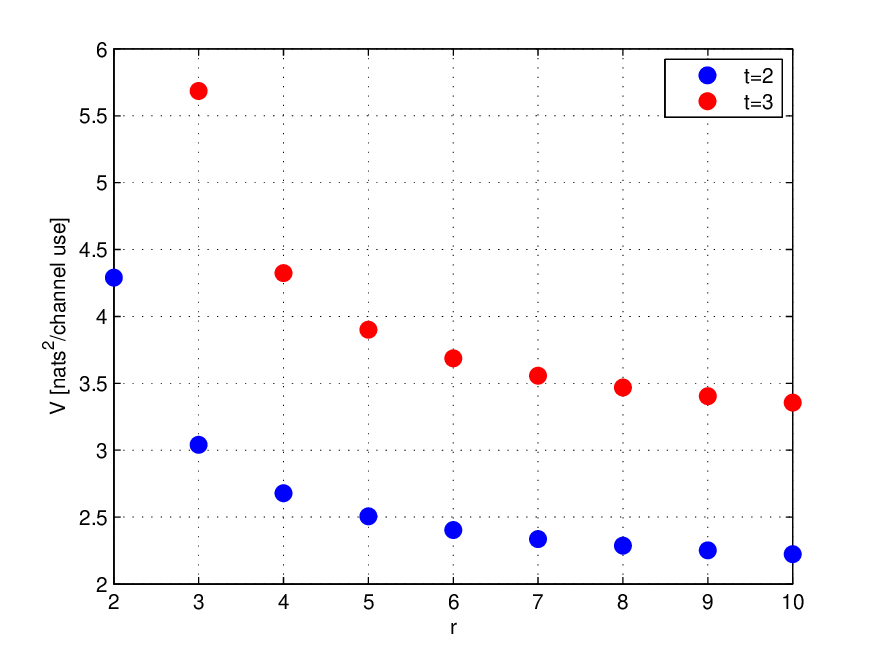}}
\caption{\label{fig_channel_dispersion_vs_r} The channel dispersion under the \emph{FDT} constraint vs. the number of receive antennas $r$, for fixed number of transmit antennas $t$.}
\end{figure}

Note that in \cite{Oyman} (also presented here in Section \ref{sec_moments_of_the_information_density}) a similar analysis provided approximations for the capacity and the variance of the mutual information given the CSI, in the high SNR regime of the power constrained MIMO channel with normal input distribution. Moreover, in \cite[Theorem 2]{Telatar} (also presented here in Section \ref{sec_capacity_and_error_exponent} Theorem \ref{thm_Telatar_capacity}) Telatar derived an easy to evaluate (numerically) one dimensional integral expression for the capacity of the power constrained MIMO channel. Here, the evaluation of $\delta^*$ and $V$ are not only exact in contrast to the results in \cite{Oyman}, but also much easier to evaluate than the capacity in \cite{Telatar}, and only involve summation operations as function of the basic parameters $t, r$ and $\sigma^2$.
\section{Relation to the Power Constrained Model}
\label{sec_relation_to_the_MIMO_power_constrained_model}
As we already mentioned in Section \ref{sec_relation_to_the_power_constrained_model}, the error exponent at rates near the capacity can be approximated by a parabola of the form
\begin{equation}
\label{eq_error_exponent_near_capacity_MIMO}
E\left(R\right) \approx \frac{\left(C-R\right)^2}{2V},
\end{equation}
where $V$ is the channel dispersion. By taking uniform input distribution, within the cube $\text{Cb}(a,t)$, in Gallager's random coding error exponent, over the power constrained MIMO fading channel with available CSI at the receiver, it can be shown (see Appendix \ref{app_error_exponent_uni_prior_sec}) that \eqref{eq_error_exponent_near_capacity_MIMO} holds with $C = E\left\{\ln\left({\det\left(\frac{a^2\mathbf{H}^{\dagger}\mathbf{H}}{\pi e\sigma^2}\right)}\right) \right\}$ and $V = t + Var\left(  \ln\left({\det\left( \mathbf{H}^{\dagger}\mathbf{H} \right)}\right) \right)$, when $a/\sigma$ tends to infinity (the high SNR regime). Since the setting without power constraint and under the \emph{FDT} constraint can be thought of as the limit of the power constrained setting, when the SNR tends to infinity, this result hints that $\delta^* = E\left\{\ln\left({\det\left(\frac{\mathbf{H}^{\dagger}\mathbf{H}}{\pi e\sigma^2}\right)}\right) \right\}$ and $V = t + Var\left(  \ln\left({\det\left( \mathbf{H}^{\dagger}\mathbf{H} \right)}\right) \right)$, in that setting.

According to \cite{Telatar} the capacity of the average power constrained MIMO channel is given by:
\begin{equation}
\label{eq_avg_power_constrained_capacity}
C = E\left\{ \ln\left( {\det\left( I_{t} + \mathbf{H}^{\dagger}\mathbf{H}\cdot SNR \right)} \right) \right\}. \nonumber
\end{equation}

\noindent
In the high SNR regime, this capacity can be approximated by:
\begin{equation}
\label{eq_approx_avg_power_constrained_capacity}
C = E\left\{ \ln\left( {\det\left(\mathbf{H}^{\dagger}\mathbf{H}\cdot SNR \right)} \right) \right\}.
\end{equation}

\noindent
It is a well known fact that the capacity of the amplitude constrained channel, or the capacity with the constraint that all the codewords are contained in a cube $\text{Cb}(a,t)$, loses the ``Shaping Gain'' which equals $\frac{2 \pi e}{12}$ (\cite[Section IV.A]{Forney}), relative to the capacity of the average power constrained channel model. Hence, by the assignment of $SNR = \frac{a^2}{\pi e\sigma^2}$ in \eqref{eq_approx_avg_power_constrained_capacity}, we obtain the following capacity in $\text{Cb}(a,t)$,
\begin{equation}
C_a = E\left\{\ln\left({\det\left(\frac{a^2\mathbf{H}^{\dagger}\mathbf{H}}{\pi e\sigma^2}\right)}\right) \right\}. \nonumber
\end{equation}
Finally, we can normalize $C_a$ by the logarithm of the cube volume, which hints that the optimal NLD under the \emph{FDT} constraint is indeed equal:
\begin{equation}
\delta^* = C_a - \ln\left(a^{2t}\right) = E\left\{\ln\left({\det\left(\frac{\mathbf{H}^{\dagger}\mathbf{H}}{\pi e\sigma^2}\right)}\right) \right\}. \nonumber
\end{equation}
\section{Comparison to the Parallel Channels Model}
\label{sec_comparison_to_the_parallel_channels_model}
Let us define the \emph{independent parallel channels model} by the following $L$ independent and identical scalar complex fast fading channels:
\begin{equation}
Y^{(l)} = H^{(l)} \cdot X^{(l)} + Z^{(l)}
\end{equation}
for $l = 1,2,\dots,L$.
Equivalently, in vector notation, the channel model is given by:
\begin{equation}
\mathbf{y} = \mathbf{H} \cdot \mathbf{x} + \mathbf{z}
\end{equation}
where, $\mathbf{x},\mathbf{y},\mathbf{z} \in \mathbb{C}^L$ and $\mathbf{H} = \text{diag}(H^{(1)},\dots,H^{(L)}) \in \mathbb{C}^{L \times L}$. Let us focus on the case where $H^{(l)}$ is distributed as circular symmetric $CN(0,1)$ RV, and the noise vector $\mathbf{z}$ is distributed as circular symmetric $CN(0, \sigma^2 \cdot I_L)$ random vector.

In Section \ref{sec_dispersion_of_parallel_channels_model} we analyze the dispersion of this model and derive simple expressions for its Poltyrev's capacity and channel dispersion. Then, in Sections \ref{subsec_comparison_in_capacity} and \ref{subsec_comparison_in_dispersion} we compare between this model and the MIMO fading model under the \emph{FDT} constraint in terms of Poltyrev's capacity and channel dispersion, respectively.
\subsection{Dispersion of Parallel Channels Model}
\label{sec_dispersion_of_parallel_channels_model}
\begin{theorem}
Let $\epsilon>0$ be a given, fixed, error probability. Denote by $\delta^*(n,\epsilon)$ the optimal NLD for which there exists an $n \cdot L$ complex-dimensional infinite constellation with average error probability at most $\epsilon$, over the $L$ independent parallel channels model. Then, as $n$ grows,
\begin{equation}
\delta^*(n,\epsilon) = \delta^* - \sqrt{\frac{V}{n}}Q^{-1}(\epsilon) + O\left(\frac{\ln(n)}{n}\right),
\end{equation}
where,
\begin{equation}
\delta^* = E\left\{ \ln\left(\det\left(\frac{\mathbf{H}^{\dagger}\mathbf{H}}{\pi e\sigma^2}\right)\right) \right\}
\end{equation}
and
\begin{equation}
V = L + Var(\ln(\det(\mathbf{H}^{\dagger}\mathbf{H}))).
\end{equation}
\end{theorem}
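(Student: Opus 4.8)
The plan is to recognize the parallel channels model as a special case of the MIMO fast fading channel of Theorem \ref{thm_mimo_main_result}: the one whose fading matrix $\mathbf{H} = \text{diag}(H^{(1)},\dots,H^{(L)})$ is square ($t = r = L$) and diagonal, so that $\det(\mathbf{H}^{\dagger}\mathbf{H}) = \prod_{l=1}^{L}|H^{(l)}|^2$. Because $t = r = L$ the transmission is automatically \emph{FDT} and no dimension reduction occurs, so the Dependence Testing bound of Theorem \ref{thm_DT_bound_mimo} and the sphere packing bound of Theorem \ref{thm_sphere_packing_lower_bound_mimo} both apply verbatim. I would therefore follow the exact template of the proof of Theorem \ref{thm_mimo_main_result}, the only genuinely new ingredient being the evaluation of the moments of the per-channel-use information density.

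The observation that makes those moments tractable is that the $L$ sub-channels are mutually independent, so the per-channel-use information density splits as a sum, $i(\mathbf{x};\mathbf{y},\mathbf{H}) = \sum_{l=1}^{L} i(X^{(l)};Y^{(l)},H^{(l)})$, of $L$ independent scalar complex information densities. Feeding $\mathbf{x}$ with a uniform input on the complex cube $\text{Cb}(a,L)$ makes each coordinate uniform on $\text{Cb}(a,1)$, so I can import the scalar complex moment computation of Section \ref{sec_extension_to_the_complex_model} for each term. By additivity of expectation, and of variance by independence, this yields, for large enough $a/\sigma$,
\begin{align}
\begin{aligned}
I(\mathbf{x};\mathbf{y},\mathbf{H}) &= E\left\{\ln\left(\det\left(\frac{a^2\mathbf{H}^{\dagger}\mathbf{H}}{\pi e\sigma^2}\right)\right)\right\} + o(1),
\\
Var(i(\mathbf{x};\mathbf{y},\mathbf{H})) &= L + Var\left(\ln\left(\det(\mathbf{H}^{\dagger}\mathbf{H})\right)\right) + o(1),
\end{aligned}
\end{align}
together with a finite third absolute moment, since each $\ln|H^{(l)}|^2$ has finite third moment. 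This is the parallel-channel analogue of Lemma \ref{lem_information_density_moments_mimo}, with $t$ replaced by $L$.

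With the moments in hand, the direct part is identical to Section \ref{sec_mimo_proof_of_direct_part}: the information density over $n$ channel uses is a sum of $n$ i.i.d.\ copies, so the Berry--Esseen lemma (Lemma \ref{lem_Berry_Esseen}) controls the first term of the DT bound and Lemma \ref{lem_47_in_Polyanskiy} controls the second; choosing $a = a(n) \to \infty$ slowly and tiling the resulting finite constellation to the whole of $\mathbb{C}^{nL}$ as in Appendix \ref{app_tiling} produces an IC with $\delta(n,\epsilon) = \delta^* - \sqrt{V/n}\,Q^{-1}(\epsilon) + O(1/n)$. The converse part mirrors Section \ref{sec_mimo_proof_of_converse_part}: starting from the sphere packing bound, with $2\|\mathbf{z}^n\|^2/\sigma^2 \sim \chi^2_{2nL}$, I would introduce $\zeta_n = \sqrt{L}\,Y_n - \sqrt{Var(\ln\det(\mathbf{H}^{\dagger}\mathbf{H}))}\,S_n$ and use the log-chi-square normal approximation (Lemma \ref{lem_ln_chi2_pdf}), Berry--Esseen (Lemma \ref{lem_Berry_Esseen}) and the sum-of-almost-normal lemma (Lemma \ref{lem_sum_of_almost_normal_RVs}) to show that $\zeta_n$ is asymptotically $N(0,V)$, yielding the matching upper bound on $\delta^*(n,\epsilon)$ up to the $\frac{1}{2n}\ln(n)$ term.

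I would not expect a serious obstacle, precisely because the independence of the sub-channels makes every relevant quantity additive and reduces the analysis to the already-established scalar complex and MIMO machinery; the one point requiring a little care is verifying that the error terms in the scalar moment expansions still aggregate to $o(1)$ after summing over the $L$ coordinates (immediate, since $L$ is fixed) and that finiteness of the third moment survives the sum (also immediate). The substantive content of the statement is thus not the dispersion formula itself but the explicit evaluation of $\delta^*$ and $V$ through $\det(\mathbf{H}^{\dagger}\mathbf{H}) = \prod_l |H^{(l)}|^2$, which sets up the later comparison with the genuine MIMO channel in Section \ref{sec_comparison_to_the_parallel_channels_model}.
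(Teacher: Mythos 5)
Your proposal is correct, but it takes a noticeably heavier route than the paper. The paper's proof is essentially a one-line reduction: since the $L$ sub-channels are independent and identically distributed, $n$ channel uses of the parallel model are literally $n_0 = nL$ channel uses of the single scalar complex fast fading channel of Theorem \ref{thm_main_result_complex_extension}. Writing $\delta^*(n,\epsilon) = L\cdot\delta_0^*(n_0,\epsilon)$ and substituting the scalar result gives $\delta^* = L\delta_0^*$ and $V = L^2 V_0/L = L V_0$ directly, with no new normal approximation, no new moment lemma, and no tiling argument; the determinant form of $\delta^*$ and $V$ then follows from $\det(\mathbf{H}^{\dagger}\mathbf{H}) = \prod_l |H^{(l)}|^2$. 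You instead re-run the full MIMO template (DT bound, sphere packing bound, Berry--Esseen, log-chi-square, sum-of-almost-normal), computing the information density moments by additivity over the $L$ independent coordinates. That works --- the DT and sphere packing bounds are distribution-agnostic, and your decomposition $i(\mathbf{x};\mathbf{y},\mathbf{H}) = \sum_l i(X^{(l)};Y^{(l)},H^{(l)})$ is exactly the structural fact the paper exploits --- but note you cannot invoke Theorem \ref{thm_mimo_main_result} itself ``verbatim,'' since it is proved for a fully i.i.d.\ $CN(0,1)$ matrix, not a diagonal one; you correctly only borrow its proof template. The paper's reduction buys brevity and makes the $V = LV_0$ scaling transparent; your route buys nothing extra here, though it would generalize more readily to parallel channels that are independent but not identically distributed, where the simple $n_0 = nL$ relabeling breaks down.
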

\begin{proof}
From the scalar complex fast fading channel dispersion, we have
\begin{equation}
\delta_0^*(n_0,\epsilon) = \delta_0^* - \sqrt{\frac{V_0}{n_0}}Q^{-1}(\epsilon) + O\left( \frac{\ln(n_0)}{n_0} \right)
\end{equation}
where, $\delta_0^* = E\left\{\ln\left(\frac{|H|^2}{\pi e \sigma^2}\right)\right\}$ and $V_0 = 1 + Var(\ln(|H|^2))$.
Since that in the parallel channels model any channel use is equivalent to $L$ channel uses of the the scalar channel, then by defining $n = \frac{n_0}{L}$ to be the number of channel uses of the parallel channels model, we get trivially that:
\begin{align}
\delta^*(n,\epsilon) &= L\cdot\delta_0^*(n_0,\epsilon) \nonumber
\\&= L\cdot\left(\delta_0^* - \sqrt{\frac{V_0}{n_0}}Q^{-1}(\epsilon) + O\left( \frac{\ln(n_0)}{n_0} \right)\right) \nonumber
\\&= L\cdot\delta_0^* - \sqrt{\frac{L^2 \cdot V_0}{n_0}}Q^{-1}(\epsilon) + O\left( \frac{\ln(n)}{n} \right) \nonumber
\\&= \delta^* - \sqrt{\frac{V}{n}}Q^{-1}(\epsilon) + O\left( \frac{\ln(n)}{n} \right) \nonumber
\end{align}
where, $\delta^* = L\cdot\delta_0^*$ and $V = L \cdot V_0$.
Clearly, since $\mathbf{H}$ is a diagonal matrix with $L$ i.i.d. RV's in its diagonal,
\begin{equation}
\delta^* = E\left\{L\cdot\ln\left(\frac{|H|^2}{\pi e \sigma^2}\right)\right\}
= E\left\{ \ln\left(\det\left(\frac{\mathbf{H}^{\dagger}\mathbf{H}}{\pi e\sigma^2}\right)\right) \right\} \nonumber
\end{equation}
and
\begin{equation}
V = L \cdot (1 + Var(\ln(|H|^2))) = L + Var(\ln(\det(\mathbf{H}^{\dagger}\mathbf{H}))). \nonumber
\end{equation}
\end{proof}
In a similar way as we done for MIMO fading channels under the \emph{FDT} constraint in Theorem \ref{thm_MIMO_channel_delta_and_V}, we can derive simple analytic expressions for $\delta^*$ and $V$, also for the parallel channels model. These expressions are given by the following theorem.
\begin{theorem}
\label{thm_parallel_channel_delta_and_V}
The Poltyrev's capacity $\delta^*$ and the channel dispersion $V$ of the $L$ independent parallel channels model, are given by:
\begin{align}
\delta^* &= -\gamma L  - L\ln(\pi e \sigma^2)
\\
V &= L + \frac{\pi^2L}{6}.
\end{align}
\end{theorem}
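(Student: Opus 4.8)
The plan is to specialize the general expressions from the preceding theorem, namely $\delta^* = E\{\ln(\det(\mathbf{H}^{\dagger}\mathbf{H}/\pi e\sigma^2))\}$ and $V = L + Var(\ln(\det(\mathbf{H}^{\dagger}\mathbf{H})))$, to the diagonal structure of the parallel channels model. Since $\mathbf{H} = \text{diag}(H^{(1)},\dots,H^{(L)})$, we have $\mathbf{H}^{\dagger}\mathbf{H} = \text{diag}(|H^{(1)}|^2,\dots,|H^{(L)}|^2)$, so that $\ln(\det(\mathbf{H}^{\dagger}\mathbf{H})) = \sum_{l=1}^{L}\ln(|H^{(l)}|^2)$. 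First I would use linearity of expectation and, for the variance, the mutual independence of the $H^{(l)}$, to reduce both quantities to the single-channel moments $E\{\ln(|H|^2)\}$ and $Var(\ln(|H|^2))$, obtaining $\delta^* = L(E\{\ln(|H|^2)\} - \ln(\pi e \sigma^2))$ and $V = L + L\cdot Var(\ln(|H|^2))$. This reduction is in fact already implicit in the preceding theorem's proof, where $\delta^* = L\cdot\delta_0^*$ and $V = L\cdot V_0$.

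Next I would evaluate these two single-channel moments. Each scalar gain $H^{(l)} \sim CN(0,1)$ is precisely the $t = r = 1$ instance of the MIMO model, so $|H^{(l)}|^2 = \det((H^{(l)})^{\dagger}H^{(l)})$ and Lemma \ref{lem_det_log_moments} applies directly with $t = r = 1$. All the finite sums in that lemma are then empty, leaving $E\{\ln(|H|^2)\} = \psi(1) = -\gamma$ and $Var(\ln(|H|^2)) = \psi'(1) = \frac{\pi^2}{6}$, where the second value is exactly the Basel sum $\sum_{p=1}^{\infty}1/p^2$. Equivalently, one may observe that $|H|^2$ is exponentially distributed with unit mean (i.e. $|H|^2 \sim \frac{1}{2}\chi^2_2$) and read off these moments from the digamma and trigamma values at $1$.

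Substituting these values gives $\delta^* = L(-\gamma - \ln(\pi e \sigma^2)) = -\gamma L - L\ln(\pi e \sigma^2)$ and $V = L + \frac{\pi^2 L}{6}$, as claimed. There is no genuine obstacle here: the statement is a direct corollary of Lemma \ref{lem_det_log_moments} combined with the per-channel reduction already established for the parallel model, and the only point requiring minor care is the correct handling of the empty sums (and the convention that $\sum_{p=1}^{0} = 0$) when specializing to $t = r = 1$.
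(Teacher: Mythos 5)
Your proposal is correct and follows essentially the same route as the paper: the paper's proof likewise reduces $\delta^*$ and $V$ to the single-channel moments via $\delta^* = E\{L\cdot\ln(|H|^2/\pi e\sigma^2)\}$ and $V = L(1+Var(\ln(|H|^2)))$ and then invokes Lemma \ref{lem_det_log_moments} with $r=t=1$. Your explicit handling of the empty sums and the remark that $|H|^2$ is unit-mean exponential are just slightly more detailed versions of the same argument.
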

\begin{proof}
Follows directly from the fact that
\begin{align}
\delta^* &= E\left\{L\cdot\ln\left(\frac{|H|^2}{\pi e \sigma^2}\right)\right\}
\\V      &= L \cdot (1 + Var(\ln(|H|^2)))
\end{align}
and from Lemma \ref{lem_det_log_moments} for $r=t=1$, where $H \sim CN(0,1)$.
\end{proof}
\subsection{Comparison in terms of Poltyrev's Capacity}
\label{subsec_comparison_in_capacity}
For ``apples to apples'' comparison we will compare between the $t$ independent parallel channels model and the $t \times t$ MIMO channel under the \emph{FDT} constraint. According to Theorems \ref{thm_MIMO_channel_delta_and_V} and \ref{thm_parallel_channel_delta_and_V} we obtain:
\begin{align}
\Delta\delta^* &\triangleq \delta^*_{\text{MIMO}} - \delta^*_{\text{Parallel}} \nonumber
\\             &= -\gamma t + 1 - t + t\sum_{p=1}^{t-1}\frac{1}{p} - t\ln(\pi e \sigma^2) - \left( -\gamma t - t\ln(\pi e \sigma^2) \right)
\\             &= 1 - t + t\sum_{p=1}^{t-1}\frac{1}{p}
\\             &=
                    \left\{ \begin{array}{ll}
                    0 & \textrm{$t=1$}\\
                    1 + t\sum_{p=2}^{t-1}\frac{1}{p} & \textrm{$t>1$}\\
                    \end{array} \right.
\\             &\geq 0 ~\forall t,
\end{align}
which means that the MIMO channel has a greater Poltyrev's capacity than the parallel channels model (with the same noise variance $\sigma^2$) for any $t > 1$.
This result proves that the channel capacity is increased due to the dependency between the channels.

Another way to compare between the capacities of the channels is in terms of the ratio between the highest noise variance that is tolerable in each channel model. It is easy to show that this ratio is given by $\Delta\mu^* \triangleq e^{\frac{\Delta\delta^*}{t}}$ in linear scale, or by $10\log_{10}(e)\cdot\frac{\Delta\delta^*}{t}\cong4.3429\cdot\frac{\Delta\delta^*}{t}$ in dB.
In Figure \ref{fig_mu_vs_t} we can see this ratio for different values of $t$.
\begin{figure}[htp]
\center{\includegraphics[width=0.7\columnwidth]{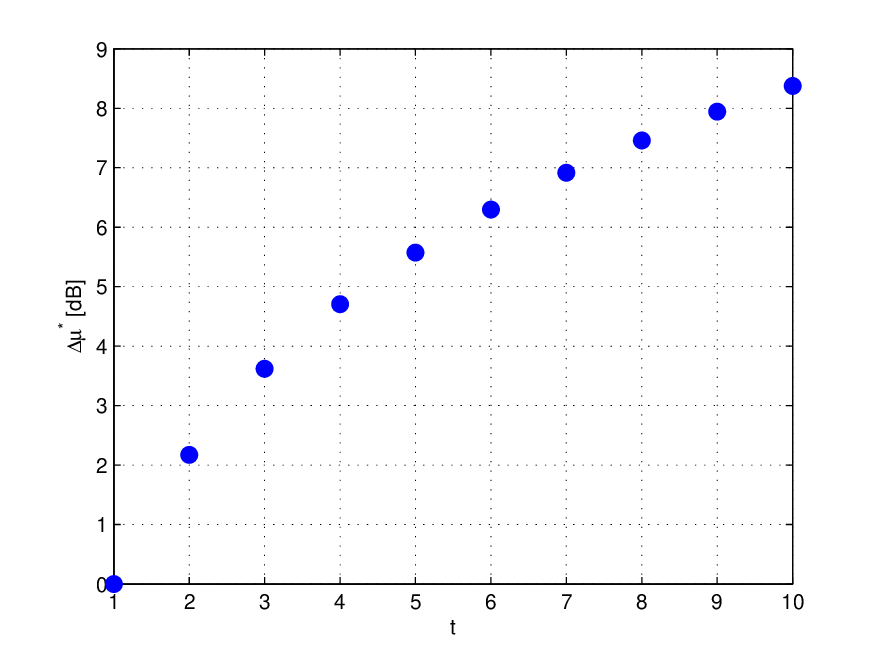}}
\caption{\label{fig_mu_vs_t} $\Delta\mu^*$ vs. the number of antennas $t$.}
\end{figure}
\subsection{Comparison in terms of Channel Dispersion}
\label{subsec_comparison_in_dispersion}
For a fair comparison between the channels in terms of channel dispersion, we need to compare between the $t$ independent parallel channels model and the $t \times t$ MIMO channel model under the \emph{FDT} constraint, in a way that will ensure equal VNR (the analogous SNR for IC's), for any IC that is transmitted over each one of them.
Since the VNR for IC with NLD $\delta$, is proportional to $(\delta^*-\delta)$ in dB (see Section \ref{sec_VNR}), then for getting equal VNR in both of the channels, we need to normalize the fading matrices in a way that will cause theirs Poltyrev's capacity to be equal. It can be verified that the multiplication of the parallel fading matrix by the constant $\rho \triangleq \sqrt{\Delta\mu^*} \geq 1$, where $\Delta\mu^*$ is defined in Section \ref{subsec_comparison_in_capacity}, ensures it.
But since, after normalization, we obtain:
\begin{align}
V &= t + Var\left(\ln(\det(\rho^2\mathbf{H}^{\dagger}\mathbf{H}))\right)
\\&= t + Var\left(\ln(\det(\mathbf{H}^{\dagger}\mathbf{H})) + \ln(\rho^{2t})\right)
\\&= t + Var\left(\ln(\det(\mathbf{H}^{\dagger}\mathbf{H}))\right),
\end{align}
we can observe that the channel dispersion is not affected by normalization. This result does not need to surprise us, since the channel dispersion is not a function of the noise variance. So, by using Theorems \ref{thm_MIMO_channel_delta_and_V} and \ref{thm_parallel_channel_delta_and_V} the channel dispersion difference between the models is given by:
\begin{align}
\Delta V &\triangleq V_{\text{Parallel}} - V_{\text{MIMO}}
\\       &= t + \frac{\pi^2t}{6} - \left( t + \frac{\pi^2t}{6} - \sum_{p=1}^{t-1}\frac{t-p}{p^2} \right)
\\       &= \sum_{p=1}^{t-1}\frac{t-p}{p^2} \geq 0 ~ \forall t.
\end{align}
This result proves that the channel dispersion is decreased due to the dependency between the MIMO channels.
An intuitive explanation for it, is that this dependency has an effect of ``coding'' on the transmitted data. Hence, effectively in the MIMO receiver, we get larger codeword relative to the independent parallel channels model.
Figure \ref{fig_delta_V_vs_t} demonstrates this fact for different values of $t$.
\begin{figure}[htp]
\center{\includegraphics[width=0.7\columnwidth]{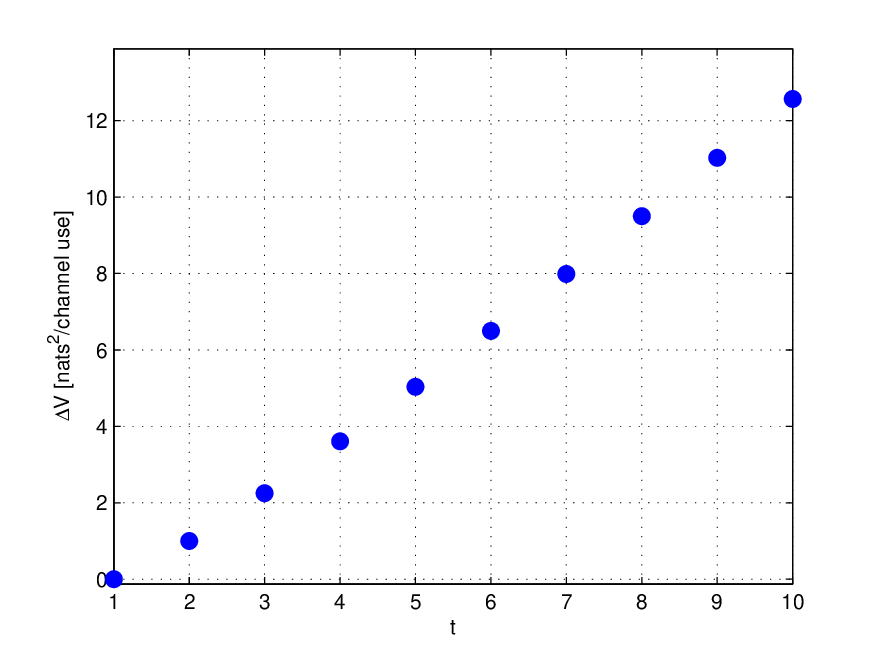}}
\caption{\label{fig_delta_V_vs_t} $\Delta V$ vs. the number of antennas $t$.}
\end{figure}
\section{Generalization}
\label{sec_generalization}
In this section we analyze the dispersion of MIMO fading channels without the constraint of \emph{Full Dimensional Transmission}.
Here, we enable the transmitter to discard part of its dimensions during the transmission.
This section generalizes the previous MIMO dispersion result under the \emph{FDT} constraint and hints about the MIMO dispersion and Poltyrev's capacity without any constraint.
Surprisingly, in IC's over MIMO channels, this reduction of dimensions can increase the Poltyrev's capacity.

Assume a transmission of an $l = n \cdot \bar{t}$ complex dimensional IC $S$ with NLD $\delta$ over the $t \times r$ MIMO channel, where $\bar{t} \leq t \leq r$ using $n$ channel uses. Let us denote by $p_i\geq0,~i=1,\dots,t$ the fraction of channel uses where only $i$ transmit antennas are in use and the rest are zeroed. Clearly, $\sum_{i=1}^{t}p_i=1$ and $\bar{t} = \sum_{i=1}^{t}p_i \cdot i$. In addition, let us denote by $\mathbf{H}_{i \times r}$ the effective MIMO fading matrix in channel uses where only $i$ transmit antennas are in use. Without loss of generality, we can assume that the overall fading matrix is given by the following concatenation of $n$ block diagonal matrices:
\begin{equation}
\mathbf{H}^n = \text{diag}\left(\mathbf{H}_{1 \times r}^{p_1 \cdot n},\dots,\mathbf{H}_{t \times r}^{p_t \cdot n}\right),
\end{equation}
and the effective $n \cdot \bar{t}$ complex dimensional circular symmetric Gaussian noise vector is given by the concatenation of the following $n$ consecutive noise vectors:
\begin{displaymath}
\mathbf{z}'^{n} =
\left( \begin{array}{c}
\mathbf{z}'^{p_1 \cdot n} \\
\vdots\\
\mathbf{z}'^{p_t \cdot n} \\
\end{array} \right).
\end{displaymath}
Hence, by using the same arguments as in Section \ref{sec_mimo_sphere_packing_bound} we can get the following \emph{sphere packing bound} for any such IC $S$ with NLD $\delta$:
\begin{equation}
P_e\left(S\right) \geq Pr\left\{ \hspace{-0.05cm} \left\|\mathbf{z}'^n\right\|^2 \geq e^{-\frac{\delta}{\bar{t}}}\left(\frac{\det(\mathbf{H}^{n\dagger}\mathbf{H}^n)}{V_{2n\bar{t}}}\right)^{\frac{1}{n\bar{t}}} \hspace{-0.05cm} \right\},
\end{equation}
and using similar arguments as in Section \ref{sec_mimo_proof_of_converse_part} for any fixed average error probability $\epsilon$:
\begin{align}
\delta \leq \sum_{i=1}^{t} p_i\cdot\delta^*(i,r) - \sqrt{ \frac{\sum_{i=1}^{t} p_i \cdot V(i,r)}{n} }Q^{-1}\left(\epsilon\right) + \frac{1}{2n}\ln(n) + O\left(\frac{1}{n}\right),
\end{align}
where $\delta^*(i,r)$ and $V(i,r)$ are the Poltyrev's capacity and channel dispersion over the $i \times r$ MIMO channel under the constraint of \emph{FDT}, which are given in Theorem \ref{thm_mimo_main_result}.
For simplicity, let us denote $\bar{\delta}^* = \sum_{i=1}^{t} p_i\cdot\delta^*(i,r)$ and $\bar{V} = \sum_{i=1}^{t} p_i \cdot V(i,r)$ to get the following:
\begin{align}
\label{eq_converse_under_BDUT}
\delta \leq \bar{\delta}^* - \sqrt{ \frac{\bar{V}}{n} }Q^{-1}\left(\epsilon\right) + \frac{1}{2n}\ln(n) + O\left(\frac{1}{n}\right).
\end{align}
Notice that instead of zeroing the rest of the transmit antennas in any channel use where we are using only $i$ antennas we can use all of them by the transmission of the following vector
\begin{displaymath}
U\cdot\mathbf{x} = U \cdot
\left( \begin{array}{c}
x_{1} \\
\vdots\\
x_{i} \\
0     \\
\vdots\\
0     \\
\end{array} \right),
\end{displaymath}
where $U\in\mathbb{C}^{t \times t}$ is an arbitrary unitary matrix. Note that in any channel use we have the freedom to choose a different unitary matrix.
By this transmission we do not change the IC density and we get in any channel use an effective channel matrix of $\mathbf{H} \cdot U$, which has the same statistics as $\mathbf{H}$ of $t \times r$ i.i.d. circular symmetric Gaussian RV's with unit variance. Hence, those operations do not change the optimal transmission scheme and the converse of \eqref{eq_converse_under_BDUT} is still valid. Let us call such a transmission \emph{Block Diagonal Unitary Transmission (BDUT)}. Note that \emph{BDUT} spreads the transmitted signals among all the transmit antennas.

By taking the limit $n \to \infty$ and due to the averaging property, we get that the Poltyrev's capacity under this constraint of \emph{BDUT} holds the following:
\begin{equation}
\delta^* \leq \bar{\delta}^* \leq \max_{i\in\{1,\dots,t\}}\delta^*(i,r).
\end{equation}
Since $\delta^*(i,r)$ is also achievable according to Theorem \ref{thm_mimo_main_result}, then
\begin{equation}
\delta^* = \delta^*(t_{\text{opt}},r)
\end{equation}
where,
\begin{equation}
t_{\text{opt}} \triangleq \argmax_{i\in\{1,\dots,t\}}\delta^*(i,r).
\end{equation}
Hence, by taking $\mathbf{p}_{\text{opt}} \triangleq (p_1,\dots,p_{t_{\text{opt}}},\dots,p_t) = (0,\dots,1,\dots,0)$ for any fixed average error probability $\epsilon$ and for large enough $n$,  the highest achievable NLD under the constraint of \emph{BDUT} is given by the following
\begin{equation}
\delta^*(n,\epsilon) = \delta^* - \sqrt{\frac{V}{n}}Q^{-1}(\epsilon) + O\left(\frac{\ln(n)}{n}\right),
\end{equation}
where,
\begin{equation}
\label{eq_delta_star_BDUT}
\delta^* \triangleq \delta^*(t_{\text{opt}},r) = E\left\{\ln\left({\det\left(\frac{\mathbf{H}_{r \times t_{\text{opt}}}^{\dagger}\mathbf{H}_{r \times t_{\text{opt}}}}{\pi e\sigma^2}\right)}\right) \right\}
\end{equation}
and
\begin{align}
V \triangleq V(t_{\text{opt}},r) = t_{\text{opt}} + Var\left( \ln\left(\det(\mathbf{H}_{r \times t_{\text{opt}}}^{\dagger}\mathbf{H}_{r \times t_{\text{opt}}})\right) \right).
\end{align}
Although the Poltyrev's capacity under the constraint of \emph{BDUT} does not necessarily gives the optimal NLD without any constraint, we conjecture that $\delta^*$ in \eqref{eq_delta_star_BDUT} is actually the Poltyrev's capacity without any constraint.

Notice that this generalized dispersion result reveals a very surprising phenomena of infinite constellations in MIMO fading channels. In contrast to the capacity of finite constellations in MIMO fading channels \cite{Telatar}, the Potyrev's capacity can be increased by discarding part of its transmission dimensions.
Let's demonstrate this result by an example: assume a transmission over the $3 \times 3$ MIMO fading channel with noise variance of $\sigma^2$.
It can be seen in Figure \ref{fig_delta_star_vs_inv_sigma_sqr} that the inverse noise variance, or the SNR-like region, can be separated into 3 (or $t$ in the general case) regions of \emph{High}, \emph{Moderate} and \emph{Low} SNR regions. In the \emph{Low} SNR region the optimal number of transmit antennas equals $t_{\text{opt}} = 1$, in the \emph{Moderate} region $t_{\text{opt}} = 2$ and in the \emph{High} region $t_{\text{opt}} = t = 3$.
In other words, not for any inverse noise variance $\nicefrac{1}{\sigma^2}$, or SNR, the optimal number of transmit antennas equals to the full transmit dimension of $t$.
\begin{figure}[htp]
\center{\includegraphics[width=0.7\columnwidth]{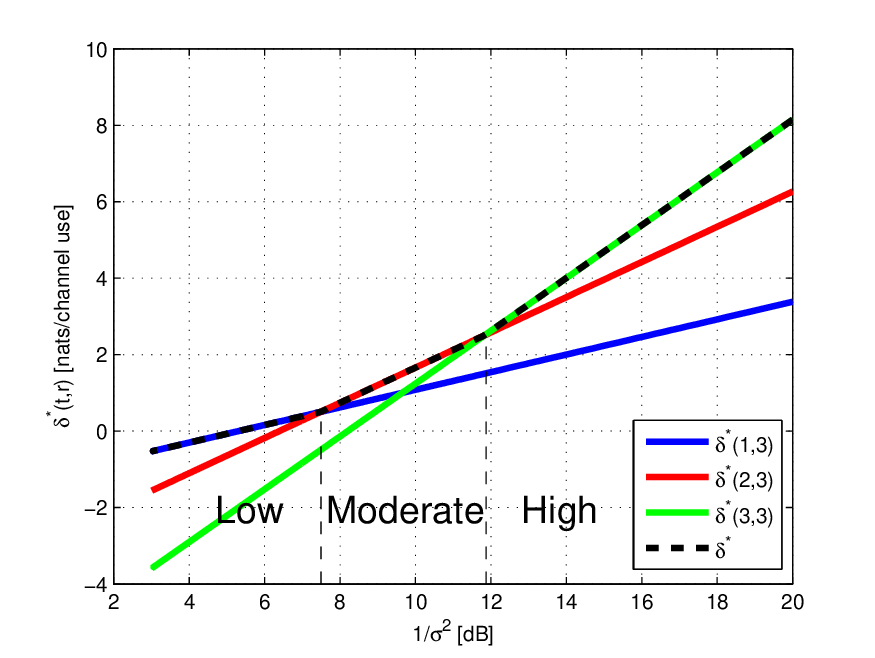}}
\caption{\label{fig_delta_star_vs_inv_sigma_sqr} The Poltyrev's capacities under the \emph{BDUT} and under the \emph{FDT} constraints vs. the SNR-like $\nicefrac{1}{\sigma^2}$ over the $3 \times 3$ MIMO fading channel.}
\end{figure}
In Figure \ref{fig_dispersion_vs_inv_sigma_sqr} we can see also the channel dispersion under the constraint of \emph{BDUT} as function of the inverse noise variance $\nicefrac{1}{\sigma^2}$.
\begin{figure}[htp]
\center{\includegraphics[width=0.7\columnwidth]{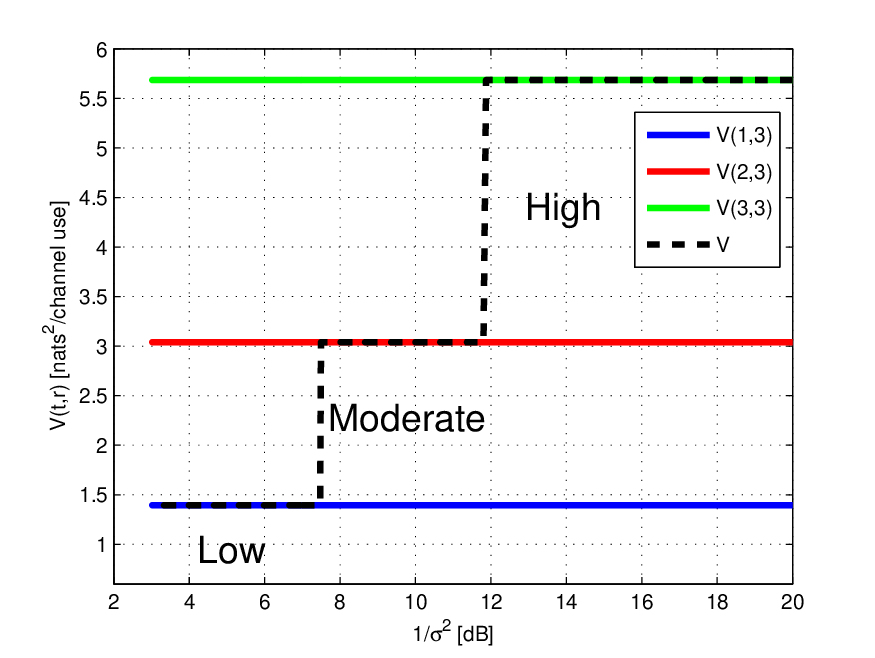}}
\caption{\label{fig_dispersion_vs_inv_sigma_sqr} The channel dispersions under the \emph{BDUT} and under the \emph{FDT} constraints vs. the SNR-like $\nicefrac{1}{\sigma^2}$ over the $3 \times 3$ MIMO fading channel.}
\end{figure}

Another interesting relation of this surprising result to finite constellations in MIMO fading channels is in the sense of \emph{Shaped Lattices}. Let us restrict the discussion to the case of lattices with a shaping of a complex hypercube, which without loss of generality, can be assumed to have unit volume. Inspired by the the results of Loeliger in \cite{Loeliger}, for any $i\in\{1,\dots,t\}$, there exists an $n \cdot i$ complex dimensional (translated) shaped lattice (in an $n \cdot i$ complex dimensional unit volume cube) with achievable rate of $R_i = \delta^*(i,r)$ with arbitrarily error probability under the suboptimal \emph{Lattice Decoder} (which is not aware to the shaping) over the $t \times r$ MIMO channel, when $n$ tends to infinity. Note that it was also conjectured in \cite{Loeliger} that this is the highest achievable rate of cube shaped lattices under \emph{Lattice Decoding}. Hence, in moderate and low SNR it seems beneficial to reduce dimensions in MIMO fading channels where we are using shaped lattices and \emph{Lattice Decoder}. This is in contrast to optimal finite constellations that can achieve the MIMO capacity by using all of the transmit dimensions.

In this section we generalized the previous MIMO dispersion result under the \emph{FDT} constraint and gave hints about the MIMO dispersion and Poltyrev's capacity without any constraint. Nevertheless, the general case of MIMO Poltyrev's capacity and channel dispersion of IC's without any constraint are still subjects for further research. In addition, the dispersion analysis of FC's over the power constrained MIMO fading channel is also a subject for further research.

\chapter{Summary and conclusions}
\label{chapter:Conclusions}
In this thesis we considered infinite constellations
over the fading channels, with perfect CSI available at the receiver.
We applied the ``dispersion analysis'', which provides the optimal asymptotic
relation between the achievable NLD and the block length for a given
error probability. This relation essentially quantifies the gap between the
optimal NLD (Poltyrev's capacity) and the highest attainable NLD at finite block
length and a fixed error probability.

We analyzed first the case of scalar fast fading channels, where the
fading process is a series of i.i.d. RV's. Using the dependence testing bound, the
sphere packing bound and some normal approximation techniques, we proved
that the dispersion analysis holds in that setting, and we also found
the relevant terms - Poltyrev's capacity and the channel dispersion.
Using similar, but more elaborate tools, we extended the analysis
to the general case of stationary fading processes. In that setting, we showed
that unlike the capacity, the channel dispersion is affected by the fading dynamics.
Moreover, in typical fading processes, this dispersion is increased relative to
the fast fading channel, with the same marginal fading distribution. This fact can
motivate the usage of random interleaver in practical systems with finite
block length.

In the setting of MIMO Rayleigh fast fading channels under the constraint
of \emph{Full Dimensional Transmission}, our analysis showed similar results,
which promise lower channel dispersion and greater Poltyrev's capacity, relative to the
independent parallel channels, due to the dependency between the received signals.
Partial analysis of IC's in the general MIMO case revealed a
very surprising phenomena of Poltyrev's capacities in MIMO fading channels:
In contrast to the capacity of FC's over MIMO fading channels, reducing the IC's transmission
dimension can increase the Poltyrev's capacity of the channel.

Finally, relations to the amplitude and to the power constrained fading channels were also
discussed, especially in terms of capacity, channel dispersion and error exponents.
These relations hint that in most cases, including SISO and \emph{FDT} MIMO the unconstrained
model can be interpreted as the limit of the constrained model, when the SNR tends to infinity.

There are still some open problems for further research. First, in our proof
of the direct part, we used the dependence testing bound, which is based on
a suboptimal decoder. Hence, a proof which is based on an optimal ML decoder,
can achieve a more refined result. We conjecture that the dispersion analysis
accuracy will be $O\left(\frac{1}{n}\right)$, and the highest achievable NLD,
in the setting of fixed error probability and finite block length, will increase
by $\frac{1}{2n}\ln(n)$.
In MIMO channels, the analysis presented here, is the first
that was done in that setting.
Hence, major problems to be analyzed are the case where the number of transmit antennas
is greater than the number of receive antennas and a completion of the analysis for the Poltyrev's
capacity and the channel dispersion without any constraint. In addition, the dispersion of
MIMO channels with spatial correlation, power constraint, memory and different fading distributions
are very interesting problems for further research.

\appendix
\chapter{Proof of the Regularization Lemma}
\label{app_proof_regularization_lemma}
\begin{proof}[Proof of Lemma \ref{lem_regularixation}]
Fix $\xi > 0$ and consider the receiver's IC $S_{\mathbf{H}}$, where $\mathbf{H}$ is not a \emph{$\xi$ - strong fading realization}. First, we will find large enough $a_*$ s.t. the density of the codewords in $S_{\mathbf{H}} \bigcap \mathbf{H}\cdot\text{Cb}(a_*)$, and the average error probability in transmitting codewords from it, over the AWGN channel, are close enough to $\gamma_{\text{rc}}(\mathbf{H})$ and $\epsilon(\mathbf{H})$. Then we will construct a regular IC by tiling this FC over the whole space $\mathbb{R}^n$. For this IC the desired bounds of the lemma will hold.

By definition we have
\begin{equation}
P_e\left(S_{\mathbf{H}}\right) = P_e(S|\mathbf{H}) = \epsilon(\mathbf{H}) = \limsup_{a \to \infty} \frac{1}{M(S_{\mathbf{H}},a)} \sum_{s_{\text{rc}} \in S_{\mathbf{H}} \bigcap \mathbf{H}\cdot\text{Cb}(a)} P_e(s_{\text{rc}} | \mathbf{H})
\end{equation}
\begin{equation}
\gamma_{\text{rc}}(\mathbf{H}) = \gamma_{\text{rc}} = \limsup_{a \to \infty} \frac{M(S_{\mathbf{H}},a)}{\text{Vol}(\mathbf{H}\cdot\text{Cb}(a))}
                                                    = \limsup_{a \to \infty} \frac{M(S_{\mathbf{H}},a)}{\det(\mathbf{H})a^n}.
\end{equation}
From the existence of the limits above there exists $a_0$ s.t. for every $a > a_0$ the following holds:
\begin{equation}
\label{eq_sup_error_probability}
\sup_{b > a} \frac{1}{M(S_{\mathbf{H}},b)} \sum_{s_{\text{rc}} \in S_{\mathbf{H}} \bigcap \mathbf{H}\cdot\text{Cb}(b)} P_e(s_{\text{rc}} | \mathbf{H}) < \epsilon(\mathbf{H})(1 + \xi/2)
\end{equation}
and
\begin{equation}
\label{eq_sup_density}
\sup_{b > a} \frac{M(S_{\mathbf{H}},b)}{\det(\mathbf{H})b^n} > \frac{\gamma_{\text{rc}}}{\sqrt{1 + \xi}}.
\end{equation}
Define $\Delta$ s.t.
\begin{equation}
2nQ\left( \frac{h^*_{\min}\Delta}{\sigma} \right) = \frac{\xi}{2}\cdot\epsilon(\mathbf{H}),
\end{equation}
and define $a_{\Delta}$ as the solution of
\begin{equation}
\frac{\text{Vol}(\mathbf{H}\cdot\text{Cb}(a_{\Delta} + 2\Delta))}{\text{Vol}(\mathbf{H}\cdot\text{Cb}(a_{\Delta}))}
= \left( \frac{a_{\Delta} + 2\Delta}{a_{\Delta}} \right)^n = \sqrt{1 + \xi}.
\end{equation}
Define $a_{\max} = \max(a_0,a_{\Delta})$. According to \eqref{eq_sup_error_probability} and \eqref{eq_sup_density} there exists $a_* > a_{\max}$ s.t.
\begin{align}
\begin{aligned}
\frac{1}{M(S_{\mathbf{H}},a_*)} \sum_{s_{\text{rc}} \in S_{\mathbf{H}} \bigcap \mathbf{H}\cdot\text{Cb}(a_*)} \hspace{-0.5cm} P_e(s_{\text{rc}} | \mathbf{H}) \leq \sup_{b > a_{\max}} \frac{1}{M(S_{\mathbf{H}},b)} \sum_{s_{\text{rc}} \in S_{\mathbf{H}} \bigcap \mathbf{H}\cdot\text{Cb}(b)} \hspace{-0.5cm} P_e(s_{\text{rc}} | \mathbf{H}) < \epsilon(\mathbf{H})(1 + \xi/2)
\end{aligned}
\end{align}
and
\begin{equation}
\label{eq_regularization_density_lower_bound}
\frac{M(S_{\mathbf{H}},a_*)}{\det(\mathbf{H})a_*^n} > \frac{\gamma_{\text{rc}}}{\sqrt{1 + \xi}}.
\end{equation}

Define the FC $G_{\mathbf{H}} = S_{\mathbf{H}} \bigcap \mathbf{H}\cdot\text{Cb}(a_*)$, and denote by $P_e^{G_{\mathbf{H}}}(s_{\text{rc}})$ the decoding error probability of any codeword $s_{\text{rc}} \in G_{\mathbf{H}}$ in transmission over the AWGN channel. Since $G_{\mathbf{H}} \subset S_{\mathbf{H}}$ then $P_e^{G_{\mathbf{H}}}(s_{\text{rc}}) \leq P_e(s_{\text{rc}}|\mathbf{H})$, and the average error probability of the FC is given by
\begin{equation}
\label{eq_FC_for_regularization_ub}
P_e(G_{\mathbf{H}}) = \frac{1}{|G_{\mathbf{H}}|} \sum_{s_{\text{rc}} \in G_{\mathbf{H}}} {P_e^{G_{\mathbf{H}}}(s_{\text{rc}})}
                 \leq \frac{1}{|G_{\mathbf{H}}|} \sum_{s_{\text{rc}} \in G_{\mathbf{H}}} {P_e(s_{\text{rc}}|\mathbf{H})}
                    < \epsilon(\mathbf{H})(1 + \xi/2).
\end{equation}
Now, we will create a regular IC, denoted by $S^{'}_{\mathbf{H}}$, by tiling the FC $G_{\mathbf{H}}$ to the whole space $\mathbb{R}^n$ in the
following way:
\begin{equation}
S^{'}_{\mathbf{H}} = \left\{ s_{\text{rc}} + \mathbf{H} \cdot I \cdot (a_* + 2\Delta) : s_{\text{rc}} \in G_{\mathbf{H}}, I \in \mathbb{Z}_n \right\},
\end{equation}
where $\mathbb{Z}_n$ is the $n$ dimensional integers lattice.

The error probability of any $s_{\text{rc}} \in S^{'}_{\mathbf{H}}$ equals the probability of decoding by a mistake to another codeword from the same copy of the FC $G_{\mathbf{H}}$ or to a codeword in another copy. Hence, the average error probability of $S^{'}_{\mathbf{H}}$, with equiprobable codewords transmission over the AWGN channel, can be upper bounded by the union bound as follows:
\begin{equation}
\label{eq_regularized_IC_upper_bound}
P_e\left(S^{'}_{\mathbf{H}}\right) \leq P_e\left(G_{\mathbf{H}}\right) + \sum_{i=1}^{n}{2Q\left(\frac{H_i\Delta}{\sigma}\right)}.
\end{equation}
Since the given fading channel realization is not a \emph{$\xi$ - strong fading realization}, and from the definition of $\Delta$ we obtain:
\begin{equation}
\label{eq_sum_of_Q_upper_bound}
\sum_{i=1}^{n}{2Q\left(\frac{H_i\Delta}{\sigma}\right)} \leq 2nQ\left(\frac{h^*_{\min}\Delta}{\sigma}\right) = \frac{\xi}{2}\cdot\epsilon(\mathbf{H}),
\end{equation}
where $h^*_{\min}(\xi)$ is the solution of $Pr\{H_{\min} \leq h^*_{\min} \} = \xi$. Combining \eqref{eq_FC_for_regularization_ub}, \eqref{eq_regularized_IC_upper_bound} and \eqref{eq_sum_of_Q_upper_bound} we obtain the desired result:
\begin{align}
P_e\left(S^{'}_{\mathbf{H}}\right) \leq \epsilon(\mathbf{H})(1 + \xi).
\end{align}
The density of $S^{'}_{\mathbf{H}}$ is given by
\begin{equation}
\gamma_{\text{rc}}^{'}(\mathbf{H}) = \gamma_{\text{rc}}^{'} = \frac{|G_{\mathbf{H}}|}{\text{Vol}(\mathbf{H}\cdot\text{Cb}(a_* + 2\Delta))}
                                                            = \frac{M(S_{\mathbf{H}}, a_*)}{\det(\mathbf{H})a_*^n}\cdot\left(\frac{a_*}{a_*+2\Delta} \right)^n.
\end{equation}
Combining \eqref{eq_regularization_density_lower_bound} with the definition of $a_{\Delta}$ and the fact that $a_* > a_{\Delta}$ we obtain the desired result:
\begin{align}
\gamma_{\text{rc}}^{'} > \frac{\gamma_{\text{rc}}}{1 + \xi}.
\end{align}

Let us denote by $\mathbf{H} = \text{diag}(h_1, \dots, h_n)$ the given channel realization. By its construction, for any $s_{\text{rc}} \in S^{'}_{\mathbf{H}}$, the set of points $\{s_{\text{rc}} \pm h_i\cdot (a_* + 2\Delta) \cdot \underline e_i, i=1, \dots, n \}$ is also in $S^{'}_{\mathbf{H}}$, where $\{\underline e_i\}_{i=1}^n$ is the standard basis of $\mathbb{R}^n$. Hence, any Voronoi cell of $S^{'}_{\mathbf{H}}$ is contained within a sphere of radius $r_0 \triangleq \sqrt{n}(a_* + 2\Delta)h_{\max}$ centered around its codeword, where $h_{\max} \triangleq \max(h_1, \dots, h_n)$. This proves that $S^{'}_{\mathbf{H}}$ is indeed a regular IC.
\end{proof}
\chapter{Proof of the Log of Chi Square Distribution Lemma}
\label{app_proof_lemma_ln_chi2_density_moments}
\begin{proof}[Proof of Lemma \ref{lem_ln_chi2_pdf}]
By simple variables substitution, we get the following relation between the CDFs of $Y_n$ and $X$:
\begin{equation}
\label{eq_Yn_Chi2_CDF}
F_{Y_n}(y) = F_{\chi^2_n}(ne^{\sqrt{\frac{2}{n}}y}).
\end{equation}
Then, if we differentiate \eqref{eq_Yn_Chi2_CDF} w.r.t. $y$ we will get the following relation between the RVs' PDFs:
\begin{equation}
\label{eq_Yn_Chi2_PDF}
f_{Y_n}(y) = \sqrt{2n}e^{\sqrt{\frac{2}{n}}y}f_{\chi^2_n}(ne^{\sqrt{\frac{2}{n}}y}).
\end{equation}
Assignment of the $\chi^2_n$'s PDF, $f_{\chi^2_n}(x) = \frac{x^{\frac{n}{2}-1}e^{-\frac{x}{2}}}{2^\frac{n}{2}\Gamma(\frac{n}{2})}, x>0$ will give us
\begin{equation}
f_{Y_n}(y) = \frac{ (\frac{n}{2})^{\frac{n-1}{2}}}{\Gamma(\frac{n}{2})}e^{\sqrt{\frac{n}{2}}y - \frac{n}{2}e^{\sqrt{\frac{2}{n}}y}} \nonumber,
\end{equation}
which completes the proof of \eqref{eq_Yn_PDF}.
From the Stirling approximation for the Gamma function for $z\in\mathbb{R}$ we get
\begin{equation}
\label{eq_Gamma_Stirling_approximation}
\Gamma(z+1)=z\Gamma(z)=\sqrt{2\pi e}\left(\frac{z}{e}\right)^z\left(1+O\left(\frac{1}{z}\right)\right).
\end{equation}
Using \eqref{eq_Gamma_Stirling_approximation} for $z=\frac{n}{2}$ we get
\begin{equation}
\label{eq_Gamma__n_div_2_approximation}
\Gamma\left(\frac{n}{2}\right)=\frac{\Gamma(\frac{n}{2}+1)}{\frac{n}{2}}=\sqrt{\frac{4\pi}{n}}\left(\frac{n}{2e}\right)^{\frac{n}{2}}\left(1+O\left(\frac{1}{n}\right)\right).
\end{equation}
The assignment of \eqref{eq_Gamma__n_div_2_approximation} in \eqref{eq_Yn_PDF} gives us
\begin{align}
\begin{aligned}
\label{align_Yn_PDF_approximation_1}
f_{Y_n}(y) &= \frac{1}{\sqrt{2\pi}}e^{\frac{n}{2} + \sqrt{\frac{n}{2}}y - \frac{n}{2}e^{\sqrt{\frac{2}{n}}y}}\left(\frac{1}{1+O\left(\frac{1}{n}\right)}\right)
\\         &= \frac{1}{\sqrt{2\pi}}e^{\frac{n}{2} + \sqrt{\frac{n}{2}}y - \frac{n}{2}e^{\sqrt{\frac{2}{n}}y}}\left(1+O\left(\frac{1}{n}\right)\right),
\end{aligned}
\end{align}
for any $n > N_0$, for some finite $N_0$.
\newline\indent
By Taylor's theorem for $g(x) = e^x$ around $x_0 = 0$, the following holds:
\begin{equation}
g(x) = \sum_{k=0}^{K}\frac{x^k}{k!} + \frac{e^{\zeta}x^{K+1}}{(K+1)!},
\end{equation}
for some real number $\zeta \in [0,x]$.
Using it with $K=2$ and $x \equiv \sqrt{\frac{2}{n}}y$ we obtain:
\begin{equation}
e^{\sqrt{\frac{2}{n}}y} = 1 + \sqrt{\frac{2}{n}}y + {\frac{1}{n}}y^2 + \frac{\sqrt{2}e^{\zeta(y)}}{3n^{\frac{3}{2}}}y^3,
\end{equation}
where for $y\in[-n^{\frac{1}{6}},n^{\frac{1}{6}}]$, then $\zeta(y)\in[-\frac{\sqrt{2}}{n^{\frac{1}{3}}},\frac{\sqrt{2}}{n^{\frac{1}{3}}}]$.
\newline
Assigning it in \eqref{align_Yn_PDF_approximation_1} , for any $n > N_0$ and for $y\in[-n^{\frac{1}{6}},n^{\frac{1}{6}}]$, gives us:
\begin{equation}
\label{eq_Yn_PDF_approximation_2}
f_{Y_n}(y) = \frac{1}{\sqrt{2\pi}}e^{-\frac{y^2}{2}} \cdot e^{-\frac{e^{\zeta(y)}}{3\sqrt{2n}}y^3}\left({1+O\left(\frac{1}{n}\right)}\right).
\end{equation}
Using Taylor's theorem again with $K=0$ and $x \equiv -\frac{e^{\zeta(y)}}{3\sqrt{2n}}y^3$ we obtain:
\begin{equation}
e^{-\frac{e^{\zeta(y)}}{3\sqrt{2n}}y^3} = 1 - \frac{e^{-\eta(y)} \cdot e^{\zeta(y)}}{3\sqrt{2n}}y^3,
\end{equation}
where for $y\in[-n^{\frac{1}{6}-\delta},n^{\frac{1}{6}-\delta}]$ for some $0 \leq \delta < \frac{1}{6}$, then $\eta(y)\in(-\frac{1}{n^{3\delta}},\frac{1}{n^{3\delta}})$.
\newline
Combining all the above, we get that for any $n > N_0$, and for $y\in[-n^{\frac{1}{6}-\delta},n^{\frac{1}{6}-\delta}]$ for some $0 \leq \delta < \frac{1}{6}$:
\begin{equation}
\label{eq_Yn_PDF_approximation_2}
f_{Y_n}(y) = \frac{1}{\sqrt{2\pi}}e^{-\frac{y^2}{2}} - \frac{e^{\nu(y)}}{6\sqrt{\pi}}\cdot\frac{y^3e^{-\frac{y^2}{2}}}{\sqrt{n}} + O\left(\frac{e^{-\frac{y^2}{2}}}{n}\right),
\end{equation}
where $\nu(y)\triangleq\zeta(y)-\eta(y)$ and $|\nu(y)| < \frac{1}{n^{3\delta}} + \frac{\sqrt{2}}{n^{\frac{1}{3}}}$.

By definition $e_n(y) \triangleq f_{Y_n}(y) - N(0,1)$, then:
\begin{align}
\begin{aligned}
e_n &\triangleq \int_{-\infty}^{\infty}|e_n(y)|dy
\\  &\leq \int_{|y|\leq n^{\frac{1}{6}}}|e_n(y)|dy + \int_{|y|>n^{\frac{1}{6}}}f_{Y_n}(y)dy + \int_{|y|>n^{\frac{1}{6}}}N(0,1)dy
\\  &= \int_{|y|\leq n^{\frac{1}{6}}}|e_n(y)|dy + 1 - \int_{|y|\leq n^{\frac{1}{6}}}f_{Y_n}(y)dy + \int_{|y|>n^{\frac{1}{6}}}N(0,1)dy
\\  &= \int_{|y|\leq n^{\frac{1}{6}}}|e_n(y)|dy + 1 - \int_{|y|\leq n^{\frac{1}{6}}}N(0,1)dy - \int_{|y|\leq n^{\frac{1}{6}}}e_n(y)dy + \int_{|y|>n^{\frac{1}{6}}}N(0,1)dy
\\  &= \int_{|y|\leq n^{\frac{1}{6}}}|e_n(y)|dy - \int_{|y|\leq n^{\frac{1}{6}}}e_n(y)dy + 2\int_{|y|>n^{\frac{1}{6}}}N(0,1)dy
\\  &\leq 2\int_{|y|\leq n^{\frac{1}{6}}}|e_n(y)|dy + 4Q(n^{\frac{1}{6}})
\\  &= O\left(\int_{-\infty}^{\infty}\frac{|y|^3e^{-\frac{y^2}{2}}}{\sqrt{n}}dy\right) + O\left(\int_{-\infty}^{\infty}\frac{e^{-\frac{y^2}{2}}}{n}dy\right) + O\left(e^{-\frac{n^{\frac{1}{3}}}{2}}\right) = O\left(\frac{1}{\sqrt{n}}\right),
\end{aligned}
\end{align}
which completes the proof of \eqref{eq_Yn_asymptotic_PDF}.
\newline\indent
Now let get some insight about the result.
By taking some $0 < \delta < \frac{1}{6}$, we can see that $\nu(y) \approx 0$, for $y\in[-n^{\frac{1}{6}-\delta},n^{\frac{1}{6}-\delta}]$. Hence, in that range, we can get the following approximation:
\begin{equation}
\label{eq_e_n_y_approx}
e_n(y) \approx -\frac{y^3e^{-\frac{y^2}{2}}}{6\sqrt{\pi n}}.
\end{equation}
By taking the derivative of \eqref{eq_e_n_y_approx} w.r.t. $y$ and the comparison to zero, we can observe that $y_0 \approx \pm \sqrt{3}$ are the points of the maximal (absolute) errors regardless of $n$, which equals $e_n(y_0) \approx \mp \frac{\sqrt{3}}{2e^{\frac{3}{2}}\sqrt{\pi n}} \approx \mp \frac{0.1}{\sqrt{n}}$. This property and also the great accuracy between the numerical calculation of $e_n(y)$ and its theoretical approximation can be seen in Figure \ref{fig_e_n_y} for $n=10^4$.
\begin{figure}[htp]
\center{\includegraphics[width=0.7\columnwidth]{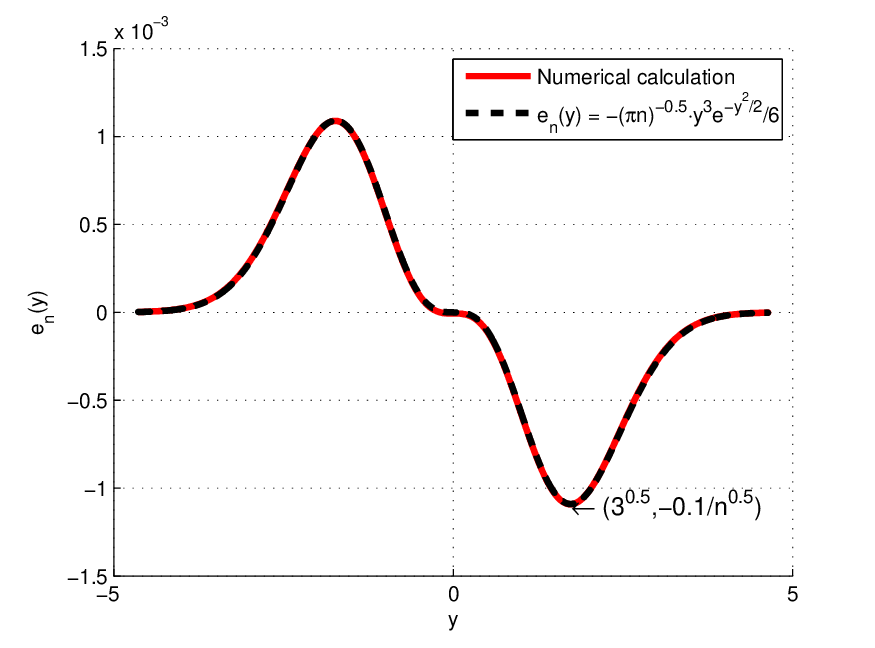}}
\caption{\label{fig_e_n_y} Numerical calculation and theoretical approximation of $e_n(y)$ for $n=10^4$.}
\end{figure}
\newline
Since, this factor contributes the most to the total error $e_n$, we can approximate it for large enough $n$, by the following:
\begin{equation}
\label{eq_e_n_approx}
e_n \approx \int_{-\infty}^{\infty}\frac{|y|^3e^{-\frac{y^2}{2}}}{6\sqrt{\pi n}}dy = \frac{2}{3\sqrt{\pi n}}.
\end{equation}
The great accuracy of \eqref{eq_e_n_approx} as function of $n$ can be seen in Figure \ref{fig_e_n}.
\begin{figure}[htp]
\center{\includegraphics[width=0.7\columnwidth]{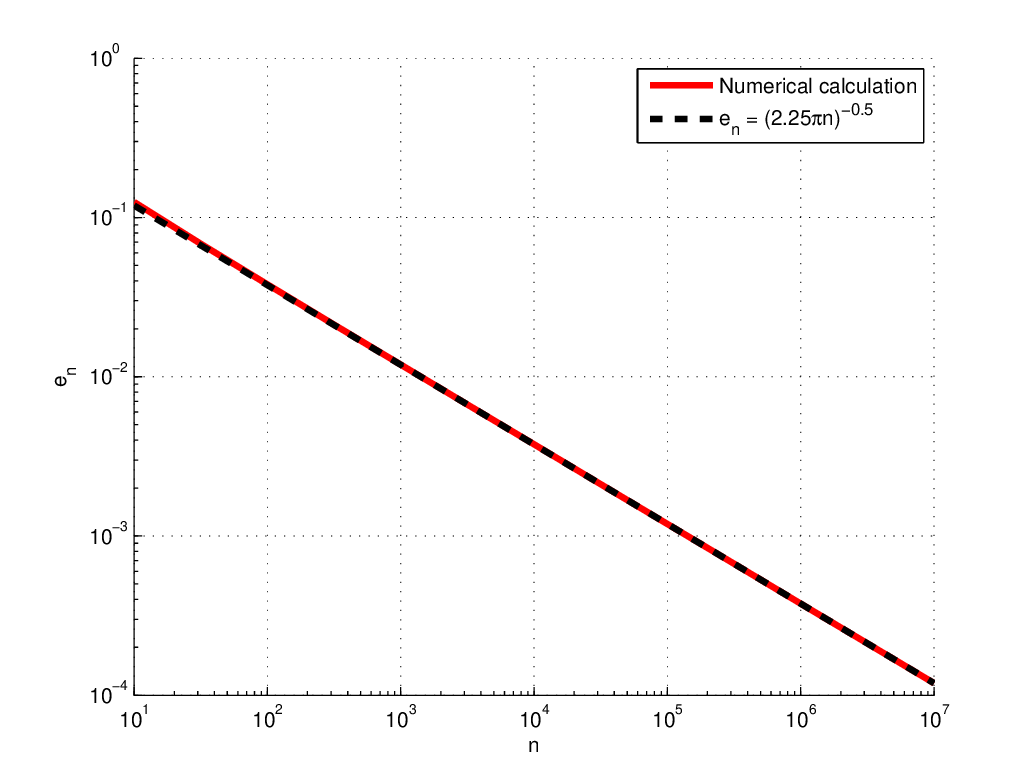}}
\caption{\label{fig_e_n} Numerical calculation and theoretical approximation of $e_n$ as function of $n$.}
\end{figure}
\end{proof}
\chapter{Proof of the Sum of Two Almost Normal RVs Lemma}
\label{app_proof_lemma_sum_of_almost_normal_RVs}
\begin{proof}[Proof of Lemma \ref{lem_sum_of_almost_normal_RVs}]
$X_1$ and $X_2$ are independent. Hence, by definition, the CDF of $Y$ is given by
\begin{align}
\begin{aligned}
F_Y(y) &\triangleq Pr\{ Y \leq y \}
\\     &= Pr\{ X_1 + X_2 \leq y \}
\\     &= \int_{-\infty}^{y}{f_{X_1}(x) \cdot F_{X_2}(y - x)}dx.
\end{aligned}
\end{align}
By the assignment of $f_{X_1}(x)$ and $F_{X_2}(x)$ given by the lemma, we can obtain the following:
\begin{align}
\begin{aligned}
F_Y(y) &= \int_{-\infty}^{y}{N(0,\sigma_1^2) \cdot F_{N(0,\sigma_2^2)}(y - x)}dx
\\     &+ O\left( \int_{-\infty}^{y}{ e_n(x) \cdot F_{N(0,\sigma_2^2)}(y - x)}dx \right)
\\     &+ O\left( \int_{-\infty}^{y}{ \frac{N(0,\sigma_1^2)}{\sqrt{n}}}dx \right)
\\     &+ O\left( \int_{-\infty}^{y}{ \frac{e_n(x)}{\sqrt{n}} }dx \right).
\end{aligned}
\end{align}
Since, $F_{N(0,\sigma_y^2)}(y) = \int_{-\infty}^{y}{N(0,\sigma_1^2) \cdot F_{N(0,\sigma_2^2)}(y - x)}dx$, we can get
\begin{align}
\begin{aligned}
|F_Y(y) - F_{N(0,\sigma_y^2)}(y)| &\leq O\left( \int_{-\infty}^{\infty}{ |e_n(x)| }dx \right)
+ O\left( \int_{-\infty}^{\infty}{ \frac{N(0,\sigma_1^2)}{\sqrt{n}}}dx \right)
+ O\left( \int_{-\infty}^{\infty}{ \frac{|e_n(x)|}{\sqrt{n}} }dx \right)
\\ &= O\left(\frac{1}{\sqrt{n}}\right) + O\left(\frac{1}{\sqrt{n}}\right) + O\left(\frac{1}{n}\right) = O\left(\frac{1}{\sqrt{n}}\right),
\end{aligned}
\end{align}
which completes the proof of \eqref{eq_sum_of_almost_normal_CDF}.
\end{proof}
\chapter{Lemma D.1}
\label{app_lemma_47_in_Polyanskiy}
\begin{lem}
\label{lem_47_in_Polyanskiy}
Let $Z_1,Z_2,\dots,Z_n$ be independent random variables, $\sigma^2=\sum_{i=1}^{n}{Var(Z_i)}$ be non-zero and $T=\sum_{i=1}^{n}{E\{|Z_i-E\{Z_i\}|^3\}}<\infty$; then for any $A$
\begin{equation}
E\left\{e^{-\sum_{i=1}^{n}{Z_i}}1_{\left\{\sum_{i=1}^{n}{Z_i}>A\right\}}\right\}  \leq 2\left(\frac{\ln(2)}{\sqrt{2\pi}} + \frac{12T}{\sigma^2}\right)\frac{1}{\sigma}e^{-A}.
\end{equation}
\end{lem}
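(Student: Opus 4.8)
The plan is to reduce the expectation to a Lebesgue--Stieltjes integral over the tail and then control that integral by slicing it into geometrically spaced intervals on which the exponential weight is essentially constant. Write $S \triangleq \sum_{i=1}^{n} Z_i$, let $\mu \triangleq E\{S\}$ and note that $\sigma^2 = Var(S)$. Then
\begin{equation}
E\left\{e^{-S}1_{\{S>A\}}\right\} = \int_{(A,\infty)} e^{-s}\,dF_S(s),
\end{equation}
where $F_S$ is the CDF of $S$. First I would partition $(A,\infty)$ into the intervals $I_k \triangleq (A+k\ln 2,\, A+(k+1)\ln 2]$ for $k=0,1,2,\dots$. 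The choice of width $\ln 2$ is what will eventually produce the factor $2$ in the statement.

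On each slice $I_k$ the integrand satisfies $e^{-s}\leq e^{-(A+k\ln 2)} = e^{-A}2^{-k}$, so that
\begin{equation}
\int_{(A,\infty)} e^{-s}\,dF_S(s) \leq e^{-A}\sum_{k=0}^{\infty}2^{-k}\,\big(F_S(A+(k+1)\ln 2) - F_S(A+k\ln 2)\big).
\end{equation}
The key step is to bound the CDF increment over an interval of length $\ln 2$ uniformly in its location. I would add and subtract the Gaussian CDF $F_{N(\mu,\sigma^2)}$ and invoke the Berry--Esseen theorem for sums of independent (not necessarily identically distributed) random variables, which gives $\abs{F_S(s) - F_{N(\mu,\sigma^2)}(s)} \leq 6T/\sigma^3$ for every $s$. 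Applying this at both endpoints and bounding the Gaussian increment by the maximum of the $N(\mu,\sigma^2)$ density times the interval length, namely $\frac{\ln 2}{\sigma\sqrt{2\pi}}$, yields
\begin{equation}
F_S(A+(k+1)\ln 2) - F_S(A+k\ln 2) \leq \frac{\ln 2}{\sigma\sqrt{2\pi}} + \frac{12T}{\sigma^3},
\end{equation}
a bound independent of $k$.

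Substituting this uniform estimate and summing the geometric series $\sum_{k\geq0}2^{-k}=2$ gives
\begin{equation}
E\left\{e^{-S}1_{\{S>A\}}\right\} \leq 2\left(\frac{\ln 2}{\sqrt{2\pi}} + \frac{12T}{\sigma^2}\right)\frac{1}{\sigma}e^{-A},
\end{equation}
which is exactly the claim. The main obstacle is the uniform control of the CDF increment: everything hinges on having a Berry--Esseen estimate valid for non-identically distributed summands with an explicit constant (the $6$ that becomes $12$ after using it at the two endpoints), together with the observation that the Gaussian part of the increment is controlled purely by the density bound and is therefore uniform in the interval's position. I expect the geometric slicing and the summation to be routine once that uniform increment bound is in hand; the only additional care needed is the measure-theoretic justification of the slicing for a general (possibly discontinuous) $F_S$, which is immediate since we are integrating a bounded measurable function against a probability measure.
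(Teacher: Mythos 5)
Your argument is correct. The paper does not reproduce a proof of this lemma---it simply cites \cite[Lemma 47]{Polyanskiy}---but your geometric slicing of the tail $(A,\infty)$ into intervals of length $\ln 2$, combined with the Berry--Esseen bound for independent non-identically distributed summands applied at both endpoints of each slice and the uniform bound $\frac{1}{\sigma\sqrt{2\pi}}$ on the Gaussian density, is precisely the argument used there, so this is a faithful and complete reconstruction of the cited proof.
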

\begin{proof}
See \cite[Lemma 47]{Polyanskiy}.
\end{proof}
\chapter{The Channel Output Given CSI Distribution Lemma}
\label{app_lemma_Y_given_H_pdf}
\begin{lem}
\label{lem_Y_given_H_PDF}
Suppose that $Y = H\cdot X + Z$, where $X\sim U(-\frac{a}{2},\frac{a}{2})$ and $Z\sim N(0, \sigma^2)$ are independent RVs. If $H$ is also a random variable independent of $X$ and $Z$, then
\begin{equation}
\label{eq_Y_given_H_PDF}
f(y|h) = \frac{1}{ah}\left(Q\left(\frac{y}{\sigma} - \frac{ah}{2\sigma}\right) - Q\left(\frac{y}{\sigma} + \frac{ah}{2\sigma}\right)\right).
\end{equation}
\end{lem}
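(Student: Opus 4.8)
The plan is to compute the conditional density $f(y\mid h)$ directly as a convolution, exploiting the fact that conditioning on $H=h$ reduces the problem to the fixed-gain additive Gaussian noise channel $Y=hX+Z$ with $X$ and $Z$ independent. The distribution of the (possibly random) $H$ therefore never enters the computation; only the value $h$ and the independence of $X$, $Z$ matter.

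First I would observe that, given $H=h$ (with $h>0$, as we may assume by the CSI convention), the scaled input $hX$ is uniform on the interval $[-ah/2,\,ah/2]$ with constant density $\frac{1}{ah}$. Since $Z\sim N(0,\sigma^2)$ is independent of $X$, the conditional density of $Y$ is the convolution of this uniform density with the Gaussian density, i.e.
\begin{equation}
f(y\mid h) = \frac{1}{ah}\int_{-ah/2}^{ah/2} \frac{1}{\sqrt{2\pi}\,\sigma}\,e^{-(y-u)^2/(2\sigma^2)}\,du. \nonumber
\end{equation}

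Next I would apply the substitution $v=(u-y)/\sigma$ to turn the integrand into the standard normal density $\frac{1}{\sqrt{2\pi}}e^{-v^2/2}$, so the integral becomes $\int_{(-ah/2-y)/\sigma}^{(ah/2-y)/\sigma}\frac{1}{\sqrt{2\pi}}e^{-v^2/2}\,dv$. Expressing this in terms of the complementary Gaussian CDF via $\int_{\alpha}^{\beta}\frac{1}{\sqrt{2\pi}}e^{-v^2/2}\,dv = Q(\alpha)-Q(\beta)$ gives the intermediate form $f(y\mid h)=\frac{1}{ah}\big[\,Q(\frac{-ah/2-y}{\sigma})-Q(\frac{ah/2-y}{\sigma})\,\big]$.

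Finally, to reach the stated expression I would invoke the reflection identity $Q(-x)=1-Q(x)$ on each term; the two additive constants cancel, and after rewriting the arguments one recovers exactly $\frac{1}{ah}\big(Q(\frac{y}{\sigma}-\frac{ah}{2\sigma})-Q(\frac{y}{\sigma}+\frac{ah}{2\sigma})\big)$. There is no genuine obstacle here beyond careful sign bookkeeping in the $Q$-function arguments: the whole statement is an elementary conditional-density calculation, and the only conceptual point worth flagging explicitly is that the convolution step is legitimate precisely because $X$, $Z$ (and $H$) are mutually independent, so that conditioning on $H=h$ leaves $X$ and $Z$ independent with their original marginals.
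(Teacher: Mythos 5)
Your proposal is correct and follows essentially the same route as the paper: both compute $f(y\mid h)$ as the convolution of the uniform density of $hX$ on $[-ah/2,ah/2]$ with the Gaussian density of $Z$, reduce the integral to a difference of $Q$-functions by a change of variables, and finish with the reflection $Q(-x)=1-Q(x)$ to put the arguments in the stated form. The sign bookkeeping in your final step checks out, so no issues.
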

\begin{proof}
\begin{align}
\label{align_Y_given_H_PDF_prrof}
\begin{aligned}
f(y|h) &= \int_{-\infty}^{\infty}{f(y,x|h)}dx
\\     &= \int_{-\infty}^{\infty}{f(x)f(y|x,h)}dx
\\     &= \int_{-\frac{a}{2}}^{\frac{a}{2}}{\frac{1}{a}\cdot\frac{1}{\sqrt{2\pi\sigma^2}}e^{-\frac{(y-hx)^2}{2\sigma^2}}}dx
\\     &= \int_{-\frac{ah}{2}}^{\frac{ah}{2}}{\frac{1}{ah}\cdot\frac{1}{\sqrt{2\pi\sigma^2}}e^{-\frac{(x-y)^2}{2\sigma^2}}}dx
\\     &= \frac{1}{ah}\left( \int_{-\frac{ah}{2}}^{\infty}{\frac{1}{\sqrt{2\pi\sigma^2}}e^{-\frac{(x-y)^2}{2\sigma^2}}}dx - \int_{\frac{ah}{2}}^{\infty}{\frac{1}{\sqrt{2\pi\sigma^2}}e^{-\frac{(x-y)^2}{2\sigma^2}}}dx \right)
\\     &= \frac{1}{ah}\left( Q\left(-\frac{ah}{2\sigma} - \frac{y}{\sigma}\right) - Q\left(\frac{ah}{2\sigma} - \frac{y}{\sigma}\right) \right)
\\     &= \frac{1}{ah}\left( Q\left(\frac{y}{\sigma} - \frac{ah}{2\sigma}\right) - Q\left(\frac{y}{\sigma} + \frac{ah}{2\sigma}\right)\right).
\end{aligned}
\end{align}
\end{proof}
\chapter{Proof of the Information Density's Moments Lemma}
\label{app_proof_lemma_information_density_moments}
\begin{proof}[Proof of Lemma \ref{lem_information_density_moments}]
The information density is given by
\begin{align}
\begin{aligned}
i(x;y,h) &\triangleq \ln\left(\frac{f(x,y,h)}{f(x)f(y,h)}\right)
\\       &= \ln\left(\frac{f(x)f(h)f(y|h,x)}{f(x)f(h)f(y|h)}\right)
\\       &= \ln\left(\frac{f(y|h,x)}{f(y|h)}\right)
\\       &= \ln\left(\frac{f(z=y-hx)}{f(y|h)}\right)
\\       &= \ln\left( \frac{1}{\sqrt{2\pi\sigma^2}}e^{-\frac{z^2}{2\sigma^2}} \right) - \ln\left(\frac{1}{ah}\left(Q\left(\frac{y}{\sigma} - \frac{ah}{2\sigma}\right) - Q\left(\frac{y}{\sigma} + \frac{ah}{2\sigma}\right)\right)\right)
\\       &= \frac{1}{2}\ln\left(\frac{a^2h^2}{2\pi e\sigma^2}\right) - \frac{z^2-\sigma^2}{2\sigma^2} - \ln\left(Q\left(\frac{y}{\sigma} - \frac{ah}{2\sigma}\right) - Q\left(\frac{y}{\sigma} + \frac{ah}{2\sigma}\right)\right)
\\       &= \frac{1}{2}\ln\left(\frac{a^2h^2}{2\pi e\sigma^2}\right) - \frac{z^2-\sigma^2}{2\sigma^2} + e_{a/\sigma}(y,h)
\end{aligned}
\end{align}
where $f(y|h)$ is given by Lemma \ref{lem_Y_given_H_PDF} (see Appendix \ref{app_lemma_Y_given_H_pdf}) and the following definition of
\begin{equation}
e_{a/\sigma}(y,h) \triangleq - \ln\left(Q\left(\frac{y}{\sigma} - \frac{ah}{2\sigma}\right) - Q\left(\frac{y}{\sigma} + \frac{ah}{2\sigma}\right)\right)\geq0.
\end{equation}
Define the three error's moments for $i=1,2,3$ by
\begin{align}
e_{a/\sigma,i} &\triangleq E\{e_{a/\sigma}^{i}(Y,H)\}
\\             &= E\{E\{e_{a/\sigma}^{i}(Y,H)|H\}\}
\\             &= E\{e_{a/\sigma,i}(H)\}
\end{align}
where,
\begin{align}
\begin{aligned}
&e_{a/\sigma,i}(h) \triangleq E\{e_{a/\sigma}^{i}(Y,H)|H=h\}
\\              &= (-1)^i\int_{-\infty}^{\infty}{\frac{1}{ah}\left(Q\left(\frac{y}{\sigma} - \frac{ah}{2\sigma}\right) - Q\left(\frac{y}{\sigma} + \frac{ah}{2\sigma}\right)\right)\ln^{i}\left(Q\left(\frac{y}{\sigma} - \frac{ah}{2\sigma}\right) - Q\left(\frac{y}{\sigma} + \frac{ah}{2\sigma}\right)\right)}dy
\\              &= (-1)^i\frac{\sigma}{ah}\int_{-\infty}^{\infty}{\left(Q\left(y - \frac{ah}{2\sigma}\right) - Q\left(y + \frac{ah}{2\sigma}\right)\right)\ln^i\left(Q\left(y - \frac{ah}{2\sigma}\right) - Q\left(y + \frac{ah}{2\sigma}\right)\right)}dy
\\              &= \frac{\sigma}{ah}\eta_i\left(\frac{ah}{\sigma}\right)
\end{aligned}
\end{align}
and
\begin{equation}
\eta_i\left(\frac{ah}{\sigma}\right) \triangleq (-1)^{i}\int_{-\infty}^{\infty}{\left(Q\left(y - \frac{ah}{2\sigma}\right) - Q\left(y + \frac{ah}{2\sigma}\right)\right)\ln^{i}\left(Q\left(y - \frac{ah}{2\sigma}\right) - Q\left(y + \frac{ah}{2\sigma}\right)\right)}dy
\end{equation}
for $i = 1,2,3$.
As can be seen in Figure \ref{fig_eta_Vs_h}, the function $\eta_i\left(\frac{ah}{\sigma}\right)$ is nonnegative, bounded and asymptotically converges to a constant for any $i = 1,2,3$. The function is also monotonically nondecreasing for $i=1$.
\noindent
For small values of $ah/\sigma$, we can approximate $\eta_i\left(\frac{ah}{\sigma}\right)$ by
\begin{align}
\begin{aligned}
\eta_i\left(\frac{ah}{\sigma}\right) &= -(-1)^{i}\frac{ah}{\sigma}\int_{-\infty}^{\infty}{\frac{Q(y + \frac{ah}{2\sigma}) - Q(y - \frac{ah}{2\sigma})}{\frac{ah}{\sigma}}\ln^{i}\left(-\frac{ah}{\sigma}\frac{Q(y + \frac{ah}{2\sigma}) - Q(y - \frac{ah}{2\sigma})}{\frac{ah}{\sigma}}\right)}dy
\\                        &\approx -(-1)^{i}\frac{ah}{\sigma}\int_{-\infty}^{\infty}{Q'(y)\ln^{i}\left(-\frac{ah}{\sigma}Q'(y)\right)}dy
\\                        &= (-1)^{i}\frac{ah}{\sigma}\int_{-\infty}^{\infty}{\frac{1}{\sqrt{2\pi}}e^{-\frac{y^2}{2}}\ln^{i}\left(\frac{ah}{\sigma}\frac{1}{\sqrt{2\pi}}e^{-\frac{y^2}{2}}\right)}dy
\\                        &= (-1)^{i}\frac{ah}{\sigma}\int_{-\infty}^{\infty}{\frac{1}{\sqrt{2\pi}}e^{-\frac{y^2}{2}}\left(C(ah/\sigma) + \frac{1-y^2}{2}\right)^{i}}dy
\\                        &= (-1)^{i}\frac{ah}{\sigma}E_{N(0,1)}\left\{ \left(C(ah/\sigma) + \frac{1-y^2}{2}\right)^{i} \right\}
\end{aligned}
\end{align}
where,
\begin{equation}
C(ah/\sigma) \triangleq \frac{1}{2}\ln\left(\frac{a^2h^2}{2\pi e\sigma^2 }\right).
\end{equation}
By simple calculation of the moments of a standard normal random variable, we get that for small values of $ah/\sigma$
\begin{align}
\begin{aligned}
\eta_1\left(\frac{ah}{\sigma}\right) &\approx -\frac{ah}{\sigma}C(ah/\sigma),
\\
\eta_2\left(\frac{ah}{\sigma}\right) &\approx \frac{ah}{\sigma}\left(C(ah/\sigma)^2 + \frac{1}{2}\right),
\\
\eta_3\left(\frac{ah}{\sigma}\right) &\approx -\frac{ah}{\sigma}\left(C(ah/\sigma)^3 + \frac{3}{2}C(ah/\sigma) - 1\right).
\end{aligned}
\end{align}
It can be seen in Fig. \ref{fig_eta_Vs_h} that for $ah/\sigma < 1$ the approximations above are very accurate.

First, let us calculate the first order error's moment
\begin{align}
\label{align_ea_definition}
e_{a/\sigma,1} &\triangleq E\left\{e_{a/\sigma}(H)\right\}
\\           &= E\left\{\frac{\sigma}{aH}\eta_1\left(\frac{aH}{\sigma}\right)\right\}
\\           &= \int_{0}^{\infty}{f(h)\frac{\sigma}{ah}\eta_1\left(\frac{ah}{\sigma}\right)}dh
\\           &= \int_{0}^{\frac{\sigma}{a}}{f(h)\frac{\sigma}{ah}\eta_1\left(\frac{ah}{\sigma}\right)}dh + \int_{\frac{\sigma}{a}}^{\infty}{f(h)\frac{\sigma}{ah}\eta_1\left(\frac{ah}{\sigma}\right)}dh.
\label{align_ea_as_sum_of_integrals}
\end{align}
For any regular fading distribution there exists a positive constant $\beta > 0$ s.t. near the origin $f(h)\sim\frac{1}{h^{1-\beta}}$. Moreover, for any PDF there exists a positive constant $\beta^{'} > 0$ s.t. $f(h)\sim\frac{1}{h^{1+\beta^{'}}}$ for large enough $h$. Hence, for large enough $a/\sigma$, we can get the following bounds
\begin{enumerate}
  \item
    \begin{align}
    \int_{0}^{\frac{\sigma}{a}}{f(h)\frac{\sigma}{ah}\eta_1\left(\frac{ah}{\sigma}\right)}dh &= O\left(-\int_{0}^{\frac{\sigma}{a}}{f(h)C(ah/\sigma)}dh\right)
    \\ &= O\left(\int_{0}^{\frac{\sigma}{a}}{\frac{1}{h^{1-\beta}}\ln\left(\frac{ah}{\sigma}\right)}dh\right)
    \\ &= O\left(\int_{0}^{\frac{\sigma}{a}}{\frac{\ln(h)}{h^{1-\beta}}}dh\right) + O\left(\ln\left(\frac{a}{\sigma}\right)\int_{0}^{\frac{\sigma}{a}}{\frac{dh}{h^{1-\beta}}}\right)
    \\ \label{align_bounding_ea_part1}
       &= O\left(\ln\left(\frac{a}{\sigma}\right)\left(\frac{\sigma}{a}\right)^{\beta}\right).
    \end{align}
  \item
    \begin{align}
    \label{align_bounding_eta1}
    \int_{\frac{\sigma}{a}}^{\infty}{f(h)\frac{\sigma}{ah}\eta_1\left(\frac{ah}{\sigma}\right)}dh &\leq  \int_{\frac{\sigma}{a}}^{\infty}{f(h)\frac{\sigma}{ah}M}dh
    \\ &= \int_{\frac{\sigma}{a}}^{h_0}{f(h)\frac{\sigma}{ah}M}dh
        + \int_{h_0}^{h_1}{f(h)\frac{\sigma}{ah}M}dh
        + \int_{h_1}^{\infty}{f(h)\frac{\sigma}{ah}M}dh
    \\ &= O\left(\frac{\sigma}{a}\int_{\frac{\sigma}{a}}^{\infty}{\frac{dh}{h^{2-\beta}}}\right)
        + O\left(\frac{\sigma}{a}\int_{h_0}^{h_1}{\frac{f(h)}{h}dh}\right)
        + O\left(\frac{\sigma}{a}\int_{h_1}^{\infty}{\frac{dh}{h^{2+\beta^{'}}}}\right)
    \\ \label{align_bounding_ea_part2}
       &= O\left(\left(\frac{\sigma}{a}\right)^{\beta}\right)
        + O\left(\frac{\sigma}{a}\right)
        = O\left(\left(\frac{\sigma}{a}\right)^{\min(\beta,1)}\right),
    \end{align}
\end{enumerate}
where \eqref{align_bounding_eta1} is due to the fact that $\eta_1(ah/\sigma)\leq M$ for some positive and finite constant $M$.
From \eqref{align_ea_as_sum_of_integrals}, \eqref{align_bounding_ea_part1}, \eqref{align_bounding_ea_part2} and the fact that $\forall\epsilon>0~\lim_{x\to\infty}\frac{\ln(x)}{x^{\epsilon}}=0$, we get that there exists a constant $0 < \alpha \leq 1$ s.t. the following holds:
\begin{equation}
e_{a/\sigma,1} = O\left(\left(\frac{\sigma}{a}\right)^{\alpha}\right).
\end{equation}
Finally, because of the common properties of $\eta_1(ah/\sigma), \eta_2(ah/\sigma)$ and $\eta_3(ah/\sigma)$, with equivalent calculations, we can get that there exists also a constant $0 < \alpha \leq 1$, s.t the error's moments hold the following:
\begin{equation}
e_{a/\sigma,2} \triangleq E\left\{e_{a/\sigma}^{2}(Y,H)\right\}= E\left\{\frac{\sigma}{aH}\eta_2\left(\frac{aH}{\sigma}\right)\right\} = O\left(\left(\frac{\sigma}{a}\right)^{\alpha}\right),
\end{equation}
\begin{equation}
e_{a/\sigma,3} \triangleq E\left\{|e_{a/\sigma}(Y,H)|^{3}\right\} = E\left\{\frac{\sigma}{aH}\eta_3\left(\frac{aH}{\sigma}\right)\right\} = O\left(\left(\frac{\sigma}{a}\right)^{\alpha}\right).
\end{equation}
\begin{figure}[htp]
\center{\includegraphics[width=0.7\columnwidth]{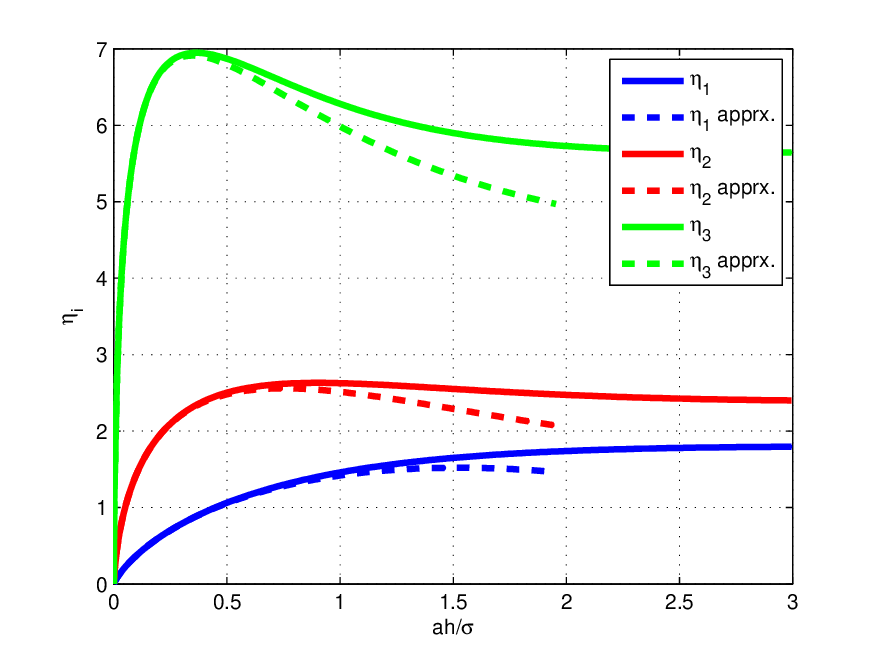}}
\caption{\label{fig_eta_Vs_h} $\eta_{i}(ah/\sigma)$ and its approximation for small values of $ah/\sigma$.}
\end{figure}

Now, let us turn to calculate the information density's moments.
\section{Calculating the Mutual Information}
The mean of the information density is given by
\begin{align}
I(X;Y,H) &\triangleq E\{i(X;Y,H)\}
\\       &= E\left\{\frac{1}{2}\ln\left(\frac{a^2H^2}{2\pi e\sigma^2}\right)\right\} - E\left\{\frac{Z^2-\sigma^2}{2\sigma^2}\right\} + E\left\{e_{a/\sigma}(Y,H)\right\}
\\       &= E\left\{\frac{1}{2}\ln\left(\frac{a^2H^2}{2\pi e\sigma^2}\right)\right\} + e_{a/\sigma,1}
\\       &= E\left\{\frac{1}{2}\ln\left(\frac{a^2H^2}{2\pi e\sigma^2}\right)\right\} +
O\left(\left(\frac{\sigma}{a}\right)^{\alpha}\right).
\end{align}
\section{Calculating the Information Density Variance}
The variance of the information density is given by
\begin{align}
Var(i(X;Y,H)) &= Var\left(\frac{1}{2}\ln\left(\frac{a^2H^2}{2\pi e\sigma^2}\right) - \frac{Z^2-\sigma^2}{2\sigma^2} + e_{a/\sigma}(Y,H)\right)
\\            &= Var\left(\frac{1}{2}\ln\left(H^2\right) - \frac{Z^2}{2\sigma^2} + e_{a/\sigma}(Y,H)\right)
\\            &= Var\left(\frac{1}{2}\ln\left(H^2\right)\right) + Var\left(\frac{Z^2}{2\sigma^2}\right) + Var\left(e_{a/\sigma}(Y,H)\right)
\\            &+ 2Cov\left(\frac{1}{2}\ln\left(H^2\right),e_{a/\sigma}(Y,H)\right) -2Cov\left(\frac{Z^2}{2\sigma^2},e_{a/\sigma}(Y,H)\right)
\\            \label{align_var_calc}
              &= \frac{1}{2} + Var\left(\delta(H)\right) + \Delta(a/\sigma)
\end{align}
where,
\begin{align}
\begin{aligned}
\label{align_line_in_var_bound}
\Delta(a/\sigma) &\triangleq Var\left(e_{a/\sigma}(Y,H)\right) + 2Cov\left(\frac{1}{2}\ln\left(H^2\right),e_{a/\sigma}(Y,H)\right) -2Cov\left(\frac{Z^2}{2\sigma^2},e_{a/\sigma}(Y,H)\right)
\\ &= e_{a/\sigma,2} - e_{a/\sigma,1}^{2} + E\left\{\ln\left(H^2\right)e_{a/\sigma}(Y,H)\right\} - E\left\{\ln\left(H^2\right)\right\}e_{a/\sigma,1}
\\ &- E\left\{\frac{Z^2}{\sigma^2}e_{a/\sigma}(Y,H)\right\} + E\left\{\frac{Z^2}{\sigma^2}\right\}e_{a/\sigma,1}
\\ &= O(e_{a/\sigma,2}) + O(e_{a/\sigma,1}) + O\left(E\left\{\ln\left(H^2\right)e_{a/\sigma}(Y,H)\right\}\right) + O\left(E\left\{\frac{Z^2}{\sigma^2}e_{a/\sigma}(y,h)\right\}\right).
\end{aligned}
\end{align}
By the Cauchy Schwarz inequality,
\begin{equation}
\label{eq_cauchy_schwarz_1}
\left|E\left\{\ln\left(H^2\right)e_{a/\sigma}(Y,H)\right\}\right| \leq \sqrt{E\left\{\ln^{2}\left(H^2\right)\right\}e_{a/\sigma,2}} = O\left(\sqrt{e_{a/\sigma,2}}\right)
\end{equation}
and
\begin{equation}
\label{eq_cauchy_schwarz_2}
\left|E\left\{\frac{Z^2}{\sigma^2}e_{a/\sigma}(Y,H)\right\}\right| \leq \sqrt{E\left\{\frac{Z^4}{\sigma^4}\right\}e_{a/\sigma,2}} = O\left(\sqrt{e_{a/\sigma,2}}\right).
\end{equation}
Combining \eqref{align_line_in_var_bound}, \eqref{eq_cauchy_schwarz_1} and \eqref{eq_cauchy_schwarz_2} we get
\begin{equation}
\label{eq_var_error_bound}
\Delta(a/\sigma) = O\left(\sqrt{e_{a/\sigma,2}}\right) = O\left(\left(\frac{\sigma}{a}\right)^{\frac{\alpha}{2}}\right).
\end{equation}
From \eqref{align_var_calc} and \eqref{eq_var_error_bound} we get the desired result:
\begin{equation}
Var(i(X;Y,H)) = \frac{1}{2} + Var\left(\delta(H)\right) + O\left(\left(\frac{\sigma}{a}\right)^{\frac{\alpha}{2}}\right).
\end{equation}
\section{Bounding the Information Density's Absolute third Order Moment}
The absolute third order moment of the information density is given by
\begin{align}
\rho_3 &\triangleq E\left\{|i(X;Y,H) - I(X;Y,H)|^3\right\}
\\     &= E\left\{\left|\frac{1}{2}\ln\left(\frac{a^2H^2}{2\pi e\sigma^2}\right) - \frac{Z^2-\sigma^2}{2\sigma^2} + e_{a/\sigma}(Y,H) - E\left\{\frac{1}{2}\ln\left(\frac{a^2H^2}{2\pi e\sigma^2}\right)\right\} - e_{a/\sigma,1}\right|^3\right\}
\\     \label{align_bounding_abs_3_moment}
       &\leq \left( \Big\| \frac{1}{2}\ln\left(H^2\right) - E\left\{\frac{1}{2}\ln\left(H^2\right)\right\} \Big\|_3 + \Big\| \frac{Z^2-\sigma^2}{2\sigma^2} \Big\|_3 + \Big\| e_{a/\sigma}(Y,H)\Big\|_3 + e_{a/\sigma,1} \right)^3,
\end{align}
where the last inequality is due to the Minkowski inequality and the definition of $\left\|X\right\|_3 \triangleq E\left\{ |X|^3 \right\}^{\frac{1}{3}}$.
By definition we get
\begin{equation}
\label{eq_norm_3_of_ea_3}
\Big\| e_{a/\sigma}(Y,H)\Big\|_3 = \left(E\left\{e_{a/\sigma}^{3}(Y,H)\right\}\right)^{\frac{1}{3}} = e_{a/\sigma,3}^{\frac{1}{3}} = O\left(\left(\frac{\sigma}{a}\right)^{\frac{\alpha}{3}}\right).
\end{equation}
From \eqref{align_bounding_abs_3_moment} and \eqref{eq_norm_3_of_ea_3} we get the desired result
\begin{equation}
\rho_3 \leq A + O\left(\ln\left(\frac{a}{\sigma}\right)\left(\frac{\sigma}{a}\right)^{\frac{\alpha}{3}}\right)
\end{equation}
for some positive and finite constant $A$, or simply $\rho_3 < \infty$.
\end{proof}
\chapter{Tiling}
\label{app_tiling}
We now turn to construct an IC with average error probability which is upper bounded by $\epsilon$, denoted by $S(n,\epsilon)$, from the FC  $S(n,\epsilon',a/\sigma)$. It is assumed that $S(n,\epsilon',a/\sigma)$ has an average error probability which is upper bounded by $\epsilon'$ (using the suboptimal decoder on which the dependence testing bound is based), and its NLD, $\delta(n,\epsilon',a/\sigma)$ in $\text{Cb}(a)$, holds the following:
\begin{equation}
\delta(n,\epsilon',a/\sigma) = \delta^* - \sqrt{ \frac{V}{n} }Q^{-1}(\epsilon') +
O\left(\frac{1}{n} + \frac{1}{\sqrt{n}}\left(\frac{\sigma}{a}\right)^{\frac{\alpha}{2}}  + \left(\frac{\sigma}{a}\right)^{\alpha}\right).
\end{equation}

Define the IC $S(n,\epsilon)$ as an infinite replication of $S(n,\epsilon',a/\sigma)$ with spacing of $b$ between every two copies as follows:
\begin{equation}
S(n,\epsilon) \triangleq \left\{ s + I\cdot(a+b): s\in S(n,\epsilon',a/\sigma), I\in \mathbb{Z}_n \right\}
\end{equation}
where $\mathbb{Z}_n$ denotes the integer lattice of dimension $n$. This tiling operation is illustrated in Figure \ref{fig_tiling}.
The NLD of the IC is given by
\begin{align}
\begin{aligned}
\label{align_IC_NLD}
\delta(n,\epsilon,a/\sigma,b) &\triangleq \frac{1}{n}\ln\left(\frac{M(n,\epsilon',a/\sigma)}{(a+b)^n}\right)
\\                            &= \frac{1}{n}\ln\left(\frac{M(n,\epsilon',a/\sigma)}{a^n}\right) - \ln\left(1+\frac{b}{a}\right)
\\                            &= \delta(n,\epsilon',a/\sigma) - \ln\left(1+\frac{b}{a}\right),
\end{aligned}
\end{align}
where $M(n,\epsilon',a/\sigma)$ is the number of codewords of the FC.

Define the faded FC in the receiver, given the CSI, as
\begin{equation}
S(n,\epsilon',a/\sigma)_{\mathbf{H}} \triangleq \{ \mathbf{H}\cdot s: s\in S(n,\epsilon',a/\sigma) \}
\end{equation}
where $\mathbf{H}=\text{diag}(H_1,H_2,\dots,H_n)$.
In the receiver, we get the following IC:
\begin{equation}
S(n,\epsilon)_{\mathbf{H}} \triangleq \left\{ s_{\text{rc}} + \mathbf{H} \cdot I\cdot (a+b): s_{\text{rc}}\in S(n,\epsilon',a/\sigma)_{\mathbf{H}}, I\in \mathbb{Z}_n \right\},
\end{equation}
which is a tiled version of the faded FC.

Now consider the ML error probability of a point $s_{\text{rc}}\in S(n,\epsilon)_{\mathbf{H}}$, given the CSI $\mathbf{H}$ at the receiver, denoted by $P_{e,ML}^{IC}(s_{\text{rc}}|\mathbf{H})$. In the same manner, $P_{e,ML}^{FC}(s_{\text{rc}}|\mathbf{H})$ will denote the ML error probability for any $s_{\text{rc}}\in S(n,\epsilon',a/\sigma)_{\mathbf{H}}$.
If $\mathbf{H}$ is a too ``strong'' channel fading realization then we will declare an error. Formally, if $H_{\min} \leq h_{\min}^{*}$ for some arbitrary positive constant $h_{\min}^{*}$, where $H_{\min} \triangleq \min\{H_1,H_2,\dots,H_n\}$, then we will declare an error. Otherwise, this error probability equals the probability of decoding by mistake to another codeword from the same copy of the faded FC $S(n,\epsilon',a/\sigma)_{\mathbf{H}}$ or to a codeword in another copy. Hence, by using the union bound, we obtain the following:
\begin{align}
P_{e,ML}^{IC}(s_{\text{rc}}|\mathbf{H}) &\leq \left( P_{e,ML}^{FC}(s_{\text{rc}}|\mathbf{H}) + \sum_{i=1}^{n}{2Q\left(\frac{H_i\cdot b}{2\sigma}\right)} \right)\cdot 1_{\left\{H_{\min} > h_{\min}^{*}\right\}} + 1_{\left\{H_{\min} \leq h_{\min}^{*}\right\}}
\\ &\leq P_{e,ML}^{FC}(s_{\text{rc}}|\mathbf{H}) + 2nQ\left(\frac{h_{\min}^{*}\cdot b}{2\sigma}\right) + 1_{\left\{H_{\min} \leq h_{\min}^{*}\right\}}.
\end{align}
The average error probability over $S(n,\epsilon)_{\mathbf{H}}$ and $\mathbf{H}$ is then upper bounded by
\begin{equation}
\label{eq_IC_error_pro_ub_1}
P_{e,ML}^{IC} \leq P_{e,ML}^{FC} + 2nQ\left(\frac{h_{\min}^{*}\cdot b}{2\sigma}\right) + Pr\left\{H_{\min} \leq h_{\min}^{*}\right\}.
\end{equation}
Trivially we have
\begin{equation}
\label{eq_FC_error_pro_ub}
P_{e,ML}^{FC} \leq P_{e,DT}^{FC} \leq \epsilon',
\end{equation}
where $P_{e,DT}^{FC}$ is the average error probability of the FC using the suboptimal decoder on which the dependence testing bound is based.
\newline
By the union bound
\begin{equation}
\label{eq_error_pro_according_h_min_ub}
Pr\left\{H_{\min} \leq h_{\min}^{*}\right\} \leq nPr\left\{H \leq h_{\min}^{*}\right\}.
\end{equation}
Combining \eqref{eq_IC_error_pro_ub_1}, \eqref{eq_FC_error_pro_ub} and \eqref{eq_error_pro_according_h_min_ub} we get that
\begin{equation}
\label{eq_IC_error_pro_ub_2}
P_{e,ML}^{IC} \leq \epsilon' + 2nQ\left(\frac{h_{\min}^{*}\cdot b}{2\sigma}\right) + nPr\left\{H \leq h_{\min}^{*}\right\} \triangleq \epsilon.
\end{equation}

From \eqref{align_IC_NLD} and \eqref{eq_IC_error_pro_ub_2} we can see that for any large enough $n$, if we choose small enough $h_{\min}^{*}$, large enough $b$ relative to $h_{\min}^{*}/\sigma$ and large enough $a$ relative to $b$, then we will get an IC with average error probability which is upper bounded by $\epsilon$ and arbitrarily close to $\epsilon'$, and NLD which equals $\delta(n,\epsilon) \triangleq \delta^* - \sqrt{ \frac{V}{n} }Q^{-1}(\epsilon) + O\left(\frac{1}{n}\right)$.

Let us demonstrate this idea by an example. Suppose a regular fading distribution s.t. $f(h)\sim\frac{1}{h^{1-\alpha}}$ for small enough positive $h$ and for some $\alpha > 0$. Hence, $Pr\left\{H \leq h_{\min}^{*}\right\}=O\left(\left(h_{\min}^*\right)^{\alpha}\right)$. If we choose $h_{\min}^*(n) = \frac{1}{n^{\frac{2}{\alpha}}}, b(n) = \sigma\cdot n^{1+\frac{2}{\alpha}}$ and $a(n) = \sigma\cdot n^{2+\frac{2}{\alpha}}$, then we will get:
\begin{align}
\label{eq_IC_error_pro_ub_3}
\begin{aligned}
P_{e,ML}^{IC}
   &\leq \epsilon
\\ &\triangleq \epsilon' + 2nQ\left(\frac{h_{\min}^{*}(n)\cdot b(n)}{2\sigma}\right) + nPr\left\{H_{\min} \leq h_{\min}^{*}(n)\right\}
\\ &\leq \epsilon' + nQ\left(\frac{n}{2}\right) + O\left(\frac{1}{n}\right)
\\ &\leq \epsilon' + ne^{-\frac{n^2}{8}} + O\left(\frac{1}{n}\right)
\\ &= \epsilon' + O\left(\frac{1}{n}\right)
\end{aligned}
\end{align}
and
\begin{align}
\begin{aligned}
\delta(n,\epsilon,a(n)/\sigma,b(n))
   &= \delta(n,\epsilon',a(n)/\sigma) - \ln\left(1+\frac{b(n)}{a(n)}\right)
\\ &= \delta\left(n,\epsilon - O\left(1/n\right),a(n)/\sigma\right) + O\left(\frac{1}{n}\right)
\\ &= \delta^* - \sqrt{ \frac{V}{n} }Q^{-1}\left(\epsilon - O\left(1/n\right)\right)
+ O\left(\frac{1}{n} + \frac{1}{\sqrt{n}}\left(\frac{\sigma}{a(n)}\right)^{\frac{\alpha}{2}} + \left(\frac{\sigma}{a(n)}\right)^{\alpha}\right)
\\ &= \delta^* - \sqrt{ \frac{V}{n} }Q^{-1}(\epsilon) + O\left(\frac{1}{n}\right) \triangleq \delta(n,\epsilon).
\end{aligned}
\end{align}
Note that this operation can be done for any fixed $\epsilon > 0$ (or equivalently for any $\epsilon' > 0$).
\begin{figure}[htp]
\center{\includegraphics[width=0.5\columnwidth]{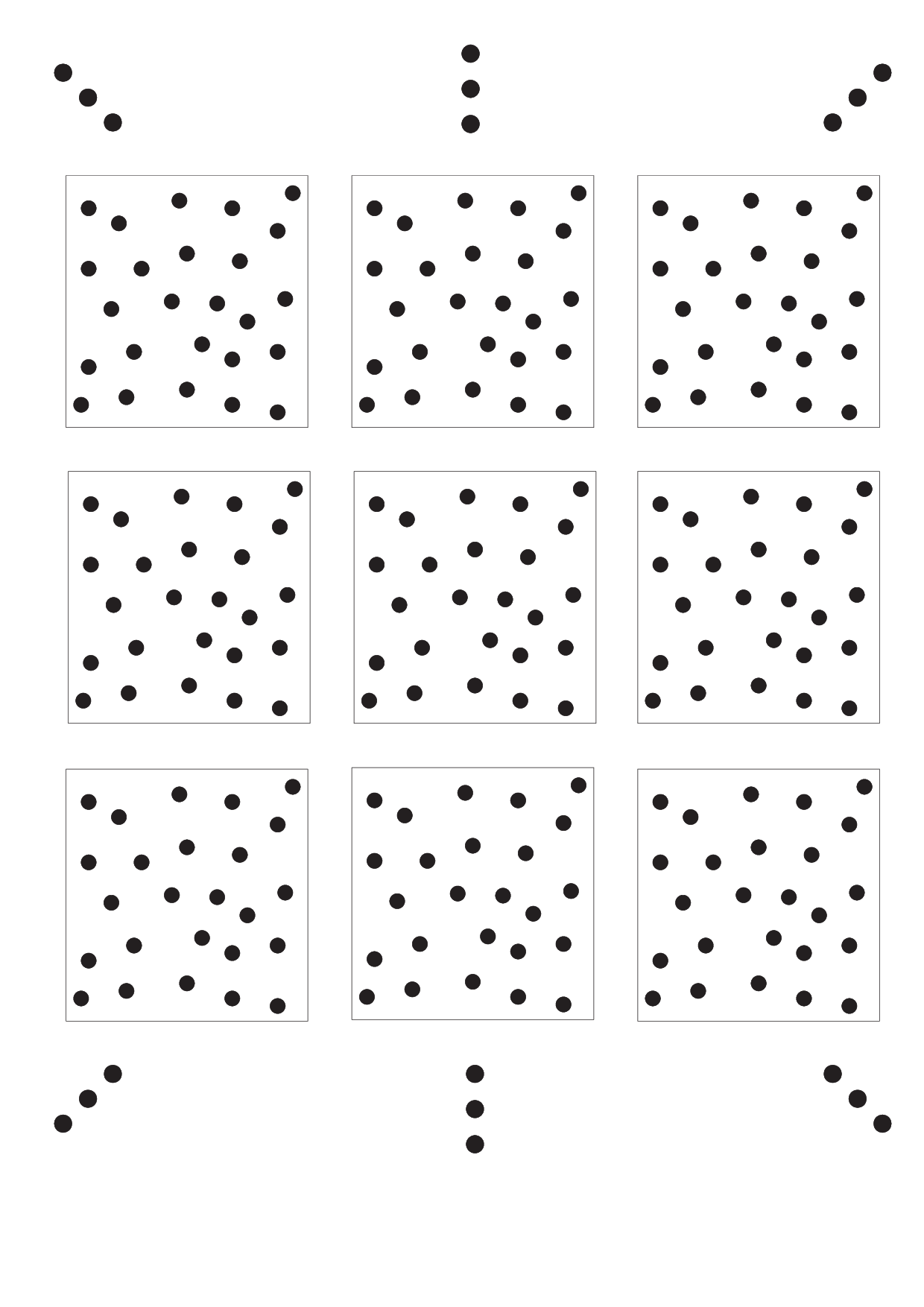}}
\caption{\label{fig_tiling} An illustration of the tiling operation.}
\end{figure}
\chapter{Proof of the Sufficient Typicality Decoder Based Bound Lemma}
\label{app_proof_of_the_sufficient_typicality_decoder_based_bound}
\begin{proof}[Proof of Lemma \ref{lem_sufficient_typicality_decoder_based_bound}]
This Lemma simplifies the typicality decoder based bound of Theorem \ref{thm_typicality_decoder_based_bound}. This simplification is done by upper bounding the third term of the RHS of \eqref{align_typicality_decoder_based_bound} according to the union bound as follows
\begin{align}
\begin{aligned}
Pr\left\{H_{\max} > g_{\max}(n) \cup H_{\min} < g_{\min}(n) \right\} &\leq Pr\left\{ H_{\min} < g_{\min}(n) \right\} + Pr\left\{ H_{\max} > g_{\max}(n) \right\} \\
&\leq nPr\left\{ H < g_{\min}(n) \right\} + nPr\left\{ H > g_{\max}(n) \right\}.
\end{aligned}
\end{align}
and by choosing specific series of $g_{\min}(n)$ and $g_{\max}(n)$.

Let us choose $g_{\min}(n)$ to be a monotonic decreasing series s.t. $\lim_{n\to\infty}g_{\min}(n)=0$.
Since we assume \emph{regular fading distribution}, then for small enough $g_{\min}(n)$ (or large enough $n$) we have
\begin{align}
Pr\left\{ H < g_{\min}(n) \right\}
 \leq \int_{0}^{g_{\min}(n)}h^{\alpha-1}dh
 \leq C'{g_{\min}(n)}^{\alpha},
\end{align}
for some constants $\alpha,C'>0$.

\noindent Using the Markov inequality we have
\begin{align}
Pr\left\{ H > g_{\max}(n) \right\} = Pr\left\{ H^2 > {g_{\max}(n)}^2 \right\}
 &\leq \frac{E\left\{H^2\right\}}{{g_{\max}(n)}^{2}}=\frac{1}{{g_{\max}(n)}^{2}}.
\end{align}
By choosing for example $g_{\min}(n)=n^{-3/\alpha}$ and $g_{\max}(n)=n^{3/2}$, and by combining all the above, the upper bound error probability of Theorem \ref{thm_typicality_decoder_based_bound} can be simplified by the following
\begin{align}
\begin{aligned}
P_e(\Lambda) &\leq Pr\left\{\|\mathbf{z}\|>r\right\} + \gamma V_n r_0^n + \frac{C}{n^2},
\end{aligned}
\end{align}
for some constant $C>0$.
\end{proof}
\chapter{Error Exponents for Scalar Fading Channels}
The general formula of Gallager's random coding error exponent for scalar and real fading channels is given by \cite{Gallager}\cite{Ericson}
\begin{equation}
\label{eq_scalar_error_exponent_definition}
E_r(R) = \max_{\rho \in [0,1]}\left(\max_{f(x)}E_0(f(x),\rho) - \rho R\right)
\end{equation}
where,
\begin{equation}
E_0(f(x),\rho) = -\ln E\left\{ \int{ \left[ \int{f(x)f(y|x,h)^{\frac{1}{1+\rho}}dx} \right]^{1+\rho} dy} \right\}.
\end{equation}
With a slightly abuse of notations we will denote for simplicity $E_0(\rho)$ instead of $E_0(f(x),\rho)$.

Let us denote by $\rho^* = \rho(R)$, the value of $\rho$ that optimizes \eqref{eq_scalar_error_exponent_definition} for a given rate $R$. In addition, let's denote
by $R_{\text{cr}}$, the maximal rate such that $\rho^*$ equals 1.
Then, $\rho^*$ is given by the solution of the following equation
\begin{equation}
\frac{\partial E_0(\rho)}{\partial \rho}\Big|_{\rho^*} = R, \nonumber
\end{equation}
for any rate $R_{\text{cr}} \leq R \leq C$.
Hence, we get by definition:
\begin{equation}
E_r(R) = E_0(\rho^*) - \rho^*R =
\left\{ \begin{array}{ll}
E_0(1)-R,
& 0 \leq R \leq R_{\text{cr}}
\\
E_0(\rho^*) - \rho^*\frac{\partial E_0(\rho)}{\partial \rho}\Big|_{\rho^*},
& R_{\text{cr}}\leq R \leq C
\end{array} \right.. \nonumber
\end{equation}

In Sections \ref{app_scalar_error_exponent_normal_prior_sec} and \ref{app_scalar_error_exponent_uniform_prior_sec} we will analyze the random coding error exponent
behavior for scalar real fading channels at the high SNR regime, with normal and uniform input distributions, respectively.
Note that in \cite{Ericson} this error exponent was analyzed with the optimal uniform input distribution on a ``thin spherical shell''.
While in Section \ref{app_scalar_error_exponent_normal_prior_sec} only the behavior near the capacity will be analyzed, in Section
\ref{app_scalar_error_exponent_uniform_prior_sec} we will derive approximations for the random coding error exponent at any rate.
Finally, in Section \ref{app_scalar_complex_error_exponent_sec} we will mention some notes about the random coding error exponent,
for scalar complex fading channels.
\section{Normal Input Distribution}
\label{app_scalar_error_exponent_normal_prior_sec}
In \cite{Ericson} Ericson analyzed the error exponent of the scalar fading channel with the optimal uniform input distribution on a ``thin spherical shell''. By the assignment of $r=0$, in his expressions, we get the suboptimal error exponent with normal input distribution, which is more easier to analyze.
In that case the capacity equals $C = E\left\{ \frac{1}{2}\ln\left(1 + H^2 \cdot SNR \right) \right\}$, and the error exponent factor $E_0(\cdot)$ is given by the following:
\begin{equation}
\label{eq_E_0_G_scalar}
E_{0,G}(\rho) \triangleq E_0(N(0,P),\rho) = -\ln E\left\{ \left( 1 + H^2 \cdot \frac{SNR}{1+\rho} \right)^{-\frac{\rho}{2}} \right\}.
\end{equation}
In the high SNR regime we can approximate the capacity by $C = E\left\{ \frac{1}{2}\ln\left(H^2 \cdot SNR \right) \right\}$ and \eqref{eq_E_0_G_scalar} by the following:
\begin{align}
E_{0,G}(\rho) &\approx -\ln E\left\{ \left(H^2 \cdot \frac{SNR}{1+\rho} \right)^{-\frac{\rho}{2}} \right\} \nonumber
\\            &= -\frac{\rho}{2} \ln(1+\rho) -\ln E\left\{ \left( H^2 \cdot SNR \right)^{-\frac{\rho}{2}} \right\}. \nonumber
\end{align}
The derivative of $E_{0,G}(\rho)$ w.r.t. $\rho$ gives us the following:
\begin{align}
\label{align_scalar_diff_E_0_G_1}
\frac{\partial E_{0,G}(\rho)}{\partial \rho} &\approx -\frac{1}{2}\ln(1+\rho) - \frac{\rho}{2(1+\rho)} + \frac{1}{2} \frac{ E\left\{ \ln\left( H^2 \cdot SNR \right) \cdot H^{-\rho} \right\} }{ E\left\{ H^{-\rho} \right\} }.
\end{align}
Since near the capacity $\rho^* \to 0$ we can use the following first order Taylor's approximations around zero:
\begin{enumerate}
\item $\ln(1+\rho^*) \approx \rho^*$
\item $\frac{\rho^*}{1+\rho^*} \approx \rho^*$
\item $e^{-\rho^*\ln(H)} \approx 1 - \rho^*\ln(H)$
\end{enumerate}
to get the following approximation of \eqref{align_scalar_diff_E_0_G_1} near the capacity:
\begin{align}
\label{align_scalar_diff_E_0_G_2}
\frac{\partial E_{0,G}(\rho)}{\partial \rho}\Big|_{\rho^*} &\approx -\rho^* + \frac{1}{2}\ln(SNR) - \frac{ \rho^*E\left\{ \ln^2(H) \right\} - E\left\{ \ln(H) \right\} }{ 1 - \rho^*E\left\{ \ln(H) \right\} }.
\end{align}
Using the first order Taylor's approximation of $g(\rho) \triangleq \frac{\rho\cdot a - b}{1-\rho\cdot b}$ around zero we get $g(\rho) \approx -b + \rho\cdot(a - b^2)$. By the assignment of it in \eqref{align_scalar_diff_E_0_G_2} we obtain:
\begin{align}
\label{align_scalar_diff_E_0_G_3}
\frac{\partial E_{0,G}(\rho)}{\partial \rho}\Big|_{\rho^*} &\approx E\left\{ \frac{1}{2}\ln\left(H^2 \cdot SNR \right) \right\} - \rho^*\left(1 + Var\left( \ln\left( H \right) \right) \right) \nonumber
\\ &\approx C - \rho^*\left( 1 + Var\left(\frac{1}{2} \ln\left( H^2 \right) \right) \right).
\end{align}
Hence, near the capacity the optimization factor can be approximated by the following:
\begin{equation}
\rho^* \approx \frac{C-R}{1 + Var\left(\frac{1}{2} \ln\left( H^2 \right) \right)}. \nonumber
\end{equation}
By integrating \eqref{align_scalar_diff_E_0_G_3} w.r.t. $\rho^*$ and the assignment of $\rho^*$ we obtain:
\begin{align}
E_r(R) &= \int_{0}^{\rho^*}{\frac{\partial E_{0,G}(\rho)}{\partial \rho}d\rho} - \rho^*R \nonumber
\\   &\approx \frac{(C-R)^2}{2V_{UB}}, \nonumber
\end{align}
where $V_{UB} \triangleq 1 + Var\left(\frac{1}{2} \ln\left( H^2 \right) \right)$ and $C = E\left\{ \frac{1}{2}\ln\left(H^2 \cdot SNR \right) \right\}$.

Since the uniform distribution on a ``thin spherical shell'' is the optimal input distribution that maximizes the Gallager's error exponent of the scalar real fading channel,
and not the normal distribution, we got only an upper bound of the channel dispersion from the analysis, $V < V_{UB}$.
\section{Uniform Input Distribution}
\label{app_scalar_error_exponent_uniform_prior_sec}
Here we use uniform input distribution, namely $X\sim U(-\nicefrac{a}{2},\nicefrac{a}{2})$. Hence,
\begin{align}
E_{0,U}(\rho) &\triangleq E_0\left(U(-\nicefrac{a}{2},\nicefrac{a}{2}),\rho\right) \nonumber
\\ &= -\ln E\left\{ \int_{-\infty}^{\infty}{ \left[ \int_{-\frac{a}{2}}^{\frac{a}{2}}{\frac{1}{a} \left(\frac{1}{\sqrt{2\pi\sigma^2}}e^{-\frac{(y-Hx)^2}{2\sigma^2}}\right)^{\frac{1}{1+\rho}} dx} \right]^{1+\rho} dy} \right\} \nonumber
\\ &= (1+\rho)C - \frac{1+\rho}{2}E\left\{\ln\left(\frac{H^2}{e}\right)\right\} - \frac{1+\rho}{2}\ln(1+\rho) - I(\rho), \nonumber
\end{align}
where,
\begin{equation}
C \triangleq E\left\{\frac{1}{2} \ln\left(\frac{a^2H^2}{2 \pi e \sigma^2}\right) \right\}, \nonumber
\end{equation}
and
\begin{equation}
I(\rho) \triangleq \ln E\left\{ \frac{1}{H^{1+\rho}}\int_{-\frac{aH}{\sigma\sqrt{1+\rho}}}^{\infty}\sqrt{\frac{1+\rho}{2\pi}} \left(Q(x) - Q\left(x+\frac{2aH}{\sigma\sqrt{1+\rho}}\right)\right)^{1+\rho}dx \right\}. \nonumber
\end{equation}
For large enough $a/\sigma$, we can obtain the following approximation:
\begin{align}
I(\rho) &= \ln E\left\{ \frac{1}{H^{1+\rho}}\int_{-\frac{aH}{\sigma\sqrt{1+\rho}}}^{\infty}\sqrt{\frac{1+\rho}{2\pi}} Q^{1+\rho}(x)dx \right\}+o(1) \nonumber
\\      &= \ln E\left\{ \frac{1}{H^{1+\rho}}\sqrt{\frac{1+\rho}{2\pi}}\cdot\frac{aH}{\sigma\sqrt{1+\rho}} \right\}+o(1) \nonumber
\\      &= C - \frac{1}{2}E\left\{ \ln\left(\frac{H^2}{e}\right) \right\} + \ln E\left\{H^{-\rho}\right\}+o(1) \nonumber,
\end{align}
where $o(1)$ denotes a term that vanishes with $a/\sigma$. As a result we get:
\begin{equation}
\label{eq_scalar_E0_uniform}
E_{0,U}(\rho) = \rho C - \frac{\rho}{2}E\left\{\ln\left(\frac{H^2}{e}\right)\right\} - \frac{1+\rho}{2}\ln(1+\rho) - \ln E\left\{H^{-\rho}\right\}+o(1).
\end{equation}
The derivative of $E_{0,U}(\rho)$ w.r.t. $\rho$ gives us the following:
\begin{equation}
\label{eq_scalar_diff_E0_uniform}
\frac{\partial E_{0,U}(\rho)}{\partial \rho} = C -\frac{1}{2}\ln(1+\rho) - \frac{1}{2}E\left\{\ln(H^2)\right\} - \frac{ E\left\{ \ln(H) \cdot H^{-\rho} \right\} }{ E\left\{ H^{-\rho} \right\} }+o(1).
\end{equation}
Hence, for any rate in the range, $R_{\text{cr}}\leq R \leq C$, then
\begin{equation}
\label{eq_scalar_convex_range}
E_r(R) = E_{0,U}(\rho^*) - \rho^*\frac{\partial E_0(\rho)}{\partial \rho}\Big|_{\rho^*},
\end{equation}
and for $0\leq R \leq R_{\text{cr}}$,
\begin{equation}
\label{eq_scalar_linear_range}
E_r(R) = E_{0,U}(1) - R.
\end{equation}
Combining \eqref{eq_scalar_E0_uniform}, \eqref{eq_scalar_diff_E0_uniform}, \eqref{eq_scalar_convex_range} and \eqref{eq_scalar_linear_range} we get the
Gallager's error exponent in the high SNR regime, by the following:
\begin{equation}
\label{eq_scalar_uniform_error_exponent}
E_r(R) \hspace{-0.11cm}=\hspace{-0.11cm}
\left\{ \hspace{-0.25cm} \begin{array}{ll}
C-R-E\left\{\frac{1}{2}\ln\left(\frac{4H^2}{e}\right)\right\}-\ln E\left\{H^{-1}\right\}+o(1),
\hspace{-0.3cm}& 0 \leq R \leq R_{\text{cr}}
\\
\rho^*(C-R)-\frac{\rho^*}{2}E\left\{\ln\left(\frac{H^2}{e}\right)\right\}-\ln E\left\{H^{-\rho^*}\right\}-\frac{1+\rho^*}{2}\ln(1+\rho^*)+o(1),
\hspace{-0.3cm}& R_{\text{cr}}\leq R \leq C
\end{array} \right.
\end{equation}
where $\rho^*$ is given by,
\begin{equation}
R = \frac{\partial E_{0,U}(\rho)}{\partial \rho}\Big|_{\rho^*} = C -\frac{1}{2}\ln(1+\rho^*) - \frac{1}{2}E\left\{\ln(H^2)\right\} - \frac{ E\left\{ \ln(H) \cdot H^{-\rho^*} \right\} }{ E\left\{ H^{-\rho^*} \right\} }+o(1). \nonumber
\end{equation}
Using the relation $R_{\text{cr}} = \frac{\partial E_{0,U}(\rho)}{\partial \rho}\Big|_{\rho^*=1}$, and \eqref{eq_scalar_diff_E0_uniform} we get:
\begin{equation}\
R_{\text{cr}} = \frac{1}{2}\ln\left(\frac{a^2}{4\pi e\sigma^2}\right) + \frac{ E\left\{ \ln(H) \cdot H^{-1} \right\} }{ E\left\{ H^{-1} \right\}}+o(1). \nonumber
\end{equation}
Now let's turn to the approximation of the Gallager's error exponent for rates near the capacity. Since near the capacity $\rho^* \to 0$ by the same Taylor's approximations as we used in Appendix \ref{app_scalar_error_exponent_normal_prior_sec} we obtain:
\begin{align}
\frac{\partial E_{0,U}(\rho)}{\partial \rho}\Big|_{\rho^*} &\approx C - \rho^*\left( \frac{1}{2} + Var\left(\frac{1}{2}\ln\left(H^2\right) \right) \right), \nonumber
\\ \rho^* &\approx \frac{C-R}{\frac{1}{2} + Var\left(\frac{1}{2}\ln\left(H^2\right) \right)} \nonumber
\end{align}
and
\begin{align}
E_r(R) &= \int_{0}^{\rho^*}{\frac{\partial E_{0,U}(\rho)}{\partial \rho}d\rho} - \rho^*R \nonumber
\\   &\approx \frac{(C-R)^2}{2V}, \nonumber
\end{align}
where $V \triangleq  \frac{1}{2} + Var\left(\frac{1}{2}\ln\left(H^2\right) \right) $ and $C = E\left\{\frac{1}{2} \ln\left(\frac{a^2H^2}{2 \pi e \sigma^2}\right) \right\}$.

Note that by the definition of, $\delta = R - \ln(a)$, and by taking the limit $a/\sigma\to\infty$, we can get from $\eqref{eq_scalar_uniform_error_exponent}$, the following error exponent of IC's over fast fading channels:
\begin{equation}
E_r(\delta) =
\left\{ \begin{array}{ll}
\delta^*-\delta-E\left\{\frac{1}{2}\ln\left(\frac{4H^2}{e}\right)\right\}-\ln E\left\{H^{-1}\right\},
& 0 \leq \delta \leq \delta_{\text{cr}}
\\
\rho^*(\delta^*-\delta)-\frac{\rho^*}{2}E\left\{\ln\left(\frac{H^2}{e}\right)\right\}-\ln E\left\{H^{-\rho^*}\right\}-\frac{1+\rho^*}{2}\ln(1+\rho^*),
& \delta_{\text{cr}}\leq \delta \leq \delta^*
\end{array} \right.\nonumber
\end{equation}
where,
\begin{align}
\delta^* &= E\left\{\frac{1}{2} \ln\left(\frac{H^2}{2 \pi e \sigma^2}\right) \right\}, \nonumber
\\
\delta_{\text{cr}} &= \frac{1}{2}\ln\left(\frac{1}{4\pi e\sigma^2}\right) + \frac{ E\left\{ \ln(H) \cdot H^{-1} \right\} }{ E\left\{ H^{-1} \right\}}, \nonumber
\end{align}
and $\rho^* = \rho^*(\delta)$, is given by the solution of
\begin{equation}
\delta = \delta^* -\frac{1}{2}\ln(1+\rho^*) - \frac{1}{2}E\left\{\ln(H^2)\right\} - \frac{ E\left\{ \ln(H) \cdot H^{-\rho^*} \right\} }{ E\left\{ H^{-\rho^*} \right\} }. \nonumber
\end{equation}
In Figure \ref{fig_fading_error_exp_of_IC}, we can see this error exponent, in the case of Rayleigh fading channel with noise variance $\sigma^2=1$. Moreover, it can be seen, that near the Poltyrev's capacity, the error exponent behaves approximately as the parabola $\frac{(\delta^*-\delta)^2}{2V}$.
\begin{figure}[htp]
\center{\includegraphics[width=0.7\columnwidth]{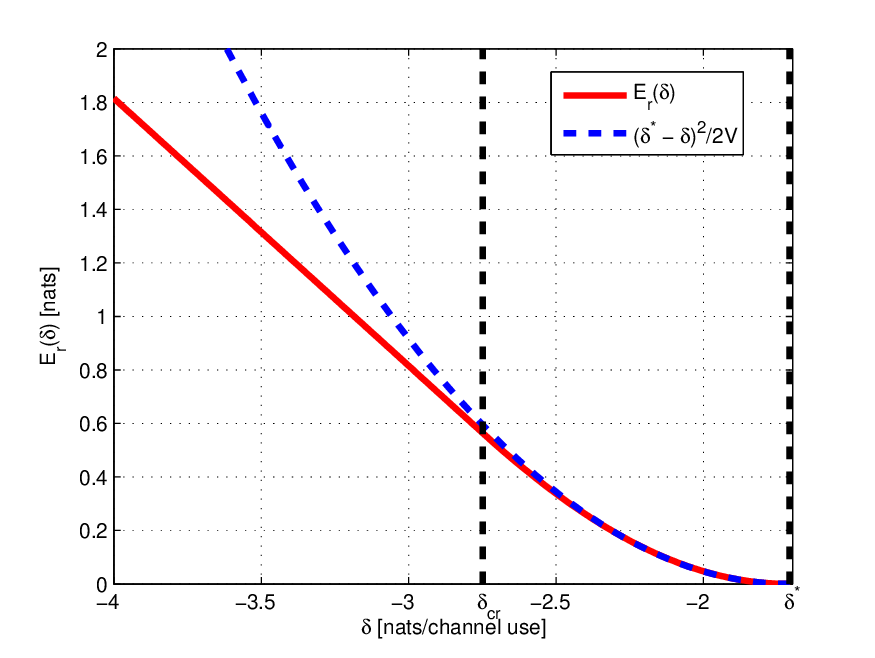}}
\caption{\label{fig_fading_error_exp_of_IC} The error exponent of IC's over the scalar Rayleigh fading channel with noise variance $\sigma^2=1$.}
\end{figure}
\section{Error Exponent for Scalar Complex Fading Channels}
\label{app_scalar_complex_error_exponent_sec}
The scalar complex fading channels are a private case of the MIMO fading channels with one transmit and one receive antennas.
The random coding error exponent for MIMO fading channels, with normal and uniform input distributions, is analyzed in Appendix \ref{sec_MIMO_error_exponents}. Hence, by the assignment of one transmit and one receive antennas in the results of Appendix \ref{sec_MIMO_error_exponents}, namely $r=t=1$, we get the results for this private case.
In addition, the error exponent with the optimal uniform input distribution on a ``thin spherical shell'', for the scalar complex fading channels, can be found in \cite{Ahmed}.
\chapter{Error Exponents for MIMO Fading Channels}
\label{sec_MIMO_error_exponents}
The general formula of Gallager's random coding error exponent for MIMO channels is given by \cite{Gallager}\cite{Telatar}\cite{MIMOErExp}
\begin{equation}
\label{eq_error_exponent_definition}
E_r(R) = \max_{\rho \in [0,1]}\left(\max_{f(\mathbf{x})}E_0(f(\mathbf{x}),\rho) - \rho R\right)
\end{equation}
where,
\begin{equation}
E_0(f(\mathbf{x}),\rho) = -\ln E\left\{ \int{ \left[ \int{f(\mathbf{x})f(\mathbf{y}|\mathbf{x},\mathbf{H})^{\frac{1}{1+\rho}}d\mathbf{x}} \right]^{1+\rho} d\mathbf{y}} \right\}.
\end{equation}
With a slightly abuse of notations we will denote for simplicity $E_0(\rho)$ instead of $E_0(f(\mathbf{x}),\rho)$.

Let us denote by $\rho^* = \rho(R)$, the value of $\rho$ that optimizes \eqref{eq_error_exponent_definition} for a given rate $R$. In addition, let's denote
by $R_{\text{cr}}$, the maximal rate such that $\rho^*$ equals 1.
Then, $\rho^*$ is given by the solution of the following equation
\begin{equation}
\frac{\partial E_0(\rho)}{\partial \rho}\Big|_{\rho^*} = R, \nonumber
\end{equation}
for any rate $R_{\text{cr}} \leq R \leq C$, and by definition:
\begin{equation}
E_r(R) = E_0(\rho^*) - \rho^*R. \nonumber
\end{equation}

In Sections \ref{app_error_exponent_normal_prior_sec} and \ref{app_error_exponent_uni_prior_sec} we will analyze the MIMO random coding error exponent behavior at the high SNR regime and near the capacity, with normal and uniform input distributions, respectively. More results can be found in \cite{Telatar}\cite{MIMOErExp}.
\section{Normal Input Distribution}
\label{app_error_exponent_normal_prior_sec}
In \cite{Telatar} Telatar derived the error exponent of the MIMO channel with the suboptimal capacity-achieving input distribution $\mathbf{x} \sim CN(0,\nicefrac{P}{t} \cdot I_t)$.
In that case the capacity equals $C = E\left\{ \det\left( I_t + \mathbf{H}^{\dagger}\mathbf{H}\cdot SNR \right) \right\}$, and the error exponent factor $E_0(\cdot)$ is given by the following:
\begin{equation}
\label{eq_E_0_G_mimo}
E_{0,G}(\rho) \triangleq E_0(CN(0,\nicefrac{P}{t} \cdot I_t),\rho) = -\ln E\left\{ \det\left( I_t + \mathbf{H}^{\dagger}\mathbf{H}\cdot\frac{SNR}{1+\rho} \right)^{-\rho} \right\},
\end{equation}
where $SNR\triangleq\frac{P/t}{\sigma^2}$. In the high SNR regime we can approximate the capacity by $C \approx E\left\{ \det\left( \mathbf{H}^{\dagger}\mathbf{H}\cdot SNR \right) \right\}$ and \eqref{eq_E_0_G_mimo} by the following:
\begin{align}
E_{0,G}(\rho) &\approx -\ln E\left\{ \det\left( \mathbf{H}^{\dagger}\mathbf{H}\cdot\frac{SNR}{1+\rho} \right)^{-\rho} \right\} \nonumber
\\            &= -\rho  t \ln(1+\rho) -\ln E\left\{ \det\left( \mathbf{H}^{\dagger}\mathbf{H}\cdot SNR \right)^{-\rho} \right\}. \nonumber
\end{align}
The derivative of $E_{0,G}(\rho)$ w.r.t. $\rho$ gives us the following:
\begin{align}
\label{align_diff_E_0_G_1}
\frac{\partial E_{0,G}(\rho)}{\partial \rho} &\approx -t\ln(1+\rho) -\frac{\rho t}{1+\rho} + \frac{ E\left\{ \ln\left( \det\left( \mathbf{H}^{\dagger}\mathbf{H} \cdot SNR \right) \right) \cdot \det\left( \mathbf{H}^{\dagger}\mathbf{H} \right)^{-\rho} \right\} }{ E\left\{ \det\left( \mathbf{H}^{\dagger}\mathbf{H} \right)^{-\rho} \right\} }.
\end{align}
Since near the capacity $\rho^* \to 0$ we can use the following first order Taylor's approximations around zero:
\begin{enumerate}
\item $\ln(1+\rho^*) \approx \rho^*$
\item $\frac{\rho^*}{1+\rho^*} \approx \rho^*$
\item $e^{-\rho^*\ln(\det(\mathbf{H}^{\dagger}\mathbf{H}))} \approx 1 - \rho^*\ln(\det(\mathbf{H}^{\dagger}\mathbf{H}))$
\end{enumerate}
to get the following approximation of \eqref{align_diff_E_0_G_1} near the capacity:
\begin{align}
\label{align_diff_E_0_G_2}
\frac{\partial E_{0,G}(\rho)}{\partial \rho}\Big|_{\rho^*} &\approx -2\rho^* t + t\ln(SNR) - \frac{ \rho^*E\left\{ \ln^2\left( \det\left( \mathbf{H}^{\dagger}\mathbf{H} \right) \right) \right\} - E\left\{ \ln\left( \det\left( \mathbf{H}^{\dagger}\mathbf{H} \right) \right) \right\} }{ 1 - \rho^*E\left\{ \ln\left( \det\left( \mathbf{H}^{\dagger}\mathbf{H} \right) \right) \right\} }.
\end{align}
Using the first order Taylor's approximation of $g(\rho) \triangleq \frac{\rho\cdot a - b}{1-\rho\cdot b}$ around zero we get $g(\rho) \approx -b + \rho\cdot(a - b^2)$. By the assignment of it in \eqref{align_diff_E_0_G_2} we obtain:
\begin{align}
\label{align_diff_E_0_G_3}
\frac{\partial E_{0,G}(\rho)}{\partial \rho}\Big|_{\rho^*} &\approx E\left\{ \ln\left( \det\left( \mathbf{H}^{\dagger}\mathbf{H}\cdot SNR \right) \right) \right\} - \rho^*\left( 2t + Var\left( \ln\left( \det\left( \mathbf{H}^{\dagger}\mathbf{H} \right) \right) \right) \right) \nonumber
\\ &\approx C - \rho^*\left( 2t + Var\left( \ln\left( \det\left( \mathbf{H}^{\dagger}\mathbf{H} \right) \right) \right) \right).
\end{align}
Hence, near the capacity the optimization factor can be approximated by the following:
\begin{equation}
\rho^* \approx \frac{C-R}{2t + Var\left( \ln\left( \det\left( \mathbf{H}^{\dagger}\mathbf{H} \right) \right) \right)}. \nonumber
\end{equation}
By integrating \eqref{align_diff_E_0_G_3} w.r.t. $\rho^*$ and the assignment of $\rho^*$ we obtain:
\begin{align}
E_r(R) &= \int_{0}^{\rho^*}{\frac{\partial E_{0,G}(\rho)}{\partial \rho}d\rho} - \rho^*R \nonumber
\\   &\approx \frac{(C-R)^2}{2V_{UB}}, \nonumber
\end{align}
where $V_{UB} \triangleq 2t + Var\left( \ln\left( \det\left( \mathbf{H}^{\dagger}\mathbf{H} \right) \right) \right)$ and $C = E\left\{ \det\left( \mathbf{H}^{\dagger}\mathbf{H}\cdot SNR \right) \right\}$.

Since the uniform distribution on a ``thin spherical shell'' is the optimal input distribution that maximizes the Gallager's error exponent of the MIMO channel, and not the normal distribution, we got only an upper bound of the channel dispersion from the analysis, $V < V_{UB}$.
\section{Uniform Input Distribution}
\label{app_error_exponent_uni_prior_sec}
Here the input vector is distributed uniformly in $t$ complex dimensional hypercube $\text{Cb}(a,t)$ of size $a$, namely, $f(\mathbf{x})=\frac{1}{a^{2t}} \cdot I_{\{ \mathbf{x} \in \text{Cb}(a,t) \}}$. In the equivalent channel model (using the SVD analysis) $f(\mathbf{y}'|\mathbf{x},\mathbf{H}) = \left(\frac{1}{\pi\sigma^2}\right)^t e^{-\frac{\|\mathbf{y}'-\mathbf{H'}\mathbf{x}\|^2}{\sigma^2}}$, where $\mathbf{H'}\triangleq\mathbf{D'V}^{\dagger}$. With a slightly abuse of notations we will ignore the superscript. Hence,
\begin{align}
E_{0,U}(\rho) &\triangleq E_0(\nicefrac{1}{a^{2t}} \cdot I_{\{ \mathbf{x} \in \text{Cb}(a,t) \}},\rho) \nonumber
\\ &= -\ln E\left\{ \int_{\mathbf{y}\in\mathbb{C}^t}{ \left[ \int_{\mathbf{x}\in\text{Cb}(a,t)}{\frac{1}{a^{2t}} \cdot \left(\frac{1}{\pi\sigma^2}\right)^{\frac{t}{1+\rho}} \cdot e^{-\frac{\|\mathbf{y}-\mathbf{H}\mathbf{x}\|^2}{(1+\rho)\sigma^2}} d\mathbf{x}} \right]^{1+\rho} d\mathbf{y}} \right\} \nonumber
\end{align}
By the variable substitution $\mathbf{x}'=\mathbf{H}\cdot\mathbf{x}$ and some algebraic manipulations we obtain:
\begin{align}
E_{0,U}(\rho) = (1+\rho)C - (1+\rho)E\left\{ \ln\left(\det\left(\frac{\mathbf{H}^{\dagger}\mathbf{H}}{e}\right)\right) \right\} - t(1+\rho)\ln(1+\rho) - I(\rho) \nonumber
\end{align}
where,
\begin{align}
C &\triangleq E\left\{ \ln\left(\det\left(\frac{\mathbf{H}^{\dagger}\mathbf{H}a^2}{\pi e \sigma^2}\right)\right) \right\}, \nonumber
\\
I(\rho) &\triangleq \ln E\left\{ \left(\frac{1}{\det(\mathbf{H}^{\dagger}\mathbf{H})}\right)^{1+\rho} F(\rho,a,\mathbf{H}) \right\} \nonumber
\end{align}
and
\begin{equation}
F(\rho,a,\mathbf{H}) \triangleq \int_{\mathbf{y}\in\mathbb{C}^t}{ \left(\frac{1}{\pi\sigma^2}\right)^t \left[ \int_{\mathbf{x}\in\mathbf{H}\cdot\text{Cb}(a,t)}{ \frac{e^{-\frac{\|\mathbf{y}-\mathbf{x}\|^2}{(1+\rho)\sigma^2}}}{(\pi(1+\rho)\sigma^2)^t} d\mathbf{x}} \right]^{1+\rho} d\mathbf{y}}. \nonumber
\end{equation}
Now we will give a sketch of proof that shows that for large enough $a/\sigma$ (the high SNR regime) $F(\rho,a,\mathbf{H})$ does'nt depend on $\rho$. For doing it let's investigate the derivative of $F(\cdot)$ w.r.t. $\rho$:
\begin{equation}
\frac{\partial{F(\rho,a,\mathbf{H})}}{\partial{\rho}} = \int_{\mathbf{y}\in\mathbb{C}^t}{(1+\rho)\left(\frac{1}{\pi\sigma^2}\right)^tG(\rho,a,\mathbf{y},\mathbf{H})^{\rho}\ln(G(\rho,a,\mathbf{y},\mathbf{H}))
\frac{\partial{G(\rho,a,\mathbf{y},\mathbf{H})}}{\partial{\rho}}}d\mathbf{y} \nonumber
\end{equation}
where,
\begin{align}
G(\rho,a,\mathbf{y},\mathbf{H}) &\triangleq \int_{\mathbf{x}\in\mathbf{H}\cdot\text{Cb}(a,t)}{ \frac{e^{-\frac{\|\mathbf{y}-\mathbf{x}\|^2}{(1+\rho)\sigma^2}}}{(\pi(1+\rho)\sigma^2)^t} d\mathbf{x}} \nonumber
\\
\frac{\partial{G(\rho,a,\mathbf{y},\mathbf{H})}}{\partial{\rho}} &\triangleq \frac{1}{1+\rho} \int_{\mathbf{x}\in\mathbf{H}\cdot\text{Cb}(a,t)}{ \left(\frac{\|\mathbf{y}-\mathbf{x}\|^2}{(1+\rho)\sigma^2} - t\right)\cdot\frac{e^{-\frac{\|\mathbf{y}-\mathbf{x}\|^2}{(1+\rho)\sigma^2}}}{(\pi(1+\rho)\sigma^2)^t} d\mathbf{x}}. \nonumber
\end{align}
By taking the limit when $a\to\infty$ (and for fix $\sigma$), we obtain:
\begin{equation}
G(\rho) \triangleq \lim_{a\to\infty}G(\rho,a,\mathbf{y},\mathbf{H}) = \int_{\mathbf{x}\in\mathbb{C}^t}{ \frac{e^{-\frac{\|\mathbf{y}-\mathbf{x}\|^2}{(1+\rho)\sigma^2}}}{(\pi(1+\rho)\sigma^2)^t} d\mathbf{x}} = 1 \nonumber
\end{equation}
and
\begin{equation}
\frac{\partial{G(\rho)}}{\partial{\rho}} \triangleq \lim_{a\to\infty}\frac{\partial{G(\rho,a,\mathbf{y},\mathbf{H})}}{\partial{\rho}} = \frac{1}{1+\rho} E_{CN(\mathbf{y},(1+\rho)\sigma^2I_t)}\left\{\frac{\|\mathbf{y}-\mathbf{x}\|^2}{(1+\rho)\sigma^2} - t \right\} = 0. \nonumber
\end{equation}
Hence,
\begin{equation}
\frac{\partial{F(\rho,\mathbf{H})}}{\partial{\rho}} \triangleq \lim_{a\to\infty}\frac{\partial{F(\rho,a,\mathbf{H})}}{\partial{\rho}} = 0. \nonumber
\end{equation}
Since $F(\cdot)$ does'nt depend on $\rho$ for large enough $a/\sigma$, we can approximate its value by taking $\rho = 0$ in the high SNR regime:
\begin{align}
F(\rho,a,\mathbf{H}) &\approx \int_{\mathbf{y}\in\mathbb{C}^t}{ \left(\frac{1}{\pi\sigma^2}\right)^t \int_{\mathbf{x}\in\mathbf{H}\cdot\text{Cb}(a,t)}{ \left(\frac{1}{\pi\sigma^2}\right)^t \cdot e^{-\frac{\|\mathbf{y}-\mathbf{x}\|^2}{\sigma^2}} d\mathbf{x}} d\mathbf{y}} \nonumber
\\ &= \int_{\mathbf{x}\in\mathbf{H}\cdot\text{Cb}(a,t)}{ \left(\frac{1}{\pi\sigma^2}\right)^t \int_{\mathbf{y}\in\mathbb{C}^t}{ \left(\frac{1}{\pi\sigma^2}\right)^t \cdot e^{-\frac{\|\mathbf{y}-\mathbf{x}\|^2}{\sigma^2}} d\mathbf{y}} d\mathbf{x}} \nonumber
\\ &= \int_{\mathbf{x}\in\mathbf{H}\cdot\text{Cb}(a,t)}{ \left(\frac{1}{\pi\sigma^2}\right)^t d\mathbf{x}}  = \det(\mathbf{H}^{\dagger}\mathbf{H})\cdot\left(\frac{a^2}{\pi\sigma^2}\right)^t. \nonumber
\end{align}
As a result we get:
\begin{align}
I(\rho) &\approx \ln E\left\{ {\det(\mathbf{H}^{\dagger}\mathbf{H})}^{-\rho} \left(\frac{a^2}{\pi\sigma^2}\right)^t \right\} \nonumber
\\      &= C - E\left\{ \ln\left(\det\left(\frac{\mathbf{H}^{\dagger}\mathbf{H}}{e}\right)\right) \right\} + \ln E\left\{ \left(\frac{1}{\det(\mathbf{H}^{\dagger}\mathbf{H})}\right)^{\rho} \right\} \nonumber
\end{align}
and
\begin{align}
E_{0,U}(\rho) \approx \rho C - \rho E\left\{ \ln\left(\det\left(\frac{\mathbf{H}^{\dagger}\mathbf{H}}{e}\right)\right) \right\} - t(1+\rho)\ln(1+\rho) - \ln E\left\{ {\det(\mathbf{H}^{\dagger}\mathbf{H})}^{-\rho} \right\}. \nonumber
\end{align}
The derivative of $E_{0,U}(\rho)$ w.r.t. $\rho$ gives us the following:
\begin{align}
\label{align_diff_E_0_U_1}
\frac{\partial E_{0,U}(\rho)}{\partial \rho} &\approx C - t\ln(1+\rho) - E\left\{ \ln\left(\det\left({\mathbf{H}^{\dagger}\mathbf{H}}\right)\right) \right\} + \frac{ E\left\{ \ln\left( \det\left( \mathbf{H}^{\dagger}\mathbf{H} \right) \right) \cdot \det\left( \mathbf{H}^{\dagger}\mathbf{H} \right)^{-\rho} \right\} }{ E\left\{ \det\left( \mathbf{H}^{\dagger}\mathbf{H} \right)^{-\rho} \right\} }. \nonumber
\end{align}
Since near the capacity $\rho^* \to 0$ by the same Taylor's approximations as we used in Appendix \ref{app_error_exponent_normal_prior_sec} we obtain:
\begin{align}
\frac{\partial E_{0,U}(\rho)}{\partial \rho}\Big|_{\rho^*} &\approx C - \rho^*\left( t + Var\left( \ln\left( \det\left( \mathbf{H}^{\dagger}\mathbf{H} \right) \right) \right) \right), \nonumber
\\ \rho^* &\approx \frac{C-R}{t + Var\left( \ln\left( \det\left( \mathbf{H}^{\dagger}\mathbf{H} \right) \right) \right)} \nonumber
\end{align}
and
\begin{align}
E_r(R) &= \int_{0}^{\rho^*}{\frac{\partial E_{0,U}(\rho)}{\partial \rho}d\rho} - \rho^*R \nonumber
\\   &\approx \frac{(C-R)^2}{2V}, \nonumber
\end{align}
where $V \triangleq t + Var\left( \ln\left( \det\left( \mathbf{H}^{\dagger}\mathbf{H} \right) \right) \right)$ and $C = E\left\{ \ln\left(\det\left(\frac{a^2\mathbf{H}^{\dagger}\mathbf{H}}{\pi e \sigma^2}\right)\right) \right\}$.


\bibliographystyle{unsrt}

\end{document}